\definecolor{gray}{rgb}{0.95,0.95,0.95}
\definecolor{niceblue}{rgb}{0.30,0.33,1}
\renewcommand*{\bibpagerefpunct}{\addperiod\addspace}
\numberwithin{equation}{chapter}
\DeclareRobustCommand{\gobblefive}[5]{}
\definecolor{vlightgray}{gray}{0.95}
\theoremstyle{plain}
\newtheorem{thm}{Theorem}[chapter]
\newtheorem{lem}[thm]{Lemma}
\newtheorem{prop}[thm]{Proposition}
\newtheorem{cor}[thm]{Corollary}
\theoremstyle{definition}
\newtheorem{defn}[thm]{Definition}
\newtheorem{exmp}[thm]{Example}
\newtheorem{rem}[thm]{Remark}
\crefname{defn}{Definition}{Definitions}
\Crefname{defn}{Definition}{Definitions}
\crefname{thm}{Theorem}{Theorems}
\Crefname{thm}{Theorem}{Theorems}
\crefname{lem}{Lemma}{Lemmas}
\Crefname{lem}{Lemma}{Lemma}
\crefname{rem}{Remark}{Remarks}
\Crefname{rem}{Remark}{Remarks}
\crefname{prop}{Proposition}{Propositions}
\Crefname{prop}{Proposition}{Propositions}
\crefname{cor}{Corollary}{Corollaries}
\Crefname{cor}{Corollary}{Corollaries}
\crefname{exmp}{Example}{Examples}
\Crefname{exmp}{Example}{Examples}
\crefname{chapter}{Chapter}{Chapters}
\Crefname{chapter}{Chapter}{Chapters}
\crefname{section}{Section}{Sections}
\Crefname{section}{Section}{Sections}
\crefname{part}{Part}{Parts}
\Crefname{part}{Part}{Parts}
\crefname{figure}{Figure}{Figures}
\Crefname{figure}{Figure}{Figures}
\crefname{equation}{}{}
\Crefname{equation}{}{}
\crefname{genenumi}{}{}
\Crefname{genenumi}{}{}
\renewcommand*\subcaption@@label[2]{%
  \@bsphack\begingroup
    \subcaption@ORI@label#1{#2}%
    \let\SK@\@gobbletwo
    \protected@edef\@currentlabel{\csname thesub\@captype\endcsname}%
    \protected@edef\cref@currentlabel{%
      [subs\@captype][\arabic{sub\@captype}][\cref@result]%
      \csname thesub\@captype\endcsname}%
    \subcaption@ORI@label#1{sub@#2}%
  \endgroup\@esphack}%
\newlist{genenum}{enumerate}{1}
\setlist[genenum]{label=(\arabic*),leftmargin=*,labelindent=0.5em,font=\normalfont}
\newlist{textenum}{enumerate}{1}
\setlist[textenum]{label=(\alph*),font=\normalfont,wide}
\let\epsilon\varepsilon
\newcommand{\nospacepunct}[1]{\makebox[0pt][l]{#1}}
\newcommand{\Mod}[1]{\:(\mathrm{mod}\:#1)}
\DeclareMathOperator{\obj}{obj}
\DeclareMathOperator{\dom}{dom}
\DeclareMathOperator{\cod}{cod}
\DeclareMathOperator{\chom}{hom}
\DeclareMathOperator{\autom}{Aut}
\DeclareMathOperator{\rot}{rot}
\DeclareMathOperator{\down}{down}
\DeclareMathOperator{\lan}{Lan}
\renewcommand{\id}{\mathrm{id}}
\newcommand{\Id}{\mathrm{Id}}
\newcommand{\idmtx}{\mathbf{1}}
\newcommand{\coev}{\mathrm{coev}}
\newcommand{\ev}{\mathrm{ev}}
\DeclareMathOperator{\diff}{Diff}
\DeclareMathOperator{\Ceil}{ceil}
\newcommand\sircle{\mathbb{S}}
\newcommand{\binfor}{\mathrm{BinFor}}
\newcommand{\annfor}{\mathrm{AnnFor}}
\newcommand{\shfrak}{\mathfrak{h}}
\DeclarePairedDelimiter{\ceil}{\lceil}{\rceil}
\DeclarePairedDelimiter{\floor}{\lfloor}{\rfloor}
\title{Thompson Field Theory}
\author{Deniz Ekrem Stiegemann}
\begin{document}



\begin{titlepage}

\begin{center}
  \vspace*{1cm}
  {\LARGE\bfseries
  Thompson Field Theory
  \par}
  \vspace*{10cm}
  \selectlanguage{ngerman}
  {\setstretch{1.4}
  Von der Fakultät für Mathematik und Physik\\
  der Gottfried Wilhelm Leibniz Universität Hannover\\
  \vspace{4mm}
  zur Erlangung des akademischen Grades\\
  Doktor der Naturwissenschaften\\
  Dr.\ rer.\ nat.\\
  \vspace{4mm}
  genehmigte Dissertation von\par
  }
  \vspace*{1cm}
  {\LARGE
  Deniz Ekrem Stiegemann, M.\,Sc.
  \par}
  \vspace*{2cm}
  {
  2019
  }
\end{center}
\end{titlepage}

\setcounter{page}{0}

\thispagestyle{empty}

\noindent \textbf{Mitglieder der Prüfungskommission:}\vspace*{0.5\baselineskip}\\
Prof.\ Dr.\ Elmar Schrohe (Vorsitzender)\\
Prof.\ Dr.\ Tobias J. Osborne (Betreuer)\\
Prof.\ Dr.\ Reinhard F. Werner

\vspace*{1.5\baselineskip}

\noindent \textbf{Gutachter:}\vspace*{0.5\baselineskip}\\
Prof.\ Dr.\ Robert König\\
Prof.\ Dr.\ Tobias J. Osborne \\
Prof.\ Dr.\ Reinhard F. Werner

\vspace*{1.5\baselineskip}

\noindent Tag der Promotion: 11.\ Juli 2019

\cleardoublepage

\selectlanguage{english}


\thispagestyle{empty}
\begin{center}
  {\Large
  {\bfseries Thompson Field Theory\par}
  \vspace*{0.5\baselineskip}
  Deniz E.\ Stiegemann\par
  \vspace*{\baselineskip}
  {\itshape Dissertation\par}
  }
  \vspace*{1.5\baselineskip}
\end{center}

  \noindent{\bfseries Abstract.}
  We introduce Thompson field theory, a class of toy models of conformal field theory in which Thompson's group $T$ takes the role of a discrete analogue of the chiral conformal group. $T$ and the related group $F$ are discrete transformations of dyadic partitions of the circle and the unit interval, respectively. When vectors or tensors are associated with partitions, one can construct a direct limit Hilbert space, here called the semicontinuous limit, and $F$ and $T$ have unitary representations on this space. We give an abstract description of these representations following the work of Jones. We also show that $T$ can be thought of as acting on the boundary of an equal-time Poincaré disk in $\mathrm{AdS}_3$. This defines a representation of $T$ on the Hilbert space that contains all tree-like holographic states, as introduced by \citeauthor{PastawskiYoshidaHarlowPreskill2015}. It also establishes a bulk-boundary correspondence through Imbert's isomorphism between $T$ and Penner's Ptolemy group. We further propose definitions of field operators and correlation functions for the discrete theory.
  Finally, we sketch new developments like particle creation and annihilation, as well as black holes and possible connections with topological quantum field theory.
  \\[0.8\baselineskip]
  \noindent {\itshape Keywords:} Thompson's groups, field theory, holographic principle

\vspace*{1.5\baselineskip}

\selectlanguage{ngerman}
  \noindent{\bfseries Zusammenfassung.}
  Wir stellen Thompson-Feldtheorie vor, eine Klasse von Toy-Modellen für konforme Feldtheorie, in denen Thompsons Gruppe $T$ die Rolle eines diskreten Analogons der chiralen konformen Gruppe übernimmt. $T$ und die verwandte Gruppe $F$ wirken wie diskrete Transformationen auf dyadische Zerlegungen des Kreises und des Einheitsintervalls. Man kann Vektoren oder Tensoren auf diesen Zerlegungen platzieren und daraus im direkten Limes einen Hilbertraum konstruieren, den wir hier semi-kontinuierlichen Limes nennen. $F$ und $T$ wirken über unitäre Darstellungen auf diesem Raum. Wir geben eine abstrakte Beschreibung dieser Darstellungen nach Jones. Es ist möglich, $T$ auf den Rand einer Poincaré-Scheibe wirken zu lassen, der ein Querschnitt von $\mathrm{AdS}_3$ bei konstanter Zeit ist. Das definiert eine Darstellung von $T$ auf dem Hilbertraum, der alle baumartigen holographischen Zustände nach \citeauthor{PastawskiYoshidaHarlowPreskill2015} enthält. Desweiteren erhält man  eine Bulk-Boundary-Korrespondenz durch Imberts Isomorphismus zwischen $T$ und Penners Ptolemäus-Gruppe. Außerdem schlagen wir Definitionen für Feldoperatoren und Korrelationsfunktionen der diskreten Theorie vor. Schließlich skizzieren wir neue Entwicklungen wie Teilchenerzeugung und -vernichtung, schwarze Löcher und mögliche Verbindungen zur topologischen Quantenfeldtheorie.
  \\[0.8\baselineskip]
  \noindent {\itshape Schlagwörter:} Thompson-Gruppen, Feldtheorie, holografisches Prinzip

\selectlanguage{english}


\selectlanguage{ngerman}
\chapter*{Danksagungen}

Ich bedanke mich ganz herzlich bei meinem Betreuer Tobias Osborne. Er hat mich in vielen wertvollen Diskussion nicht nur mit den Themen dieser und anderer Arbeiten vertraut gemacht, sondern auch mit der wissenschaftlichen Arbeitsweise. Außerdem hat er mich sehr dabei unterstützt, meine wissenschaftliche Karriere zu beginnen. Ich habe während meiner Promotion viel fürs Leben gelernt und daran hat Tobias einen wichtigen Anteil.

Darüber hinaus danke ich den Gutachtern und Mitgliedern der Promotionskommission Elmar Schrohe, Reinhard Werner und Robert König für ihr Interesse an meinem Projekt und die Bereitschaft, an meinem Promotionsverfahren teilzunehmen.

Für hilfreiche Gespräche zu den Themen dieser Arbeit bedanke ich mich bei Markus Duwe, Terry Farrelly, Alexander Hahn, Thorsten Holm, Nathan McMahon, \mbox{Yunxiang} Ren, René Schwonnek, Philip Schwartz, Michael Walter, Daniel Westerfeld und Ramona Wolf.

Für das Korrekturlesen oder andere Hinweise zur Arbeit danke ich Terry Farrelly, Louis Fraatz, Tobias Geib, Lina Stahnke und Ramona Wolf. Ihr habt die Arbeit zu dem gemacht, was sie jetzt ist.

Ich hatte eine sehr schöne Zeit in der Arbeitsgruppe Quanteninformation. Das liegt an den vielen lieben Kollegen, die ich währenddessen kennengelernt habe. Stellvertretend für alle danke ich meiner Büronachbarin Ike Dziemba für schöne gemeinsame Arbeitszeit und Freizeit.

Meiner Familie danke ich ganz herzlich für die Unterstützung auf meinem Weg und für die unzähligen Dinge, die sie über die Jahre für mich getan hat.

Und schließlich komme ich zu der, die mich unterstützt hat, an mich geglaubt hat und einfach immer da war -- ganz, ganz lieben Dank, Ramona!

\selectlanguage{english}

\tableofcontents\cleardoublepage


\chapter*{Notation and Conventions}

\begin{center}
  \begin{tabular}{p{1.5cm}l}
    $\disk$ & the open unit disk in $\complexes$\\
    $\halfplane$ & $=\reals+\imag\,\reals_{>0}\subset\complexes$, the upper halfplane\\
    $\field$ & $=\reals\text{ or }\complexes$\\
    $\naturals$ & $=\{ 0, 1, 2, \dotsc \}$, the set of natural numbers\\
    $\naturals^\times$ & $=\naturals\setminus\{0\}$\\
    $\sircle^1$ & $=\partial\disk$, the unit circle\\
    $\integers$ & the set of integers\\
    $M_d(\field)$ & the set of all $d\times d$ matrices over $\field$\\
    &\\
    $\mathfrak{h}, \hfrak$ & Hilbert spaces\\
    $\vfrak$ & a vector space\\
    &\\
    $A, B, \dotsc$ & often denote arbitrary objects\\
    $f, g, \dotsc$ & often denote arbitrary morphisms\\
    $\chom(\ccal)$ & the class of morphisms of the category $\ccal$\\
    $\obj(\ccal)$ & the class of objects of the category $\ccal$\\
    $\ccal(A, B)$ & the set of morphisms from $A$ to $B$\\
    $\dom(f)$ & the domain of the morphism $f$\\
    $\cod(f)$ & the codomain of the morphism $f$\\
    &\\
    $\id_A$ & the identity morphism of the object $A$\\
    $\Id_\ccal$ & the identity functor of the category $\ccal$\\
    $I$ & the monoidal unit object\\
    $\idmtx$ & an identity matrix\\
    &\\
    $f^*$ & the dual of $f$ (rotation by $180^\circ$)\\
    $f^\dagger$ & `$f$ dagger' (horizontal reflection)\\
    &\\
    $\subset$ & inclusion (with the possibility of equality)\\
    $\circ$ & composition of morphisms\\
    $\otimes$ & the monoidal product bifunctor\\
    &\\
    &
  \end{tabular}
\end{center}


\chapter*{Introduction}

\addtocontents{toc}{\protect\setcounter{tocdepth}{0}}

A few years ago Vaughan F.\ R.\ Jones initiated a research program to build a continuum limit theory using two groups known as Thompson's group $F$ and $T$. Thompson's groups are certain groups of piecewise-linear homeomorphisms. The elements of $F$ are functions of the interval $[0, 1]$, and the elements of $T$ are functions of the circle $\sircle^1$.

In the proposed limiting procedure, space is divided into smaller and smaller pieces, which may be seen as an instance of fine-graining. Thompson's groups act naturally on such divisions of space. However, the limit is called the \emph{semicontinuous limit} and not the continuum limit because it only allows for arbitrarily fine piecewise-linear functions but not smooth functions. Thompson's groups therefore serve as \emph{analogues} of the diffeomorphisms from a CFT. The resulting theory, which we call `Thompson field theory', will probably not directly give us new CFTs, but should instead be viewed as a \emph{toy model} in the true sense of the word: It is very rich and interesting, and can be studied as a new theory in its own right.

My supervisor, Tobias Osborne, had been interested in this topic for a while when I joined him in October 2015 as a doctoral candidate.
In the beginning, we tried to find a correct abstract description of unitary representations of Thompson's group $F$ (and later, Thompson's group $T$) such that the groups discretely deform space. Over a long period of time, many different constructions were tried. In 2016, Jones gave a first abstract description \cite{Jones18}. Today we know how these representations are built in full generality. This is detailed in \cref{cha:localization}.

Another facet of the picture are connections to holographic duality and a bulk-boundary correspondence known as the AdS-CFT correspondence (\cref{cha:dynamics}). Thompson's group $T$ acts on the circle---what if the circle is the boundary of a Poincaré disk? It turns out that mathematicians have worked on this in the context of something known as universal Teichmüller theory: It follows from the remarkable works of Penner and Imbert that $T$ is isomorphic to a group that generates certain tessellations of the Poincaré disk, known as the Ptolemy group \cite{Penner1997,Imbert1997}. This is where the connection with holography can be made: \citeauthor{PastawskiYoshidaHarlowPreskill2015} have introduced special tensor networks known as holographic error-correcting codes which  are \emph{kinematical} toy models for the bulk-boundary correspondence. They are defined on the Poincaré disk because the disk is an equal-time slice of $\mathrm{AdS}_3$. When we tile the disk with special tensors (known as `perfect tensors'), then we obtain states on which $T$ can act. This leads to a unitary representation of $T$, and due to its isomorphism with the Ptolemy group, this is also a manifestation of the bulk-boundary correspondence. In this picture, holographic states take the role of vacuum states, and $T$ gives the dynamical evolution and is an analogue of the chiral conformal group $\diff_+(\sircle^1)$.

We come back to Thompson's groups $F$ and $T$ acting on the interval and the circle, respectively. In order to make our model a good toy model of a quantum field theory, we need to propose a definition of field operators. They can readily be defined on coarse-grained systems, but in the semicontinuous limit, we need to resort to some physical argument. Since the theory is, as a \emph{physical theory}, determined by the behaviour of its correlations function, we explain how to compute correlation functions of (putative) field operators in the semicontinuous limit, which is like an implicit partial definition of these operators. We also define an analogue of the scaling dimension, and the operator product expansion, again in terms of correlation functions. We complete our proposals by providing two examples of how to compute fusion rules: a spin system and the Fibonacci category. This is the content of \cref{cha:quantum-fields}.

Throughout this work, we have strived for mathematical rigour as much as possible, except for some parts of \cref{cha:quantum-fields} and some prototypical ideas in \cref{cha:particles}. It is certainly true that the abstractness of category theory can initially hamper the understanding of a work, but it should certainly be used if it is worth it. We argue that this is the case here: Not only are we using category theory to encapsulate long definitions and formalize our theory, but we also \emph{do} actual category theory when we use the Kan extension to construct fraction groups together with their representations. What's more, the abstract formulation will allow us in the future to find dynamics for tensor networks with more complicated underlying graphs, since it makes transparent the necessary and sufficient aspects of the construction.

The thesis is divided into two parts. \cref{part:structure} contains most preliminaries needed for \cref{part:applications}. However, \cref{part:structure} also contains material that is very interesting in its own right, like the Kan extension of a localization functor (\cref{cha:localization}), our calculations for trivalent categories (\cref{cha:trivalent}), or the comprehensive overview of Thompson's groups (\cref{cha:thompson}).

Above we have already outlined \cref{cha:dynamics,cha:quantum-fields}. In \cref{cha:particles}, we elaborate on some new ideas that have not yet been fully developed but already show the surprising richness of Thompson field theory. Specifically, we discuss models of particle creation and annihilation and of black holes. We also briefly make connections with cobordisms.

\subsection*{Contributions}

Here we list what we believe are the novel contributions found in this work, both major and minor, in order of appearance:
\begin{enumerate}[label=\arabic*.]
  \item a very general description of Jones' representations (\cref{sec:fractrep})
  \item a direct and computationally transparent proof of how to approximate diffeomorphisms by elements of Thompson's groups (\cref{sec:approximating-diffeomorphisms})
  \item the construction of dynamics for holographic codes \& the bulk-boundary correspondence (\cref{cha:dynamics})
  \item quantum fields for Thompson's groups (\cref{cha:quantum-fields})
  \item prototype dynamics for particles \& black holes (\cref{sec:particles,sec:black-holes})
  \item ideas for discrete cobordisms (\cref{sec:cobordisms})
\end{enumerate}

\subsection*{A Guide to This Thesis}

Finally, here are some suggestions on how to read the thesis. This is for anyone who prefers to begin with an overview. I suggest the following order:
\begin{enumerate}[label=\arabic*.]
  \item this introduction
  \item the introductions of the chapters
  \item \cref{cha:dynamics,cha:quantum-fields}, and bits from chapters in \cref{part:structure} where needed
\end{enumerate}
After this, the rest of the thesis should be more accessible.

\section*{Publications}
\sectionmark{Introduction}

\begin{center}
  \vspace*{0.1\baselineskip}
  The thesis is based on the following articles:
  \vspace*{0.3\baselineskip}
\end{center}

  \noindent\fullcite{OsborneStiegemann2017}
  \vspace*{0.3\baselineskip}

  \noindent\fullcite{Stiegemann2018b}
  \vspace*{0.3\baselineskip}

  \noindent\fullcite{OsborneStiegemann2019}

\addtocontents{toc}{\protect\setcounter{tocdepth}{1}}

\part{Structure}\label{part:structure}


\chapter{Categories and Monoidal Structure}
\label{cha:cats}

Categories axiomatize the general notion of an associative partially defined composition with neutral elements, together with notions of structure preserving maps and so-called natural transformations. In the present work, the use of categorical language serves three purposes:
\begin{itemize}
	\item \textsl{It makes it easier to connect different ideas in a rigorous way.} For instance, it allows us to treat trivalent categories and spin systems under a unified framework.
	\item \textsl{It is a convenient form of book-keeping of long definitions.} As an example, take the definition of a spherical category.
	\item \textsl{The level of abstraction makes it easier to decide if conditions are necessary or sufficient, or neither.} In this way, it can help us in the very formation of new useful definitions. See e.g.\ \cref{sec:fractrep}.
\end{itemize}

This chapter is a summary of notions from category theory that will be used in later chapters. The aim is to fix notation and make our work largely self-contained. However, the exposition is not an introduction to the topics covered, and we direct the reader to the literature cited at the beginning of each section for more information.

\section{Categories, Functors and Natural Transformations}

Many introductory texts on category theory have been written, satisfying the tastes of different groups of readers. In addition to the standard reference \cite{MacLane98}, we recommend a combination of the books \cite{Leinster14,Awodey16,Spivak2014}.

\begin{defn}
	A \defemph{category} $\ccal$ consists of
	\begin{itemize}
		\item a collection of \defemph{objects} $\obj(\ccal)$;
		\item for each pair $A, B\in\obj(\ccal)$ a set $\ccal(A, B)$ of \defemph{morphisms};
		\item for each triple $A, B, C\in\obj(\ccal)$ a map
			\begin{equation}
				\ccal(A, B)\times\ccal(B, C)\to\ccal(A, C),
					\quad (f, g)\mapsto f\circ g;
			\end{equation}
      called \defemph{composition};
    \item for each $A\in\obj(\ccal)$ an element $\id_A\in\ccal(A, A)$, called the \defemph{identity} on $A$,
	\end{itemize}
	such that
	\begin{genenum}
		\item $\id_B\circ f=f=f\circ\id_A$ for all $A, B\in\obj(\ccal)$ and $f\in\ccal(A, B)$;
		\item $(h\circ g)\circ f=h\circ(f\circ g)$ for all $A, B, C, D\in\obj(\ccal)$, $f\in\ccal(A, B)$, $g\in\ccal(B, C)$, and $h\in\ccal(C, D)$ (associativity).
	\end{genenum}
\end{defn}

Instead of $f\in\ccal(A, B)$, we will often write $f\colon A\to B$. $A$ is called the \defemph{domain} of $f$, denoted $\dom(f)$, and $B$ is the \defemph{codomain} $\cod(f)$. The collection of all morphisms of $\ccal$ is denoted $\hom(\ccal)$, and we will sometimes write $f\in\ccal$ instead of $f\in\hom(\ccal)$. The set of morphisms between two objects is called a hom-set. If two morphisms $f, g$ are such that $\cod(f)=\dom(g)$, they are called \defemph{composable} since $g\circ f$ is then defined. Sometimes we will write $g\circ f$ without further justification, implicitly assuming that $f$ and $g$ are composable.

A category can be viewed as a special directed graph, with the morphisms being edges and the objects vertices.

\begin{exmp}
	The smallest non-empty category $\ical$ has a single object $\star$ and only one morphism ${\id_\star}$.
\end{exmp}

\begin{exmp}
	A \defemph{proset} $J$ is a small category such that for any pair of objects $x, y$ in $J$, there is at most one morphism $x\to y$. Here, the term `proset' is an abbreviation of `preordered set', and existence of a morphism $x\to y$ on the category side corresponds to the relation $x\le y$.
\end{exmp}

\begin{exmp}\label{exmp:groupoid}
	A \defemph{groupoid} is a category in which every morphism is an isomorphism.
	A \defemph{group} is a groupoid with exactly one object. If $G$ is a group and $\star$ its one object, then the morphisms in $G$ are precisely the automorphisms of $\star$, and the set $\autom_G(\star)$ of automorphisms of $\star$ forms a group, under composition of morphisms, in the traditional sense of algebra. $\ical$ represents the trivial group.
\end{exmp}

\begin{defn}
	Let $\acal$ be a category. An object $I$ in $\acal$ is \defemph{initial} if for every object $A$ in $\acal$, there is exactly one morphism $I\to A$. An object $T$ in $\acal$ is \defemph{terminal} if for every object $A$ in $\acal$, there is exactly one morphism $A\to T$.
\end{defn}

\begin{defn} \label{defn:functor}
	Let $\acal$ and $\bcal$ be categories. A \defemph{functor} $F\colon\acal\to\bcal$ consists of
	\begin{itemize}
		\item a function $\obj(\acal)\to\obj(\bcal)$, assigning to each object $A$ of $\acal$ an object $F(A)$ of $\bcal$,
		\item for all objects $A, B$ in $\acal$, a function $\acal(A, B)\to\bcal(F(A), F(B))$, assigning to each morphism $f\colon A\to B$ in $\acal$ a morphism $F(f)\colon F(A)\to F(B)$ in $\bcal$,
	\end{itemize}
	such that
	\begin{genenum}[label=(F\arabic*)]
		\item\label{it:functor1} $F(f\circ g)=F(f)\circ F(g)$ for all composable $f, g\in\acal$,
		\item\label{it:functor2} $F({\id_A})={\id_{F(A)}}$ for all objects $A$ in $\acal$.
	\end{genenum}
	Sometimes a functor $\acal\to\bcal$ is also called a \defemph{diagram in $\bcal$ of type $\acal$}.
\end{defn}

\begin{exmp}
	The identity functor $\Id_\ccal\colon\ccal\to\ccal$ is the identity map on objects and on morphisms.
\end{exmp}

Functors are structure-preserving maps between categories and can therefore be viewed as morphisms between categories. Functors are related to two other mathematical structures. The concepts of \emph{representation} of an algebraic structure and of \emph{module} over an algebraic structure are different perspectives on the concept of functor. If $\acal$ is a category corresponding to an algebraic structure, and $\ccal$ is another category, then a functor $F\colon\acal\to\ccal$ can be viewed as a representation of $\acal$ on $\ccal$ (or, in the traditional language, on one or several objects of $\ccal$). $F$ can also be regarded as a module\footnote{A module is an algebraic structure similar to a vector space but more general, since scalars are only required to be elements of a ring, not a field.} on $\ccal$ over $\acal$; for morphisms $f\in\acal$ and $c\in\ccal$, we consider
\begin{equation}
	f\cdot c=F(f)\circ c
\end{equation}
to be the multiplication of the scalar $f$ with the element $c$. \cref{it:functor1,it:functor2} translate to the conditions
\begin{equation}
	F(f\circ g)\circ c= F(f)\circ F(g)\circ c, \quad F({\id_A})\circ c={\id_{F(A)}}\circ c = c,
\end{equation}
(where $\cod(c)=F(A)$), which in the language of modules is written
\begin{equation}
	(fg)\cdot c= f\cdot (g\cdot c), \quad 1\cdot c = c.
\end{equation}

For the sake of completeness, we give here the full defintion of comma categories.

\begin{defn}
	Given categories and functors $\acal\xrightarrow{F}\ccal\xleftarrow{G}\bcal$, the \defemph{comma category} $(F\downarrow G)$ is defined as follows:
	\begin{itemize}
		\item the \emph{objects} are all triples $(A, B, \alpha)$ with $A\in\obj(\acal)$, $B\in\obj(\bcal)$, and $\alpha\colon F(A)\to G(B)$;
		\item the \emph{morphisms} $(A, B, \alpha)\to (A', B', \alpha')$ are all pairs $(f, g)$ of morphisms $f\colon A\to A'$ and $g\colon B\to B'$ such that
		\begin{equation}
			\begin{tikzcd}
				F(A) \ar[r, "F(f)"] \ar[d, "\alpha", '] & F(A') \ar[d, "\alpha'"]\\
				G(B) \ar[r, "G(g)", ']& G(B')
			\end{tikzcd}
		\end{equation}
		commutes;
		\item \emph{composition} is given by $(f', g')\circ (f, g)=(f'\circ f, g'\circ g)$;
		\item the \emph{identity} for an object $(A, B, \alpha)$ is $({\id_A}, {\id_B})$.
	\end{itemize}
\end{defn}

Next, we introduce natural transformations. Besides categories and functors, they are the third important ingredient of category theory.

\begin{defn}
	Let $\acal$ and $\bcal$ be categories and let $F, G\colon\acal\to\bcal$ be functors. A \defemph{natural transformation} $\alpha\colon F\to G$ consists of a family of morphisms $\alpha_A\colon F(A)\to G(A)$, one for each object $A$ in $\acal$, such that for every $f\colon A\to B$ in $\acal$, the square
	\begin{equation}
		\begin{tikzcd}
			F(A) \ar[r, "F(f)"] \ar[d, "\alpha_A", '] & F(B) \ar[d, "\alpha_B"]\\
			G(A) \ar[r, "G(f)", '] & G(B)
		\end{tikzcd}
	\end{equation}
	commutes. The morphisms $\alpha_A$ are called the components of $\alpha$.
\end{defn}

Natural transformations can be understood as morphisms between functors. Indeed, there is an associative composition of natural transformations; given $\alpha\colon F\to G$ and $\beta\colon G\to H$, where $F, G, H\colon\acal\to\bcal$ are functors, the composite natural transformation $\beta\circ\alpha\colon F\to H$ is defined by $(\beta\circ\alpha)_A=\beta_A\circ\alpha_A$ for all objects $A$. There is also an identity natural transformation $\id_F\colon F\to F$, defined by $(\id_F)_A=\id_{F(A)}$. Therefore for any two categories $\acal$ and $\bcal$, there is a category whose objects are the functors $\acal\to\bcal$ and whose morphisms are natural transformations. It is called the \defemph{functor category} from $\acal$ to $\bcal$ and denoted $[\acal, \bcal]$.

\section{Limits and Colimits}

The following is standard material found in any textbook on category theory.
Let $\jcal, \ccal$ be categories. For every object $C$ in $\ccal$, the \defemph{constant functor} $\Delta(C)\colon\jcal\to\ccal$ sends each object of $\jcal$ to $C$ and each morphism to ${\id_C}$. For every morphism $f\colon C\to C'$ in $\ccal$, the \defemph{constant natural transformation} $\Delta(f)\colon\Delta(C)\to\Delta(C')$ is defined by having the constant component $f$. The \defemph{diagonal functor} $\Delta\colon\ccal\to[\ccal, \jcal]$ maps each object $C$ in $\ccal$ to the constant functor $\Delta(C)$ and each morphism $f\in\ccal$ to the constant natural transformation $\Delta(f)$.

Let $F\colon\jcal\to\ccal$ be a diagram in $\ccal$ of type $\jcal$. The comma category $(\Delta\downarrow F)$ is called the \defemph{category of cones to $F$}; $(F\downarrow \Delta)$ is the \defemph{category of cocones from $F$}. Every terminal object of $(\Delta\downarrow F)$ is called a \defemph{limit of $F$}; every initial object of $(F\downarrow\Delta)$ is called a \defemph{colimit of $F$}.

Since we will need colimits later on, we spell out the definition of cocone and colimit in more readable terms. An object of $(F\downarrow\Delta)$ is (effectively) a pair $(C, \eta)$ with an object $C$ in $\ccal$ and a natural transformation $\eta\colon F\to\Delta(C)$. This means that a cocone from $F$ to $C$ is given by a family of morphisms $\eta_x\colon F(x)\to C$ for each object $x$ in $\jcal$ such that for every morphism $j\colon x\to y$ in $\jcal$,
\begin{equation}
	\begin{tikzcd}
		F(x) \ar[rd, "\eta_x", '] \ar[rr, "F(j)"] & & F(y) \ar[ld, "\eta_y"] \\
		& C &
	\end{tikzcd}
\end{equation}
commutes. We will refer to the morphisms $\eta_x$ as the components of the cocone. A cocone $(C, \eta)$ is a colimit if for every other cocone $(D, \tilde\eta)$, there exists exactly one morphism $u\colon C\to D$ such that
\begin{equation}
	\begin{tikzcd}
		F(x) \ar[ddr, bend right, "\tilde\eta_x", '] \ar[rd, "\eta_x"] \ar[rr, "F(j)"] & & F(y) \ar[ddl, bend left, "\tilde\eta_y"] \ar[ld, "\eta_y", '] \\
		& C \ar[d, "u", dashed] &\\
		& D &
	\end{tikzcd}
\end{equation}
commutes.

\section{Monoidal and Linear Categories}

In this section and the next, we freely follow several works on monoidal and linear categories, depending on our later needs. A standard reference is \cite{EtingofGelakiNikshychOstrik2015}; more simple expositions can be found in \cite{Muger2003,Quinn2017}.

\begin{defn}
	A \defemph{strict monoidal category} $(\ccal, \otimes,I)$ is a category $\ccal$ together with a functor $\otimes\colon\ccal\times\ccal \to \ccal$, called the \defemph{monoidal product}, and an object $I$ in $\ccal$, called the \defemph{unit object}, such that
	\begin{genenum}
		\item $(A\otimes B)\otimes C=A\otimes(B\otimes C)$ and $(f\otimes g)\otimes h=f\otimes (g\otimes h)$,
		\item $I\otimes A=A=A\otimes I$ and $\id_I\otimes f=f\otimes\id_I=f$
	\end{genenum}
	for all $A, B, C\in\obj(\ccal)$ and $f, g, h\in\chom(\ccal)$.
\end{defn}

There is the more general notion of a---not necessarily strict---monoidal category, where the above identities are replaced with isomorphisms satisfying coherence conditions. We do not need this level of generality. Therefore, in this work we will assume all monoidal categories to be strict monoidal.

\begin{defn}
	An \defemph{additive category} is a category $\ccal$ with the following properties:
	\begin{genenum}[label=(A\arabic*)]
		\item Every hom-set is an additive abelian group, and composition of morphisms in $\ccal$ is biadditive.
		\item There is an object $0$ in $\ccal$ such that $\ccal(0, 0)$ is the trivial group.
		\item\label{it:additive3} for all objects $A_1, A_2$ there exists an object $B$ and morphisms $p_j\colon B\to A_j$ and $i_j\colon A_j\to B$ for $j=1, 2$ such that
		\begin{equation}
			p_1\circ i_1=\id_{A_1}, \quad p_2\circ i_2=\id_{A_2}, \quad i_1\circ p_1+i_2\circ p_2=\id_B.
		\end{equation}
	\end{genenum}
\end{defn}

We will denote the neutral elements of hom-sets by $0$ as well. In the situation of \cref{it:additive3}, it is easy to see that $p_2\circ i_1=0$ and $p_1\circ i_2=0$. Furthermore, the object $B$ is unique up to unique isomorphism. We denote it by $A_1\oplus A_2$ and call it the direct sum of $A_1$ and $A_2$; the $p_j$ are to be thought of as projections and the $i_j$ as embeddings.

\begin{defn}
	A category $\ccal$ is \defemph{$\field$-linear} if every hom-set is a finite-dimensional $\field$-linear space and composition of morphisms is bilinear. A monoidal category $\ccal$ is called $\field$-linear if additionally the monoidal product is bilinear.
\end{defn}

Every $\field$-linear category is automatically an additive category, with addition given by addition of vectors.

Two objects $A$ and $B$ in an additive category $\ccal$ are said to be \defemph{disjoint} if $\ccal(A, B)$ is the trivial group.

\begin{defn}
	An object $X$ in a $\field$-linear category $\ccal$ is called \defemph{simple} if $\ccal(X, X)=\field\,\id_X$. $\ccal$ is called \defemph{semisimple} if it is additive and there exists a set $\{X_i\}$ of pairwise disjoint simple objects such that every object in $\ccal$ is isomorphic to a finite direct sum of objects in $\{X_i\}$.
\end{defn}

\section{Duality in Monoidal Categories}

\begin{defn}\label{defn:duals}
	A strict monoidal category $(\ccal, \otimes,I)$ is called \defemph{rigid} if for every object $A$ in $\ccal$ there exists an object $A^*$, called its \defemph{left dual}, and a pair of morphisms
	\begin{itemize}
		\item $\coev_A\colon I\to A\otimes A^*$, called the \defemph{coevaluation}, and
		\item $\ev_A\colon A^*\otimes A\to I$, called the \defemph{evaluation},
	\end{itemize}
	such that
	\begin{equation}\label{eq:zigzag}
		\begin{tikzcd}[column sep=huge]
			A \ar[r, "\coev_A\otimes\id_A"] \ar[rd, "\id_A", '] & A\otimes A^* \otimes A \ar[d, "\id_A\otimes\ev_A"]\\
			& A,
		\end{tikzcd}\hspace{20pt}
		\begin{tikzcd}[column sep=huge]
			A^* \ar[r, "\id_{A^*}\otimes\coev_A"] \ar[rd, "\id_{A^*}", '] & A^*\otimes A \otimes A^* \ar[d, "\ev_A\otimes\id_{A^*}"]\\
			& A^*
		\end{tikzcd}
	\end{equation}
	commute. In that case, $A$ is called a \defemph{right dual} of $A^*$.
\end{defn}

Note that always $I^*\cong I$. We come to the following proposition, which can be found in \cite[Prop.~4.2.3]{Quinn2017}, see also \cite[Prop.~2.10.5]{EtingofGelakiNikshychOstrik2015}.

\begin{prop}\label{prop:duals-unique}
	Let $A$ be an object in $\ccal$. Assume that $A$ has a left dual $A^*$ with coevaluation and evaluation maps $\coev_A$ and $\ev_A$, respectively, and another left dual $\tilde A$ with coevaluation and evaluation maps $\widetilde\coev_A$ and $\widetilde\ev_A$, respectively. Then there is a unique isomorphism $\alpha\colon A^*\to\tilde A$ such that
	\begin{equation}\label{eq:duals-unique}
		\widetilde\ev_A\circ(\alpha\otimes\id_A)=\ev_A \quad\text{and}\quad (\id_A\otimes\alpha^{-1})\circ\widetilde\coev_A=\coev_A.
	\end{equation}
\end{prop}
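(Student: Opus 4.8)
The plan is to realise $\alpha$ explicitly by the standard ``mate'' (or ``sliding'') construction and to reduce every claim to the zig-zag identities \eqref{eq:zigzag}. Concretely, I would set
\begin{equation*}
  \alpha=(\ev_A\otimes\id_{\tilde A})\circ(\id_{A^*}\otimes\widetilde\coev_A)\colon A^*\to\tilde A,
  \qquad
  \beta=(\widetilde\ev_A\otimes\id_{A^*})\circ(\id_{\tilde A}\otimes\coev_A)\colon\tilde A\to A^*,
\end{equation*}
so that $\beta$ is the mirror-image candidate for $\alpha^{-1}$. In string-diagram language, $\alpha$ bends the $A^*$-strand along a $\widetilde\coev_A$ and then caps off the resulting $A^*\otimes A$ with $\ev_A$, and $\beta$ does the same with the roles of the two duals interchanged.

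The first step is to verify the two compatibility equations in \eqref{eq:duals-unique}. For $\widetilde\ev_A\circ(\alpha\otimes\id_A)=\ev_A$: substitute the definition of $\alpha$, use bifunctoriality of $\otimes$ (the interchange law) to slide the tensor factors so that $\widetilde\coev_A$ sits directly next to $\widetilde\ev_A$, and apply the first zig-zag identity $(\id_A\otimes\widetilde\ev_A)\circ(\widetilde\coev_A\otimes\id_A)=\id_A$ for the dual pair $(\tilde A,\widetilde\coev_A,\widetilde\ev_A)$; what is left over is exactly $\ev_A$. The identity $(\id_A\otimes\beta)\circ\widetilde\coev_A=\coev_A$ follows by the entirely symmetric computation, again using the first zig-zag identity for $(\tilde A,\widetilde\coev_A,\widetilde\ev_A)$.

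Next I would show that $\alpha$ and $\beta$ are mutually inverse. Expanding $\beta\circ\alpha$ and pulling tensor factors past one another by interchange, one rewrites it as $\bigl((\widetilde\ev_A\circ(\alpha\otimes\id_A))\otimes\id_{A^*}\bigr)\circ(\id_{A^*}\otimes\coev_A)$; feeding in the compatibility equation just established and then applying the second zig-zag identity $(\ev_A\otimes\id_{A^*})\circ(\id_{A^*}\otimes\coev_A)=\id_{A^*}$ for $(A^*,\coev_A,\ev_A)$ gives $\beta\circ\alpha=\id_{A^*}$. The computation of $\alpha\circ\beta=\id_{\tilde A}$ is symmetric. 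Hence $\alpha$ is an isomorphism with $\alpha^{-1}=\beta$, and by the first step it satisfies \eqref{eq:duals-unique}.

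For uniqueness, suppose $\gamma\colon A^*\to\tilde A$ also satisfies $\widetilde\ev_A\circ(\gamma\otimes\id_A)=\ev_A$. Starting from $\gamma=(\widetilde\ev_A\otimes\id_{\tilde A})\circ(\id_{\tilde A}\otimes\widetilde\coev_A)\circ\gamma$ (the second zig-zag identity for $\tilde A$), I would slide $\gamma$ through the tensor product via interchange so that the hypothesis on $\gamma$ applies, obtaining $\gamma=(\ev_A\otimes\id_{\tilde A})\circ(\id_{A^*}\otimes\widetilde\coev_A)=\alpha$; thus the first equation of \eqref{eq:duals-unique} alone already pins $\alpha$ down, and applying the same argument to $\gamma^{-1}$ and $\beta$ shows the second equation does as well. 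I expect the only real obstacle to be bookkeeping: there are many tensor factors, and one must apply the interchange law carefully and invoke each zig-zag identity for the correct one of the two dual pairs. Drawing the corresponding string diagrams makes all of these steps essentially mechanical, so there is no conceptual difficulty beyond the snake identities themselves, which the hypotheses hand us directly.
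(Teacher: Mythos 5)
Your proposal is correct and follows essentially the same route as the paper: the morphisms you define are exactly the $\alpha=(\ev_A\otimes\id_{\tilde A})\circ(\id_{A^*}\otimes\widetilde\coev_A)$ and $\alpha^{-1}=(\widetilde\ev_A\otimes\id_{A^*})\circ(\id_{\tilde A}\otimes\coev_A)$ displayed after the proposition, and the paper (deferring details to the cited references) intends precisely the zig-zag/interchange verification you carry out. Your uniqueness argument, recovering $\gamma=\alpha$ from the first equation via the second zig-zag identity for $\tilde A$, is the standard one and is sound.
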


In short, dual objects are unique up to unique isomorphism, and this isomorphism commutes with (co-)evaluations. We will therefore speak of \emph{the} left or right dual. The isomorphism and its inverse are given by
	\begin{gather}
		\alpha= A^*\xrightarrow{\id_{A^*}\otimes\widetilde\coev_A}A^*\otimes A\otimes\tilde A\xrightarrow{\ev_A\otimes\id_{\tilde A}} \tilde A,\\
		\alpha^{-1}= \tilde A\xrightarrow{\id_{\tilde A}\otimes\coev_A}\tilde A\otimes A\otimes A^*\xrightarrow{\widetilde\ev_A\otimes\id_{A^*}} A^*.
	\end{gather}

\begin{prop}
	Let $A$ and $B$ be objects in a rigid monoidal category. Then there is a unique isomorphism $\alpha_{A, B}\colon (A\otimes B)^*\to B^*\otimes A^*$ such that
	\begin{equation}
		\ev_B\circ(\id_{B^*}\otimes\ev_A\otimes\id_B)=\ev_{A\otimes B}\circ(\alpha_{A, B}^{-1}\otimes\id_{A\otimes B})
	\end{equation}
	and
	\begin{equation}
		(\id_A\otimes\coev_B\otimes\id_{A^*})\circ\coev_A=(\id_{A\otimes B}\otimes\alpha_{A, B})\circ\coev_{A\otimes B}.
	\end{equation}
\end{prop}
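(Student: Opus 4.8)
The plan is to exhibit $B^*\otimes A^*$ as a left dual of $A\otimes B$ and then invoke \cref{prop:duals-unique}. First I would write down the candidate structure maps, using strictness to suppress bracketing:
\[
\widetilde{\ev}_{A\otimes B}=\ev_B\circ(\id_{B^*}\otimes\ev_A\otimes\id_B)\colon (B^*\otimes A^*)\otimes(A\otimes B)\to I,
\]
\[
\widetilde{\coev}_{A\otimes B}=(\id_A\otimes\coev_B\otimes\id_{A^*})\circ\coev_A\colon I\to (A\otimes B)\otimes(B^*\otimes A^*).
\]

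Second — and this is the only step that is not purely formal — I would verify that the triple $(B^*\otimes A^*,\widetilde{\coev}_{A\otimes B},\widetilde{\ev}_{A\otimes B})$ satisfies the two zigzag identities of \cref{defn:duals} with $A\otimes B$ in place of $A$. Concretely, one forms the composite $(\widetilde{\coev}_{A\otimes B}\otimes\id_{A\otimes B})$ followed by $(\id_{A\otimes B}\otimes\widetilde{\ev}_{A\otimes B})$ and, using the interchange law (bifunctoriality of $\otimes$), slides the $A$-part and the $B$-part of the expression past one another so that the zigzag identity for $B$ collapses, and then the zigzag identity for $A$ collapses, leaving $\id_{A\otimes B}$. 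The other zigzag identity (the one with $B^*\otimes A^*$ on the outside) is handled symmetrically. This is pure bookkeeping of tensor factors, most transparent when drawn as string diagrams; I expect it to be the main obstacle only in the sense of requiring care, not of requiring a new idea.

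Third, with $B^*\otimes A^*$ now known to be a left dual of $A\otimes B$, I apply \cref{prop:duals-unique} with the second left dual $\tilde A$ taken to be $B^*\otimes A^*$ and its coevaluation and evaluation taken to be $\widetilde{\coev}_{A\otimes B}$ and $\widetilde{\ev}_{A\otimes B}$. This produces a unique isomorphism $\alpha_{A,B}\colon(A\otimes B)^*\to B^*\otimes A^*$ compatible with the evaluations and coevaluations; unwinding the definitions of $\widetilde{\ev}_{A\otimes B}$ and $\widetilde{\coev}_{A\otimes B}$ and rearranging (moving $\alpha_{A,B}$ across a composition with its inverse) turns the compatibility equations supplied by \cref{prop:duals-unique} into exactly the two displayed equations in the statement, and uniqueness is inherited directly from \cref{prop:duals-unique}. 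If an explicit formula for $\alpha_{A,B}$ is desired, one can additionally compose the comparison-isomorphism formula stated right after \cref{prop:duals-unique} with the structure maps above.
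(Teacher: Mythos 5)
Your proposal is correct: exhibiting $B^*\otimes A^*$ as a left dual of $A\otimes B$ via the candidate maps, checking the two zigzag identities by the interchange law, and then invoking \cref{prop:duals-unique} (with the stated equations obtained by moving $\alpha_{A,B}$ respectively $\alpha_{A,B}^{-1}$ across the compatibility relations) is exactly the standard argument. The paper states this proposition without writing out a proof, but the route it sets up—deducing it from the uniqueness of duals—is precisely the one you follow, so there is nothing to add.
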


Given a rigid structure, it is possible to define a notion of adjunction for morphisms as well. For any morphism $f\colon A\to B$, define $f^*\colon B^*\to A^*$ by setting
\begin{equation}
	f^*=B^* \xrightarrow{\coev_A\otimes\id_{B^*}} A^*\otimes A\otimes B^* \xrightarrow{\id_{A^*}\otimes f\otimes\id_{B^*}} A^*\otimes B\otimes B^* \xrightarrow{\id_A^*\otimes\ev_B} A^*.
\end{equation}
One can show that $(g\circ f)^*=f^*\circ g^*$ and $\id_A^*=\id_{A^*}$.

From the rigid structure we have a canonical isomorphism $(A\otimes B)^{**}\cong A^{**}\otimes B^{**}$ which is given by
\begin{equation}
	(A\otimes B)^{**} \xrightarrow{(\alpha_{A, B}^*)^{-1}} (B^*\otimes A^*)^* \xrightarrow{\alpha_{B^*, A^*}} A^{**}\otimes B^{**}.
\end{equation}

\begin{defn}
	Let $\ccal$ be a rigid monoidal category. A \defemph{pivotal} structure on $\ccal$ is a collection of isomorphisms $\phi_A\colon A\to A^{**}$ for every object $A$ in $\ccal$, such that
	\begin{equation}
		\begin{tikzcd}
			A \ar[r, "f"] \ar[d, "\phi_A", '] & B \ar[d, "\phi_B"]\\
			A^{**} \ar[r, "f^{**}", '] & B^{**}
		\end{tikzcd}
	\end{equation}
	commutes for all $f\colon A\to B$ in $\ccal$, and such that
	\begin{equation}
		\begin{tikzcd}
			A\otimes B \ar[d, "\phi_{A\otimes B}", '] \ar[dr, "\phi_A\otimes\phi_B"] & \\
			(A\otimes B)^{**} \ar[r, "{\cong}"] & A^{**}\otimes B^{**}
		\end{tikzcd}
	\end{equation}
	commutes for all objects $A$ and $B$ in $\ccal$.
\end{defn}

The second condition is to ensure that the pivotal structure is compatible with the rigid and monoidal structure of the category.

\begin{defn}
	A pivotal category is called \defemph{strict pivotal} if the isomorphisms $\phi_A\colon A\to A^{**}$ and $\alpha_{A, B}\colon (A\otimes B)^*\to B^*\otimes A^*$ are identities and $I=I^*$.
\end{defn}

\begin{defn}
	Let $X$ be an object in a strict pivotal category $\ccal$, and let $X^*$ be its dual. $X$ is called \defemph{self-dual} if there is an isomorphism $\beta\colon X\to X^{*}$ and $X$ is called \defemph{symmetrically self-dual} if $\beta^*=\beta$.
\end{defn}

Given a fixed self-dual object $X$, we can define particularly simple versions of the coevaluation and evaluation morphisms for $X$ which do not have an orientation. They are defined by
\begin{align}
	(\widehat\coev\colon I\to X\otimes X) &= (\id_X\otimes\beta^{-1}) \circ \coev_X,\\
	(\widehat\ev\colon X\otimes X\to I)   &= \ev_X\circ(\beta\otimes\id_X),
\end{align}
and are useful in the graphical calculus developed in \cref{sec:graphical-calculus}.

Let $\ccal$ be strict pivotal. For every object $A$ and $f\colon A\to A$ we define morphisms in $\ccal(I, I)$ by
\begin{align}
	\tr_\mathrm{L}(f)&=\ev_{A^*} \circ (f\otimes\id_{A^*}) \circ \coev_{A},\\
	\tr_\mathrm{R}(f)&= \ev_A \circ (\id_{A^*}\otimes f) \circ \coev_{A^*}.
\end{align}

\begin{defn}
	A strict pivotal category $\ccal$ is called \defemph{spherical} if $\tr_\mathrm{L}(f)=\tr_\mathrm{R}(f)$ for all $f$.
\end{defn}

Finally, we briefly mention the concept of dagger category, which we will use a few times in this work. A dagger is a functor $\dagger\colon\ccal^\mathrm{op}\to\ccal$ which is the identity on objects and an involution on morphisms, that is, $\dagger\circ\dagger=\Id_\ccal$. It is usually required or inferred that the dagger operation is compatible with additional structure. Applied to morphisms, the most important of these constraints are
\begin{align}
	(f+g)^\dag&=f^\dag+g^\dag,& (f\otimes g)^\dag&=f^\dag\otimes g^\dag,\\
	(f^*)^\dag&=(f^\dag)^*,& \tr(f^\dag)&=\tr(f)^\dag.
\end{align}
In a $\complexes$-linear category, the dagger can be linear or conjugate-linear.

\section{The Graphical Calculus of Monoidal Categories}
\label{sec:graphical-calculus}

Monoidal categories have a very simple graphical calculus \cite{JoyalStreet1991,Selinger2010}. Given any category, morphisms are represented by planar graphs with labelled vertices, called `diagrams', and objects are represented by labelled edges, also called `legs'. Diagrams are read \emph{from bottom to top}. For instance, if $f\colon A\to B$ is a morphism, it is drawn as
\begin{equation}
	\begin{tikzpicture}[scale=0.1,baseline=20]\scriptsize
		\draw (0, 0) node[above left] {$A$} -- (0, 16) node[below left] {$B$};
		\filldraw[fill=vlightgray] (0, 8) circle [radius=2] node {$f$};
	\end{tikzpicture}\;.
\end{equation}
Every morphism diagram has an ingoing leg, representing the domain, and an outgoing leg, representing the codomain, and diagrams are to be read from bottom to top. Composition is represented by the operation of attaching legs to legs, so that
\begin{equation}
	\begin{tikzpicture}[scale=0.1]\scriptsize
		\draw (0, 0) node[above left] {$A$} -- (0, 13) node[left] {$B$} -- (0, 26) node[below left] {$C$};
		\filldraw[fill=vlightgray] (0, 8) circle [radius=2] node {$f$};
		\filldraw[fill=vlightgray] (0, 18) circle [radius=2] node {$g$};
	\end{tikzpicture}
\end{equation}
is the diagram for $g\circ f=A \xrightarrow{f} B \xrightarrow{g} C$. Identities are more simply represented by straight lines, so
\begin{equation}
	\begin{tikzpicture}[scale=0.1,baseline=20]\scriptsize
		\draw (0, 0) node[above left] {$A$} -- (0, 16) node[below left] {$A$};
		\node[fill=vlightgray,draw,rounded corners] at (0, 8) {$\id_A$};
	\end{tikzpicture}
	\,=
	\begin{tikzpicture}[scale=0.1,baseline=20]\scriptsize
		\draw (0, 0) -- (0, 8) node[left] {$A$} -- (0, 16);
	\end{tikzpicture}
	\;.
\end{equation}
Sometimes even the label of the identity (here:\ $A$) is omitted when it is clear from context.

The monoidal product of morphisms corresponds to horizonal juxtaposition of diagrams, and the monoidal product of objects corresponds to edges drawn side by side. Given morphisms $f\colon A\to A'$ and $g\colon B\to B'$, we have
\begin{equation}
	\begin{tikzpicture}[scale=0.12,baseline=25]\footnotesize
		\draw (0, 0) node[above left] {$A\otimes B$} -- (0, 16) node[below left] {$A'\otimes B'$};
		\node[fill=vlightgray,draw,rounded corners] at (0, 8) {$f\otimes g$};
	\end{tikzpicture}
	\,=\!
	\begin{tikzpicture}[scale=0.12,baseline=25]\footnotesize
		\draw (-3, 0) node[above left] {$A$} -- (-3, 16) node[below left] {$A'$};
		\draw (3, 0) node[above left] {$B$} -- (3, 16) node[below left] {$B'$};
		\node[fill=vlightgray,draw,rounded corners,minimum width=35] at (0, 8) {$f\otimes g$};
	\end{tikzpicture}
	\,=\mspace{-6mu}
	\begin{tikzpicture}[scale=0.12,baseline=25]\footnotesize
		\draw (0, 0) node[above left] {$A$} -- (0, 16) node[below left] {$A'$};
		\filldraw[fill=vlightgray] (0, 8) circle [radius=2] node {$f$};

		\draw (8, 0) node[above left] {$B$} -- (8, 16) node[below left] {$B'$};
		\filldraw[fill=vlightgray] (8, 8) circle [radius=2] node {$g$};
	\end{tikzpicture}\,.
\end{equation}
The unit object $I$ is indicated by the absence of any edge, and the identity of the unit object, $\id_I$, is represented by the empty diagram containing no vertices and edges. For example, a morphism $h\colon I\to A\otimes B$ is depicted
\begin{equation}
	\begin{tikzpicture}[scale=0.12,baseline=-3]\footnotesize
		\draw (-3, 0) -- (-3, 8) node[below left] {$A$};
		\draw (3, 0) -- (3, 8) node[below left] {$B$};
		\node[fill=vlightgray,draw,rounded corners,minimum width=35] at (0, 0) {$h$};
	\end{tikzpicture}\,.
\end{equation}

In a rigid category, the coevalution and evaluation morphisms have special representations in the graphical calculus. They are depicted (in order) as
\begin{equation}
 \begin{tikzpicture}[scale=0.12,baseline=-3]\footnotesize
	 \draw (0, 4) node[below left,xshift=4] {$A\phantom{{}^*}$} -- (0, 0) arc [start angle=180, end angle=360, radius=3] -- (6, 4) node[below left,xshift=2] {$A^*$};
 \end{tikzpicture}
 \qquad\text{and}\mspace{24mu}
 \begin{tikzpicture}[scale=0.12,baseline=3]\footnotesize
	 \draw (0, 0) node[above left,xshift=2] {$A^*$} -- (0, 4) arc [start angle=180, end angle=0, radius=3] -- (6, 0) node[above left] {$A$};
 \end{tikzpicture}\,.
\end{equation}
These diagrams are called cup and cap.
The two defining conditions of \cref{eq:zigzag} in \cref{defn:duals} then take the form
\begin{equation}
 \begin{tikzpicture}[scale=0.12,baseline=19]\footnotesize
	 \draw (0, 0) node[above left] {$A$} -- (0, 8) arc [start angle=0, end angle=180, radius=3] -- node[pos=0.5, left] {$A^*$} (-6, 5) arc [start angle=360, end angle=180, radius=3] -- (-12, 13) node[below left] {$A$};
 \end{tikzpicture}
 \,=
 \begin{tikzpicture}[scale=0.12,baseline=19]\footnotesize
	 \draw (0, 0) -- node[pos=0.5, left] {$A$} (0, 13);
 \end{tikzpicture}
 \qquad\text{and}\quad
 \begin{tikzpicture}[scale=0.12,baseline=19]\footnotesize
	 \draw (0, 0) node[above left] {$A^*$} -- (0, 8) arc [start angle=180, end angle=0, radius=3] -- node[pos=0.5, left] {$A$} (6, 5) arc [start angle=180, end angle=360, radius=3] -- (12, 13) node[below left] {$A^*$};
 \end{tikzpicture}
 \,=
 \begin{tikzpicture}[scale=0.12,baseline=19]\footnotesize
	 \draw (0, 0) -- node[pos=0.5, left] {$A^*$} (0, 13);
 \end{tikzpicture}\,.
\end{equation}
As another example, the uniqueness statement of \cref{prop:duals-unique} becomes clearer when we write the two equations of \cref{eq:duals-unique} as equations of diagrams, so that
\begin{equation}
 \begin{tikzpicture}[scale=0.12,baseline=25]\footnotesize
	 \draw (0, 0) node[above left] {$A$} -- (0, 14) arc[start angle=0,end angle=180,radius=3] -- (-6, 11) node[pos=0.5,left] {$\tilde A$} -- (-6, 0) node[above left,xshift=2] {$A^*$};
	 \node[below] at (-3, 17) {${\sim}$};
	 \filldraw[fill=vlightgray] (-6, 8) circle [radius=2] node {$\alpha$};
 \end{tikzpicture}
 \,=
 \begin{tikzpicture}[scale=0.12,baseline=3]\footnotesize
	 \draw (0, 0) node[above left,xshift=2] {$A^*$} -- (0, 4) arc [start angle=180, end angle=0, radius=3] -- (6, 0) node[above left] {$A$};
 \end{tikzpicture}
 \qquad\text{and}\quad
 \begin{tikzpicture}[scale=0.12,baseline=-30]\footnotesize
	 \draw (0, 0) node[below left] {$A$} -- (0, -14) arc[start angle=180,end angle=360,radius=3] -- (6, -11) node[pos=0.5,left] {$\tilde A$} -- (6, 0) node[below left,xshift=2] {$A^*$};
	 \node[above] at (3, -17) {${\sim}$};
	 \filldraw[fill=vlightgray] (6, -8) ellipse [x radius=3, y radius=2] node[yshift=0.6] {$\alpha^{-1}$};
 \end{tikzpicture}
 \,=
 \begin{tikzpicture}[scale=0.12,baseline=-3]\footnotesize
	 \draw (0, 4) node[below left,xshift=4] {$A\phantom{{}^*}$} -- (0, 0) arc [start angle=180, end angle=360, radius=3] -- (6, 4) node[below left,xshift=2] {$A^*$};
 \end{tikzpicture}\,,
\end{equation}
where we have indicated $\widetilde\coev_A$ and $\widetilde\ev_A$ with a tilde.

If $X$ is a symmetrically self-dual object in a strict pivotal category, then duals can be written particularly simple. Whenever the coevaluation or evaluation and $X^*$ appear together, they can be replaced by $\widehat\coev$ or $\widehat\ev$, which we depict as
\begin{equation}
	\begin{tikzpicture}[scale=0.12,baseline=-3]\footnotesize
 	 \draw (0, 4) node[below left] {$X$} -- (0, 0) arc [start angle=180, end angle=360, radius=3] -- (6, 4);
  \end{tikzpicture}
	\qquad\text{and}\quad
	\begin{tikzpicture}[scale=0.12,baseline=3]\footnotesize
		\draw (0, 0) node[above left] {$X$} -- (0, 4) arc [start angle=180, end angle=0, radius=3] -- (6, 0);
	\end{tikzpicture}\,,
\end{equation}
respectively. When $X$ is symmetrically self-dual, it admits a graphical calculus of unoriented strands where all isotopies are allowed, possibly by inserting or removing the modified (co-)evaluation morphisms.


\chapter{Trivalent Categories}
\label{cha:trivalent}

Trivalent categories are used in this work as another test case for the structures developed in later chapters, adding to the well-known case of spin systems. There is a reasonable chance that if something works with trivalent categories, it is transferrable to general fusion categories. Due to the graphical calculus, calculations in trivalent categories are particularly easy to do, both by hand and with the help of the computer \cite{Stiegemann2018}. In this chapter, we also record some calculations which are not used in later chapters but good to have in the interest of completeness and for future use.

Originally, trivalent categories were introduced by \citeauthor{MorrisonPetersSnyder2016} as the beginning of \enquote{\textelp{} a general program to automate skein theoretic arguments in quantum algebra and quantum topology} \cite{MorrisonPetersSnyder2016}. 

We recommend \cref{sub:small-diagrams} to the reader who primarily wants to learn how to do computations with trivalent diagrams, and it is also helpful to skim through \cref{sub:diagrams-in-c4}.
For more details on the graphical calculus used in this chapter we refer the reader to \cref{sec:graphical-calculus}.

\section{Basic Facts}

A spherical category $\ccal$ is called \defemph{non-degenerate} if for every morphism $f\in\chom(\ccal)$ there is another morphism $g$ such that $\tr(f\circ g)\neq 0$ in $\ccal(I, I)$.

\begin{defn}
  A triple $(\ccal, X, \tau)$ is called a \defemph{trivalent category} if the following are fulfilled:
  \begin{genenum}[label=(T\arabic*)]
    \item $\ccal$ is a strict pivotal non-degenerate $\complexes$-linear category;
    \item $X$ is a self-dual simple object in $\ccal$;
    \item the linear spaces $\ccal_n=\ccal(I, X^{\otimes n})$, for $n=0, 1, 2, 3$, have dimensions
    \begin{equation}
      \dim\ccal_0=1,\quad \dim\ccal_1=0,\quad \dim\ccal_2=1,\quad \dim\ccal_3=1;
    \end{equation}
    \item\label{item:generated} the element $\tau\in\ccal_3$, called the \defemph{trivalent vertex}, generates $\ccal$ as a pivotal category;
    \item $\tau$ is invariant under rotation.
  \end{genenum}
\end{defn}

Since $\dim\ccal_0=1$, we can identify $\ccal_0$ with the ground field $\complexes$. Such a category is also called evaluable. In this context, it is helpful to note that $I$ can be thought of as $X^{\otimes 0}$.

Property \cref{item:generated} means that every morphism in $\ccal$ is obtained from a combination of composing and tensoring the identity, (modified) evaluation and coevaluation, self-duality, and the trivalent vertex.

In this chapter, $(\ccal, X, \tau)$ will always denote a trivalent category. We recall a few important facts from \cite{MorrisonPetersSnyder2016}.

\begin{lem}
  $X$ is symmetrically self-dual.
\end{lem}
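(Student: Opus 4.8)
We already know from (T2) that $X$ is self-dual, i.e. there is an isomorphism $\beta\colon X\to X^*$. By the definition of symmetric self-duality, we must show that $\beta$ can be chosen so that $\beta^* = \beta$ (identifying $X^{**}$ with $X$ via the strict pivotal structure, so that $\beta^*\colon X\to X^*$ again). The plan is to exploit the very restrictive dimension data in (T3), namely $\dim\ccal(I, X\otimes X) = \dim\ccal_2 = 1$.

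First I would note that since $X$ is self-dual, $\ccal(I, X\otimes X)\cong\ccal(I, X\otimes X^*)$ via postcomposition with $\id_X\otimes\beta^{-1}$, and the latter space contains the canonical element $\coev_X$, which is nonzero. Hence $\ccal(I, X\otimes X)$ is one-dimensional and spanned by $\widehat\coev = (\id_X\otimes\beta^{-1})\circ\coev_X$. Now the dagger structure (or, failing that, the rigid/pivotal structure alone) produces a second element of this one-dimensional space: rotating $\widehat\coev$ by the half-twist coming from the pivotal structure, or equivalently forming the element associated to $\beta^*$ in place of $\beta$, gives another nonzero vector in $\ccal(I, X\otimes X)$. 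Since the space is one-dimensional, these two elements must be proportional, say $\beta^* = \lambda\beta$ for some $\lambda\in\complexes^\times$. Applying the dual operation again and using $(f^*)^* = f$ under the strict pivotal identification $\phi_X = \id$, together with functoriality $(\lambda f)^* = \lambda f^*$, yields $\beta = (\beta^*)^* = \lambda\beta^* = \lambda^2\beta$, so $\lambda^2 = 1$, i.e. $\lambda = \pm 1$. If $\lambda = 1$ we are done. If $\lambda = -1$, I would rescale: replacing $\beta$ by $\mu\beta$ changes $\beta^*$ to $\mu\beta^* = -\mu\beta$, and we would need $-\mu\beta = (\mu\beta)^{**}\!\cdots$ — here one has to be careful, since rescaling multiplies both sides the same way and does not kill the sign. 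The genuine resolution of the $\lambda=-1$ case is the actual content of the lemma in \cite{MorrisonPetersSnyder2016}: one shows $\lambda = -1$ is incompatible with the other axioms, typically by computing a trace — e.g. $\tr(\beta^{-1}\circ\beta^*)$ or the value of the $\theta$-like diagram built from $\widehat\coev$ and $\widehat\ev$ — and observing that $\lambda = -1$ would force this invariant (which must be a nonzero element of $\ccal(I,I)=\complexes$, by non-degeneracy (T1) and $\dim\ccal_0 = 1$) to vanish or to have the wrong sign relative to sphericality.

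The main obstacle is precisely ruling out $\lambda = -1$: the dimension count alone only gives $\beta^* = \pm\beta$, and excluding the minus sign genuinely uses non-degeneracy, sphericality, and the fact that $\dim\ccal_1 = 0$ (so there is no room for an "odd" correction). I expect the cleanest route is to evaluate a small closed diagram — the circle $\tr(\id_X) = \dim X$ computed two ways via $\widehat\coev$ and $\widehat\ev$ — and check that the sign $\lambda$ enters as $\dim X = \lambda\cdot(\text{nonzero})$; since $\dim X$ can be arranged nonzero (again by non-degeneracy), and sphericality forces left and right traces to agree, one pins down $\lambda = +1$. All of this is routine graphical calculus once set up, so I would not grind through it here; the conceptual skeleton is: one-dimensionality of $\ccal_2$ forces $\beta^* = \pm\beta$, and a trace computation kills the minus.
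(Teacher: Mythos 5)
The thesis never proves this lemma at all---it is quoted directly from \cite[Lemma~2.2]{MorrisonPetersSnyder2016}---so your sketch has to be judged on its own. Its first half is fine: $\ccal(X,X^*)$ is one-dimensional because $X$ is simple and $\beta$ is an isomorphism, so $\beta^*=\lambda\beta$; strict pivotality gives $(\beta^*)^*=\beta$, hence $\lambda^2=1$; and you correctly observe that rescaling $\beta$ cannot remove the sign. The genuine gap is exactly the step you defer as ``routine'', namely excluding $\lambda=-1$: the mechanism you propose cannot work, because every hypothesis it invokes is consistent with $\lambda=-1$. In $\mathrm{Rep}\,\mathit{SU}(2)$ with $X=\complexes^2$ the defining representation, one has a non-degenerate, spherical, $\complexes$-linear dagger category with $\dim\ccal(I,I)=1$, $\dim\ccal(I,X)=0$, $\dim\ccal(I,X\otimes X)=1$ and loop value $2\neq 0$, and yet the self-duality (the symplectic form) satisfies $\beta^*=-\beta$. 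So sphericality, non-degeneracy, $d\neq0$ and the dimensions of $\ccal_0,\ccal_1,\ccal_2$ can never pin down the sign. Concretely, the sign is invisible to the invariants you name: the unoriented cup and cap $\widehat\coev=(\id_X\otimes\beta^{-1})\circ\coev_X$ and $\widehat\ev=\ev_X\circ(\beta\otimes\id_X)$ satisfy both zigzag identities for either sign, so every closed diagram built from them alone evaluates to the same power of $d$ in both cases; likewise $\tr(\beta^{-1}\circ\beta^*)=\lambda\dim X$ is nonzero for both signs, and sphericality constrains the oriented left and right traces, which do not involve $\lambda$ at all. ``Computing the circle two ways'' therefore produces no contradiction.

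What a correct proof must use---and what your sketch never touches---is the trivalent vertex, i.e.\ (T3)--(T5); note that $\dim\ccal_3=0$ is the only trivalent axiom failing in the $\mathit{SU}(2)$ example, so this ingredient is unavoidable. The sign lives in the \emph{wrong-handed} zigzag: writing $\widehat\coev=\sum_i u_i\otimes v_i$, one computes $\sum_i\widehat\ev(u_i\otimes x)\,v_i=\beta^{-1}\bigl(\beta^*(x)\bigr)=\lambda x$, whereas the two honest zigzags give $x$ on the nose. Iterating the one-click rotation $\rho(T)=(\widehat\ev\otimes\id_X^{\otimes 3})\circ(\id_X\otimes T\otimes\id_X)\circ\widehat\coev$ on $\ccal_3$ (this is the operator $\rot$ of the planar-perfect-tensor section with $m=0$, $k=3$) feeds each leg of $T$ once through the wrong-handed zigzag, so $\rho^3=\lambda^3\,\id=\lambda\,\id$ on $\ccal_3$. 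Axiom (T5) says $\rho(\tau)=\tau$ for the nonzero spanning vector $\tau$ of the one-dimensional space $\ccal_3$, whence $\tau=\rho^3(\tau)=\lambda\tau$ and $\lambda=1$. An argument of this shape---rotation invariance of $\tau$ played against $\rho^3=\lambda$, using $\dim\ccal_3=1$ and non-degeneracy---is what \cite[Lemma~2.2]{MorrisonPetersSnyder2016} supplies, and it is precisely the piece missing from your proposal.
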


This is \cite[Lemma~2.2]{MorrisonPetersSnyder2016}. While self-duality allows us to define a modified coevaluation and evaluation without $X^*$, \emph{symmetric} self-duality allows us to drop the orientation and draw the modified coevaluation and evaluation as an undecorated cup and cap, respectively.

\begin{lem}
  Every trivalent category is spherical.
\end{lem}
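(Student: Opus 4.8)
The plan is to show that the planar graphical calculus of $\ccal$ in fact descends to a calculus on the $2$-sphere, which is exactly the assertion $\tr_\mathrm{L}=\tr_\mathrm{R}$. I would begin by recalling the standard reformulation of sphericality available in any strict pivotal category: one always has $\tr_\mathrm{R}(f)=\tr_\mathrm{L}(f^*)$ for every endomorphism $f\colon A\to A$ (this follows from the zigzag identities of \cref{defn:duals} together with the graphical definition of $f^*$ — diagrammatically $f^*$ is the box of $f$ turned around, and the two one-loop traces of such a pair coincide). Hence $\ccal$ is spherical precisely when $\tr_\mathrm{L}(f)=\tr_\mathrm{L}(f^*)$ for all $f$; graphically, this says that closing a diagram up ``on the right'' agrees with closing it up ``on the left'', i.e.\ that the value in $\ccal(I,I)=\complexes$ of a closed diagram is unchanged when one is allowed to isotope it on $S^2$ rather than merely in the plane.

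Next I would reduce to a statement about generators. Since $\ccal$ is $\complexes$-linear and, by \cref{item:generated}, generated by $\tau$ as a pivotal category, every morphism $X^{\otimes m}\to X^{\otimes n}$ is a $\complexes$-linear combination of planar diagrams built by composition and tensor product from the identity strand, the trivalent vertex $\tau$, and the modified unoriented cup $\widehat\coev$ and cap $\widehat\ev$ furnished by symmetric self-duality (the preceding lemma). By linearity of $\tr_\mathrm{L}$ it suffices to treat one such diagram $D$. The point is now that each of these generators is invariant under rotation: $\widehat\coev$ and $\widehat\ev$ because they are unoriented and the cup is just the cap turned over; $\tau$ by property \textup{(T5)}; and the identity strand trivially. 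Consequently the loop occurring in $\tr_\mathrm{L}(D)$ can be dragged across $D$ one generator at a time — sliding it past $\widehat\coev$, $\widehat\ev$ or $\tau$ is legitimate because the relevant generator may be freely spun — until it has migrated from one side of $D$ to the other. This gives $\tr_\mathrm{L}(D)=\tr_\mathrm{R}(D)$, and since $D$ was an arbitrary diagram, $\tr_\mathrm{L}=\tr_\mathrm{R}$ on all of $\ccal$, so $\ccal$ is spherical.

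The step I expect to be the real work is the last one: turning ``slide the loop over one generator'' into an honest sequence of moves in the graphical calculus, i.e.\ a careful bookkeeping argument that rotational symmetry of the generators forces rotational invariance of every closed diagram on $S^2$. One has to be slightly careful with $\tau$ in particular, since \textup{(T5)} only supplies invariance under the order-$3$ cyclic rotation of its three legs, and this must be combined with planar isotopy to produce the ``flip'' move needed when the loop passes a vertex. Everything else — the reduction to generators and the identity $\tr_\mathrm{R}(f)=\tr_\mathrm{L}(f^*)$ — is routine. Alternatively, if one is willing to invoke a coherence theorem for the pivotal graphical calculus (value of a diagram depends only on its planar isotopy class, upgraded to $S^2$ once all generators are rotation-symmetric), this last step can simply be quoted, and the proof collapses to the observation that all generators of $\ccal$ are rotation-invariant.
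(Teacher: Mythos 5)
The paper itself offers no argument for this lemma---it simply points to \cite[Remark~2.5]{MorrisonPetersSnyder2016}---so there is no written proof to diverge from, and your sketch is exactly the standard reasoning behind that remark. The reduction is sound: $\tr_\mathrm{R}(f)=\tr_\mathrm{L}(f^*)$ does hold in a strict pivotal category, $\complexes$-linearity together with \cref{item:generated} reduces the claim to a single unoriented planar trivalent diagram, and symmetric self-duality (the preceding lemma) plus rotation invariance of $\tau$ are precisely what make the cup, cap and vertex insensitive to which side the closing strand passes. The one caveat is that the step you defer---upgrading planar-isotopy invariance of the evaluation to invariance under isotopies of $S^2$ once all generators are rotation-invariant---is not a routine afterthought but is itself the entire content of the cited remark; to make the proposal a complete proof you must either carry out the ``pass a strand around the point at infinity'' induction (your observation that the order-$3$ rotation of $\tau$ combined with planar isotopy supplies the flip needed when the strand crosses a vertex is the right ingredient, and the unoriented cup/cap case rests on $\beta^*=\beta$) or explicitly invoke a coherence theorem for the unoriented pivotal graphical calculus, as you suggest in your final sentence. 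With either completion the argument is correct and matches the justification the paper delegates to \citeauthor{MorrisonPetersSnyder2016}.
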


(See \cite[Remark~2.5]{MorrisonPetersSnyder2016}.) In summary, every unoriented planar trivalent graph with $n$ boundary points can be interpreted as an element of $\ccal_n$, so that we can drop all labels and orientations.

Trivalents categories are also dagger categories; the $\dagger$ operation is given by horizontal reflection. It allows us to define an inner product. We set
\begin{equation}
  \langle f, g\rangle = \tr(f^\dag\circ g)
\end{equation}
for all morphisms $f, g$ with $\dom(f)=\cod(g)$ and $\cod(f)=\dom(g)$. This inner product is non-degenerate since the category is non-degenerate. In accordance with \cite{MorrisonPetersSnyder2016}, we assume that $\dagger$ is linear, and that $\langle\blank,\blank\rangle$ is bilinear.

\section{Computations with Trivalent Diagrams}

We will now explain how to simplify and evaluate trivalent diagrams. For diagrams with up to three legs---here called `small diagrams'---the respective simplification rules follow directly from the definition of, and basic facts about, trivalent categories explained in the previous section.

For practical and meaningful calculations, we also need to know the relations between diagrams in $\ccal_4$. In order to accomplish this, we explain how to compute an orthonormal basis for $\ccal_4$ and use it to enable further calculations. In that context, we will also recapitulate how trivalent categories are distinguished by different dimensions of $\ccal_4$.

\subsection{Small Diagrams}
\label{sub:small-diagrams}

First, since every diagram with no open legs is a scalar multiple of the empty diagram, the loop is a complex number which we call $d$,
\begin{equation}\label{eq:loop}
  \begin{tikzpicture}[scale=0.25,baseline=-1mm]
    \draw (0, 0) circle [radius=1];
  \end{tikzpicture}
  =d.
\end{equation}
$d$ is non-zero since $\ccal$ is non-degenerate and $\dim\ccal_2=1$.
Secondly, $\dim\ccal_1=0$ means that every diagram containing the lollipop diagram
\begin{equation}
  \begin{tikzpicture}[scale=0.25,yscale=-1]
    \draw (0, 0) circle [radius=1];
    \draw (0, -3) -- (0, -1);
  \end{tikzpicture}
\end{equation}
is zero. Thirdly, $\dim\mathcal{C}_2=1$ means that the bigon can be contracted as
\begin{equation}
  \begin{tikzpicture}[scale=0.25,baseline=-1mm]
    \draw (0, 0) circle [radius=1];
    \draw (0, 1) -- (0, 3);
    \draw (0, -3) -- (0, -1);
  \end{tikzpicture}=b\ %
  \begin{tikzpicture}[scale=0.25,baseline=-1mm]
    \draw (0, -3) -- (0, 3);
  \end{tikzpicture}\,,
\end{equation}
where $b$ is another complex constant which, similar to above, is non-zero because $\ccal$ is non-degenerate and $\dim\ccal_3=1$. In \cite{MorrisonPetersSnyder2016}, the normalization $b=1$ is used throughout; we will not make this choice until much later. Finally, $\dim\mathcal{C}_3=1$ means that
\begin{equation}\label{eq:triangle}
  \begin{tikzpicture}[scale=0.5,baseline=-4mm]
    \draw (0, 0) arc [start angle=180, end angle=360, radius=1];
    \draw ({1-sqrt(0.5)},{-sqrt(0.5)}) to[bend left=10] (1, -0.4);
    \draw ({1+sqrt(0.5)},{-sqrt(0.5)}) to[bend right=10] (1, -0.4);
    \draw (1, -0.4) -- (1, 0);
  \end{tikzpicture}
  = t\: \begin{tikzpicture}[scale=0.5,baseline=-4mm]
    \draw (0, 0) arc [start angle=180, end angle=360, radius=1];
    \draw (1, -1) -- (1, 0);
  \end{tikzpicture},
\end{equation}
where $t$ is a parameter which can be zero.

\subsection{Orthonormalization}

We are confronted with the following problem. We work in a linear space $\vfrak$ of diagrams and want to perform computations in $\vfrak$, in particular, investigate operators on $\vfrak$ and compute their matrix elements. We are provided with a basis $\{\beta_1,\dotsc, \beta_n\}$ for $\vfrak$ that is relatively easy to obtain and diagrammatically simple but not orthonormal. Therefore the matrix elements of an operator $A$ on $\vfrak$ with respect to this basis must be computed by hand. This task is easy if $n=\dim\vfrak$ is small, however, for larger $n$ it is more reasonable to use the computer.

With a computer we can quickly compute scalar products, which is only possible if the corresponding basis is orthonormal. This leads to the problem of computing an orthonormal basis $\{\theta_1,\dotsc,\theta_n\}$ for $\vfrak$ and converting back and forth between $\beta$ and $\theta$.

The Gram matrix for the basis $\beta$ is defined by
\begin{equation}
  G_{ij}=\braket{\beta_i, \beta_j}.
\end{equation}
It is positive semidefinite because the scalar product $\braket{\blank, \blank}$ is positive semidefinite. Therefore there exists a unitary matrix $U$ such that
\begin{equation}
  D=U^* G U
\end{equation}
is diagonal. Now set $\Theta_{ik}=U_{ki}/\sqrt{D_i}$ and
\begin{equation}
  \theta_i=\frac{1}{\sqrt{D_i}}\sum_{k} U_{ki}\beta_k=\sum_k\Theta_{ik}\beta_k.
\end{equation}
These vectors certainly form a basis of $\vfrak$. Moreover,
\begin{equation}
  \braket{\theta_i, \theta_j}=\frac{1}{\sqrt{D_i D_j}} \sum_{kl} \overline{U_{ki}} U_{lj} \braket{\beta_k, \beta_l}=\frac{1}{\sqrt{D_i D_j}} (U^* G U)_{ij}=\delta_{ij},
\end{equation}
so $\theta$ is an orthonormal basis. Gram--Schmidt orthonormalization is one of several algorithms that take as input the basis $\beta$ and output the orthonormal basis $\theta$. In other words, it gives the matrix $\Theta$ of coefficients.

\subsection{Conversion of Vectors and Matrix Elements}

Assume a vector in $\vfrak$, say $x$, is given, with decomposition
\begin{equation}
  x = \sum_k c_k \beta_k
\end{equation}
in the non-orthonormal basis $\beta$. Knowing only the scalar products $\braket{\beta_l, x}$ for $l=1,\dotsc, n$ and the coefficients $\Theta$, we can compute the components $c$ of $x$. They are given by
\begin{equation}\label{eq:vectorcoeff}
  c_k = \sum_{jl} \Theta_{jl} \Theta_{jk} \braket{\beta_l, x}
\end{equation}
since
\begin{equation}
\begin{split}
  \sum_{jl} \Theta_{jl} \Theta_{jk} \braket{\beta_l, x}
  &= \sum_{ijl} c_i \frac{U_{lj}}{\sqrt{D_j}} \frac{U_{kj}}{\sqrt{D_j}} \braket{\beta_l, \beta_i}
  = \sum_{il} c_i \sum_j U_{kj} D_j^{-1} (U^*)_{jl} G_{li}\\
  &= \sum_{il} c_i (G^{-1})_{kl} G_{li} = \sum_{il} c_i \delta_{ik} = c_k.
\end{split}
\end{equation}
Similarly, if $\braket{\beta_l, A \beta_i}$ for $i, l = 1,\dotsc, n$ are known, then the decomposition
\begin{equation}
  A \beta_i = \sum_k A_{ik} \beta_k
\end{equation}
can be computed using the formula
\begin{equation}
  A_{ik}=\sum_{jl} \Theta_{jl}\Theta_{jk} \braket{\beta_l, A \beta_i}
\end{equation}
obtained from \cref{eq:vectorcoeff} by setting $x=A\beta_i$.

\subsection{Diagrams in $\ccal_4$}
\label{sub:diagrams-in-c4}

Using the Gram--Schmidt process an orthonormal basis for $\ccal_4$ was computed. It can be used to express the square as a linear combination of faceless diagrams, which is needed in order to reduce diagrams containing squares. We will now present and explain these results. For the source code, see \cite{Stiegemann2019b,Stiegemann2018}.

In the following, $D(n, k)$ denotes the set of trivalent graphs with $n$ boundary points and at most $k$ internal faces having four or more edges. Correspondingly, $M(n, k)$ is the matrix of inner products of $D(n, k)$.

The elements of $D(4, 1)$ are named in the following order:
\begin{align}
  \beta^4_1&=\vcenter{\hbox{\begin{tikzpicture}[scale=0.2]
    \draw (0, 0) arc [start angle=180, end angle=360, radius=1.809];
    \draw (1, 0) arc [start angle=180, end angle=360, radius=0.809];
  \end{tikzpicture}}},
  &\beta^4_2&=\vcenter{\hbox{\begin{tikzpicture}[scale=0.2]
    \draw (0, 0) arc [start angle=180, end angle=360, radius=0.809];
    \draw (2.618, 0) arc [start angle=180, end angle=360, radius=0.809];
  \end{tikzpicture}}},\\
  \beta^4_3&=\vcenter{\hbox{\begin{tikzpicture}[scale=0.2]
    \draw (0, 0) arc [start angle=180, end angle=360, radius=0.809];
    \draw (2.618, 0) arc [start angle=180, end angle=360, radius=0.809];
    \draw (0.809, -0.809) arc [start angle=180, end angle=360, radius=1.309];
  \end{tikzpicture}}},
  &\beta^4_4&=\vcenter{\hbox{\begin{tikzpicture}[scale=0.2]
    \draw (0, 0) arc [start angle=180, end angle=360, radius=1.809];
    \draw (1, 0) arc [start angle=180, end angle=360, radius=0.809];
    \draw (1.809, -0.809) -- (1.809, -1.809);
  \end{tikzpicture}}},\\
  f_4&=\vcenter{\hbox{\begin{tikzpicture}[scale=0.2]
    \draw (0, 0) arc [start angle=180, end angle=360, radius=1.809];
    \draw (1, 0) arc [start angle=180, end angle=360, radius=0.809];
    \draw[rotate around={42.5:(1.809, 0)}] (1.809, -0.809) -- (1.809, -1.809);
    \draw[rotate around={-42.5:(1.809, 0)}] (1.809, -0.809) -- (1.809, -1.809);
  \end{tikzpicture}}}.
\end{align}
The matrix of scalar products is
\begin{equation}
  M(4, 1) = \begin{pmatrix}
    d^2 & d & bd & 0 & b^2d\\
    d & d^2 & 0 & bd & b^2d\\
    bd & 0 & b^2d & bdt & bdt^2\\
    0 & bd & bdt & b^2d & bdt^2\\
    b^2d & b^2d & bdt^2 & bdt^2 & w_4
  \end{pmatrix},
\end{equation}
where the value of the window diagram
\begin{equation}
  w_4 = \braket{f_4, f_4} = \vcenter{\hbox{\begin{tikzpicture}[scale=0.2]
    \draw (0, 0) arc [start angle=180, end angle=-180, radius=1.809];
    \draw (1, 0) arc [start angle=180, end angle=-180, radius=0.809];
    \draw[rotate around={42.5:(1.809, 0)}] (1.809, -0.809) -- (1.809, -1.809);
    \draw[rotate around={-42.5:(1.809, 0)}] (1.809, -0.809) -- (1.809, -1.809);
    \draw[rotate around={137.5:(1.809, 0)}] (1.809, -0.809) -- (1.809, -1.809);
    \draw[rotate around={222.5:(1.809, 0)}] (1.809, -0.809) -- (1.809, -1.809);
  \end{tikzpicture}}}
\end{equation}
has yet to be determined. $M(4, 1)$ has the Gram matrix $G^4=M(4, 0)$ of $\beta^4$ as a submatrix.

Using $G^4$, $\beta^4$ can be orthonormalized to give
\begin{equation}
  \Theta^4 = \begin{pmatrix}
    \frac{1}{\abs{d}} & 0 & 0 & 0\\
    -\frac{1}{d\sqrt{d^2-1}} & \frac{1}{\sqrt{d^2-1}} & 0 & 0\\
    -\frac{\sgn(b)\sqrt{d}}{\sqrt{(d^2-1)(d^2-d-1)}} & \frac{\sgn(b)}{\sqrt{d(d^2-1)(d^2-d-1)}} & \sqrt\frac{d^2-1}{b^2d(d^2-d-1)} & 0\\
    \frac{b+dt}{d^2-d-1}C_{\Theta^4} & \frac{b-bd-t}{d^2-d-1}C_{\Theta^4} & \frac{-d^2t+t-b}{b(d^2-d-1)}C_{\Theta^4} & C_{\Theta^4}
  \end{pmatrix},
\end{equation}
with
\begin{equation}
  C_{\Theta^4}=\sqrt{\tfrac{d^2-d-1}{d(b^2d(d-2)-2bt-(d^2-1)t^2)}}.
\end{equation}
We will denote the corresponding orthonormal basis vectors by $\theta^4_i$.
If we have $\dim\ccal_4=k$ with $k\in\{ 2, 3, 4 \}$, the diagrams $\{\beta^4_i\}_{i=1}^{k}$ form a basis of $\ccal_4$ and $(\Theta^4_{ij})_{i,j=1}^k$ provides the coefficients for an orthonormal basis of $\ccal_4$.

By using \cref{eq:vectorcoeff}, we can compute the components of the square $f_4$ with respect to $\beta^4$ for different dimensions of $\ccal_4$. They are
\begin{align}
  \beta^4(f_4) &= \left( \tfrac{b^2}{d+1}, \tfrac{b^2}{d+1} \right) &&\text{for }\dim\ccal_4=2,\\
  \beta^4(f_4) &= \left( \tfrac{b^2d-b^2-dt^2}{d^2-d-1}, \tfrac{b^2d-2b^2+t^2}{d^2-d-1}, \tfrac{(d+1)(dt^2+t^2-b^2)}{b(d^2-d-1)} \right) &&\text{for }\dim\ccal_4=3,\\
  \beta^4(f_4) &= \left( \tfrac{b(b^2+bt-t^2)}{bd+t+dt}, \tfrac{b(b^2+bt-t^2)}{bd+t+dt}, \tfrac{t^2(d+1)-b^2}{bd+t+dt}, \tfrac{t^2(d+1)-b^2}{bd+t+dt} \right) &&\text{for }\dim\ccal_4=4.\\
  \intertext{Consequently, the window diagram $w_4$ has the value}
  \beta^4(w_4) &= \frac{2b^4d}{d+1} &&\text{for }\dim\ccal_4=2,\\
  \beta^4(w_4) &= \frac{d(-3b^4+2b^4d+2b^2t^2-2b^2dt^2-t^4+d^2t^4)}{d^2-d-1}  &&\text{for }\dim\ccal_4=3,\\
  \beta^4(w_4) &= \frac{2bd(b^4+b^3t-2b^2t^2+t^4+dt^4)}{bd+t+dt} &&\text{for }\dim\ccal_4=4.
\end{align}

The matrices of scalar products have determinants
\begin{align}
  \det M(4, 0)&=b^2d^4(bd+t-dt-2b)(bd+t+dt),\\
\begin{split}
  \det M(4, 1)&=-b^2d^4(bd+t-dt-2b)\\
  &\mspace{36mu}\cdot\bigl(2b^5d+2b^4dt-4b^3dt^2+2bdt^4+2bd^2t^4\\
  &\mspace{60mu}-(bd+t+dt)w_4\bigr).
\end{split}
\end{align}
Recall from linear algebra that a set of vectors $\{v_1, \dotsc, v_n\}\subset\vfrak$ from some linear space $\vfrak$ of arbitrary dimension and with scalar product $\braket{\blank, \blank}$ is linearly independent if and only if the $n\times n$ matrix $(\braket{v_i, v_j})_{i,j=1}^k$ of scalar products has full rank. Following this, we know that if $\dim\ccal_4\le 3$, we must have $\det M(4, 0)=\det M(4, 1)=0$. Since $d\neq 0$, this means that
\begin{equation}\label{eq:pso3}
  P_{\mathit{SO}(3)}=bd+t-dt-2b=0
\end{equation}
or
\begin{equation}\label{eq:fibcase}
  bd+t+dt=0 \quad\text{and}\quad 2b^5d+2b^4dt-4b^3dt^2+2bdt^4+2bd^2t^4=0.
\end{equation}
In the case of \cref{eq:fibcase}, solving gives
\begin{equation}\label{eq:fibparams}
  d=\Phi^{\pm}, \qquad t=b\Phi^{\mp},
\end{equation}
where we have set
\begin{equation}
  \Phi^{\pm}=\frac{1}{2}\left(1\pm\sqrt5\right)
\end{equation}
for the golden ratio $\Phi^+$ and its (negative) conjugate $\Phi^-$. Incidentally, \cref{eq:pso3} then holds as well.

In conclusion, a trivalent category with $\dim\ccal_4\le 3$ must have $P_{\mathit{SO}(3)}=0$. \citeauthor{MorrisonPetersSnyder2016} have shown that in this case $\ccal$ is
\begin{itemize}
  \item an $\mathit{SO}(3)_q$ category with $d=q^2+1+q^{-1}$ if $(d, t)\neq(-1, 3/2)$, or
  \item the $\mathit{OSp}(1\mspace{1mu}|\mspace{1mu}2)$ category if $(d, t)=(-1, 3/2)$.
\end{itemize}
On the one hand, when $d\neq\Phi^{\pm}$, the dimensions of $\mathit{SO}(3)_q$ are the Riordan numbers $\dim\ccal_n=a_n$ with
\begin{equation}
  a_0=1,\quad a_1=0, \quad a_n=\frac{n-1}{n+1}(2a_{n-1}+3a_{n-2}), \quad n\ge 2,
\end{equation}
whose first values are $1, 0, 1, 1, 3, 6, 15, 35, \dotsc$.
In particular, we then have $\dim\ccal_4=3$. On the other hand, if \cref{eq:fibparams} holds then $\dim\ccal_n$ are the Fibonacci numbers $1, 0, 1, 1, 2, 3, 5, 8, \dotsc$, so that $\dim\ccal_4=2$, in which case $\ccal$ is a Fibonacci category. $\mathit{OSp}(1\mspace{1mu}|\mspace{1mu}2)$ has $\dim\ccal_4=3$.

Finally, we note that there are also categories with $\dim\ccal_4=4$. They are called cubic and described in detail in \cite{MorrisonPetersSnyder2016}. We will not need them in this work.



\chapter{Localization Theory and Kan Extensions}
\label{cha:localization}

Localization is a canonical way of making things invertible. We will need it in \cref{cha:dynamics} to construct dynamics for holographic codes. Localization can be described in full generality in the language of category theory. However, before moving to categories, we discuss a few simpler cases from abstract algebra.

\subsection*{Monoids}

Let's say we are given a set $R$ on which an associative multiplication is defined, that is, $(fg)h=f(gh)$ for all $f,g,h\in R$, and which has a neutral element $e$, such that $ef=fe=f$ for all $f\in R$. In algebra, such an $R$ is called a monoid. It is similar to a group, except in a group every element $g$ possesses an inverse $g^{-1}$ such that $g^{-1}g=gg^{-1}=e$. Elements of a monoid need not be invertible (though they can be). An example of a (commutative) monoid is given by the integers $\integers$ with multiplication and with $1$ as the neutral element.

\subsection*{From Monoids to Groups}

Out of $R$ we want to form a group $G$ which contains all elements of $R$ plus inverses for all non-invertible elements of $R$. This is possible if the elements of $R$ satisfy the following two conditions:
\begin{itemize}
  \item $\forall f, g\;\exists p, q: pf=qg$ (left Ore condition),
  \item ($\forall f, g, v)\;(fv = gv) \implies (\exists w : wf=wg)$ (cancellation).
\end{itemize}
If these are satisfied, such a $G$ can be constructed. $G$ will be \emph{universal} in the sense that in order to obtain $G$ we will only add exactly as many new elements as needed, not less and not more.

Now we sketch the construction of the group $G$. The idea is to formally invert all morphisms by introducing fractions. First we form a set $\hat G$ consisting of all pairs $(f, g), (h, i), \dotsc$ of elements of $R$, which may be written like fractions know from elementary algebra, as in
\begin{equation}
  \frac{f}{g},\;\frac{h}{i},\;\dotsc.
\end{equation}
We define two fractions $f/g$ and $f'/g'$ to be equivalent, written $f/g\sim f'/g'$, if there exists a $p\in R$ such that $pf=f'$ and $pg=g'$. Next, we form the set of equivalence classes $G=\hat G/{\sim}$. Therefore in $G$ we have
\begin{equation}
  \left[\frac{f}{g}\right]_\sim=\left[\frac{pf}{pg}\right]_\sim
\end{equation}
for all $p\in R$. (Here $[\blank]_\sim$ denotes $\sim$ equivalence classes; we will omit the brackets in the remainder of this paragraph and say `fraction' instead of `equivalence class of fractions'.) This equation is similar to fractions of numbers, where for example
\begin{equation}
  \frac{5}{3}=\frac{2\cdot 5}{2\cdot 3}.
\end{equation}

Given two fractions $f/g$ and $h/i$, their product is defined with the help of the left Ore condition. By this condition, there exist $p, q\in R$ such that $pg=qh$. Then we define
\begin{equation}
  \frac{f}{g}\frac{h}{i}=\frac{pf}{pg}\frac{qh}{qi}=\frac{pf}{\cancel{pg}}\frac{\cancel{qh}}{qi}=\frac{pf}{qi},
\end{equation}
where in the third term we have illustrated the idea behind the definition. (This is more complicated than the multiplication of rational numbers, which is a particularly simple case.) The neutral element for the multiplication is $e/e$. Using the left Ore condition and cancellation one can check that everything is well-defined and obtains the group $G$ of fractions of $R$. For the example $R=\integers$ (with multiplication), the group $G$ is the group of rational numbers $\rationals$ with multiplication.

\subsection*{Rings and Modules}

It is also possible and common to localize the multiplicative monoid of a ring $R$. Furthermore, if the ring is part of a left $R$-module and the multiplicative monoid of $R$ satisfies the left Ore and cancellation conditions, one can define a sensible notion of localization of a module, which really means that the ring of scalars of the module is being localized. If $m$ is an element of a left $R$-module $M$ and $r, s\in R$, then by a procedure similar to the above for monoids, a new module $R^{-1}M$ may be defined whose elements are of the form
\begin{equation}
  r \cdot \frac{m}{s}
\end{equation}
It can be proved that $R^{-1}M$ isomorphic to a module in which the localization is only applied to the ring, so the above expression would rather be
\begin{equation}
  \frac{r}{s} \cdot m.
\end{equation}
Formally, this module is $R^{-1}R\otimes_RM$.

In any case, as outlined in the explanation of \cref{defn:functor} in \cref{cha:cats}, the module $M$ can be viewed as a representation of the ring $R$. This means that the localization of the module can be viewed as the canonical way to form the `localization of the representation' of $R$ on $M$.

Finally, as a simplified version of Jones' representations, consider a term of the form
\begin{equation}
  \frac{t}{m}
\end{equation}
where the element of the module is in the denominator and $t\in R$. As an analogue to the above definition of multiplication in the localization, one can image that
\begin{equation}
  \frac{r}{s}\cdot\frac{t}{m}=\frac{pr}{ps}\frac{qt}{qm}=\frac{pr}{qm},
\end{equation}
with $ps=qt$, is the correct scalar multiplication.

\vspace{\baselineskip}
\noindent For general categories the above situations are often similar, except composition of morphisms is only defined for composable morphisms. Furthermore, in our simplified version we have inverted \emph{all} elements of the monoid, while in general it is only possible to invert a specific subset of $R$, and this is similary the case for categories. Jones' representations are a more general version of the localization of modules, which---as explained above---can be viewed as representations. The purpose of this chapter is to recapitulate localization of categories in detail (\cref{sec:localizations,sec:fractions}), and to then give an abstract description of Jones' representations (\cref{sec:fractrep}).

The idea of localizing monoids and rings is well established in abstract algebra, and Ore localization specifically dates back to \citeyear{Ore1931}, see \cite{Ore1931} and also \cite{Skoda2004}. The first formal exposition on categories of fractions was done by \citeauthor{GabrielZisman1967} in \cite{GabrielZisman1967}. We largely follow a combination of \cite{GabrielZisman1967,PopescuPopescu1979,Fritz2011} and recommend the latter as an introduction.

\section{Localization}
\label{sec:localizations}

A functor $F\colon\ccal\to\dcal$ is said to make a morphism $f\in\chom(\ccal)$ invertible if $F(f)$ is an isomorphism.

\begin{defn}\label{defn:localization}
  Let $\ccal$ be a category and $\Sigma$ a set of morphisms in $\ccal$. A \defemph{localization of $\ccal$ by $\Sigma$} consists of a category $\ccal[\Sigma^{-1}]$ and a functor $Q\colon\ccal\to\ccal[\Sigma^{-1}]$ such that
  \begin{genenum}[label=(L\arabic*)]
    \item $Q$ makes the elements of $\Sigma$ invertible;
    \item\label{enum:localization2} for every other functor $P\colon\ccal\to\dcal$ that makes the elements of $\Sigma$ invertible, there exists a unique functor $H\colon\ccal[\Sigma^{-1}]\to\dcal$ such that $P=H\circ Q$.
  \end{genenum}
\end{defn}

Given a localization of $\ccal$ by $\Sigma$, the universal property \cref{enum:localization2} ensures that it is unique up to unique isomorphism of categories:\footnote{If one relaxes \cref{defn:localization} and merely requires a natural isomorphism $P\cong H\circ Q$ (plus a certain third condition), then uniqueness of a localization is up to \emph{equivalence} of categories, which is more favourable from an abstract point of view. Since we do not need to exploit the notion of localization in full generality, the above definition is sufficient.} Given two localizations $(Q, \Sigma^{-1}\ccal)$ and $(P, \dcal)$ of $\ccal$, we get functors $H\colon\ccal[\Sigma^{-1}]\to\dcal$ and $H'\colon\dcal\to\ccal[\Sigma^{-1}]$ from \cref{enum:localization2}. Then again by \cref{enum:localization2}, $H'\circ H=\Id_{\ccal[\Sigma^{-1}]}$ and $H\circ H'=\Id_\dcal$, so $H$ is the desired isomorphism of categories. This justifies the unambiguous notation ``$\,\ccal[\Sigma^{-1}]\,$''.

The following theorem shows that a localization can always be constructed by formally adding inverses for all morphisms in $\Sigma$, and can be found in the literature \cite[Section~1.1]{GabrielZisman1967}, \cite[Theorem~2.1]{Fritz2011}.

\begin{thm}
  $\ccal[\Sigma^{-1}]$ and $Q\colon\ccal\to\ccal[\Sigma^{-1}]$ always exist.
\end{thm}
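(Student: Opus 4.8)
The plan is to build $\ccal[\Sigma^{-1}]$ by hand as a quotient of the free category on the morphisms of $\ccal$ together with formally adjoined inverses for the elements of $\Sigma$; this construction needs no calculus-of-fractions hypotheses on $\Sigma$ whatsoever. First I would keep the objects, setting $\obj(\ccal[\Sigma^{-1}]) := \obj(\ccal)$. Then I would form an auxiliary category $\mathcal{Z}$ whose morphisms $A\to B$ are the finite ``zigzags'' $A = A_0, A_1, \dots, A_n = B$ equipped, for each $i$, with an arrow $a_i$ that is either a morphism $A_{i-1}\to A_i$ of $\ccal$ or a formal symbol $\bar s_i$ attached to a morphism $s_i\colon A_i\to A_{i-1}$ in $\Sigma$ (a ``backwards'' step). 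Composition in $\mathcal{Z}$ is concatenation of zigzags, the empty zigzag at $A$ is the identity, and this is plainly a category.

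Next I would impose relations. Let $\approx$ be the smallest congruence on $\mathcal{Z}$ --- that is, the smallest family of equivalence relations on the hom-sets that is stable under pre- and post-composition --- generated by: (i) the length-one zigzag $\id_A$ is equivalent to the empty zigzag at $A$; (ii) for composable $\ccal$-morphisms $f, g$, the length-two zigzag $(f, g)$ is equivalent to the length-one zigzag $g\circ f$ (the composite in $\ccal$); (iii) for every $s\colon X\to Y$ in $\Sigma$, the zigzags $(s, \bar s)$ and $(\bar s, s)$ are each equivalent to the relevant empty zigzag. Set $\ccal[\Sigma^{-1}] := \mathcal{Z}/{\approx}$; since $\approx$ is a congruence, composition descends and $\ccal[\Sigma^{-1}]$ is a category. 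Define $Q$ to be the identity on objects and to send a $\ccal$-morphism $f$ to the class $[f]$ of the corresponding length-one zigzag; relations (i) and (ii) make $Q$ a functor, and relation (iii) says exactly that $[\bar s]$ is a two-sided inverse of $Q(s)$, so $Q$ makes the elements of $\Sigma$ invertible, establishing (L1).

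It then remains to verify the universal property \cref{enum:localization2}. Given any functor $P\colon\ccal\to\dcal$ that inverts every $s\in\Sigma$, I would define $H\colon\ccal[\Sigma^{-1}]\to\dcal$ on objects by $H(A) = P(A)$ and on the class of a zigzag $(a_n,\dots,a_1)$ by $\widetilde{a_n}\circ\cdots\circ\widetilde{a_1}$, where $\widetilde{a_i} = P(a_i)$ if $a_i$ is a $\ccal$-morphism and $\widetilde{a_i} = P(s_i)^{-1}$ if $a_i = \bar s_i$. Because $P$ is a functor and carries each $s\in\Sigma$ to an isomorphism, $P$ respects each generating relation (i)--(iii), so this assignment is constant on $\approx$-classes and $H$ is well defined; it is manifestly a functor, and $H\circ Q = P$ by construction. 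For uniqueness, any $H'$ with $H'\circ Q = P$ must agree with $H$ on objects and on each class $[f]$ of a length-one $\ccal$-zigzag; functoriality then forces $H'([\bar s]) = H'(Q(s))^{-1} = P(s)^{-1} = H([\bar s])$, and since every morphism of $\ccal[\Sigma^{-1}]$ is a finite composite of classes of the forms $[f]$ and $[\bar s]$, we conclude $H' = H$.

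The one genuinely delicate point I anticipate is set-theoretic: one must know that each hom-collection $\ccal[\Sigma^{-1}](A,B)$ is a set rather than a proper class. This is immediate when $\ccal$ is small, since then the zigzags form a set, and in general it is handled by the ambient size conventions; I would flag this explicitly rather than dwell on it. Everything else --- that $\approx$ is compatible with concatenation, that the generating relations are honoured by an arbitrary $P$ as above --- is routine and follows directly from functoriality of $P$ and from $\approx$ having been defined as the \emph{smallest} such congruence.
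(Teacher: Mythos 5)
Your construction is essentially the paper's own proof: you form the free category on the graph whose edges are $\chom(\ccal)$ together with a reversed copy of $\Sigma$ (your zigzag category $\mathcal{Z}$ is exactly the free category $\fcal(\tcal)$ there), quotient by the same three families of relations, and verify the universal property in the same way. If anything, your explicit definition of $H$ on arbitrary zigzag classes and the generation argument for uniqueness spell out details the paper leaves as ``obvious,'' so the proposal is correct and matches the paper's route.
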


\begin{proof}
  We construct a diagram scheme $\tcal$ as follows. Let $\obj(\tcal)=\obj(\ccal)$ and $\chom(\tcal)=\chom(\ccal)\sqcup\Sigma$. Denote by $i_1$ and $i_2$ the canonical injections of $\hom(\ccal)$ and $\Sigma$ into $\chom(\ccal)\sqcup\Sigma$ and set
  \begin{align}
    {\dom_\tcal}\circ i_1&={\dom_\ccal}, & {\dom_\tcal}\circ i_2&={\cod_\ccal}|_\Sigma,\\
    {\cod_\tcal}\circ i_1&={\cod_\ccal}, & {\cod_\tcal}\circ i_2&={\dom_\ccal}|_\Sigma.
  \end{align}
  $\tcal$ now consists of the objects and morphisms of $\ccal$ plus an additional copy of $\Sigma$ such that the direction of all arrows coming from this copy is reversed in $\tcal$. (Effectively, $\chom(\tcal)=\chom(\ccal)\sqcup\Sigma^\mathrm{op}$.)

  Next, define the following equivalence relation on the free category $\fcal(\tcal)$ generated by $\tcal$:
  \begin{genenum}
    \item\label{item:locequiv1} $i_1(f)\circ i_1(g)\sim i_1(f\circ g)$ if $f$ and $g$ are composable in $\ccal$,
    \item\label{item:locequiv2} $i_1(\id^\ccal_a)\sim{\id^{\fcal(\tcal)}_a}$ for all $a\in\obj(\ccal)$,
    \item\label{item:locequiv3} $i_2(s)\circ i_1(s)\sim\id^{\fcal(\tcal)}_{\dom_\ccal(s)}$ and $i_1(s)\circ i_2(s)\sim\id^{\fcal(\tcal)}_{\cod_\ccal(s)}$ for all $s\in\Sigma$.
  \end{genenum}
  (Here $\id^\ccal$ and $\id^{\fcal(\tcal)}$ are the identity maps on $\ccal$ and $\fcal(\tcal)$.) The first two conditions reproduce the composition on $\ccal$; the third turns the morphisms in the range of $i_2$ into formal inverses.

  Now let $I\colon\ccal\to\fcal(\tcal)$ be the inclusion functor that is the identity on objects and maps a morphism $f$ of $\ccal$ to $\fcal(i_1(f))$ in $\fcal(\tcal)$, and let $S\colon\fcal(\tcal)\to\fcal(\tcal)/{\sim}$ be the canonical quotient functor. Define the functor $Q\colon\ccal\to\fcal(\tcal)/{\sim}$ by $Q=S\circ I$. Then by (3), $Q$ clearly makes every morphism of $\Sigma$ invertible.

  Furthermore, let $P\colon\ccal\to\dcal$ be another functor that makes the elements of $\Sigma$ invertible, and set $H\colon\fcal(\tcal)/{\sim}\to\dcal$ by letting $H(Q(f))=P(f)$. $H$ is well-defined since $P$ maps the above relations to equalities; for \cref{item:locequiv1} and \cref{item:locequiv2} this holds because $P$ is a functor and for \cref{item:locequiv3} because $P$ makes the elements of $\Sigma$ invertible. The functor $H$ thus defined is obviously unique. Therefore the universal property is fulfilled and $\fcal(\tcal)/{\sim}=\ccal[\Sigma^{-1}]$.
\end{proof}

\section{Calculus of Fractions}
\label{sec:fractions}

If we know more about the set $\Sigma$, then it can be shown that the localization takes a specific form; it can be described as the \emph{category of fractions} with denominators in $\Sigma$. We will construct this category $\Sigma^{-1}\ccal$ and show that it is isomorphic to the localization defined in the previous section. We largely follow \textcite{Krause10} and \textcite{Fritz2011}.

Let $\ccal$ be a category and $\Sigma$ a set of morphisms in $\ccal$.

\begin{defn}\label{defn:calculus-fractions}
  $\Sigma$ \defemph{admits a calculus of left fractions} if the following properties hold:
  \begin{genenum}[label=(LF\arabic*)]
    \item\label{it:lf1} If $\sigma, \tau$ are composable morphisms in $\Sigma$, then $\tau\circ\sigma\in\Sigma$. Furthermore, the identity morphism $\id_X$ is in $\Sigma$ for every object $X$ in $\ccal$.
    \item\label{it:lf2} Every pair of morphisms $X' \xleftarrow{\tau} X \xrightarrow{g} Y$ with $g\in\ccal$ and $\tau\in\Sigma$ can be completed to a commutative square
    \begin{equation}
      \begin{tikzcd}
        X \ar[r, "g"] \ar[d, "\tau", '] & Y \ar[d, "\sigma"]\\
        X' \ar[r, "f", '] & Y'
      \end{tikzcd}
    \end{equation}
    such that $f\in\ccal$ and $\sigma\in\Sigma$.
    \item\label{it:lf3} Let $f, g\colon X\to Y$ be parallel morphisms in $\ccal$. If there is a morphism $\sigma\colon X'\to X$ in $\Sigma$ with $f\circ\sigma=g\circ\sigma$, then there exists a morphism $\tau\colon Y\to Y'$ in $\Sigma$ with $\tau\circ f=\tau\circ g$.
  \end{genenum}
\end{defn}

Now assume that $\Sigma$ admits a calculus of left fractions. A pair $(f, \sigma)$ of morphisms
\begin{equation}
  \begin{tikzcd}
    X \ar[r, "f"] & Y' & \ar[l, "\sigma", '] Y
  \end{tikzcd}
\end{equation}
with $\sigma\in\Sigma$ is called a left fraction. Two left fractions $(f_1, \sigma_1)$ and $(f_2, \sigma_2)$ are defined to be equivalent (written $(f_1, \sigma_1)\sim(f_2, \sigma_2)$) if there exist morphisms $p$ and $q$ such that the diagram
\begin{equation}\label{eq:defn_fraction_equivalence}
  \begin{tikzcd}[column sep=large]
    & Y_1 \ar[d, "p"] &\\
    X \ar[ur, "f_1"] \ar[r, "f'"] \ar[dr, "f_2", '] & Y' & Y \ar[ul, "\sigma_1", '] \ar[l, "\sigma'" '] \ar[ld, "\sigma_2"]\\
    & Y_2 \ar[u, "q", '] &
  \end{tikzcd}
\end{equation}
with $f'=p\circ f_1=q\circ f_2$ and $\sigma'=p\circ\sigma_1=q\circ\sigma_2$
commutes and $\sigma'\in\Sigma$. $(f', \sigma')$ is called the \defemph{expansion} of $(f_1, \sigma_1)$ and $(f_2, \sigma_2)$.

\begin{rem}\label{rem:expansions}
  In fact, two equivalent fractions are always each equivalent to their common expansion. In the situation of \cref{eq:defn_fraction_equivalence}, this is shown by the commutative diagram
  \begin{equation}\label{eq:expansion}
    \begin{tikzcd}[column sep=large]
      & Y_1 \ar[d, "p"] &\\
      X \ar[ur, "f_1"] \ar[r, "f'"] \ar[dr, "f'", '] & Y' & Y \ar[ul, "\sigma_1", '] \ar[l, "\sigma'" '] \ar[ld, "\sigma'"]\\
      & Y' \ar[u, equal] &
    \end{tikzcd}
  \end{equation}
  and similarly for $(f_2, \sigma_2)$. Conversely, if two fractions $(f_1, \sigma_1)$ and $(f', \sigma')$ form a commutative square as in \cref{eq:expansion}, they are equivalent, and $(f', \sigma')$ is an \defemph{expansion} of $(f_1, \sigma_1)$.
\end{rem}

\begin{lem}
  The relation $\sim$ thus defined is an equivalence relation.
\end{lem}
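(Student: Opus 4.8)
The plan is to verify reflexivity, symmetry and transitivity one at a time; only transitivity requires genuine work. Reflexivity and symmetry follow directly from the definition: for $(f,\sigma)\sim(f,\sigma)$ take $p=q=\id$ in \cref{eq:defn_fraction_equivalence}, so the common expansion is $(f,\sigma)$ itself and $\sigma\in\Sigma$ holds by hypothesis; and the defining diagram is visibly symmetric under swapping the two fractions together with $p$ and $q$, so $(f_1,\sigma_1)\sim(f_2,\sigma_2)$ yields $(f_2,\sigma_2)\sim(f_1,\sigma_1)$ with the very same expansion.

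For transitivity, suppose $(f_1,\sigma_1)\sim(f_2,\sigma_2)$ is witnessed by $p_1\colon Y_1\to Z_1$, $q_1\colon Y_2\to Z_1$ with $p_1 f_1=q_1 f_2$ and $\tau_1:=p_1\sigma_1=q_1\sigma_2\in\Sigma$, and $(f_2,\sigma_2)\sim(f_3,\sigma_3)$ is witnessed by $p_2\colon Y_2\to Z_2$, $q_2\colon Y_3\to Z_2$ with $p_2 f_2=q_2 f_3$ and $\tau_2:=p_2\sigma_2=q_2\sigma_3\in\Sigma$. The idea is to produce a single target for the comparison in two moves. First I would apply \cref{it:lf2} to the pair $Z_1\xleftarrow{\tau_1}Y\xrightarrow{\tau_2}Z_2$ (permitted since $\tau_1\in\Sigma$) to get an object $W$ with $r\colon Z_1\to W$ and $s\colon Z_2\to W$ such that $s\in\Sigma$ and $r\tau_1=s\tau_2$. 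The candidate witnesses for $(f_1,\sigma_1)\sim(f_3,\sigma_3)$ are then $r p_1$ and $s q_2$. On the denominator side this already works: $r p_1\sigma_1=r\tau_1=s\tau_2=s q_2\sigma_3$, and this common morphism lies in $\Sigma$ by \cref{it:lf1}. On the numerator side one has $r p_1 f_1=r q_1 f_2$ and $s q_2 f_3=s p_2 f_2$, which need not be equal; but the two morphisms $r q_1,\,s p_2\colon Y_2\to W$ satisfy $r q_1\sigma_2=r\tau_1=s\tau_2=s p_2\sigma_2$, i.e.\ they agree after precomposition with $\sigma_2\in\Sigma$. That is exactly the hypothesis of \cref{it:lf3}, which then supplies $\rho\colon W\to W'$ in $\Sigma$ with $\rho r q_1=\rho s p_2$. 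The final witnesses are $\rho r p_1\colon Y_1\to W'$ and $\rho s q_2\colon Y_3\to W'$: one checks $\rho r p_1 f_1=\rho r q_1 f_2=\rho s p_2 f_2=\rho s q_2 f_3$ and $\rho r p_1\sigma_1=\rho r\tau_1=\rho s\tau_2=\rho s q_2\sigma_3$, with the common denominator $\rho s\tau_2$ lying in $\Sigma$ by iterated use of \cref{it:lf1}. Hence $(f_1,\sigma_1)\sim(f_3,\sigma_3)$.

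The main, and indeed only, obstacle is the transitivity step, and within it the numerator-side mismatch: after merging $Z_1$ and $Z_2$ into $W$ via \cref{it:lf2}, the two comparison maps $r q_1$ and $s p_2$ out of $Y_2$ need not coincide, but only coincide after precomposing with $\sigma_2\in\Sigma$. Repairing this is precisely the role of \cref{it:lf3}, which converts ``equal after precomposition with a member of $\Sigma$'' into ``equal after postcomposition with a member of $\Sigma$''; \cref{it:lf1} is what guarantees that the successive composites of denominators remain inside $\Sigma$. Once this pattern is recognised, all remaining verifications are a routine diagram chase.
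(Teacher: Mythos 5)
Your proof is correct and follows essentially the same route as the paper: after disposing of reflexivity and symmetry trivially, you merge the codomains of the two common expansions via \cref{it:lf2} applied to the span of denominators $\tau_1,\tau_2$, and then repair the numerator-side mismatch of the two comparison maps out of $Y_2$ (equal after precomposition with $\sigma_2$) via \cref{it:lf3}, with \cref{it:lf1} keeping the composite denominator in $\Sigma$ — this is exactly the paper's argument, up to relabelling of the morphisms.
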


\begin{proof}
  If $(f_1, \sigma_1)=(f_2, \sigma_2)$, then \cref{eq:defn_fraction_equivalence} commutes with the choices $(f', \sigma')=(f_1, \sigma_1)$ and $p=q=\id_{Y_1}$; therefore $\sim$ is reflexive, and it is obvious that it is also symmetric.

  For transitivity, consider the relations $(f_1, \sigma_1)\sim(f_2, \sigma_2)$ and $(f_2, \sigma_2)\sim(f_3, \sigma_3)$. Then the diagram
  \begin{equation}\label{eq:equivalence_commuting}
    \begin{tikzcd}[column sep=huge]
      & Y_1 \ar[d, "p"] &\\
      & Y' &\\
      X \ar[uur, "f_1"] \ar[ur, near end, outer sep=-1pt, "f'"] \ar[r, "f_2", '] \ar[dr, near end, outer sep=-2pt, "f''", '] \ar[ddr, "f_3", '] & Y_2 \ar[u, "q", '] \ar[d, "s", '] & Y \ar[uul, "\sigma_1", '] \ar[ul, near end, outer sep=-1pt, "\sigma'" '] \ar[l, "\sigma_2"] \ar[dl, near end, outer sep=-2pt, "\sigma''"] \ar[ddl, "\sigma_3"]\\
      & Y'' &\\
      & Y_3 \ar[u, "r"] &
    \end{tikzcd}
  \end{equation}
  commutes. The morphisms in the diagram
  \begin{equation}\label{eq:equivalence_subset}
    \begin{tikzcd}
      & & Y' \ar[ld, dashed, "t", '] &\\
      Z'& Z \ar[l, dashed, "\tau", '] & Y_2 \ar[u, "q", '] \ar[d, "s", '] & Y \ar[ul, "\sigma'" '] \ar[l, "\sigma_2"] \ar[dl, "\sigma''"]\\
      & & Y'' \ar[lu, dashed, "\rho"] &
    \end{tikzcd}
  \end{equation}
  drawn with continuous lines form a commuting subset of \cref{eq:equivalence_commuting}. By \cref{it:lf2}, we can find morphisms $t\in\ccal$ and $\rho\in\Sigma$ such that the square formed by $\sigma'$, $t$, $\rho$, and $\sigma''$ commutes. Since
  \begin{equation}
    t\circ q\circ\sigma_2=t\circ\sigma'=\rho\circ\sigma''=\rho\circ s\circ\sigma_2,
  \end{equation}
  it follows from \cref{it:lf3} that there exists $\tau\in\Sigma$ such that all of \cref{eq:equivalence_subset} commutes. Inserting these facts into \cref{eq:equivalence_commuting}, we see that
  \begin{equation}
    \begin{tikzcd}[column sep=huge,row sep=large]
      & Y_1 \ar[d, "\tau\circ t\circ p"] &\\
      X \ar[ur, "f_1"] \ar[r, "\tau\circ t\circ\rho\circ f_2"] \ar[dr, "f_3", '] & Z' & Y \ar[ul, "\sigma_1", '] \ar[l, "\tau\circ\rho\circ\sigma''", '] \ar[dl, "\sigma_3"]\\
      & Y_3 \ar[u, "\tau\circ\rho\circ r"] &
    \end{tikzcd}
  \end{equation}
  commutes and that $\tau\circ\rho\circ\sigma''\in\Sigma$, which proves that $(f_1, \sigma_1)\sim(f_3, \sigma_3)$.
\end{proof}

We write $[f, \sigma]$ for the equivalence class of a fraction $(f, \sigma)$.

\begin{lem}\label{lem:lf2givesequivalent}
  Let morphisms $X' \xleftarrow{\tau} X \xrightarrow{g} Y$ with $\tau\in\Sigma$ be given. If $(f_1, \sigma_1)$ and $(f_2, \sigma_2)$ are two fractions completing the commutative square in \cref{it:lf2}, then $(f_1, \sigma_1)\sim (f_2, \sigma_2)$.
\end{lem}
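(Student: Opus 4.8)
The plan is to exhibit explicit morphisms $p$ and $q$ witnessing the equivalence $(f_1,\sigma_1)\sim(f_2,\sigma_2)$ in the sense of \cref{eq:defn_fraction_equivalence}, obtained in two steps, one from \cref{it:lf2} and one from \cref{it:lf3}. Note first that both fractions have source $X'$ and target $Y$: writing $f_i\colon X'\to Y_i$ and $\sigma_i\colon Y\to Y_i$ with $\sigma_i\in\Sigma$, the hypothesis that $(f_i,\sigma_i)$ completes the square of \cref{it:lf2} is precisely $f_i\circ\tau=\sigma_i\circ g$ for $i=1,2$.

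First I would apply \cref{it:lf2} to the span $Y_1\xleftarrow{\sigma_1}Y\xrightarrow{\sigma_2}Y_2$, which is legitimate since $\sigma_1\in\Sigma$. This produces morphisms $h\colon Y_1\to Z$ and $\rho\colon Y_2\to Z$ with $\rho\in\Sigma$ and $h\circ\sigma_1=\rho\circ\sigma_2$. This equalises the denominators of the two fractions but not yet their numerators, so by itself it does not give an equivalence.

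The second step brings in the given $\tau$. The morphisms $h\circ f_1$ and $\rho\circ f_2$ are parallel arrows $X'\to Z$, and using $f_i\circ\tau=\sigma_i\circ g$ together with $h\circ\sigma_1=\rho\circ\sigma_2$ one computes $(h\circ f_1)\circ\tau=h\circ\sigma_1\circ g=\rho\circ\sigma_2\circ g=(\rho\circ f_2)\circ\tau$. Since $\tau\in\Sigma$, \cref{it:lf3} provides $\mu\colon Z\to Y'$ in $\Sigma$ with $\mu\circ h\circ f_1=\mu\circ\rho\circ f_2$. I would then set $p=\mu\circ h$ and $q=\mu\circ\rho$: these satisfy $p\circ f_1=q\circ f_2$ and $p\circ\sigma_1=\mu\circ h\circ\sigma_1=\mu\circ\rho\circ\sigma_2=q\circ\sigma_2$, and the common composite $\mu\circ\rho\circ\sigma_2$ lies in $\Sigma$ by two applications of \cref{it:lf1}. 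These are exactly the data of \cref{eq:defn_fraction_equivalence} (with common expansion $(\mu\circ h\circ f_1,\ \mu\circ\rho\circ\sigma_2)$, cf.\ \cref{rem:expansions}), so $(f_1,\sigma_1)\sim(f_2,\sigma_2)$.

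The argument is a short diagram chase, so there is no serious obstacle; the one point that needs attention is bookkeeping. One must feed \cref{it:lf2} the span whose $\Sigma$-leg is $\sigma_1$ — rather than some span built out of the $f_i$ — precisely so that the resulting parallel pair $h\circ f_1,\ \rho\circ f_2$ can be coequalised along the given $\tau\in\Sigma$ by \cref{it:lf3}. Choosing the spans in the wrong order would leave one stuck with a parallel pair that is only coequalisable along an uncontrolled morphism, which would not close the argument.
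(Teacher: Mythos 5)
Your proof is correct and follows essentially the same route as the paper: first complete the square on the denominators $\sigma_1,\sigma_2$ via \cref{it:lf2}, then use the resulting parallel pair equalised by $\tau$ together with \cref{it:lf3} to produce the morphism in $\Sigma$ that yields $p$ and $q$, with the common composite in $\Sigma$ by \cref{it:lf1}. Your $h$, $\rho$, $\mu$ correspond exactly to the paper's $t$, $\sigma_1'$, $\nu$.
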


\begin{proof}
  We are given the commutative diagram
  \begin{equation}\label{eq:two_choices_lf2}
    \begin{tikzcd}
      & Y' &\\
      X' \ar[ur, "f_1"] \ar[dr, "f_2", '] & X \ar[l, "\tau", '] \ar[r, "g"] & Y \ar[ul, "\sigma_1", '] \ar[ld, "\sigma_2"]\\
      & Y'' &
    \end{tikzcd}
  \end{equation}
  and consider the diagram
  \begin{equation}
    \begin{tikzcd}
      & & Y' \ar[ld, dashed, "t", '] &\\
      Z'& Z \ar[l, dashed, "\nu", '] & & Y \ar[ul, "\sigma_1" '] \ar[dl, "\sigma_2"]\\
      & & Y''\nospacepunct{.} \ar[lu, dashed, "\sigma_1'"] &
    \end{tikzcd}
  \end{equation}
  By \cref{it:lf2}, there are $t\in\ccal$ and $\sigma_1'\in\Sigma$ that make the square commute. Since
  \begin{equation}
    t\circ f_1\circ\tau = t\circ\sigma_1\circ g=\sigma_1'\circ\sigma_2\circ g=\sigma_1'\circ f_2\circ\tau,
  \end{equation}
  it follows from \cref{it:lf3} that there exists $\nu\in\Sigma$ such that
  \begin{equation}
    \begin{tikzcd}[column sep=large,row sep=large]
      & Y' \ar[d, "\nu\circ t"] &\\
      X \ar[ur, "f_1"] \ar[r] \ar[dr, "f_2", '] & Z' & Y \ar[ul, "\sigma_1", '] \ar[l] \ar[ld, "\sigma_2"]\\
      & Y'' \ar[u, "\nu\circ\sigma_1'"] &
    \end{tikzcd}
  \end{equation}
  commutes and $\nu\circ t\circ\sigma_1=\nu\circ\sigma_1'\circ\sigma_2\in\Sigma$.
\end{proof}

Given two equivalence classes of fractions $[f, \sigma]$ and $[g, \omega]$ with $\dom(\sigma)=\dom(g)$, their composition $[g, \omega]\circ [f, \sigma]$ is defined to be $[g'\circ f, \sigma'\circ\omega]$, where $g'$ and $\sigma'$ are obtained from \cref{it:lf2} as in the commutative diagram
\begin{equation}
  \begin{tikzcd}
    && Z'' &&\\
    & Y' \ar[ur, "g'"] && Z' \ar[ul, "\sigma'", '] &\\
    X \ar[ur, "f"] && Y \ar[ul, "\sigma", '] \ar[ur, "g"] && Z\nospacepunct{.} \ar[ul, "\omega", ']
  \end{tikzcd}
\end{equation}

\begin{lem}
  The equivalence class of the composition $[g, \omega]\circ [f, \sigma]=[g'\circ f, \sigma'\circ\omega]$ is independent of the choice of $g'$ and $\sigma'$ obtained from \cref{it:lf2} and independent of the chosen representatives $(f, \sigma)$ and $(g, \omega)$.
\end{lem}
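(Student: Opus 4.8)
The plan is to isolate the two freedoms hidden in the formula $[g,\omega]\circ[f,\sigma]=[g'\circ f,\sigma'\circ\omega]$: the choice of the square furnished by \cref{it:lf2}, and the choice of the representatives $(f,\sigma)$ and $(g,\omega)$ of the two classes. I would first dispose of the square while keeping the representatives fixed, and then reduce the independence of the representatives to the behaviour of the composite under passing to an expansion, using \cref{rem:expansions}.

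\emph{Independence of the \cref{it:lf2}-square.} Fix $(f,\sigma)$ and $(g,\omega)$ and let $(g_1',\sigma_1')$ and $(g_2',\sigma_2')$ be two completions of the span $Y'\xleftarrow{\sigma}Y\xrightarrow{g}Z'$. By \cref{lem:lf2givesequivalent} they are equivalent as fractions, so there are morphisms $p,q$ with $p\circ g_1'=q\circ g_2'$ and $\sigma':=p\circ\sigma_1'=q\circ\sigma_2'\in\Sigma$; composing these with $f$ and with $\omega$ on the right and invoking \cref{it:lf1} for the leg $\sigma'\circ\omega\in\Sigma$, one sees that $p,q$ also witness $(g_1'\circ f,\sigma_1'\circ\omega)\sim(g_2'\circ f,\sigma_2'\circ\omega)$. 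Hence the class of the composite depends only on $(f,\sigma)$ and $(g,\omega)$, and from now on one may use whichever completion is convenient.

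\emph{Independence of the representatives.} By the definition of $\sim$ and \cref{rem:expansions}, two equivalent fractions always admit a common expansion, so it is enough to show that replacing $(f,\sigma)$, or $(g,\omega)$, by an expansion does not change the composite. For an expansion $(p\circ f_1,p\circ\sigma_1)$ of $(f_1,\sigma_1)$ this is easy: fix a completion $(\bar g,\bar\sigma)$ of $Y_1'\xleftarrow{\sigma_1}Y\xrightarrow{g}Z'$ and pick any completion $(\bar g',\bar\sigma')$ of the span whose left leg is $p\circ\sigma_1$; then $(\bar g'\circ p,\bar\sigma')$ is again a completion of the first span, so \cref{lem:lf2givesequivalent} makes it equivalent to $(\bar g,\bar\sigma)$, and carrying that equivalence through composition with $f_1$ and $\omega$ — using \cref{it:lf1} for the common leg, exactly as before — identifies $[g,\omega]\circ[p\circ f_1,p\circ\sigma_1]$ with $[g,\omega]\circ[f_1,\sigma_1]$.

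The case of an expansion $(a\circ g,a\circ\omega)$ of $(g,\omega)$ is where the real work lies, and I expect it to be the main obstacle: here the expansion morphism $a$ sits on the codomain $Z'$ of $g$, which is precisely where \cref{it:lf2} manufactures the new object during composition, so the completion chosen for $(g,\omega)$ cannot be reused verbatim. The remedy is to invoke \cref{it:lf2} a second time, applied to $\bar\sigma$ and $a$ (which have common domain $Z'$, with $\bar\sigma\in\Sigma$), where $(\bar g,\bar\sigma)$ is the fixed completion of $Y'\xleftarrow{\sigma}Y\xrightarrow{g}Z'$; this yields $a'$ and $\bar\sigma'\in\Sigma$ with $a'\circ\bar\sigma=\bar\sigma'\circ a$. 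One then checks that $(a'\circ\bar g,\bar\sigma')$ is a legitimate completion of the span needed to compose $[a\circ g,a\circ\omega]\circ[f,\sigma]$, so by the previous step this composite equals $[\,a'\circ(\bar g\circ f),\ a'\circ(\bar\sigma\circ\omega)\,]$, which is an expansion of $[\bar g\circ f,\bar\sigma\circ\omega]=[g,\omega]\circ[f,\sigma]$ — its common leg $\bar\sigma'\circ(a\circ\omega)$ lies in $\Sigma$ because $\bar\sigma'\in\Sigma$ and $a\circ\omega\in\Sigma$ — and hence equals it by \cref{rem:expansions}. Combining the two expansion cases with the independence of the square proves the lemma.
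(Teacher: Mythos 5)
Your proof is correct and follows essentially the same route as the paper: independence of the \cref{it:lf2}-square via \cref{lem:lf2givesequivalent}, reduction to common expansions via \cref{rem:expansions}, separate treatment of an expansion in each argument (your second case coincides with the paper's, applying \cref{it:lf2} to the $\Sigma$-leg of the fixed completion against the expansion morphism), and transitivity to conclude. The only cosmetic difference is in the first-argument case, where you recycle \cref{lem:lf2givesequivalent} by noting that post-composing the new completion with $p$ yields a completion of the original span, whereas the paper runs a fresh \cref{it:lf2}/\cref{it:lf3} argument --- both are fine.
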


\begin{proof}
  The first statement follows directly from \cref{lem:lf2givesequivalent}. For the second statement, assume first that $(f_1, \sigma_1)$ has the expansion $(f', \sigma')$, which means that $f'=p\circ f_1$ and $\sigma'=p\circ\sigma_1$ for some $p\in\ccal$. The composition of both $(f, \sigma)$ and $(f', \sigma')$ with some $[g, \omega]$ is depicted with continuous lines in the diagram
  \begin{equation}\label{eq:composition_welldef_left}
    \begin{tikzcd}
      &&V'&&\\
      & V \ar[ru, "\tau", dashed] &&&\\
      Y' \ar[ru, "\hat g", dashed] && Z'' \ar[lu, "\pi", dashed, '] \ar[uu, dashed, "\tau\circ\pi", '] &&\\
      & Y_1' \ar[lu, "p", '] \ar[ur, "g'"] && Z' \ar[ul, "\sigma_1'", '] &\\
      X \ar[uu, "f'"] \ar[ur, "f_1"] && Y \ar[ul, "\sigma_1", '] \ar[ur, "g"] && Z\nospacepunct{.} \ar[ul, "\omega", ']
    \end{tikzcd}
  \end{equation}
  Since $p\circ\sigma_1\in\Sigma$, \cref{it:lf2} implies that there exist $\hat g\in\ccal$ and $\pi\in\Sigma$ such that $\hat g\circ (p \circ\sigma_1)=\pi\circ (g'\circ\sigma_1)$. This can be regrouped as $(\hat g\circ p) \circ\sigma_1=(\pi\circ g')\circ\sigma_1$, so by \cref{it:lf3} there exists $\tau\in\Sigma$ such that all of \cref{eq:composition_welldef_left} commutes. But from \cref{rem:expansions} it thus follows that $[g, \omega]\circ[f', \sigma']\sim [g, \omega]\circ[f_1, \sigma_1]$.

  Next, we assume that $(g_1, \omega_1)$ has the expansion $(g', \omega')$. As above, the composition of their equivalence classes with some $[f, \sigma]$ is shown in the diagram
  \begin{equation}
    \begin{tikzcd}
      &&&W&\\
      && Z'' \ar[ru, "\hat r", dashed] && Z' \ar[lu, "\rho", dashed, ']\\
      & Y' \ar[ru, "g_1'"] && Z_1' \ar[ul, "\sigma'", '] \ar[ru, "r"] &\\
      X \ar[ur, "f"] && Y \ar[ul, "\sigma", '] \ar[ur, "g_1"] && Z\nospacepunct{.} \ar[ul, "\omega_1", '] \ar[uu, "g'", ']
    \end{tikzcd}
  \end{equation}
  $\hat r\in\ccal$ and $\rho\in\Sigma$ exist by \cref{it:lf2}, from which it follows that $[g', \omega']\circ [f, \sigma]\sim [g_1, \omega_1]\circ [f, \sigma]$.

  Assuming that $(f', \sigma')$ and $(g', \omega')$ are expansions of $(f_1, \sigma_1)$ and $(g_1, \omega_1)$, respectively, we have
  \begin{equation}
    [g_1, \omega_1]\circ [f_1, \sigma_1]\sim [g_1, \omega_1]\circ [f', \sigma']\sim [g', \omega']\circ [f', \sigma'].
  \end{equation}
  Since $\sim$ is transitive, the first and third composition are also equivalent. This means that if $(f_1, \sigma_1)\sim (f_2, \sigma_2)$ have common expansion $(f', \sigma')$ and $(g_1, \omega_1)\sim (g_2, \omega_2)$ have common expansion $(g', \omega')$, we have
  \begin{equation}
    [g_1, \omega_1]\circ [f_1, \sigma_1]\sim [g', \omega']\circ [f', \sigma']\sim [g_2, \omega_2]\circ [f_2, \sigma_2],
  \end{equation}
  which proves the claim after using transitivity again.
\end{proof}

\begin{lem}
  Composition of equivalence classes of fractions is associative.
\end{lem}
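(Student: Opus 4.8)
The plan is to reduce associativity to the well-definedness of composition just established, namely that $[g,\omega]\circ[f,\sigma]$ is independent of the morphisms extracted from \cref{it:lf2}. So I would fix three composable fractions, written $X\xrightarrow{f}Y_1\xleftarrow{\sigma}Y$, $Y\xrightarrow{g}Z_1\xleftarrow{\omega}Z$, $Z\xrightarrow{h}W_1\xleftarrow{\rho}W$, and assemble a single configuration of commuting squares from which \emph{both} bracketings of the triple composite can be read off.

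Concretely, I would apply \cref{it:lf2} three times: (i) to $Y_1\xleftarrow{\sigma}Y\xrightarrow{g}Z_1$, getting $g_1\colon Y_1\to P$ and $\sigma_1\colon Z_1\to P$ with $\sigma_1\in\Sigma$ and $\sigma_1\circ g=g_1\circ\sigma$; (ii) to $Z_1\xleftarrow{\omega}Z\xrightarrow{h}W_1$, getting $h_2\colon Z_1\to Q$ and $\omega_1\colon W_1\to Q$ with $\omega_1\in\Sigma$ and $\omega_1\circ h=h_2\circ\omega$; and (iii) to $P\xleftarrow{\sigma_1}Z_1\xrightarrow{h_2}Q$, getting $h_1\colon P\to R$ and $\sigma_2\colon Q\to R$ with $\sigma_2\in\Sigma$ and $\sigma_2\circ h_2=h_1\circ\sigma_1$. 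Pasting (iii) onto (i) along $\sigma_1$ yields a \cref{it:lf2}-completion of $Y_1\xleftarrow{\sigma}Y\xrightarrow{h_2\circ g}Q$ with legs $h_1\circ g_1\colon Y_1\to R$ and $\sigma_2\colon Q\to R$; pasting (iii) onto (ii) along $h_2$ yields a \cref{it:lf2}-completion of $P\xleftarrow{\sigma_1\circ\omega}Z\xrightarrow{h}W_1$ with legs $h_1\colon P\to R$ and $\sigma_2\circ\omega_1\colon W_1\to R$, the latter in $\Sigma$ by \cref{it:lf1}. Commutativity of both pasted squares is one line of substitution using the three square identities.

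Finally, I would compute both sides: using square (ii) for $[h,\rho]\circ[g,\omega]$ and then the first pasted square for the outer composition gives $([h,\rho]\circ[g,\omega])\circ[f,\sigma]=[h_1\circ g_1\circ f,\;\sigma_2\circ\omega_1\circ\rho]$, while using square (i) for $[g,\omega]\circ[f,\sigma]$ and then the second pasted square for the outer composition gives $[h,\rho]\circ([g,\omega]\circ[f,\sigma])=[h_1\circ g_1\circ f,\;\sigma_2\circ\omega_1\circ\rho]$; each ``and then'' step is licensed exactly by the independence of the composite from the \cref{it:lf2}-choice. Both coincide, so composition is associative. The main work --- and it is the only real obstacle --- is the bookkeeping: verifying the domain conditions for each composition and each use of \cref{it:lf2}, that the pasted rectangles are genuine \cref{it:lf2}-completions so the previous lemma applies, and that the relevant composites of $\Sigma$-morphisms remain in $\Sigma$. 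Notably neither \cref{it:lf3} nor a hand-built common expansion is needed, since the freedom in \cref{it:lf2} has already been absorbed into the well-definedness lemma; the one idea is to choose the three completions so that square (iii) is shared between the two pasted rectangles.
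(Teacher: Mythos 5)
Your argument is correct and is essentially the paper's proof: the paper likewise performs exactly three applications of \cref{it:lf2} (completing $(\sigma,g)$, $(\omega,h)$, and then the two resulting $\Sigma$-legs), pastes the squares, and reads off the common value $[h'\circ g'\circ f,\ \sigma''\circ\omega'\circ\kappa]$ for both bracketings, with the previously established independence of the composition from the \cref{it:lf2}-choice doing the licensing. You have simply written out the bookkeeping that the paper leaves implicit in its single diagram.
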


\begin{proof}
  This follows directly after three applications of \cref{it:lf2}, which give the dashed arrows in the diagram
  \begin{equation}
    \begin{tikzcd}
      &&& W''' &&&\\
      && Z'' \ar[ru, dashed, "h'''"] && W'' \ar[lu, dashed, "\sigma'''", '] &&\\
      & Y' \ar[ru, dashed, "g'"] && Z' \ar[lu, dashed, "\sigma'", '] \ar[ru, dashed, "h'"] && W' \ar[lu, dashed, "\omega'", '] &\\
      X \ar[ru, "f"] && Y \ar[lu, "\sigma", '] \ar[ru, "g"] && Z \ar[lu, "\omega", '] \ar[ru, "h"] && W\nospacepunct{. \qedhere} \ar[lu, "\kappa", ']
    \end{tikzcd}
  \end{equation}
\end{proof}

Finally, note that for all morphisms $X\xrightarrow{f}Y'\xleftarrow{\sigma}Y$ with $\sigma\in\Sigma$,
\begin{equation}
  [f, \sigma]\circ [{\id_{X}}, {\id_{X}}]=[{\id_{Y}}, {\id_{Y}}]\circ[f, \sigma]=[f, \sigma].
\end{equation}
This allows us to form the promised category $\Sigma^{-1}\ccal$ with the following data:
\begin{itemize}
  \item the \emph{objects} are those of $\ccal$;
  \item the \emph{morphisms} are equivalence classes $[f, \sigma]$ of fractions with $f\in\ccal$ and $\sigma\in\Sigma$;
  \item \emph{composition} of morphisms is given by composition of equivalence classes of fractions;
  \item for every object $X$, the \emph{identity} morphism is $[{\id_X}, {\id_X}]$.
\end{itemize}
We proceed by comparing $\Sigma^{-1}\ccal$ with the localization $\ccal[\Sigma^{-1}]$. To this end, we define the canonical functor $R\colon\ccal\to\Sigma^{-1}\ccal$ which is the identity on objects and maps $f\colon X\to Y$ to $[f, {\id_Y}]$. (It is easy to prove that $R$ is indeed a functor.) We also recall the quotient functor $Q\colon\ccal\to\ccal[\Sigma^{-1}]$ of the localization.

\begin{thm} \label{thm:lociso}
  The functor $F\colon \Sigma^{-1}\ccal\to \ccal[\Sigma^{-1}]$ which is the identity on objects and the map $[f, \sigma]\mapsto Q(\sigma)^{-1}\circ Q(f)$ on morphisms is an isomorphism.
\end{thm}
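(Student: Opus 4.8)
The plan is to exhibit an explicit inverse to $F$, using the universal property of the localization $\ccal[\Sigma^{-1}]$ from \cref{defn:localization}. First I would check that $F$ is well-defined, i.e. that $[f,\sigma]\mapsto Q(\sigma)^{-1}\circ Q(f)$ does not depend on the chosen representative of the equivalence class: if $(f_1,\sigma_1)$ and $(f_2,\sigma_2)$ have common expansion $(f',\sigma')$ with $f'=p\circ f_1=q\circ f_2$ and $\sigma'=p\circ\sigma_1=q\circ\sigma_2$, then applying $Q$ and using that $Q$ is a functor immediately gives $Q(\sigma_1)^{-1}Q(f_1)=Q(\sigma')^{-1}Q(p)^{-1}Q(p)Q(f_1)=Q(\sigma')^{-1}Q(f')$, and likewise for the index $2$; by \cref{rem:expansions} any two equivalent fractions share a common expansion, so $F$ is well-defined on equivalence classes. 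Functoriality of $F$ is then a short computation: for the composite $[g,\omega]\circ[f,\sigma]=[g'\circ f,\sigma'\circ\omega]$ with $g'\circ\sigma=\sigma'\circ g$ (from \cref{it:lf2}), one has $Q(\sigma'\circ\omega)^{-1}Q(g'\circ f)=Q(\omega)^{-1}Q(\sigma')^{-1}Q(g')Q(f)=Q(\omega)^{-1}Q(g)Q(\sigma)^{-1}Q(f)$, where the middle equality uses $Q(\sigma')^{-1}Q(g')=Q(g)Q(\sigma)^{-1}$, itself obtained by applying $Q$ to $g'\circ\sigma=\sigma'\circ g$ and inverting. Identities are clearly preserved since $F[\id_X,\id_X]=Q(\id_X)^{-1}Q(\id_X)=\id_{Q(X)}$.

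Next I would construct the inverse functor. Observe that the canonical functor $R\colon\ccal\to\Sigma^{-1}\ccal$, $f\mapsto[f,\id]$, makes every $\sigma\in\Sigma$ invertible: indeed $[\sigma,\id]$ has inverse $[\id,\sigma]$, as one checks from the composition formula. By the universal property \cref{enum:localization2} of the localization, there is a unique functor $H\colon\ccal[\Sigma^{-1}]\to\Sigma^{-1}\ccal$ with $R=H\circ Q$. I claim $H$ and $F$ are mutually inverse. One composite is easy: $F\circ R$ sends $f\colon X\to Y$ to $Q(\id_Y)^{-1}Q(f)=Q(f)$, so $F\circ R=Q$; since $R=H\circ Q$, we get $F\circ H\circ Q=F\circ R=Q$, and by the uniqueness clause of the universal property applied to $Q$ itself (both $F\circ H$ and $\Id_{\ccal[\Sigma^{-1}]}$ are functors out of $\ccal[\Sigma^{-1}]$ composing with $Q$ to give $Q$), we conclude $F\circ H=\Id_{\ccal[\Sigma^{-1}]}$. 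For the other composite, I would note that $H\circ F\circ R=H\circ Q=R$, so $H\circ F$ fixes every morphism in the image of $R$; since $\Sigma^{-1}\ccal$ is generated by morphisms of the form $[f,\id]=R(f)$ together with formal inverses $[\id,\sigma]=R(\sigma)^{-1}$ of morphisms from $\Sigma$ — every fraction factors as $[f,\sigma]=[\id,\sigma]\circ[f,\id]$ — and $H\circ F$ is a functor fixing $R(f)$ and hence also $R(\sigma)^{-1}$, it follows that $H\circ F=\Id_{\Sigma^{-1}\ccal}$.

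The step I expect to require the most care is the decomposition $[f,\sigma]=[\id,\sigma]\circ[f,\id]$ and the verification that these generate all of $\Sigma^{-1}\ccal$, which is what lets me pin down $H\circ F$ from its behaviour on the image of $R$; this uses the composition formula with \cref{it:lf2} to complete the relevant square, and one must confirm the square can be completed with the identity on one side (taking $g=\sigma$, $\tau=\id$ in \cref{it:lf2}). The well-definedness check is routine given \cref{rem:expansions}, and the universal-property manipulations are formal once $F\circ R=Q$ and $H\circ Q=R$ are in hand, so the essential content is really just the bookkeeping of fraction composition.
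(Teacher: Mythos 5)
Your proposal is correct and follows essentially the same route as the paper: check well-definedness and functoriality of $F$ by direct computation with $Q$, obtain the candidate inverse from the universal property via the canonical functor $R$, prove $F\circ H=\Id_{\ccal[\Sigma^{-1}]}$ by the uniqueness clause applied to $Q$, and handle the other composite through the factorization $[f,\sigma]=R(\sigma)^{-1}\circ R(f)$. The only cosmetic difference is that you phrase the last step as a generation argument ($H\circ F$ fixes the image of $R$, hence everything), whereas the paper evaluates $G\circ F$ directly on a general fraction — both rest on exactly the same factorization, so the content is identical.
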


\begin{proof}
  First we show that $F$ is well-defined. Let $[p\circ f, p\circ \sigma]=[f, \sigma]$. Then
  \begin{equation}
    \begin{split}
      F([p\circ f, p\circ \sigma])&=Q(p\circ \sigma)^{-1}\circ Q(p\circ f)\\
      &=Q(\sigma^{-1})\circ Q(p)^{-1} \circ Q(p) \circ Q(f)\\
      &=Q(\sigma)^{-1}\circ Q(f)\\
      &=F([f, \sigma]).
    \end{split}
  \end{equation}
  Next, we show that $F$ is a functor. Let $[f, \sigma]$ and $[g, \omega]$ be composable morphisms and let $p, q$ be such that $p\circ\sigma=q\circ g$. Then
  \begin{equation}
    \begin{split}
      F([g, \omega]\circ [f, \sigma])&=F([p\circ f, q\circ\omega])\\
      &=Q(q\circ\omega)^{-1}\circ Q(p\circ f)\\
      &=Q(\omega)^{-1}\circ Q(q)^{-1}\circ Q(p) \circ Q(f)\\
      &=Q(\omega)^{-1}\circ Q(g)^{-1}\circ Q(g)\circ Q(q)^{-1}\\
      &\qquad\circ Q(p) \circ Q(\sigma)\circ Q(\sigma)^{-1} \circ Q(f)\\
      &=Q(\omega)^{-1}\circ Q(g)\circ Q(\sigma)^{-1}\circ Q(f)\\
      &=F([g, \omega])\circ F([f, \sigma]).
    \end{split}
  \end{equation}

  Since $R$ makes all elements of $\Sigma$ invertible, there is a functor $G\colon\ccal[\Sigma^{-1}]\to\Sigma^{-1}\ccal$ such that $R=G\circ Q$. Let $[f, \sigma]$ be an equivalence class of fractions with $\cod(f)=\dom(\sigma)=Y$. Then
  \begin{equation}
    G\circ F([f, \sigma])=G(Q(\sigma)^{-1}\circ Q(f))=R(\sigma)^{-1}\circ R(f)=[\id_Y, \sigma]\circ[f, \id_Y]=[f, \sigma],
  \end{equation}
  so $G\circ F=\Id_{\Sigma^{-1}\ccal}$.
  Furthermore, for all $f\colon X\to Y$ in $\ccal[\Sigma^{-1}]$, we have
  \begin{equation}
    F\circ G\circ Q(f)=F\circ R(f)=F([f, \id_Y])=Q(f).
  \end{equation}
  If we apply the universal property \cref{it:lf2} to $Q$, it follows that there exists a unique functor $J\colon\ccal[\Sigma^{-1}]\to\ccal[\Sigma^{-1}]$ such that $Q=J\circ Q$. Since also $Q=\Id_{\ccal[\Sigma^{-1}]}\circ Q=F\circ G\circ Q$, we must have $J=F\circ G=\Id_{\ccal[\Sigma^{-1}]}$ by uniqueness.
\end{proof}

In the sequel, we will always identify $\Sigma^{-1}\ccal$ with $\ccal[\Sigma^{-1}]$ whenever $\Sigma$ admits a calculus of left fractions.

\begin{rem}
  If for any object $X$ we have $\ccal(X, X)\subset\Sigma$, then $\ccal(X, X)$ is a group under composition. In this case, inverses are given by $[f, g]^{-1}=[g, f]$. If $\Sigma=\chom(\ccal)$, then $\ccal[\Sigma^{-1}]$ is a groupoid.
\end{rem}

\section{The Kan Extension of a Localization Functor}
\label{sec:fractrep}

Let $\acal, \bcal, \ssans$ be categories and let $F\colon\acal\to\bcal$ be a functor. If the functor
\begin{equation}
  {\blank\circ F}\colon [\bcal, \ssans] \to [\acal, \ssans]
\end{equation}
has a left adjoint, this left adjoint is called a \defemph{left Kan extension} along $F$.

For the following, note that for each object $B\in\bcal$, there are a comma category $(F\downarrow B)$ and a corresponding projection functor $P_B\colon (F\downarrow B)\to\acal$. The following proposition is taken from \cite[Ex.~6.2.25]{Leinster14}.

\begin{thm}\label{prop:kan}
  Let $F\colon\acal\to\bcal$ be a functor between small categories, and
  let $X\colon\acal\to\ssans$ be a functor. For each object $B\in\bcal$, define a functor $\chi_B$ by
  \begin{equation}
    \chi_B\colon (F\downarrow B)\xrightarrow{P_B}\acal\xrightarrow{X}\ssans.
  \end{equation}
  Assume that $\ssans$ has all colimits of shape $\chi_B$ for every $B$, and let $(\operatorname{Lan}_F X)(B)$ be the colimit of $\chi_B$. Then this defines a functor $\operatorname{Lan}_F X$, and for every other functor $Y\colon\bcal\to\ssans$, there is a canonical bijection between natural transformations $\operatorname{Lan}_F X\to Y$ and natural transformations $X\to Y\circ F$. In this case, $\operatorname{Lan}_F$ is a left Kan extension along $F$.
\end{thm}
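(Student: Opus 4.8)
The plan is to verify the classical pointwise formula for left Kan extensions, following the standard route (as in \cite[Ch.~6]{Leinster14}), but being careful that everything is well-defined under the stated colimit hypothesis. First I would check that $\operatorname{Lan}_F X$ is a functor. For a morphism $h\colon B\to B'$ in $\bcal$, post-composition with $h$ gives a functor $(F\downarrow B)\to (F\downarrow B')$ sending $(A,\alpha)\mapsto (A, h\circ\alpha)$ and acting as the identity on morphisms; this functor is compatible with the projections, i.e.\ $P_{B'}$ restricted along it equals $P_B$, hence $\chi_{B'}$ restricted along it equals $\chi_B$. Therefore the colimit cocone for $\chi_{B'}$ restricts to a cocone under $\chi_B$, and by the universal property of $(\operatorname{Lan}_F X)(B)$ we get a unique induced morphism $(\operatorname{Lan}_F X)(h)\colon (\operatorname{Lan}_F X)(B)\to (\operatorname{Lan}_F X)(B')$. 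Functoriality in $h$ (preservation of identities and composites) follows from uniqueness in the universal property, since identity cocones and composite cocones induce the identity and the composite respectively.

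Next I would construct the bijection. Write $L = \operatorname{Lan}_F X$, and for each $B$ let $\iota^B_{(A,\alpha)}\colon X(A)\to L(B)$ denote the colimit cocone components (indexed by objects $(A,\alpha)$ of $(F\downarrow B)$). A distinguished family of cocone legs is obtained at $B = F(A)$ from the object $(A,\id_{F(A)})$; call $\eta_A := \iota^{F(A)}_{(A,\id_{F(A)})}\colon X(A)\to L(F(A))$. A short diagram chase using the functoriality of $L$ just established shows $\eta\colon X\to L\circ F$ is a natural transformation: for $f\colon A\to A'$ in $\acal$ one compares the two cocone legs of $\chi_{F(A')}$ at $(A, F(f))$ and at $(A', \id)$, which are identified by the cocone compatibility condition, and $L(F(f))\circ\eta_A$ unwinds to the leg at $(A,F(f))$. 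This $\eta$ is the unit. Given any functor $Y\colon\bcal\to\ssans$, define the forward map sending a natural transformation $\beta\colon L\to Y$ to the composite $(\beta\ast F)\circ\eta\colon X\to Y\circ F$, with components $\beta_{F(A)}\circ\eta_A$.

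For the inverse, given $\gamma\colon X\to Y\circ F$, I would define $\beta_B\colon L(B)\to Y(B)$ for each $B$ by the universal property of the colimit: the family $\{\,Y(\alpha)\circ\gamma_A \colon X(A)\to Y(B)\,\}_{(A,\alpha)}$ is a cocone under $\chi_B$ — this uses naturality of $\gamma$ and functoriality of $Y$ to check the triangle over each morphism $(A,\alpha)\to(A'',\alpha'')$ in $(F\downarrow B)$ — so it induces a unique $\beta_B$ with $\beta_B\circ\iota^B_{(A,\alpha)} = Y(\alpha)\circ\gamma_A$. Naturality of $\beta$ in $B$ follows again by uniqueness in the universal property: both $\beta_{B'}\circ L(h)$ and $Y(h)\circ\beta_B$ restrict the colimit cocone of $\chi_B$ to the same cocone under $\chi_{B'}\circ(\text{post-comp with }h)$, hence agree after using the induced map. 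One then checks the two assignments $\beta\mapsto(\beta\ast F)\circ\eta$ and $\gamma\mapsto\beta$ are mutually inverse: one direction is immediate from $\beta_B\circ\iota^B_{(A,\alpha)}=Y(\alpha)\circ\gamma_A$ evaluated at $B=F(A)$, $\alpha=\id$; the other uses that a natural transformation out of $L(B)$ is determined by its composites with the cocone legs $\iota^B_{(A,\alpha)}$, together with the identity $\iota^B_{(A,\alpha)} = L(\alpha)\circ\iota^{F(A)}_{(A,\id)} = L(\alpha)\circ\eta_A$, which itself follows from the functoriality construction in the first paragraph. Finally, naturality of the bijection in $Y$ is routine, and by the characterization of adjoints via a universal unit this exhibits $\operatorname{Lan}_F$ as left adjoint to $\blank\circ F$, i.e.\ a left Kan extension along $F$.

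The main obstacle I anticipate is not any single hard step but bookkeeping: keeping the two layers of indices straight (objects of the comma categories $(F\downarrow B)$ versus objects of $\acal$ and $\bcal$), and repeatedly invoking uniqueness in universal properties to get the various well-definedness and naturality statements without circularity. The one genuinely substantive identity to get right is $\iota^B_{(A,\alpha)} = L(\alpha)\circ\eta_A$, relating an arbitrary cocone leg to the canonical unit component; everything else is assembled from it and from the colimit universal property. No step requires the colimits to be anything special beyond existence, which is exactly the hypothesis.
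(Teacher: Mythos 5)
Your proof is correct, and it is the standard pointwise construction of the left Kan extension. Be aware, though, that the paper does not actually prove \cref{prop:kan}: it defers to Borceux (Thm.~3.7.2) and only unpacks the statement, recording in particular the formula \cref{eq:morphism-lan} for the action of $\lan_F X$ on morphisms. That formula is exactly the induced-map construction in your first paragraph, and your key identity $\iota^B_{(A,\alpha)}=(\lan_F X)(\alpha)\circ\eta_A$ is its special case $B=F(A)$, $h=\id_{F(A)}$, $g=\alpha$; so your argument supplies, correctly and along the expected route, precisely the details the paper leaves to the cited reference, with the unit--cocone identity and the uniqueness clause of the colimit doing all the work, as they should.
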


The proof can be found in \cite[Thm.~3.7.2]{Borceux1994}. We do not explain it here; instead, it is most helpful to carefully write out all parts of the definitions and statements in \cref{prop:kan}.

The objects of $(F\downarrow B)$ are pairs $(A, h\colon F(A)\to B)$, and a morphism from $(A, h\colon F(A)\to B)$ to $(A', h'\colon F(A')\to B)$ is a morphism $f\colon A\to A'$ in $\acal$ such that
\begin{equation}
  \begin{tikzcd}[row sep=small]
    F(A) \ar[rd, "h"] \ar[dd, "F(f)", '] & \\
    & B\\
    F(A') \ar[ru, "h'", '] &
  \end{tikzcd}
\end{equation}
commutes. Then $\chi_B$ acts as
\begin{align}
  (A, h\colon F(A)\to B) &\mapsto X(A) && \text{on objects,}\\
  (f\colon A\to A') &\mapsto X(f) && \text{on morphisms.}
\end{align}
A cocone from $\chi_B$ to $S_B=(\operatorname{Lan}_F X)(B)$ is a family of morphisms
\begin{equation}
  \eta^B_{(A, h)}\colon X(A)\to S_B, \quad (A, h)\in\obj\bigl((F\downarrow B)\bigr),
\end{equation}
such that
\begin{equation}
  \begin{tikzcd}[row sep=small]
    X(A) \ar[rd, "\eta^B_{(A, h)}"] \ar[dd, "X(f)", '] & \\
    & S_B\\
    X(A') \ar[ru, "\eta^B_{(A', h')}", '] &
  \end{tikzcd}
\end{equation}
commutes for every morphism $(f\colon A\to A')\in\acal$ from $(A, h)$ to $(A', h')$ in $(F\downarrow B)$.

From the proof one can see how $\lan_F X$ acts on morphisms. If $g\colon B\to B'$ is in $\bcal$, then
\begin{equation}\label{eq:morphism-lan}
  \begin{tikzcd}
    X(A) \ar[rd,bend right,outer sep=-1.5mm, "\eta^B_{(A, g\circ h)}", '] \ar[r, "\eta^B_{(A, h)}"] & S_B \ar[d, "(\lan_F X)(g)"]\\
    & S_{B'}
  \end{tikzcd}
\end{equation}
commutes for every morphism of the form $h\colon F(A)\to B$.

Let $\ccal$ be small, and assume that $\Sigma\subset\chom(\ccal)$ admits a calculus of left fractions, so that $Q\colon\ccal\to\ccal[\Sigma^{-1}]$ is the localization functor which is the map $f\mapsto [f, \id_{\cod(f)}]$ on morphisms and the identity on objects---though for the sake of clarity we will not evaluate it on objects.

We will now investigate left Kan extensions ${\lan_Q}$ along $Q$. For the moment, we fix an object $B$.

As a first step, we look at $(Q\downarrow B)$. We consider two objects in $(Q\downarrow B)$, $(A, h)$ and $(A', h')$, where $h$ and $h'$ can in general be written as
\begin{align}
  h&=[h_1\colon F(A)\to W, h_2\colon B\to W],\\
  h'&=[h'_1\colon F(A')\to W', h'_2\colon B\to W']
\end{align}
for some $W$ and $W'$. A morphism $(A, h)\to (A', h')$ consists of a morphism $f\colon A\to A'$ in $\ccal$ such that
\begin{equation}\label{eq:special-comma-morphism}
  h'\circ[f, \id_{A'}]=h.
\end{equation}

We assume that we again have a functor $X\colon\ccal\to\ssans$ which we want to extend. The colimit for $B$ consists of a family of morphisms $\eta^B_{(A, h)}\colon X(A)\to S_B$ indexed by objects $(A, h)=(A, h\colon F(A)\to B)$. We  consider the subset of objects of the form
\begin{equation}\label{eq:special-object}
  \bigl(A, [\id_{Q(A)}, \epsilon\colon B\to Q(A)]\colon Q(A)\to B  \bigr),
\end{equation}
or, in short, $(A, [\id_{Q(A)}, \epsilon])$. We now look at \cref{eq:morphism-lan} for the special case of the cocone indexed by objects as in \cref{eq:special-object}. In this case, $h=[\id_{Q(A)}, \epsilon]$, and we can compute
\begin{equation}
  g\circ h = [g_1, g_2]\circ [\id_{Q(A)}, \epsilon]=[g_1', \epsilon'\circ g_2],
\end{equation}
where we have written $g=[g_1, g_2]$, and where $g_1'\circ \epsilon=\epsilon'\circ g_1$. We can further trivially write
\begin{equation}
  [g_1', \epsilon'\circ g_2]= [\id_Z, \epsilon'\circ g_2]\circ [g_1', \id_Z],
\end{equation}
where $Z$ is the codomain of $g_1'$ and $\epsilon'$. In this form, we see the resemblance with the left-hand side of \cref{eq:special-comma-morphism}. Therefore we arrive at the final equation
\begin{equation}
  \eta^B_{(A, g\circ h)}=\eta^B_{(Z, [\id_Z, \epsilon'\circ g_2])} \circ X(g_1').
\end{equation}
In summary, we have
\begin{equation}
  (\lan_F X)([g_1, g_2])\circ\eta^B_{(A, [\id_{Q(A)}, \epsilon])} = \eta^B_{(Z, [\id_Z, \epsilon'\circ g_2])} \circ X(g_1').
\end{equation}
This allows us to easily compute $(\lan_F X)(g)$ in applications.

\begin{rem}
  Jones' representations are a special case of this, where $\ccal$ needs to be replaced with $\ccal^\mathrm{op}$, the category $\ccal$ with all morphism reversed.
\end{rem}


\chapter{Thompson's Groups $F$ and $T$}
\label{cha:thompson}

In this chapter, we introduce two of Thompson's groups, $F$ and $T$. They can be described as groups of certain homeomorphisms of the unit interval $[0, 1]$ and the circle $\sircle^1$, respectively. Another viewpoint is that they arise from the localizations of certain forest categories, which we carefully introduce first.

In \cref{part:applications} of this thesis, $F$ and $T$ will take the roles of \emph{discrete analogues} of the groups of orientation-preserving diffeomorphisms of the interval and the circle, respectively. To partly justify our choice, we show how elements of $F$ and $T$ can be used to approximate diffeomorphisms in the final section of this chapter.

We will often follow a combination of the standard reference \cite{CannonFloydParry96} by \citeauthor{CannonFloydParry96}, and Belk's thesis \cite{Belk2004}. A very helpful characterization of Thompson's group $F$ can be found in \cite{FioreLeinster2010}. The final section is a version of \cite{Stiegemann2018b}.

\section{$F$ and $T$ as Groups of Homeomorphisms}

Dyadic rationals are all numbers of the form $m/2^k$ with $m\in\integers$ and $k\in\naturals$. By a breakpoint of a piecewise linear function we mean the points at which it is not differentiable.

\begin{defn}
  Thompson's group $F$ is the group of piecewise linear homeomorphisms $g$ of the closed unit interval $[0, 1]$ such that
  \begin{genenum}[label=(Th$_\arabic*$)]
    \item\label{item:breakpointsF} the breakpoints of $g$ are dyadic rationals;
    \item\label{item:slopesF} on intervals of differentiability, the derivatives of $g$ are integer powers of $2$.
  \end{genenum}
\end{defn}

For the definition of $T$, we consider the circle $\sircle^1$ as the unit interval $[0, 1]$ with the endpoints $0$ and $1$ identified.

\begin{defn}
  Thompson's group $T$ is the group of piecewise linear homeomorphisms $g$ of $\sircle^1$ such that
  \begin{genenum}[label=(Th$_\arabic*$)]
    \item[(Th$_1'$)]\label{item:breakpointsT} the breakpoints of $g$ and their images are dyadic rationals;
    \item[(Th$_2$)]\label{item:slopesT} on intervals of differentiability, the derivatives of $g$ are integer powers of $2$.
  \end{genenum}
\end{defn}

A \defemph{standard dyadic interval} is an interval of the form
\begin{equation}
  \biggl[\frac{k}{2^n},\frac{k+1}{2^n}\biggr], \quad k, n\in\naturals.
\end{equation}
A partition of $[0, 1]$ is called a \defemph{standard dyadic partition} if all intervals in the partitions are standard dyadic intervals. Every standard dyadic partition can be obtained by first taking the undivided interval $[0, 1]$ and then successively cutting intervals in two equally sized halves; for instance, the standard dyadic partition
\begin{equation}
	\begin{tikzpicture}[scale=5]
		\def\h{0.025}
		\draw (0, 0) -- (1, 0);
		\draw (0, -1.5*\h) -- ++(0, 3*\h)  node[at start,below=2pt] {$0$};
		\draw (0.25, -\h) -- ++(0, 2*\h) node[at start,below=3pt] {$\frac{1}{4}$};
		\draw (0.375, -\h) -- ++(0, 2*\h) node[at start,below=3pt] {$\frac{3}{8}$};
		\draw (0.5, -\h) -- ++(0, 2*\h) node[at start,below=3pt] {$\frac{1}{2}$};
		\draw (1, -1.5*\h) -- ++(0, 3*\h) node[at start,below=2pt] {$1$};
	\end{tikzpicture}
\end{equation}
is obtained by successively performing the following three cuts:
\begin{equation}
	\begin{tikzpicture}[scale=5]
		\def\h{0.025}
    \begin{scope}[yshift=0]
      \draw (0, 0) -- (1, 0);
      \draw (0, -1.5*\h) -- ++(0, 3*\h);
      \draw (1, -1.5*\h) -- ++(0, 3*\h);
    \end{scope}
    \begin{scope}[yshift=-3]
      \draw (0, 0) -- (1, 0);
  		\draw (0, -1.5*\h) -- ++(0, 3*\h);
  		\draw[very thick] (0.5, -\h) -- ++(0, 2*\h);
  		\draw (1, -1.5*\h) -- ++(0, 3*\h);
    \end{scope}
    \begin{scope}[yshift=-6]
      \draw (0, 0) -- (1, 0);
  		\draw (0, -1.5*\h) -- ++(0, 3*\h);
  		\draw[very thick] (0.25, -\h) -- ++(0, 2*\h);
  		\draw (0.5, -\h) -- ++(0, 2*\h);
  		\draw (1, -1.5*\h) -- ++(0, 3*\h);
    \end{scope}
    \begin{scope}[yshift=-9]
      \draw (0, 0) -- (1, 0);
  		\draw (0, -1.5*\h) -- ++(0, 3*\h);
  		\draw (0.25, -\h) -- ++(0, 2*\h);
  		\draw[very thick] (0.375, -\h) -- ++(0, 2*\h);
  		\draw (0.5, -\h) -- ++(0, 2*\h);
  		\draw (1, -1.5*\h) -- ++(0, 3*\h);
    \end{scope}
	\end{tikzpicture}
\end{equation}
Every element $g$ of $F$ can be described by a pair $P, Q$ of standard dyadic partitions with the same number of cuts, where $g$ sends each interval of $P$ linearly onto the corresponding interval in $Q$. These are called ``dyadic rearrangements'' in \cite{Belk2004}. Elements of $T$ have a similar description when we replace partitions of the interval with partitions of the circle.

As is proved in \cite{CannonFloydParry96}, $F$ and $T$ are finitely presented infinite groups. The generators are shown in \cref{fig:generators}.

\begin{figure}
  \begin{subfigure}[b]{0.3\linewidth}
    \begin{tikzpicture}[scale=1]
      \footnotesize

      \def\base{0.7}

      \draw[gray] (2*\base, 0) to ++(0, 4*\base);
      \draw[gray] (3*\base, 0) to ++(0, 4*\base);
      \draw[gray] (4*\base, 0) node[black,below] {$1$} to ++(0, 4*\base);

      \draw[gray] (0, 1*\base) to ++(4*\base, 0);
      \draw[gray] (0, 2*\base) to ++(4*\base, 0);
      \draw[gray] (0, 4*\base) node[black,left] {$1$} to ++(4*\base, 0);

      \draw[->] (0, 0) -- (4.5*\base, 0);
      \draw[->] (0, 0) -- (0, 4.5*\base);

      \node[below left] at (0, 0) {$0$};

      \draw (0, 0*\base) -- (2*\base, 1*\base) -- (3*\base, 2*\base) -- (4*\base, 4*\base);

      \node at (2*\base, -1*\base) {$A$};
    \end{tikzpicture}
  \end{subfigure}
  \begin{subfigure}[b]{0.3\linewidth}
    \begin{tikzpicture}[scale=1]
      \footnotesize

      \def\base{0.7}

      \draw[gray] (2*\base, 0) to ++(0, 4*\base);
      \draw[gray] (3*\base, 0) to ++(0, 4*\base);
      \draw[gray] (3.5*\base, 0) to ++(0, 4*\base);
      \draw[gray] (4*\base, 0) node[black,below] {$1$} to ++(0, 4*\base);

      \draw[gray] (0, 2*\base) to ++(4*\base, 0);
      \draw[gray] (0, 2.5*\base) to ++(4*\base, 0);
      \draw[gray] (0, 3*\base) to ++(4*\base, 0);
      \draw[gray] (0, 4*\base) node[black,left] {$1$} to ++(4*\base, 0);

      \draw[->] (0, 0) -- (4.5*\base, 0);
      \draw[->] (0, 0) -- (0, 4.5*\base);

      \node[below left] at (0, 0) {$0$};

      \draw (0, 0*\base) -- (2*\base, 2*\base) -- (3*\base, 2.5*\base) -- (3.5*\base, 3*\base) -- (4*\base, 4*\base);

      \node at (2*\base, -1*\base) {$B$};
    \end{tikzpicture}
  \end{subfigure}
  \begin{subfigure}[b]{0.3\linewidth}
    \begin{tikzpicture}[scale=1]
      \footnotesize

      \def\base{0.7}

      \draw[gray] (2*\base, 0) to ++(0, 4*\base);
      \draw[gray] (3*\base, 0) to ++(0, 4*\base);
      \draw[gray] (4*\base, 0) node[black,below] {$1$} to ++(0, 4*\base);

      \draw[gray] (0, 2*\base) to ++(4*\base, 0);
      \draw[gray] (0, 3*\base) to ++(4*\base, 0);
      \draw[gray] (0, 4*\base) node[black,left] {$1$} to ++(4*\base, 0);

      \draw[->] (0, 0) -- (4.5*\base, 0);
      \draw[->] (0, 0) -- (0, 4.5*\base);

      \node[below left] at (0, 0) {$0$};

      \draw (0, 3*\base) -- (2*\base, 4*\base);
      \draw (2*\base, 0) -- (3*\base, 2*\base) -- (4*\base, 3*\base);

      \node at (2*\base, -1*\base) {$C$};
    \end{tikzpicture}
  \end{subfigure}
  \caption{The three generators $A$, $B$, and $C$ of $T$. $A$ and $B$ generate $F$.}
  \label{fig:generators}
\end{figure}
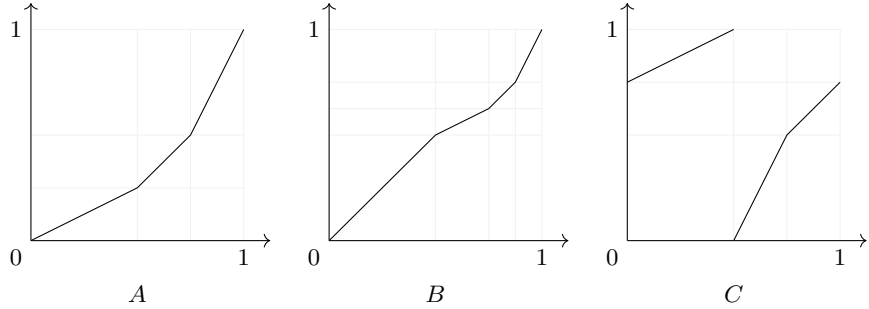

\section{The Category of Binary Forests}
\label{sec:category-binary-forests}

In order to give a rigorous definition of the category of binary forests, we first review the necessary facts from graph theory. Here we follow a standard textbook by \citeauthor{Diestel17} \cite{Diestel17}. Afterwards we can define the category of binary forests and record some important properties. Finally, we use these properties to form a localization of this category, which will give us Thompson's groupoid and Thompson's group.

\subsection{Graphs, Trees, and Forests}
\label{sub:graphs-trees-forests}

A \defemph{graph} is a pair $G=(V, E)$ of sets such that $E$ consists of $2$-element subsets of $V$. The elements of $V$ are the \defemph{vertices} of the graph $G$, and the elements of $E$ are its \defemph{edges}. We will write $xy$ for an edge $\{x, y\}$. In general, for any graph $G$, we write $V(G)$ for the set of vertices of $G$ and $E(G)$ for the set of edges. For any vertex $x$, the \defemph{degree} $\deg(x)$ of $x$ denotes the number of distinct edges of which $x$ is a member; a vertex with degree $d$ is called a $d$-valent vertex. If $G=(V, E)$ and $G'=(V', E')$ are two graphs with $V'\subset V$ and $E'\subset E$, then $G'$ is a \defemph{subgraph} of $G$, written $G'\subset G$. A \defemph{path} is a non-empty graph $P=(V, E)$ of the form
\begin{equation}
  V=\{x_0, x_1, \dotsc, x_k\}, \quad E=\bigl\{\{x_0, x_1\}, \{x_1, x_2\}, \dotsc, \{x_{k-1}, x_k\}\bigr\},
\end{equation}
which we write as $P=x_0x_1\dotsm x_k$ (in accordance with the notation for edges). The number of edges of a path is its \defemph{length}. A graph $C$ is called a \defemph{cycle} if there is a path $P=x_0\dotsm x_{k}$, $k\ge 2$, such that $V(C)=V(P)$ and $E(C)=E(P)\cup\{x_kx_0\}$.

A graph $G$ is called \defemph{connected} if it is non-empty and for any two vertices $x, y$ there exists a path $x\dotsm y$ in $G$. For a general graph $G$, any maximal connected subgraph of $G$ (with respect to inclusion of graphs) is called a \defemph{connected component} of $G$.

An acyclic graph (not containing any cycles) is called a \defemph{forest}. A connected forest is called a \defemph{tree}. Therefore, a forest is a (possibly empty) disjoint union of trees, which are its connected components.

A \defemph{rooted tree} is a tree with at least $2$ vertices and with a designated $1$-valent vertex called the \defemph{root}; all other $1$-valent vertices of the tree are its \defemph{leaves}. The number of leaves of a rooted tree $t$ is denoted $\abs{t}$.
If a vertex $x$ immediately precedes a vertex $y$ on the path from the root to $y$, then $x$ is the \defemph{parent} of $y$ and $y$ is a \defemph{child} of $x$.

\begin{defn}
  A \defemph{binary tree} is a rooted tree without $2$-valent vertices such that every vertex has at most $2$ children, and the two children of every trivalent\footnote{Trivalent is another word for $3$-valent.} vertex are assigned a fixed order.
\end{defn}

If the vertex $x$ of a tree has children $y_l$ and $y_r$ with the given order $y_l\le y_r$, then $y_l$ is called the \defemph{left child} of $x$ and $y_r$ is called the \defemph{right child}.

\begin{rem}
  To be consistent with the graphical calculus introduced in \cref{sec:graphical-calculus}, we draw binary trees from bottom to top, that is, the roots are at the bottom and the leaves at the top (just like in nature), for example
  \begin{equation}
    \begin{tikzpicture}[scale=0.25, yscale=-1]
      \draw (3, 3) -- (3, 4);
  		\draw (2.0, 0.0) -- (3.0, 1.0) -- (4.0, 0.0);
  		\draw (0.0, 0.0) -- (2.0, 2.0) -- (3.0, 1.0);
  		\draw (2.0, 2.0) -- (3.0, 3.0) -- (6.0, 0.0);
  	\end{tikzpicture}\,.
  \end{equation}
\end{rem}

\begin{exmp}
  A very simple binary tree that we will need later is the \defemph{caret}
  \begin{equation}
    \begin{tikzpicture}[scale=0.3, yscale=-1]
      \draw (0, 2) -- (0, 1);
      \draw (-1, 0) -- (0, 1) -- (1, 0);
    \end{tikzpicture}\, .
  \end{equation}
\end{exmp}

On the set of all leaves of a binary tree, we define a linear order as follows. For the binary tree $t$ with exactly one leaf $l$, we set $l\preceq l$. Moreover, let $l_1, l_2$ be two leaves in a binary tree $t$ with $\abs{t}\ge 1$. If $l_1=l_2$, we set $l_1\preceq l_2$ and $l_2\preceq l_1$. Otherwise, let $x_0x_1\dotsm x_m$ and $y_0y_1\dotsm y_n$ be the shortest paths connecting the root $r=x_0=y_0$ of $t$ with $l_1=x_m$ and $l_2=y_n$, respectively. Let $i$ be the smallest number such that $x_i\neq y_i$, and note that we have $2\le i\le \min\{m, n\}$. If $x_i$ is the left child of $x_{i-1}=y_{i-1}$ and $y_i$ the right child, we set $l_1\preceq l_2$; otherwise, we set $l_2\preceq l_1$. It is clear that $\preceq$ is antisymmetric and that all leaves of a tree are comparable under $\preceq$. It remains to be shown that $\preceq$ is transitive. To this end, let $l_1\preceq l_2$ and $l_2\preceq l_3$, and assume without loss of generality that $l_1\neq l_2\neq l_3$. Let $x_0\dotsm x_k$, $y_0\dotsm y_m$, and $z_0\dotsm z_n$ be the root--leaf paths as before, and let $i$ be the number defined above for $l_1$ and $l_2$, and $j$ the corresponding number for $l_2$ and $l_3$. Since $l_1\neq l_2\neq l_3$, we must have $i\neq j$. It is also clear that $i'=\min\{i, j\}$ is the smallest number such that $x_{i'}\neq z_{i'}$. If $i'=i$, then $l_1\preceq l_3$ now follows from $l_1\preceq l_2$, and if $i'=j$, it follows from $l_2\preceq l_3$.

Let $t_1$ and $t_2$ be two binary trees, let $l$ be a leaf of $t_1$ and $r$ the root of $t_2$. We can \defemph{connect $t_2$ to $t_1$ at $l$} to form a new binary tree $t_3$ in the following way. Let $rx_0$ be the edge from $r$ to its only child $x_0$, and let $pl$ be the edge to $l$ from its parent. Then $t_3$ has the underlying graph given by
\begin{align}
  V(t_3)&=\bigl(V(t_1)\cup V(t_2)\bigr)\setminus\{l, r\},\\
  E(t_3)&=\bigl(E(t_1)\cup E(t_2)\cup\{px_0\}\bigr)\setminus\{pl, rx_0\}.
\end{align}
The root of $t_3$ is the root of $t_1$, and the ordering among the children of $p$ in $t_3$ is such that $x_0$ is the left child of $p$ in $t_3$ if $l$ is the left child of $p$ in $t_1$, and $x_0$ is the right child of $p$ in $t_3$ if $l$ is the right child of $p$ in $t_1$.

\begin{defn}
  A \defemph{binary forest} is a forest whose connected components are binary trees, together with a linear order of these trees.
\end{defn}

The leaves of all binary trees in a binary forest, called the leaves of the forest, can be linearly ordered according to the order of trees in the forest and the order of leaves in each tree.

Now let $w=(t_1, \dotsc, t_k)$ be a binary forest consisting of $k$ binary trees $t_1, \dotsc, t_k$. The \defemph{domain} $\dom(w)$ of $w$ is the number $k$ of roots, that is, connected components; the \defemph{codomain} of $w$ is
\begin{equation}
  \cod(w)=\sum_i\, \abs{t_i},
\end{equation}
the number of leaves in $w$. The \defemph{composition} $w_1\circ w_2$ of two binary forests $w_1, w_2$ with $\cod(w_1)=\dom(w_2)$ is the binary forest obtained by connecting the $i$-th tree of $w_2$ with the $i$-th leaf of $w_1$ for all possible $i$.

Composition of forests is associative. Furthermore, for every $n\ge 1$ the unique forest ${\id_n}$ with $\dom({\id_n})=\cod({\id_n})=n$, consisting of $n$ copies of the tree with one leaf, is a left and right identity of composition. We also define the \defemph{empty (binary) forest} $\emptyset$ to be the graph with no vertices, no edges, and $\dom(\emptyset)=\cod(\emptyset)=0$, such that $\emptyset\circ\emptyset=\emptyset$. It cannot be composed with any other forests, and it serves as the identity $\id_0$.

We can define a second operation on pairs of binary forests, the monoidal product $\otimes$. Given two arbitrary binary forests $w_1$ and $w_2$, $w_1\otimes w_2$ has as underlying graph the (disjoint) union of $w_1$ and $w_2$ (that is, take the respective disjoint unions of vertices and edges); the roots and leaves are ordered such that those of $w_1$ come, in order, before those of $w_2$. Graphically, $w_1\otimes w_2$ simply corresponds to $w_1$ and $w_2$ drawn side by side. Note that
\begin{align}
  \dom(w_1\otimes w_2)&=\dom(w_1) + \dom(w_2),\\
  \cod(w_1\otimes w_2)&=\cod(w_1) + \cod(w_2),
\end{align}
and that $\emptyset$ is the neutral element for $\otimes$.

An isomorphism of binary forests $w_1$ and $w_2$ is a bijection $V(w_1)\to V(w_2)$ that preserves adjacency, maps roots to roots, and that respects the order of the trees and the order of the children of each vertex. When in the following we speak of binary trees or forests, we always mean isomorphism classes of binary trees or forests. This means that only the shape of the underlying graphs and the ordering of roots and leaves are important, and not the particular labelling of vertices.

\begin{defn}
  The \defemph{category of binary forests}, $\binfor$, has as objects the naturals numbers $\naturals$, and the morphisms $m\to n$ are all binary forests with $\dom(w)=m$ and $\cod(w)=n$. Composition is given by composition of forests, and for every object $n$ the forest ${\id_n}$ is the identity. $\binfor$ admits a strict monoidal structure; the product is given on objects by addition and on morphisms by $\otimes$; the monoidal unit is $0$.
\end{defn}

\subsection{Properties of Composition}
\label{sub:properties-composition}

We first focus on properties of $\binfor$ as a category, ignoring the monoidal structure. The aim is to establish a useful calculus of fractions on $\binfor$ from which Thompson's group $F$ will emerge.

We define a wide subcategory $\binfor_1$ of $\binfor$ with objects and morphisms
\begin{equation}
  \obj(\binfor_1)=\naturals^\times, \quad \chom(\binfor_1)=\chom(\binfor)\setminus\{\emptyset\}.
\end{equation}

\begin{rem}
  $\binfor_1$ is actually not needed for the construction in this section; since $\emptyset=\id_0$ is already invertible in $\binfor$, there is no difference between including or not including it in the localization procedure. However, we will later define categories similar to $\binfor$, and when localizing these categories, it will be important to exclude certain morphisms from localization, among them the empty forest. Therefore, the above convention helps by highlighting similarities with later cases.
\end{rem}

\begin{lem}
  $\binfor$ has all pushouts.
\end{lem}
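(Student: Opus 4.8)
The plan is to construct pushouts by hand; everything rests on a \emph{confluence} property of $\binfor$, namely that a cospan obtained by refining one object in two ways admits a coarsest common refinement. Precisely, I would first prove: for forests $f\colon a\to b$ and $g\colon a\to c$ there are $q\in\naturals$ and forests $u\colon b\to q$, $v\colon c\to q$ with $f\circ u=g\circ v$, and these may be chosen so that whenever $u'\colon b\to q'$, $v'\colon c\to q'$ also satisfy $f\circ u'=g\circ v'$, there is $h\colon q\to q'$ with $u\circ h=u'$ and $v\circ h=v'$. Geometrically this is clear: a forest out of $a$ records a standard dyadic subdivision of each of $a$ unit intervals, and the coarsest common refinement is obtained, on each of the $a$ components, by superimposing the cut set coming from $f$ with the one coming from $g$.

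To make this rigorous I would first reduce to the case $a=1$. Writing $f=f_1\otimes\dots\otimes f_a$ and $g=g_1\otimes\dots\otimes g_a$ as the ordered sequences of their connected components, taking the coarsest common refinement $(q_i,u_i,v_i)$ of each pair $(f_i,g_i)$, and setting $q=\sum_i q_i$, $u=u_1\otimes\dots\otimes u_a$, $v=v_1\otimes\dots\otimes v_a$ does the job: commutativity is bifunctoriality of $\otimes$, and the universal property descends because the decomposition of a forest into its ordered components is unique, so any $u',v'$ with $f\circ u'=g\circ v'$ split componentwise with $f_i\circ u_i'=g_i\circ v_i'$. For $a=1$ the forests are binary trees $s,t$, and I would induct on $|s|+|t|$: if one of them is the one-leaf tree the common refinement is the other; otherwise both begin with a caret, $s=Y\circ(s_0\otimes s_1)$ and $t=Y\circ(t_0\otimes t_1)$ with $Y$ the caret, and applying the inductive hypothesis to the strictly smaller pairs $(s_0,t_0)$ and $(s_1,t_1)$ and regrafting the two refinements under a fresh caret yields the refinement of $(s,t)$; its universality uses only that the caret-peeling of a nontrivial tree and the $\otimes$-decomposition of a forest are unique.

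Granting the lemma, the theorem is formal once one observes that every forest is an \emph{epimorphism} in $\binfor$ — an easy caret-induction, or simply the remark that grafting is right-cancellable, so a forest is determined by its composites from the right. Given a span $b\xleftarrow{f}a\xrightarrow{g}c$, take $(q,u,v)$ from the lemma: the square commutes by construction, any cocone $(q',u',v')$ with $f\circ u'=g\circ v'$ is supplied a mediating morphism $h\colon q\to q'$ directly by the lemma, and $h$ is unique because $u$ is epic ($u\circ h_1=u\circ h_2$ forces $h_1=h_2$). Hence $(q,u,v)$ is the pushout; the degenerate span out of $0$ is trivial and is anyway the empty case of the reduction.

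The main obstacle is the \emph{universality} half of the confluence lemma. Completing a span to some commuting square is easy and would follow from much weaker facts — and it is all that the later calculus of left fractions will require of this property, cf.\ \cref{it:lf2} — but a genuine pushout needs the coarsest common refinement, and the minimality proof, with its bookkeeping about carets and $\otimes$-decompositions (equivalently, about superimposed dyadic cut sets), is where the real work lies.
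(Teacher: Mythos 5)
Your proof is correct and takes essentially the same route as the paper: reduce to a span of two trees $m\xleftarrow{s}1\xrightarrow{t}n$ and exhibit the coarsest common refinement with its universal property — your caret-peeling induction simply constructs explicitly the join $s\vee t$ (the smallest tree containing both $s$ and $t$ as rooted subtrees) that the paper takes directly, and your epimorphism argument for uniqueness of the mediating morphism is exactly the cancellation lemma the paper records right afterwards. The additional bookkeeping you supply (the componentwise $\otimes$-reduction and the uniqueness of caret decompositions) is detail the paper leaves implicit.
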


\begin{proof}
  It suffices to prove that every diagram of the form $m\xleftarrow{s} 1 \xrightarrow{t} n$ has a pushout, where the forests $s$ and $t$ are trees. Indeed, let $s\vee t$ be the smallest tree containing both $s$ and $t$ as rooted subtrees (that is, roots are mapped to roots). Then
  \begin{equation}
    s\circ (\tau_1, \dotsc, \tau_m) = s\vee t = t\circ (\sigma_1, \dotsc, \sigma_n)
  \end{equation}
  for unique trees $\tau_i$ and $\sigma_j$. It is also clear that every tree containing $s\vee t$ factors uniquely through the $\tau_i$ and $\sigma_j$.
\end{proof}

\begin{cor}\label{cor:all-pushouts}
  For every pair of morphisms $m' \xleftarrow{a} m \xrightarrow{w} n$ with $a\in\binfor_1$ and $w\in\binfor$, there are morphisms $b\colon n\to n'$ in $\binfor_1$ and $v\colon n'\to m'$ in $\binfor$ such that $b\circ w=v\circ a$.
\end{cor}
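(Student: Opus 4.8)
The plan is to read off $b$ and $v$ as the two coprojections of a pushout, whose existence is guaranteed by the preceding lemma, and then to verify the only slightly non-routine point: that the new edge $b$ lands in the subcategory $\binfor_1$ and not merely in $\binfor$.

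Concretely, I would apply the lemma that $\binfor$ has all pushouts to the span $m' \xleftarrow{a} m \xrightarrow{w} n$. This produces a pushout object $n'$ together with morphisms $b\colon n\to n'$ and $v$ out of $m'$ which fit into a commutative square with $a$ and $w$; that square is precisely the relation $b\circ w=v\circ a$ claimed in the statement, and only its existence (not the universal property) is needed. The morphisms $b$ and $v$ then already have the required shape, so it remains to check that $b$ is a morphism of $\binfor_1$.

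For this, recall that $\chom(\binfor_1)=\chom(\binfor)\setminus\{\emptyset\}$ and $\obj(\binfor_1)=\naturals^\times$, and that the empty forest $\emptyset$ is the unique binary forest whose domain is $0$. Since $a\in\binfor_1$, its domain satisfies $m=\dom(a)\ge 1$; and since $w\colon m\to n$ is a binary forest with $m\ge 1$ roots while every binary tree has at least one leaf, $n=\cod(w)\ge m\ge 1$. Hence $b$ has domain $n\ge 1$, so $b\ne\emptyset$, i.e.\ $b\in\chom(\binfor_1)$; and because a binary forest never has fewer leaves than roots, $n'=\cod(b)\ge n\ge 1$, so $n'\in\obj(\binfor_1)$. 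Therefore $b$ is a morphism of $\binfor_1$ while $v$ is a morphism of $\binfor$, which is exactly the assertion.

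I do not expect a genuine obstacle here: the substantive content is the pushout construction already carried out in the lemma, and the only step requiring (minor) care is the counting that places $b$ inside $\binfor_1$ rather than just in $\binfor$ — which is exactly where the hypothesis $a\in\binfor_1$ is used, since it forces $m\ge 1$ and hence $n\ge 1$ and $n'\ge 1$.
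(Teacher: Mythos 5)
Your proposal is correct and follows essentially the same route as the paper, which states the corollary as an immediate consequence of the lemma that $\binfor$ has all pushouts (using only the commuting square, not the universal property). The only extra content you supply is to make explicit the check the paper leaves implicit---that $b$ avoids the empty forest because $a\in\binfor_1$ forces $m\ge 1$, hence $n\ge 1$ and $n'\ge 1$---and that counting argument is sound.
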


It is easy to see that the following cancellation property holds.

\begin{lem}\label{lem:cancellation}
  Let $v, w\colon m\to n$ be parallel morphisms in $\binfor$. If there is a morphism $b\colon k\to m$ with $v\circ b=w\circ b$, then $v=w$.
\end{lem}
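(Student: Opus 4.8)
The plan is to prove this directly, exploiting that composition of forests never identifies or collapses anything: once the bottom factor $b$ is fixed, the composite $v\circ b$ visibly consists of $b$ at the bottom with $v$ grafted onto its $m$ leaves, so $v$ can simply be read off from it. Since a forest is nothing but an ordered list of its constituent trees, I would first dispose of the degenerate cases (if any of $k,m,n$ is $0$, then all forests in sight are empty and there is nothing to prove) and then reduce to the case of a single tree.

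For the reduction, write $b=(b_1,\dots,b_k)$ with $\sum_j\abs{b_j}=m$. The $m$ trees making up $v$ fall into consecutive blocks $V_1,\dots,V_k$, where $V_j$ consists of the $\abs{b_j}$ trees of $v$ that get grafted onto the leaves of $b_j$; split $w$ into $W_1,\dots,W_k$ in the same way. The block boundaries depend only on the leaf counts $\abs{b_j}$, hence are identical for $v$ and $w$. By the definition of composition, the $j$-th tree of $v\circ b$ is $b_j$ with the forest $V_j$ grafted onto its leaves, and the $j$-th tree of $w\circ b$ is $b_j$ with $W_j$ grafted on; and two forests coincide iff their ordered lists of trees coincide. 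Thus $v\circ b=w\circ b$ reduces to the statement that, for a binary tree $t$ with $\abs{t}=p$ leaves and forests $P,P'$ each consisting of $p$ trees, grafting $P$ onto the leaves of $t$ and grafting $P'$ onto the leaves of $t$ give the same tree only if $P=P'$.

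I would prove this last statement by induction on $\abs{t}$. If $\abs{t}=1$, then $t=\id_1$ is the single-edge tree — the only binary tree with one leaf — and grafting a single tree $s$ onto its leaf just reproduces $s$, so there is nothing to check. If $\abs{t}\ge 2$, the unique child of the root of $t$ is trivalent, so $t$ decomposes uniquely as a caret carrying a left subtree $t_L$ and a right subtree $t_R$, both with strictly fewer leaves. The first $\abs{t_L}$ leaves of $t$ are exactly the leaves of $t_L$ and the remaining $\abs{t_R}$ are those of $t_R$, so $P$ splits as $P_L\otimes P_R$ accordingly; since grafting affects only leaves, the tree obtained by grafting $P$ onto $t$ again has a trivalent child-of-root, whose left subtree is $t_L$ with $P_L$ grafted on and whose right subtree is $t_R$ with $P_R$ grafted on, and likewise for $P'$. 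A binary tree with at least two leaves is determined by its ordered pair of left and right subtrees, so equality of the two grafted trees forces equality of the corresponding grafts on $t_L$ and on $t_R$; the induction hypothesis applied to $t_L$ and to $t_R$ then yields $P_L=P'_L$ and $P_R=P'_R$, hence $P=P'$.

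I do not anticipate a real obstacle: the content of the lemma is simply that binary forests carry no relations beyond associativity and the unit laws, so grafting is literally injective in its bottom argument once the top is fixed. The only work is bookkeeping — extracting the identity ``left/right subtree of a composite $=$ composite of the subtrees'' cleanly from the explicit graph-grafting operation of \cref{sec:category-binary-forests}, and confirming that the block decompositions of $v$ and of $w$ induced by $b$ agree. This is also why one gets the sharp conclusion $v=w$ rather than merely the weaker Ore-type cancellation of \cref{it:lf3}: one may take $\tau=\id_n$ there.
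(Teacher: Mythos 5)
Your proof is correct. The paper gives no argument for this lemma at all (it is introduced with ``It is easy to see that the following cancellation property holds''), and your block decomposition of $v$ along the leaf counts of $b$ followed by induction on the caret structure of each tree of $b$ is precisely the kind of direct structural argument the paper takes for granted; in particular, your reading of $v\circ b$ as $b$ at the bottom with the trees of $v$ grafted onto its leaves, and your closing remark that this yields the Ore cancellation axiom with $\tau=\id_n$, match the intended use of the lemma.
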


\subsection{Localization}

As \cref{cor:all-pushouts,lem:cancellation} show, the set $\Sigma=\chom(\binfor_1)$ satisfies \cref{it:lf1,it:lf2,it:lf3} from \cref{defn:calculus-fractions}. Therefore $\Sigma$ admits a calculus of left fractions and the localization $\fcal$ of $\binfor$ by $\Sigma$ exists. $\fcal$ is a groupoid called \defemph{Thompson's groupoid}, a choice of terminology that will become clear very soon.

Consider the monoidal category $\fcal'$ with $\obj(\fcal')=\naturals$ and whose morphisms $m\to n$ are bijections $g\colon [0, m]\to [0, n]$ such that $g$ is piecewise linear, has breakpoints only at dyadic rationals, and on intervals of differentiability, the derivatives of $g$ are integer powers of $2$. (Note that we have one morphism $\emptyset\colon 0\to 0$.) Composition is given by composition of functions. The monoidal product is given on objects by addition and on morphism by juxtaposition of functions. Evidently, $\fcal'$ is a groupoid, and from the description it is clear that $\fcal$ and $\fcal'$ are isomorphic, and that $\fcal'(1, 1)=F$, see also \cite{FioreLeinster2010}.

As an illustration of these facts, note that there is an obvious one-to-one correspondence between standard dyadic partitions of $[0, 1]$ and binary trees. For example, the partition
\begin{equation}
	\begin{tikzpicture}[scale=5]
		\def\h{0.025}
		\draw (0, 0) -- (1, 0);
		\draw (0, -2*\h) -- ++(0, 4*\h)  node[at start,below=2pt] {$0$};
		\draw (0.25, -\h) -- ++(0, 2*\h) node[at start,below=3pt] {$\frac{1}{4}$};
		\draw (0.375, -\h) -- ++(0, 2*\h) node[at start,below=3pt] {$\frac{3}{8}$};
		\draw (0.5, -\h) -- ++(0, 2*\h) node[at start,below=3pt] {$\frac{1}{2}$};
		\draw (1, -2*\h) -- ++(0, 4*\h) node[at start,below=2pt] {$1$};
	\end{tikzpicture}
\end{equation}
is represented by the tree
\begin{equation}
  \begin{tikzpicture}[scale=0.25, yscale=-1]
    \draw (3, 3) -- (3, 4);
    \draw (2.0, 0.0) -- (3.0, 1.0) -- (4.0, 0.0);
    \draw (0.0, 0.0) -- (2.0, 2.0) -- (3.0, 1.0);
    \draw (2.0, 2.0) -- (3.0, 3.0) -- (6.0, 0.0);
  \end{tikzpicture}\,.
\end{equation}
All standard dyadic partitions can therefore be described as finite rooted subtrees of the infinite binary tree of standard dyadic intervals \cite{Belk2004}.
As elements of $F$ can be written as fractions of binary trees, they can as well be written as fractions of standard dyadic partitions. Extending a fraction is done by adding or removing opposing carets from both trees in a fraction, and every element has a so-called \emph{reduced} fraction representative from which no opposing pair of carets can be removed without changing the equivalence class.

\begin{exmp}
  The two generators of $F$, $A$ and $B$, have the following representations as fractions:
  \begin{equation}
    A=
    \begin{tikzpicture}[scale=0.25,baseline=-1mm]
      \begin{scope}[yscale=-1, yshift=-4cm]
        \draw (2.0, 0.0) -- (3.0, 1.0) -- (4.0, 0.0);
        \draw (0.0, 0.0) -- (2.0, 2.0) -- (3.0, 1.0);
        \draw (2, 2) -- (2, 3);
  		\end{scope}

  		\draw (-0.5, 0) -- (4.5, 0);

  		\begin{scope}[yscale=-1, yshift=1cm]
        \draw (0.0, 0.0) -- (1.0, 1.0) -- (2.0, 0.0);
        \draw (1.0, 1.0) -- (2.0, 2.0) -- (4.0, 0.0);
        \draw (2, 2) -- (2, 3);
  		\end{scope}
  	\end{tikzpicture}
    \,,\qquad B=
    \begin{tikzpicture}[scale=0.25,baseline=-1mm]
      \begin{scope}[yscale=-1, yshift=-5cm]
        \draw (4.0, 0.0) -- (5.0, 1.0) -- (6.0, 0.0);
        \draw (2.0, 0.0) -- (4.0, 2.0) -- (5.0, 1.0);
        \draw (0.0, 0.0) -- (3.0, 3.0) -- (4.0, 2.0);
        \draw (3, 3) -- (3, 4);
  		\end{scope}

  		\draw (-0.5, 0) -- (6.5, 0);

  		\begin{scope}[yscale=-1, yshift=1cm]
  			\draw (2.0, 0.0) -- (3.0, 1.0) -- (4.0, 0.0);
  			\draw (3.0, 1.0) -- (4.0, 2.0) -- (6.0, 0.0);
  			\draw (0.0, 0.0) -- (3.0, 3.0) -- (4.0, 2.0);
        \draw (3, 3) -- (3, 4);
  		\end{scope}
  	\end{tikzpicture}
    \,.
  \end{equation}
  Note that equality holds only when we assume the two fractions to represent equivalence classes of fractions as defined for the localization. We can extend a fraction with carets, so that
  \begin{equation}
    A=\begin{tikzpicture}[scale=0.25,baseline=-1mm]
      \begin{scope}[yscale=-1, yshift=-5cm]
        \draw[very thick,rounded corners=0.1mm] (2.0, 0.0) -- (3.0, 1.0) -- (4.0, 0.0);
        \draw (3.0, 1.0) -- (4.0, 2.0) -- (6.0, 0.0);
        \draw (0.0, 0.0) -- (3.0, 3.0) -- (4.0, 2.0);
        \draw (3, 3) -- (3, 4);
  		\end{scope}

  		\draw (-0.5, 0) -- (6.5, 0);

  		\begin{scope}[yscale=-1, yshift=1cm]
        \draw[very thick,rounded corners=0.1mm] (2.0, 0.0) -- (3.0, 1.0) -- (4.0, 0.0);
        \draw (0.0, 0.0) -- (2.0, 2.0) -- (3.0, 1.0);
        \draw (2.0, 2.0) -- (3.0, 3.0) -- (6.0, 0.0);
        \draw (3, 3) -- (3, 4);
  		\end{scope}
  	\end{tikzpicture}\,,
  \end{equation}
  where the added carets are marked in bold.
\end{exmp}

\begin{rem}
  Note that in the literature, trees are usually draw the other way round, and in fractions of trees, the denominator tree is additionally inverted. For consistency within this work, we do not follow this convention.
\end{rem}

\section{The Category of Annular Forests}

This description is adopted from \cite{GrahamLehrer1998,BrothierStottmeister2019}. If $w=(w_0, w_1, \dotsc, w_{m-1})=(w_i)_{i\in\integers_m}$ is a forest whose trees are indexed by the elements of the cyclic group $\integers_m$, we can define, for every $k\in\integers_m$, a new forest
\begin{equation}
  [k](w)=(w_{i+k \Mod m}) = (w_k, w_{k+1\Mod m}, \dotsc, w_{k-1\Mod m}).
\end{equation}

\begin{defn}
  The \defemph{category of annular forests},\footnote{It should be called `category of annular binary forests'. We drop the `binary'.} $\annfor$, has objects and morphisms
  \begin{equation}
    \obj(\annfor)=\naturals, \quad
    \annfor(m, n) = \binfor(m, n)\times\integers_n,
  \end{equation}
  and given $(v, k)\in\annfor(m, n)$ and $(w, l)\in\annfor(n, p)$, composition is defined by
  \begin{equation}
    (w, l)\circ (v, k)=([k](w)\circ v, s(w, l, k)),
  \end{equation}
  where $s(w, l, k)$ is obtained by shifting the trees of $w$ by $k$ steps and counting how many steps the $l$-th leaf of $w$ was moved.
\end{defn}

Similarly to the previous section, we can localize $\annfor$ by all of its morphisms and obtain \defemph{Thompson's groupoid} $\tcal$. The fundamental group $\tcal(1, 1)$ is (isomorphic to) Thompson's group $T$. Tree diagrams are almost the same as for $F$, but pairs of carets are inserted according to the cyclic permutations of the leaves.

\begin{exmp}
  The third generator of $T$ has the tree diagram
  \begin{equation}
    C=
    \begin{tikzpicture}[scale=0.25,baseline=-1mm]
      \begin{scope}[yscale=-1, yshift=-4cm]
        \draw (2.0, 0.0) -- (3.0, 1.0) -- (4.0, 0.0);
        \draw (0.0, 0.0) -- (2.0, 2.0) -- (3.0, 1.0);
        \draw (2, 2) -- (2, 3);
        \draw[fill=white] (2, 0) circle[radius=4mm];
  		\end{scope}

  		\draw (-0.5, 0) -- (4.5, 0);

  		\begin{scope}[yscale=-1, yshift=1cm]
        \draw (2.0, 0.0) -- (3.0, 1.0) -- (4.0, 0.0);
        \draw (0.0, 0.0) -- (2.0, 2.0) -- (3.0, 1.0);
        \draw (2, 2) -- (2, 3);
        \draw[fill=white] (0, 0) circle[radius=4mm];
  		\end{scope}
  	\end{tikzpicture}\,.
  \end{equation}
  The two white circles indicate which leaves belong together according to the cyclic permutations.
\end{exmp}

\section{Approximating Diffeomorphisms}
\label{sec:approximating-diffeomorphisms}

Finally, we show how diffeomorphisms of the interval $[0, 1]$ and the circle $\sircle^1$ can be approximated by elements of Thompson's groups $F$ and $T$, respectively.

Let $\diff^1_+([0, 1])$ denote the group of orientation-preserving $C^1$-diffeomor\-phisms of the interval, and similarly for $\sircle^1$. Our result is stated in terms of the $C^0$-norm
\begin{equation}
  \norm{f}=\sup_x\: \abs{f(x)}.
\end{equation}

\begin{thm}\label{thm:diffapprox}
  For every $f\in\diff^1_{+}([0, 1])$ and $\epsilon>0$, there exists $g\in F$ such that $\norm{f-g}<\epsilon$. Similarly, if $f\in\diff^1_{+}(\sircle^1)$, then there exists $g\in T$ with this property.
\end{thm}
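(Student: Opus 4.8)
The plan is to build $g$ from the description of elements of $F$ as pairs of standard dyadic partitions of $[0,1]$ with the same number of intervals (\cref{sec:category-binary-forests}), together with a soft uniform-continuity estimate making $g$ close to $f$. Note that only continuity and strict monotonicity of $f$ enter, so the $C^1$ hypothesis is not really used for this $C^0$ statement; I would simply prove it as stated. The one auxiliary fact I would record first is the following transitivity statement: \emph{for every $m\in\naturals$ and all dyadic rationals $0=b_0<b_1<\dots<b_{2^m}=1$ there is $g\in F$ with $g(i\,2^{-m})=b_i$ for all $i$.} To prove it, pick $M\ge m$ with $b_i\in 2^{-M}\integers$ for all $i$; let $Q$ be the uniform dyadic partition of $[0,1]$ at scale $2^{-M}$, so each $[b_i,b_{i+1}]$ is the union of $n_i:=(b_{i+1}-b_i)\,2^M\ge 1$ consecutive intervals of $Q$ and $\sum_i n_i=2^M$. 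Let $P$ be obtained from the uniform partition at scale $2^{-m}$ by cutting its $i$-th interval $[i\,2^{-m},(i+1)\,2^{-m}]$ into $n_i$ standard dyadic pieces — possible for any $n_i\ge 1$, since each bisection of a standard dyadic interval raises the piece count by one. Then $P$ and $Q$ are standard dyadic partitions each with $2^M$ intervals, and the element of $F$ they define (sending the $j$-th interval of $P$ affinely onto the $j$-th interval of $Q$; its slopes are ratios of powers of $2$, hence powers of $2$) sends $i\,2^{-m}$ to $b_i$.

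Granting this, the main argument is short. Since $f$ is continuous on $[0,1]$ it is uniformly continuous, so given $\epsilon>0$ I would choose $m$ with $\abs{f(x)-f(x')}<\epsilon/2$ whenever $\abs{x-x'}\le 2^{-m}$, and set $a_i=i\,2^{-m}$ for $i=0,\dots,2^m$. As $f$ is an orientation-preserving homeomorphism of $[0,1]$ we have $f(0)=0$, $f(1)=1$, and $f(a_0)<f(a_1)<\dots<f(a_{2^m})$; put $\gamma=\min_i\bigl(f(a_{i+1})-f(a_i)\bigr)>0$. Choose dyadic $b_i$ with $b_0=0$, $b_{2^m}=1$, and $\abs{b_i-f(a_i)}<\eta$ for $0<\eta<\min\{\gamma/2,\ \epsilon/4\}$; then $b_0<b_1<\dots<b_{2^m}$, and the transitivity fact gives $g\in F$ with $g(a_i)=b_i$. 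For $x\in[a_i,a_{i+1}]$, monotonicity of $g$ yields $g(x)\in[b_i,b_{i+1}]\subseteq[f(a_i)-\eta,\ f(a_{i+1})+\eta]$, while $f(x)\in[f(a_i),f(a_{i+1})]$, so $\abs{f(x)-g(x)}\le\bigl(f(a_{i+1})-f(a_i)\bigr)+2\eta<\epsilon/2+\epsilon/2=\epsilon$, i.e.\ $\norm{f-g}<\epsilon$.

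For the circle I would run the identical argument with $[0,1]$ replaced by $\sircle^1$, standard dyadic partitions of the interval replaced by their circular analogues, and $F$ replaced by $T$, using the description of elements of $T$ as pairs of such partitions with the same number of intervals together with a cyclic matching of their intervals. The transitivity fact then reads: given $2^m$ dyadic points of $\sircle^1$ in cyclic order, some element of $T$ carries the uniform partition points $i\,2^{-m}$ to them in order; the proof is the same caret-counting, with the cyclic offset absorbed into the matching datum of $T$. One subdivides $\sircle^1$ uniformly at a scale fine enough for uniform continuity of $f$, perturbs each $f(a_i)$ to a nearby dyadic point preserving the cyclic order (consecutive images are separated by a positive gap since $f$ is a homeomorphism), and applies $T$; the same estimate gives $\norm{f-g}<\epsilon$. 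The only point that calls for care is the transitivity fact, specifically arranging that the two partitions have equally many intervals; but as sketched this reduces to the elementary observation that a standard dyadic interval can be cut into any prescribed number $\ge 1$ of standard dyadic pieces, so I do not expect a genuine obstacle — the rest is the routine uniform-continuity estimate above.
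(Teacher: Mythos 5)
Your proof is correct, and it follows the same overall strategy as the paper: sample $f$ at the uniform dyadic grid $a_i=i2^{-m}$, replace the values $f(a_i)$ by nearby dyadic rationals chosen so that (cyclic) order is preserved, realize the assignment $a_i\mapsto b_i$ by an element of $F$ (resp.\ $T$) given by two standard dyadic partitions with equal piece counts, and finish with a sup-norm estimate. The differences are in packaging and hypotheses. Where you invoke a transitivity lemma---proved by refining the target side to the uniform scale-$2^{-M}$ partition and cutting each uniform interval $[a_i,a_{i+1}]$ into the matching number $n_i$ of standard dyadic pieces---the paper constructs an explicit two-point ``dyadic interpolation'' segment by segment (\cref{fig:partitions}); the combinatorial content (a standard dyadic interval can be split into any prescribed number of standard dyadic subintervals, so piece counts can always be matched) is identical, and your global formulation is arguably cleaner. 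Your estimate uses only uniform continuity and strict monotonicity, so it directly yields the statement for all orientation-preserving homeomorphisms, a strengthening the paper mentions but attributes to the literature; the paper instead uses $S=\max f'$ to fix the grid size, in keeping with its aim of a computationally explicit construction, and chooses the dyadic values one-sidedly ($\eta_i$ slightly above $f(\xi_i)$) so monotonicity is automatic, whereas you enforce it with $\eta<\gamma/2$---both work. For the circle you argue intrinsically with cyclic standard dyadic partitions and the cyclic matching datum of $T$, while the paper lifts $f$ to $\tilde f\colon\reals\to\reals$ with $\tilde f(x+1)=\tilde f(x)+1$ and works with functions $[0,1]\to\reals$; since $\norm{f-g}$ is interpreted through such representatives, you should phrase the final estimate on a lift (or in the circle metric), but this is cosmetic and your argument goes through.
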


This statement is known and follows from \cite[Thm.~A4.1]{BieriStrebel16} and \cite[Prop.~4.3]{Zhuang07}. It is actually true for all orientation-preserving homeomorphisms, not only diffeomorphisms.

\begin{figure}
\begin{subfigure}[b]{0.3\linewidth}
  \begin{tikzpicture}
    \footnotesize

    \def\ri{0.3}
    \def\ro{0.8}

    \draw[thick] (0, 0) circle [radius=\ri];
    \draw[thick] (0, 0) circle [radius=\ro];

    \draw[in=40, in looseness=3, out looseness=0.5] (\ri, 0) to (-\ro, 0) node[left] {$1/2$};
    \draw[bend right=45] (-\ri, 0) to (0, -\ro) node[below] {$3/4$};
    \draw[bend right=45] (0, -\ri) to (\ro, 0) node[right] {$0\cong 1$};
    \node[above] at (0, \ro) {$1/4$};
  \end{tikzpicture}
  \caption{}
  \label{subfig:Texample1}
\end{subfigure}
\begin{subfigure}[b]{0.3\linewidth}
  \begin{tikzpicture}
    \footnotesize

    \def\base{0.7}

    \draw[thin, color=gray,step=\base] (0, 0) grid (4*\base, 4*\base);

    \draw[->] (0, 0) -- (4.2*\base, 0);
    \draw[->] (0, 0) -- (0, 4.2*\base);

    \node[below left] at (0, 0) {$0$};
    \node[below] at (2*\base, 0) {$1/2$};
    \node[below] at (4*\base, 0) {$1$};

    \node[left] at (0, 2*\base) {$1/2$};
    \node[left] at (0, 4*\base) {$1$};

    \draw (0, 2*\base) -- (2*\base, 3*\base) -- (3*\base, 4*\base);
    \draw (3*\base, 0) -- (4*\base, 2*\base);
  \end{tikzpicture}
  \caption{}
  \label{subfig:Texample2}
\end{subfigure}
\begin{subfigure}[b]{0.3\linewidth}
  \begin{tikzpicture}
    \footnotesize

    \def\base{0.7}

    \draw[thin, color=gray,step=\base] (0, 0) grid (4*\base, 6*\base);

    \draw[->] (0, 0) -- (4.2*\base, 0);
    \draw[->] (0, 0) -- (0, 6.2*\base);

    \node[below left] at (0, 0) {$0$};
    \node[below] at (2*\base, 0) {$1/2$};
    \node[below] at (4*\base, 0) {$1$};

    \node[left] at (0, 2*\base) {$1/2$};
    \node[left] at (0, 4*\base) {$1$};
    \node[left] at (0, 6*\base) {$3/2$};

    \draw (0, 2*\base) -- (2*\base, 3*\base) -- (3*\base, 4*\base) -- (4*\base, 6*\base);
  \end{tikzpicture}
  \caption{}
  \label{subfig:Texample3}
\end{subfigure}
\caption{Three representations of the same element of Thompson's group $T$: \labelcref{sub@subfig:Texample1} as a map $\sircle^1\to \sircle^1$, here drawn by indicating how breakpoints (on the inner circle) are mapped to their images (on the outer circle); \labelcref{sub@subfig:Texample2} the usual representation as a function $[0, 1]\to [0, 1]$; \labelcref{sub@subfig:Texample3} the representation as a function $[0, 1]\to\reals$, which we will use---note that it is a homeomorphism onto its image $[1/2, 3/2]$.}
\label{fig:Texample}
\end{figure}

Given any homeomorphism $f\colon \sircle^1\to \sircle^1$, we can identify it with a homeomorphism $\tilde f\colon\reals\to\reals$ that satisfies
\begin{equation}
  \tilde f(x+1)=\tilde f(x)+1.
\end{equation}
In particular, $\tilde f|_{[0, 1]}$ is continuous, which will be needed later. An example is shown in \cref{fig:Texample}.

Next, we note that the dyadic rationals are dense in $\reals$. To make our construction as explicit as possible, we give an example.
Let $0<p<q$. To find a dyadic rational number in the open interval $(p, q)$, let
\begin{equation}
  \overline\Ceil(x) = \min\setcond{n\in\integers}{n>x}=\begin{cases}
    x+1& \text{if $x\in\integers$},\\
    \ceil{x}& \text{otherwise},
  \end{cases}
\end{equation}
and set
\begin{gather}
  k = \max\left\{0, \overline\Ceil(-\log_2 (q-p))\right\},\\
  m= \overline\Ceil (2^k p).
\end{gather}
Then $m, k\in\naturals$, and $m/2^k\in (p, q)$ is a dyadic rational.

\begin{proof}[Proof of \cref{thm:diffapprox}]
  \begin{figure}
    \begin{tikzpicture}
      \def\base{0.4}
      \draw (0, 0) rectangle (2*8*\base, 11*\base);
      \foreach \y in {1, ..., 10}
        \draw[thin, color=gray, yshift=\y*\base cm] (0, 0) -- (2*8*\base, 0);
      \draw (0, 0) node[below left] {$0$};
      \draw (0, 1*\base) node[left] {$\frac{1}{2^6}$};
      \draw (0, 2*\base) node[left=6pt] {$\frac{2}{2^6}$};
      \draw (0, 3.8*\base) node[left=5pt] {$\vdots$};
      \draw (0, 11*\base) node[left] {$\frac{11}{2^6}$};

      \draw (1*8*\base, 0) node[below] {$\frac{1}{2^3}$} -- ++(0, 11*\base);
      \draw (2*8*\base, 0) node[below] {$\frac{2}{2^3}$};

      \fill[lightgray!50] (0, -0.65) rectangle ++(21.6*\base, -0.6);
      \fill[lightgray!10] (0, -1.25) rectangle ++(21.6*\base, -0.6);
      \fill[lightgray!50] (0, -1.85) rectangle ++(21.6*\base, -0.6);
      \foreach \x in {1, 3}
        \draw[xshift=\x*4*\base cm, thin] (0, 11*\base) -- (0, -0.6) node[below] {$\frac{\x}{2^4}$};
      \foreach \x in {1, 3, 5, 7}
        \draw[xshift=\x*2*\base cm, thin] (0, 11*\base) -- (0, -1.2) node[below] {$\frac{\x}{2^5}$};
      \foreach \x in {1, 3, 5}
        \draw[xshift=\x*1*\base cm, thin] (0, 11*\base) -- (0, -1.8) node[below] {$\frac{\x}{2^6}$};

      \draw (2*8*\base+0.5, -0.95) node[right] {$n=1$};
      \draw (2*8*\base+0.5, -1.55) node[right] {$n=2$};
      \draw (2*8*\base+0.5, -2.15) node[right] {$n=l=3$};

      \draw[thick, line cap=round] (0, 0) -- (3*2*\base, 3*2*\base) -- (2*8*\base, 11*\base);
    \end{tikzpicture}
    \caption{Illustration of how to cut the sides of a dyadic rectangle such that all sides are divided into dyadic partitions with equally many subintervals. In this example, $m_1/2^{k_1}=11/2^6$ and $m_2/2^{k_2}=2/2^3$. Since $11>2$, we divide the left side of the rectangle into $11$ intervals, each of length $1/2^6$. The bottom side is first divided into $2$ intervals, each of length $1/2^3$. Then we successively cut all its intervals in half, repeatedly going from left to right, until the bottom side is also divided into $11$ intervals. The thick line shows the graph of the piecewise linear function arising from these partitions.}
    \label{fig:partitions}
  \end{figure}

  Let two distinct points $p=(p_1, p_2)$ and $q=(q_1, q_2)$ in $\reals^2$ be given, with $p_1<q_1$ and $p_2<q_2$ and such that all coordinates $p_i$, $q_i$ are dyadic rational numbers. Then $r=q-p$ also has dyadic rational coordinates $r_1$ and $r_2$ which can be written as
  \begin{equation}
    r_1=\frac{m_1}{2^{k_1}}, \quad r_2=\frac{m_2}{2^{k_2}}
  \end{equation}
  with $m_1, m_2, k_1, k_2\in\naturals$ and $m_1, m_2>0$. We proceed as illustrated in \cref{fig:partitions}. Let $(a, b)=(1, 2)$ if $m_1\le m_2$ and $(a, b)=(2, 1)$ if $m_1>m_2$, so that $m_b=\max\{m_1, m_2\}$ and $m_a=\min\{m_1, m_2\}$. Set $d=m_b-m_a$. For the moment, assume $d>0$. Consider the sequence $(c_n)$ defined by $c_0=0$ and
  \begin{equation}
    c_n=m_a\sum_{i=0}^{n-1} 2^i=m_a(2^n-1)
  \end{equation}
  for $n\ge 1$. Let $l\ge 1$ be the smallest integer with $c_l\ge d$. Define a sequence of dyadic numbers $\xi_1, \dotsc, \xi_d$ by setting
  \begin{equation}
    \xi_{i+c_n}=\frac{2i-1}{2^{k_a+n}}
  \end{equation}
  for all $i, n$ with either $1\le i\le 2^nm_a$ and $0\le n\le l-1$, or $1\le i\le d-c_{l-1}$ when $n=l$. Set
  \begin{equation}
    X=\Setcond{\frac{m}{2^{k_a}}}{0\le m\le m_a} \cup \{\xi_1, \dotsc, \xi_d\}
  \end{equation}
  for $d>0$ and
  \begin{equation}
    X=\Setcond{\frac{m}{2^{k_a}}}{0\le m\le m_a}
  \end{equation}
  for $d=0$. We arrange the $m_a+d=m_b$ elements of $X$ in increasing order and denote them $x^a_1\le\dotsb\le x^a_{m_b}$. They are the breakpoints of a standard dyadic partition of $[0, m_a/2^{k_a}]$ into $m_b$ intervals. Furthermore, set $x^b_m=m/2^{k_b}$ for $0\le m\le m_b$. The points
  \begin{equation}
    p_1 + x^1_1,\ p_1+ x^1_2,\ \dotsc,\ p_1 + x^1_n
  \end{equation}
  and
  \begin{equation}
    p_2 + x^2_1,\ p_2+ x^2_2,\ \dotsc,\ p_2 + x^2_n
  \end{equation}
  form standard dyadic partitions dividing the intervals $[p_1, p_2]$ and $[q_1, q_2]$, respectively, into equally many subintervals. To these partitions corresponds a piecewise linear function. By construction, it is bijective, has breakpoints only at dyadic rationals, and only slopes wich are powers of $2$. We call such a function a dyadic interpolation from the point $(p_1, q_1)$ to the point $(p_2, q_2)$.

  Now let $f\in\diff^1_+([0, 1])$ and $\epsilon>0$ be given, and assume $\epsilon<1$ without loss of generality. Set $S=\max_{x\in [0, 1]} f'(x)$ and note that $S\ge 1$. Let $\Delta=\lceil-\log_2\frac{\epsilon}{3S}\rceil\in\naturals$ and $n=2^\Delta$, and note that $\Delta\ge 1$. Set
  \begin{equation}
    \xi_i=i/n, \quad i=0, \dotsc, n.
  \end{equation}
  (This implies that $\xi_0=f(\xi_0)=0$ and $\xi_n=f(\xi_n)=1$.)
  Moreover, set
  \begin{equation}
    \delta=\min\{\epsilon/2, (f(\xi_n)-f(\xi_{n-1})/2)\}
  \end{equation}
  and note that the interval
  \begin{equation}
    I_i=(\max\{f(\xi_{i-1})+\delta, f(\xi_i)\},f(\xi_i)+\delta)
  \end{equation}
  is non-empty and a subset of $(0, 1)$ for $i=1, \dotsc, n-1$. We pick a dyadic rational $\eta_i\in I_i$ for each $i=1, \dotsc, n-1$. Let $\eta_0=0$ and $\eta_n=1$, and define the function $g\colon [0, 1]\to [0, 1]$ by setting
  \begin{equation}\label{eq:gdefinition}
    g(x)=\gamma_i(x)
  \end{equation}
  for $x\in [\xi_i, \xi_{i+1}]$ and $i=0, \dotsc, n-1$, where $\gamma_i$ is a dyadic interpolation from the point $(\xi_i, \eta_i)$ to the point $(\xi_{i+1}, \eta_{i+1})$. From the definitions of $\gamma$, $\{\xi_i\}$ and $\{\eta_i\}$ it is clear that $g\in F$. Furthermore, for all $i=0, \dotsc, n-1$ and $x\in [\xi_i, \xi_{i+1}]$, consider the sequence of statements
  \begin{align}
    \abs{g(x)-f(x)}&\le g(\xi_{i+1})-f(\xi_i) \label{eq:epsilon1}\\
    &< f(\xi_{i+1})-f(\xi_i)+\epsilon/2 \label{eq:epsilon2}\\
    &= \frac{f(\xi_{i+1})-f(\xi_i)}{\xi_{i+1}-\xi_i} (\xi_{i+1}-\xi_i)+\epsilon/2 \label{eq:epsilon3}\\
    &< S2^{-\Delta}+\epsilon/2 \label{eq:epsilon4}\\
    &< \epsilon/3+\epsilon/2 < \epsilon \label{eq:epsilon5}.
  \end{align}
  \cref{eq:epsilon1} holds since $f$ and $g$ are strictly increasing and $g(\xi_i)>f(\xi_i)$. For $\cref{eq:epsilon2}$, recall that $g(\xi_{i+1})=\eta_{i+1}<f(\xi_{i+1})+\delta$. \cref{eq:epsilon3,eq:epsilon4,eq:epsilon5} are obvious. We have thus found $g\in F$ with $\max_{x\in [0, 1]}\abs{f(x)-g(x)}<\epsilon$.

  If instead $f\in\diff^1_{+}(\sircle^1)$, $f$ corresponds to a function $\tilde f\colon\reals\to\reals$ with $\im(\tilde f)=[u, u+1]$ for some $u\in\reals$ and such that $\tilde f\colon[0, 1]\to [u, u+1]$ is a diffeomorphism (as explained above). Define $S$, $\Delta$, $n$, $\xi_i$ and $I_i$ as above, but with $\delta=\min\{\epsilon/2, (\tilde f(\xi_1)-\tilde f(\xi_0))/2\}$. Choose $\eta_i\in I_i$ for $i=1, \dotsc, n-1$ as before. Let $\eta_0$ be a dyadic rational in the interval $(\tilde f(\xi_0)+\delta, \tilde f(\xi_1))$ and set $\eta_n=\eta_0+1$. This ensures that
  \begin{equation}
    \max\{\tilde f(\xi_{n-1})+\delta, \tilde f(\xi_n) \}<\eta_n.
  \end{equation}
  Now we can define a function $\tilde g\colon [0, 1]\to\reals$ as in \cref{eq:gdefinition}. It follows that \cref{eq:epsilon1,eq:epsilon2,eq:epsilon3,eq:epsilon4,eq:epsilon5} hold, and that $g\in T$ upon taking the quotient $\sircle^1=\reals/\integers$.
\end{proof}

The next logical question is whether there is an approximation for the first derivatives of diffeomorphisms. While generally elements of both $F$ and $T$ are not everywhere differentiable, we can define a function
\begin{equation}
  d(f, g)=\sup_{x\in \sircle^1\setminus B_g} \abs{f'(x)-g'(x)}
\end{equation}
that measures the distance between the first derivatives of $f\in\diff^1_+(\sircle^1)$ and $g\in T$ wherever $g'$ is defined. Here $B_g$ denotes the set of breakpoints of $g$. (The definition of $d$ for $\diff^1_+([0, 1])$ and $F$ is analogous.) We can therefore rephrase the question: Given a diffeomorphism $f$ and $\epsilon>0$, is there a function $g$ from the appropriate Thompson group such that $d(f, g)<\epsilon$? The answer is that such an approximation is not possible since the set of all integer powers of $2$ is very sparse in $(0, 1)$. This fact is made precise in the following proposition, which is similar to \cite[Théorème~III.2.3]{GhysSergiescu}.

\begin{prop}\label{prop:discreteness}
  For every $f\in\diff^1_+(\sircle^1)$ which is not a rotation, there exists $\mu>0$ such that $d(f, g)>\mu$ for all $g\in T$. The same holds when $\sircle^1$ is replaced by $[0, 1]$ and $T$ is replaced by $F$.
\end{prop}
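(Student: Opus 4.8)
The plan is to show that if $g \in T$ approximates $f$ well in the derivative sense, then $f$ must itself be close to a piecewise-Möbius (or here, piecewise-linear-in-the-projective-sense) rotation, and in fact forced to be a rotation once we track how the slopes of $g$ — all integer powers of $2$ — must vary. The key point is a quantitative gap: if $g'$ is a power of $2$ everywhere it is defined and $\lvert f'(x) - g'(x)\rvert < \mu$ for all $x \notin B_g$, then on any interval of differentiability of $g$ the value $f'(x)$ is pinned into a window of width $2\mu$ around some $2^k$. Since $f'$ is continuous and $f$ satisfies $\int_0^1 \log f'\,dx = 0$ (the rotation number / multiplicativity constraint: $\prod$ of slopes along a cycle is $1$, equivalently $f(1)-f(0)=1$ forces the product of the slope ratios to telescope), I want to derive that $g$, hence $f$, cannot have any genuine slope change, which forces $f' \equiv 1$ a.e., i.e.\ $f$ is a rotation.

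First I would reduce to the interval case: a diffeomorphism of $\sircle^1$ lifts to $\tilde f\colon\reals\to\reals$ with $\tilde f(x+1)=\tilde f(x)+1$ as already set up before \cref{thm:diffapprox}, and $g \in T$ lifts to a function whose slopes are still powers of $2$; so it suffices to prove the $[0,1]$ statement for $\tilde f$ and the lift of $g$, being careful that "not a rotation" means $\tilde f'$ is not identically $1$. Next I would fix $f$ not a rotation, so there exist points $a,b$ with $f'(a) \ne f'(b)$; by continuity of $f'$ pick disjoint closed intervals $J_a, J_b$ on which $f'$ stays within, say, $\tfrac14\lvert f'(a)-f'(b)\rvert$ of $f'(a)$ resp.\ $f'(b)$. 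Now suppose toward a contradiction that some $g\in T$ has $d(f,g)<\mu$ for a $\mu$ to be chosen. On $J_a \setminus B_g$ (nonempty, as $B_g$ is finite) the slope of $g$ is some power of $2$, say $2^{j_a}$, lying within $\mu$ of $f'$ on $J_a$; similarly $2^{j_b}$ on $J_b$. If $\mu$ is small enough relative to the gaps between consecutive powers of $2$ near $f'(a)$ and $f'(b)$ — concretely $\mu < \tfrac{1}{4}\min\{f'(a), f'(b)\}$ works since consecutive powers of $2$ in $(0,\infty)$ differ by at least their smaller value — the integers $j_a$ and $j_b$ are uniquely determined, and since $f'(a)\ne f'(b)$ were separated we can force $2^{j_a}\ne 2^{j_b}$, i.e.\ $j_a\ne j_b$. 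So far this only says $g$ has at least one breakpoint; the real work is to get a contradiction from a single diffeomorphism $f$ rather than from a family.

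The honest move is therefore to choose $\mu$ depending on $f$ via the multiplicative/average constraint. Since $f(1)-f(0)=1$ and $f$ is a $C^1$ diffeomorphism with $f'>0$, we have $\int_0^1 f'(x)\,dx = 1$, so $\min_x f'(x) \le 1 \le \max_x f'(x)$, with at least one inequality strict precisely when $f$ is not a rotation; fix $m_- = \min f' < 1 < \max f' = m_+$ (if one of these equals $1$, perturb the chosen intervals $J_a,J_b$ to straddle a value $\ne 1$, using continuity and the non-rotation hypothesis). Pick $\mu = \tfrac12\min\{\,\lvert m_+ - 1\rvert,\ \lvert 1 - m_-\rvert,\ m_-\,\}$. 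On $J_a$ (where $f' \ge m_+ - \tfrac14(m_+-m_-) > 1$) the forced power of $2$ is $\ge 2^0=1$ but $>1$, hence $\ge 2$; on $J_b$ similarly $\le 2^{-1}$. But then $g$, being a piecewise-linear homeomorphism of $[0,1]$ onto $[0,1]$ (or the circle), has on some subinterval slope $\ge 2$ and on another slope $\le \tfrac12$, which is perfectly consistent — so this alone still does not contradict. The decisive additional ingredient is to use that the approximation holds \emph{on all of} $\sircle^1\setminus B_g$: integrate. We get $1 = \int_0^1 g'(x)\,dx$ (as $g$ is a homeomorphism of the unit interval/circle), while $\lvert g' - f'\rvert < \mu$ pointwise gives $\lvert \int g' - \int f'\rvert \le \mu$, which is automatically satisfied and gives nothing. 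Hence the contradiction must come geometrically, not from the mean: on $J_a$ the graph of $g$ climbs with slope $\ge 2$, so $\lvert J_a\rvert$ of horizontal extent produces $\ge 2\lvert J_a\rvert$ of vertical rise, and since $f$ rises only by $\int_{J_a} f' \le (m_+{+}\mu)\lvert J_a\rvert$ there, and $g$ and $f$ have the same endpoint values up to $\mu$... I expect the clean contradiction to come from: $g$ is a \emph{bijection} $[0,1]\to[0,1]$, so $\sum$ (slope $\times$ interval length) over its finitely many linear pieces equals exactly $1$; the pieces where the slope exceeds $f'$ by the forced amount on $J_a$ must be compensated by pieces of slope $\le \tfrac12$, but the \emph{ratio} of total length carrying slope $\ge 2$ to that carrying slope $\le \tfrac12$ is then pinned, and comparing with the corresponding measure-theoretic statement for $f$ (whose slope is continuous, hence takes \emph{all} intermediate values, spending positive-length time at slopes strictly between $\tfrac12$ and $2$ and strictly near $1$) yields $\int_0^1 \lvert f' - g'\rvert \ge$ const $>\mu$ after all, because $f'$ spends a definite fraction of $[0,1]$ in the forbidden window $(\tfrac12 + \mu,\ 2 - \mu)$ where no power of $2$ can be within $\mu$.

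The hard part — and the step I'd spend the most care on — is exactly this last one: extracting a \emph{uniform} positive lower bound on the measure $\lambda\big(\{x : f'(x) \in (\tfrac12+\mu,\ 2-\mu)\}\big)$, or more generally $\lambda(\{x : \operatorname{dist}(f'(x), 2^{\integers}) > \mu\})$, from the single hypothesis that $f$ is a non-rotation $C^1$ diffeomorphism. By the intermediate value theorem $f'$ is continuous and takes the value $1$ somewhere and a value $\ne 1$ somewhere, so it takes all values in between; fixing any value $v \ne 1$ in the image with $\operatorname{dist}(v, 2^{\integers}) = 3\mu$ (possible by shrinking $\mu$), continuity gives an interval on which $f'$ stays within $2\mu$ of $v$, hence at distance $>\mu$ from every power of $2$; this interval has length $\ell > 0$ depending only on $f$ (via a modulus of continuity of $f'$), and on it $\lvert f' - g'\rvert > \mu$ for \emph{every} $g \in T$. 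Setting $\mu$ smaller than $\ell\cdot(\text{that }\mu)$... no — rather, we conclude $d(f,g) = \sup \lvert f' - g'\rvert \ge \mu$ directly, since that interval meets $\sircle^1 \setminus B_g$ ($B_g$ finite) and there $\lvert f'(x) - g'(x)\rvert > \mu$ for any $g'$ taking a power-of-$2$ value. So the final $\mu$ is: choose any $v$ in the image of $f'$ with $v \notin 2^{\integers}$, let $\mu = \tfrac13\operatorname{dist}(v, 2^{\integers})$; this exists precisely because $f$ is not a rotation (so $f' \not\equiv 1$, and $f'$ continuous hence its image is a nondegenerate interval containing $1$, which must contain points off the sparse set $2^{\integers}$). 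That is the whole argument; the circle case then follows by the lift, noting that a rotation of $\sircle^1$ is exactly a lift with $\tilde f' \equiv 1$.
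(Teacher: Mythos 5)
Your final paragraph is essentially the paper's own proof: both arguments rest on the sparseness of $2^{\integers}$, on the fact that a non-rotation forces the continuous $f'$ to take a whole interval of values off $2^{\integers}$ (mean value theorem giving $f'=1$ somewhere, intermediate-value/Darboux giving the interval), and on the finiteness of $B_g$; your choice of a fixed $v$ with $\mu=\tfrac13\operatorname{dist}(v,2^{\integers})$ even makes the uniformity of $\mu$ in $g$ slightly more explicit than the paper's write-up. The lengthy preliminary detours (integration of $g'$, slope-compensation and measure estimates) are dead ends you yourself abandon and should simply be cut; the argument you settle on is correct and the same in substance.
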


Here the rotations in $\diff^1_+(\sircle^1)$ are all elements $f$ with $f'(x)=1$ for all $x\in \sircle^1$, which includes the identity. In $\diff^1_+([0, 1])$, the identity is the only rotation.

Note that the proof is also valid in the more general case when $\diff^1_+([0, 1])$ and $\diff^1_+(\sircle^1)$ are replaced by the sets of all differentiable bijections of $[0, 1]$ or $\sircle^1$, respectively, whose inverses are also differentiable.

\begin{proof}[Proof of \cref{prop:discreteness}]
  Let $g\in T$ and $f\in\diff^1_+(\sircle^1)$. We will identify $f$ and $g$ with functions on the interval $[0, 1]$ as before. Let $x_0\in [0, 1]\setminus B_{g}$. The two powers of $2$ closest to $f'(x_0)$ are given by
  \begin{equation}
    2^{\floor{\log_2 f'(x_0)}} \le f'(x_0) \le 2^{\ceil{\log_2 f'(x_0)}}.
  \end{equation}
  If $f'(x_0)$ is not a power of $2$, the inequalities are strict and therefore
  \begin{equation}
    d(f, g) \ge \min\Bigl\{ \bigl\lvert f'(x_0)-2^{\floor{\log_2 f'(x_0)}}\bigr\rvert, \bigl\lvert f'(x_0)-2^{\ceil{\log_2 f'(x_0)}}\bigr\rvert \Bigr\}>0.
  \end{equation}
  The case that $f'(x_0)$ is not a power of $2$ for some $x_0\in [0, 1]\setminus B_g$ occurs for all differentiable $f\in\diff^1_+(\sircle^1)$ except for rotations. For if $f$ is not a rotation, there exists $x_1\in [0, 1]$ with $f'(x_1)=c\neq 1$. By the mean value theorem, there also exists $x_2\in [0, 1]$ with $f'(x_2)=1$. Without loss of generality, assume $c<1$ and $x_1<x_2$. Then by Darboux's theorem, $[c, 1]\subset f'([x_1, x_2])$. Since $B_g$ is finite, $[c, 1]\setminus f'(B_g)\subset\im(f')$ surely contains points which are not powers of $2$.

  It is clear that $\diff^1_+([0, 1])$ and $F$ are a special case of this argument, which concludes the proof.
\end{proof}

\part{Applications}\label{part:applications}


\chapter{Dynamics for Holographic Codes}
\label{cha:dynamics}

In this chapter, we describe how to introduce dynamics for the holographic codes and states introduced by \citeauthor{PastawskiYoshidaHarlowPreskill2015} \cite{PastawskiYoshidaHarlowPreskill2015}. This task requires the definition of a kinematical Hilbert space which can be constructed as a colimit, or direct limit, of finite-dimensional Hilbert spaces. This Hilbert space is known as the semicontinuous limit. The dynamics is then introduced by building a unitary representation of Thompson's group $T$ using the machinery from \cref{cha:localization} and specifically \cref{sec:fractrep}. The bulk Hilbert space is realized as a particular subspace of the semicontinuous limit Hilbert space spanned by a class of distinguished states which can be assigned a discrete bulk geometry. The bulk-boundary correspondence in this toy model is given by the fact that the analogue of the group of (large) bulk diffeomorphisms, the Ptolemy group, is isomorphic to $T$.
This chapter is based on the paper \cite{OsborneStiegemann2017}, but we focus here more on the mathematical description.

\section{Dyadic Tessellations of the Poincaré Disk}

In this section we describe how the Poincaré disk $\disk$ can be understood as an equal-time slice of $\mathrm{AdS}_3$. We describe the most important properties of the geometry of $\disk$, and introduce several important groups that act on $\disk$ and its boundary.

\subsection{$(2+1)$-Dimensional Anti de Sitter Space}

For convenience, we describe $\mathrm{AdS}_3$ in coordinates, where we closely follow \cite{Holst1996}. We take it as the $3$-dimensional surface
\begin{equation}
  X^2+Y^2-Z^2-T^2=-1
\end{equation}
embedded in $4$-dimensional flat space with metric
\begin{equation}
  \dif s^2=\dif X^2+\dif Y^2-\dif Z^2-\dif T^2.
\end{equation}
Choosing the parametrization
\begin{equation}
  \left\{\begin{aligned}
    X&=\sinh r\cos\varphi\\
    Y&=\sinh r\sin\varphi\\
    Z&=\cosh r\cos t\\
    T&=\cosh r\sin t
  \end{aligned}\right.
\end{equation}
in terms of coordinates $r$, $\varphi$, and $t$ gives the metric
\begin{equation}
  \dif s^2=-\cosh^2r\dif t^2+\dif r^2+\sinh^2 r\dif\varphi^2.
\end{equation}
$(r, \varphi)$ can be viewed as ordinary polar coordinates, while $t$ is the time coordinate. Note that $t$ is $2\pi$-periodic, and that the spacetime contains closed timelike curves.

It is useful to further transform the radial coordinate according to
\begin{equation}
  \rho(r)=\frac{\cosh r-1}{\sinh r}.
\end{equation}
In this case the metric becomes
\begin{equation}
  \dif s^2=-\left(\frac{1+\rho^2}{1-\rho^2}\right)^{\!\!2}\dif t^2 + \frac{4}{(1-\rho^2)^2}(\dif\rho^2+\rho^2\dif\varphi^2).
\end{equation}
In the coordinates $(\rho, \varphi, t)$ the $\mathrm{AdS}_3$ space is confined to a cylinder and time goes lengthwise. The boundary at $\rho\to 1$ (that is, $r\to\infty$) is timelike and called conformal infinity. A cross-section, or $t=\mathrm{constant}$ surface, of this cylinder is a Poincaré disk with polar coordinates $(\rho, \varphi)$. Spacelike geodesics in $\mathrm{AdS}_3$ are arcs of circles that meet the boundary circle $\rho=1$ at a right angle, and timelike geodesics inside $\mathrm{AdS}_3$ spiral around the time axis $\rho=0$ and never reach infinity.

\subsection{The Poincaré Disk Model}
\label{sub:poincare-disk-model}

The Poincaré half-plane model of hyperbolic space is defined on the upper half-plane
\begin{equation}
  \halfplane=\{z\in\complexes : \Re z>0\}.
\end{equation}
The metric tensor is given by
\begin{equation}
  \dif s^2=\frac{\dif z\dif\bar z}{y^2},
\end{equation}
where we write $z=x+\imag y$. The symmetry group of $\halfplane$ is given by the Möbius transformations
\begin{equation}
  f(z)=\frac{az+b}{cz+d}
\end{equation}
with \emph{real} coefficients $a, b, c, d\in\reals$ that satisfy $ad-bc=1$. The Möbius group is isomorphic to $\mathit{PSL}_2(\reals)$; here $\mathit{SL}_2(\reals)$ consists of all real $2\times 2$ matrices $A$ with $\det(A)=1$, and $\mathit{PSL}_2(\reals)/\{\pm I\}$ is the projective version.

The Poincaré disk model is defined on the open unit disk
\begin{equation}
  \disk=\{z\in\complexes : \abs{z}<1\}.
\end{equation}
In this case, the metric tensor is given by
\begin{equation}
  \dif s^2=\frac{4\dif z\dif\bar z}{(1-\abs{z}^2)^2}.
\end{equation}
Geodesics in $\disk$ are diameters and (parts of) circle segments that meet the boundary at right angles. Examples are shown in \cref{fig:poincare-geodesics}.

\begin{figure}
  \includegraphics{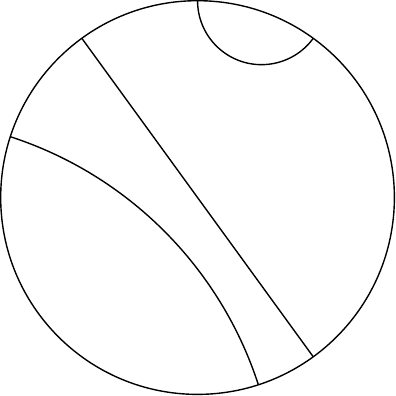}
  \caption{Geodesics in the Poincaré disk.}
  \label{fig:poincare-geodesics}
\end{figure}

There is a special fractional linear transformation, called the Cayley transform, which is given by the function
\begin{equation}
  \complexes\to\complexes,\quad z\mapsto\frac{z-\imag}{z+\imag}\,.
\end{equation}
It is a conformal transformation from $\halfplane$ to $\disk$. It also maps $\reals\cup\{\infty\}$ to $\sircle^1$ in $\complexes$, where $\sircle^1=\partial\disk$ is the boundary circle of $\disk$ at infinity. By conjugation with the Cayley transform, we can regard every element of $\mathit{PSL}_2(\reals)$ as a map $\disk\to\disk$.

An important subgroup is given by $\mathit{PSL}_2(\integers)$, the modular group. It has the presentation
\begin{equation}
  \langle a, b \mid a^2=b^3=1 \rangle.
\end{equation}
The two generators can be written as matrices
\begin{equation}
  a=\begin{pmatrix}0&-1\\1&0\end{pmatrix}, \quad b=\begin{pmatrix}0&-1\\1&1\end{pmatrix}.
\end{equation}
As functions on $\complexes$, they take the form
\begin{equation}
  a(z)=-\frac{1}{z}, \quad b(z)=-\frac{1}{z+1}.
\end{equation}
$\mathit{PSL}_2(\integers)$ can be regarded as a subgroup of Thompson's group $T$ (that is, it is isomorphic to a subgroup of $T$). The isomorphism
\begin{equation}
  g\mapsto {?}\circ g\circ {?}^{-1}
\end{equation}
is given by conjugation with the function ${?}\colon\reals P^1\to [0, 1]$ called Minkowski's question mark function. We will identify $\mathit{PSL}_2(\integers)$ with its image in $T$ under this isomorphism. One can then show that $a=CA$ and $b=C$. Hence, $\mathit{PSL}_2(\integers)$ is the subgroup of $T$ generated by $A$ and $C$.

It has been shown by \citeauthor{Imbert1997} that $T$ as a whole is isomorphic to a group known as $\mathit{PPSL}_2(\integers)$, the group of \emph{piecewise} $\mathit{PSL}_2(\integers)$ homeomorphisms of the circle \cite{Imbert1997}. In order to explain this correspondence, we need to introduce dyadic tessellations of the Poincaré disk.

\subsection{Tessellations}

In the following description, we largely follow \cite{Penner1997}.
An \defemph{ideal triangle} in $\disk$ is a triangle bounded by geodesics whose vertices all lie on the circle $\sircle^1$ at infinity.
A \defemph{tessellation} $\tau=\{\gamma_i\}$ is a countable family of pairwise disjoint geodesics in $\disk$ such that each connected component of $\disk\setminus\tau$ is an ideal triangle and the decomposition of $\disk$ into triangles is locally finite.\footnote{Here locally finite means that every point in $\disk$ has a neighbourhood that intersects only finitely many triangles.} We refer to a choice of edge $\gamma^*$ together with a specification of orientation on it as a `distinguished oriented edge', or a `doe' for short. Furthermore, we let $\tau^0\subset\sircle^1$ denote the set of vertices which are endpoints of geodesics in $\tau$.

\begin{figure}
  \includegraphics[width=0.5\textwidth]{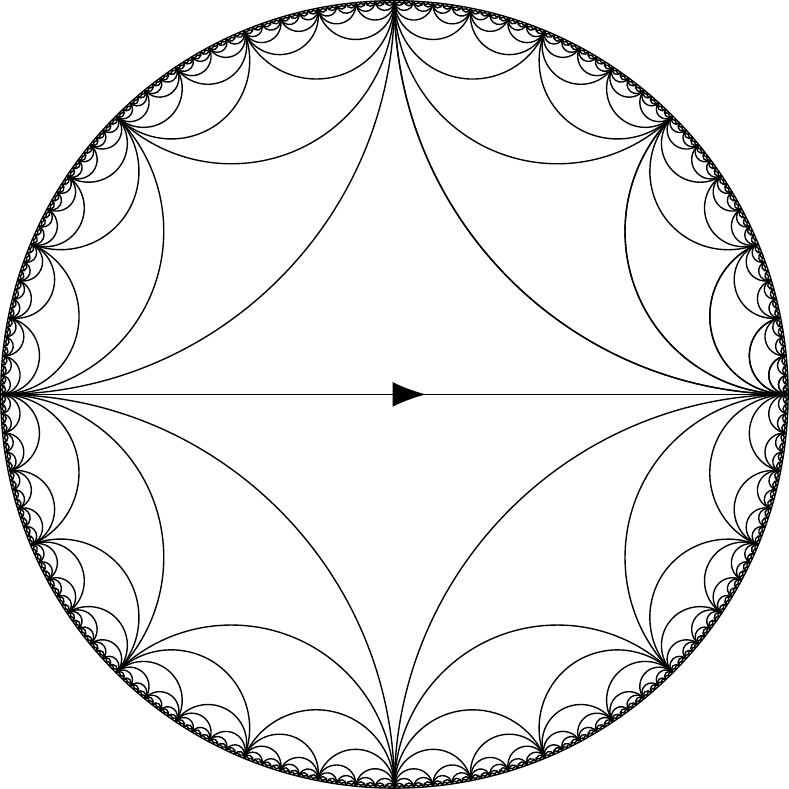}
  \caption{The standard dyadic tessellation with doe.}
\end{figure}
Instead of labelling the points of $\sircle^1$ by their preimages under the Cayley transform, we can consider $\sircle^1$ as the unit interval $[0, 1]$ in counter-clockwise order and with $0$ and $1$ identified, such that $1/0=\infty$ becomes $0\sim 1$. We are particularly interested in \emph{dyadic} tessellations, which are defined by having as elements in $\tau^0$ only dyadic rational points on the circle. The most important one is the \defemph{standard dyadic tessellation} $\tau_d$,\footnote{This is actually confusing terminology, but we follow the literature. Note that this is just the dyadic tessellation which we designate to be the \emph{standard} one. The term `standard dyadic' is not meant here.} which we define inductively according to the following rules:
\begin{genenum}
  \item Begin with the non-tessellated disk and with the trivial partition $P_0=\{[0, 1]\}$, which is thought to divide the circle at $0\sim 1$.
  \item Given a partition $P_k$ of the circle, divide all intervals in $P_k$ in two equal halves; these become precisely the elements of $P_{k+1}$. For each $I\in P_k$, connect the point at which $I$ is being divided to the endpoints of $I$ by a geodesic in $\disk$.
\end{genenum}
We define the doe $\gamma^*$ to be the geodesic from $1/2$ to $0$ (in this direction).
\begin{figure}
  \includegraphics[width=0.9\textwidth]{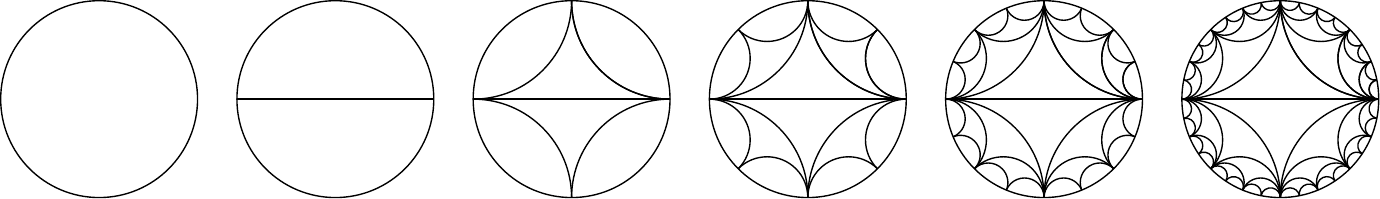}
  \caption{Step-by-step construction of the standard dyadic tessellation, beginning with the undivided disk.}
\end{figure}

\begin{rem}
  The standard dyadic tessellation (without doe) is invariant under $\mathit{PSL}_2(\integers)\subset T$, that is, the action of $A$ and $C$, whereas $B$ changes the tessellation. \cref{fig:poincare-abc} shows how $A$, $B$, and $C$ act on the standard dyadic tessellation.
\end{rem}
\begin{figure}
  \includegraphics[width=0.9\textwidth]{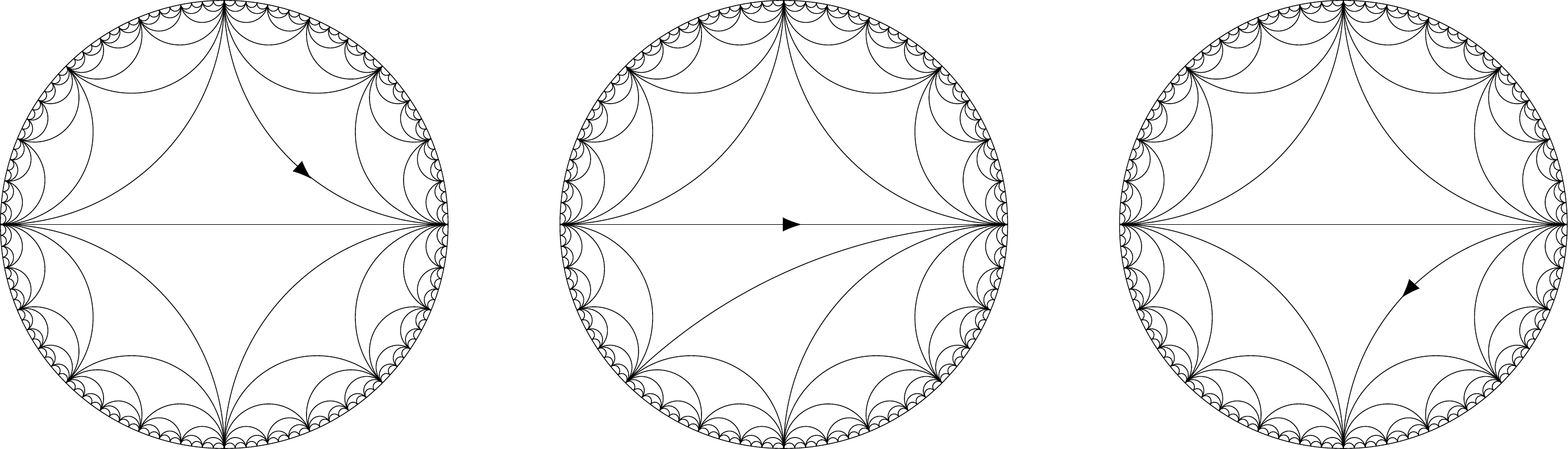}
  \caption{The images of $\tau_d$ under $A$, $B$, and $C$.}
  \label{fig:poincare-abc}
\end{figure}

Given any tessellation $\tau$ with doe, it is possible to define an order-preserving map $f_\tau\colon\tau_d^0\to\tau^0$ which extends uniquely to a homeomorphism $\sircle^1\to\sircle^1$, called the characteristic mapping of $\tau$ and also denoted $f_\tau$.\footnote{We use here a version of Penner's characteristic mapping which factors through the ${?}$ function. This allows us to limit ourselves to dyadic tessellations.}

\subsection{The Ptolemy Group}

Suppose that $\tau$ is a dyadic tessellation with doe $\gamma^*$, and suppose that $\gamma\in\tau$ is an edge of $\tau$. $\gamma$ separates two triangles in $\disk\setminus\tau$ which together comprise an ideal quadrilateral of which $\gamma$ is  diagonal; let $\gamma'$ denote the other diagonal of this quadrilateral. We may alter $\tau$ to produce a new dyadic tessellation
\begin{equation}
  \tau_\gamma=(\tau\cup\{\gamma'\})\setminus\{\gamma\}
\end{equation}
as shown in \cref{subfig:elementary-move}. If $\gamma$ is not the doe, then $\tau_\gamma$ inherits the doe from $\tau$. In this case, the elementary move has order $2$. If, however, $\gamma=\gamma^*$, then we designate $\tau'$ to be the new doe. Its orientation is obtained by rotating the orientation of $\gamma^*$ in counter-clockwise direction, see \cref{subfig:elementary-move-doe}. This kind of elementary move has order $4$.
\begin{figure}
  \begin{subfigure}{0.9\linewidth}\centering
    \includegraphics{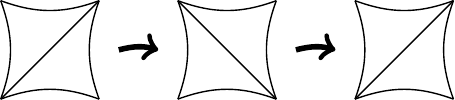}
    \caption{}
    \label{subfig:elementary-move}
  \end{subfigure}
  \par\bigskip
  \par\medskip
  \begin{subfigure}{0.9\linewidth}\centering
    \includegraphics{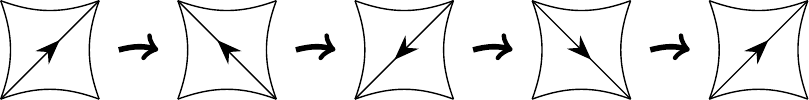}
    \caption{}
    \label{subfig:elementary-move-doe}
  \end{subfigure}
  \caption{Elementary moves.}
\end{figure}

Note that if $\tau$ is any dyadic tessellation, the characteristic mapping establishes a bijection between the edges $\tau_d$ of the standard dyadic tessellation and the edges of $\tau$. Therefore, to specify an elementary move along an edge of $\tau$, we might as well specify the corresponding edge of $\tau_d$. Thus we define the monoid $M$ of elementary moves to be the free monoid over the set $\tau_d$; if $\gamma\in M$, we let $\gamma\cdot\tau$ denote the tessellation with doe obtained from $\tau$ by applying the elementary move along the edge of $\tau$ corresponding to $\gamma\in\tau_d$ via the characteristic mapping. We obtain an action on the set of all tessellations by setting
\begin{equation}
  (\gamma_2\gamma_1)\cdot\tau=\gamma_2\cdot(\gamma_1\cdot\tau)
\end{equation}
for any tessellation $\tau$.

Observe that since $M$ is freely generated, there are many elements in $M$ which actually do not change $\tau_d$. To account for this, let $K$ be the submonoid of $M$ consisting of all elements that act identically on $\tau_d$. The \defemph{universal Ptolemy group} is then defined to be the quotient $G=M/K$. (See \cite{Penner1997} for a proof that $G$ is indeed a group.)

By utilizing the characteristic mapping, we can represent elements of $G$ by homeomorphisms of $\sircle^1$. Specifically, the map
\begin{equation}
  G\to\operatorname{Homeo}_+(\sircle^1), \quad g\mapsto f_{g\cdot\tau_d}
\end{equation}
is a faithful representation of $G$ on the group of orientation-preserving homeomorphisms. It has been shown by \citeauthor{Imbert1997} that this map actually establishes an isomorphism between $G$ and $\mathit{PPSL}_2(\integers)$. We summarize these remarkable findings in the following theorem.

\begin{thm}
  The universal Ptolemy group $G$, Thompson's group $T$ and $\mathit{PPSL}_2(\integers)$ are isomorphic.
\end{thm}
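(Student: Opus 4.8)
The plan is to establish the two isomorphisms separately, since the theorem really bundles together two results from the literature (Penner's construction of $G$ and Imbert's identification with $\mathit{PPSL}_2(\integers)$), and then invoke the work on $\binfor$, $\annfor$ and their localizations from \cref{cha:thompson} to connect $T$. First I would recall that $G=M/K$ has been shown by Penner to be a group and that the map $g\mapsto f_{g\cdot\tau_d}$ into $\operatorname{Homeo}_+(\sircle^1)$ is a faithful representation; this reduces the first half of the theorem to identifying the image of this representation. The key observation is that an elementary move along an edge $\gamma$ of $\tau_d$, read through the characteristic mapping, changes the underlying partition of $\sircle^1$ exactly the way inserting or deleting a caret does in the tree-diagram picture of \cref{sec:category-binary-forests}. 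Concretely, flipping the diagonal of an ideal quadrilateral corresponds to replacing one dyadic subdivision of an arc by the complementary one, which is precisely the relation between a fraction $[s,t]$ of binary trees and its expansion; passing from the interval picture to the circle picture accounts for the doe and the cyclic bookkeeping that distinguishes $\annfor$ from $\binfor$, hence $T$ from $F$.

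The central step is therefore to construct a group homomorphism $G\to T$ (or its inverse) by matching generators. On the $G$ side, the generators are the classes of single elementary moves; on the $T$ side, one can use the description of $T$ via pairs of standard dyadic partitions of the circle from \cref{cha:thompson}. I would show that the elementary move along the edge corresponding to $\gamma\in\tau_d$ maps, under $g\mapsto f_{g\cdot\tau_d}$, to the element of $T$ that carries the partition before the flip linearly onto the partition after the flip (with the appropriate cyclic shift when $\gamma=\gamma^*$, reflecting the order-$4$ versus order-$2$ distinction already noted in the text). One then checks that the submonoid $K$ of moves acting trivially on $\tau_d$ is exactly the kernel of this assignment, so the map descends to $G=M/K$ and is injective; surjectivity follows because every element of $T$ is a dyadic rearrangement of the circle and any such rearrangement can be realized as a finite composite of caret insertions/deletions, i.e.\ of elementary moves. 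This yields $G\cong T$.

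For the identification $T\cong\mathit{PPSL}_2(\integers)$ I would proceed through the question-mark function $?$, as set up in \cref{sub:poincare-disk-model}: conjugation by $?$ intertwines the piecewise-$\mathit{PSL}_2(\integers)$ action on $\reals P^1$ with the piecewise-linear dyadic action on $[0,1]$, because $?$ conjugates the standard generators $a,b$ of $\mathit{PSL}_2(\integers)$ to $CA$ and $C$ and, more importantly, sends dyadic rationals to dyadic rationals in an order-preserving way, so a breakpoint structure on one side matches a breakpoint structure on the other. A homeomorphism $h$ of $\sircle^1$ lies in $\mathit{PPSL}_2(\integers)$ iff it is piecewise in $\mathit{PSL}_2(\integers)$ with breakpoints at rationals whose images are rational; transporting this through $?$ gives exactly the defining conditions $(\mathrm{Th}_1')$ and $(\mathrm{Th}_2)$ of $T$. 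Combining the two isomorphisms gives the theorem.

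I expect the main obstacle to be the bookkeeping around the distinguished oriented edge and the cyclic index: verifying that the order-$4$ elementary move at $\gamma^*$ corresponds under the characteristic mapping to the correct cyclic shift in $\annfor$, and that the monoid $K$ of ``invisible'' moves is precisely the kernel of the homomorphism to $T$, are the places where one must be careful rather than merely formal. The piecewise-linear versus piecewise-$\mathit{PSL}_2(\integers)$ comparison via $?$ is conceptually clean but also requires checking that $?$ behaves well with respect to the relevant breakpoint sets; since $?$ is only a homeomorphism and not smooth, one should phrase everything in terms of the combinatorial data (partitions and the assignment of arcs to arcs) rather than derivatives, which sidesteps the regularity issue. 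All the heavy lifting — that $G$ is a group, that the representation is faithful, that the Imbert isomorphism holds — is quoted from \cite{Penner1997,Imbert1997}, so the proof here is really the assembly of these pieces together with the tree/forest dictionary from \cref{cha:thompson}.
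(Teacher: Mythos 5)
The paper gives no proof of this theorem at all: it is stated as a summary of the literature, with Penner's result that $G$ is a group and that $g\mapsto f_{g\cdot\tau_d}$ is faithful, and Imbert's identifications of $G$ and of $T$ with $\mathit{PPSL}_2(\integers)$, simply cited from \cite{Penner1997,Imbert1997} --- which is exactly what your proposal does for the heavy lifting, so your assembly is essentially the same approach as the paper's. One loose phrase worth fixing: a flip of a diagonal is \emph{not} the caret-expansion relation on fractions (expansion leaves the element of $T$ unchanged, whereas a flip is a tree rotation producing a nontrivial element such as the generator $A$); your later, operational description --- the elementary move goes to the piecewise-linear map carrying the pre-flip partition onto the post-flip partition --- is the correct dictionary and is what the construction should rest on.
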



\subsection{Cutoffs}

\begin{figure}
  \includegraphics[width=0.5\textwidth]{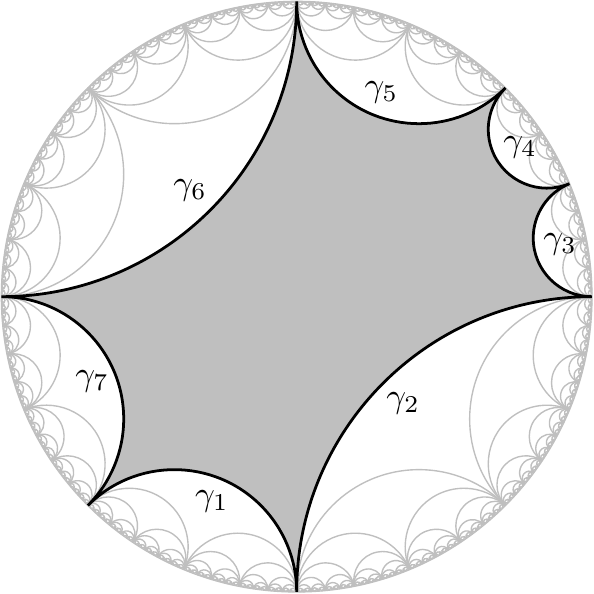}
  \caption{A cutoff with $7$ geodesics. We do not show any geodesics of the underlying tessellation inside the cutoff because they are not specific features of the cutoff.}
\end{figure}

Let $\tau$ be a tessellation\footnote{A doe is not needed since we implicitly use $0\sim 1$ on the boundary circle as a reference point.} and let $n\ge 3$. An $n$-tuple $\Gamma=(\gamma_1, \dotsc, \gamma_n)$ of geodescics in $\tau$ is called a \defemph{cutoff} if the geodesics form a counter-clockwise oriented closed cycle. This means that there are a total of $n$ distinct endpoints of geodesics in $\tau^0$; if $(\gamma_i, \gamma_{i+1})$ is a pair of two subsequent geodesics in $\Gamma$ (including the case $(\gamma_{n}, \gamma_1)$), then $\gamma_i$ and $\gamma_{i+1}$ share exactly one common endpoint $p_2$ in $\tau^0$, and the two remaining endpoints, $p_1$ of $\gamma_i$ and $p_3$ of $\gamma_{i+1}$, are such that $p_1$, $p_2$ and $p_3$, in this order, appear on $\sircle^1$ in counter-clockwise order. We will refer to both the tuple of geodesics as well as the area they enclose as the cutoff. Note that a cutoff can also be specified by naming the $n$ vertices on the boundary circle.

We denote by $\mathit{Co}$ the set of all cutoffs of all admissible tessellations, and by $\mathit{Co}_d$ the subset of all cutoffs of the standard dyadic tessellation. On either set, a directed partial order $\le$ is given by inclusion of cutoff areas.
\begin{figure}
  \includegraphics[width=0.5\textwidth]{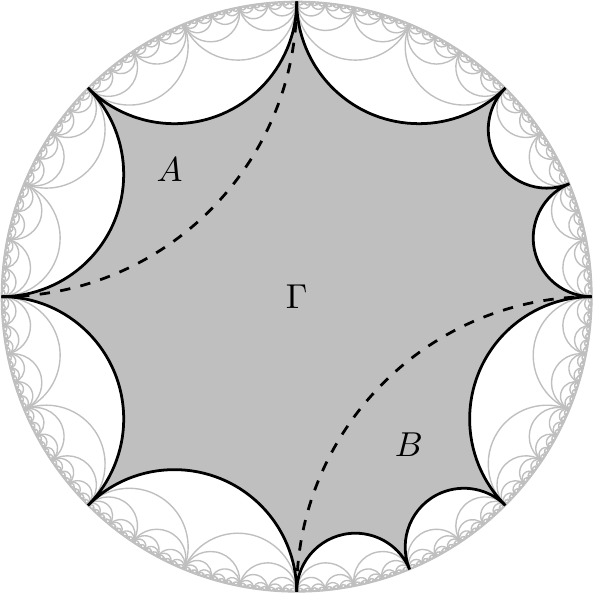}
  \caption{Inclusion of cutoffs. We have $\Gamma\le\Gamma'$, where $\Gamma'$ is the cutoff containing both $\Gamma$ and the regions $A$ and $B$.}
\end{figure}
Cutoffs should be thought of as UV cutoffs, and different cutoffs correspond to different levels of fine-graining.

\section{Planar Perfect Tensors}

In \cite{PastawskiYoshidaHarlowPreskill2015}, a perfect tensor is defined as a tensor $T_{j_1j_2\dots j_n}$ such that for any bipartition of its indices into two sets $\{ j_1, j_2, \dotsc \}=A\sqcup A^c$, where $\abs{A}\le\abs{A^c}$ without loss of generality, $T$ is proportional to an isometry from the Hilbert space associated with $A$ to the Hilbert space associated with $A^c$. This notion of perfect tensor can be generalized to monoidal categories in different ways, depending on the context in which the tensors are used. In a symmetric monoidal category with duals, one can arbitrarily divide the legs of a morphism into input and ouput legs by composing with the evaluation and coevaluation maps because any crossing of legs is trivially braided. For example, the legs of the tensor
\begin{equation}
  \begin{tikzpicture}[scale=0.8, rounded corners=2pt]
    \def\len{0.309}
    \draw[fill=vlightgray] (0, 0) rectangle (1.618, 1) node[pos=0.5] {$T$};
    \draw (0.309, 1) -- ++(0, \len);
    \draw (0.809, 1) -- ++(0, \len);
    \draw (1.309, 1) -- ++(0, \len);
  \end{tikzpicture}
\end{equation}
can easily be rearranged as
\begin{equation}
  \begin{tikzpicture}[scale=0.8, baseline=6pt, rounded corners=2pt]
    \def\len{0.309}
    \draw[fill=vlightgray] (0, 0) rectangle (1.618, 1) node[pos=0.5] {$T$};
    \draw (0.309, 1) -- ++(0, 0.5*\len) -- ++(1, \len) -- ++(0, 0.5*\len);
    \draw (0.809, 1) -- ++(0, 0.5*\len) -- ++(-0.5, \len) -- ++(0, 0.5*\len);
    \draw (1.309, 1) -- ++(0, 0.5*\len) -- ++(-0.5, \len) -- ++(0, 0.5*\len);
  \end{tikzpicture}\,.
\end{equation}
For this reason no ambiguities arise when we interpret the above definition of perfect tensor here. This means that the definition of perfect tensor can be transferred to the case of symmetric monoidal categories without change.
In general, however, monoidal categories may only have proper braidings, if at all. Therefore we use a different notion of perfection, called planar perfect, in which leg crossings are not mentioned at all.

Let $\ccal$ be a $\field$-linear semisimple spherical dagger category with a single generating symmetrically self-dual object $Y$. Denote by $\widehat\coev$ and $\widehat\ev$ the modified coevaluation and evaluation for $Y$.

Let $m, k\in\naturals$. If $m, k\ge 1$, we define a linear map
\begin{gather}
  {\rot}\colon \ccal(Y^{\otimes m}, Y^{\otimes k}) \to \ccal(Y^{\otimes m}, Y^{\otimes k}),\\
  T \mapsto (\widehat\ev\otimes\id_Y^{\otimes k})\circ (\id_Y\otimes T\otimes\id_Y)\circ (\id_Y^{\otimes m}\otimes\widehat\coev).
\end{gather}
(Here, we take $\id_Y^0=\id_I$.) For $m\ge 0$ and $k\ge 1$, we set
\begin{gather}
  {\down}\colon \ccal(Y^{\otimes m}, Y^{\otimes k}) \to \ccal(Y^{\otimes m+1}, Y^{\otimes (k-1)}),\\
  T \mapsto (\widehat\ev\otimes\id_Y^{\otimes(k-1)})\circ (\id_Y\otimes T).
\end{gather}
We write $\rot^n(T)$ for $n$-fold application $\rot\circ\dotsm\circ\rot(T)$, and similarly for ${\down}$.
$\rot^0$ and $\down^0$ do not change their arguments. A morphism $T\colon A\to B$ is called an \defemph{isometry} if $T^\dagger\circ T=\id_A$. The following definition is adapted from \cite[Def.~2]{PastawskiYoshidaHarlowPreskill2015} and \cite{BergerOsborne2018}.

\begin{defn}
  A morphism $T\colon I\to Y^{\otimes k}$ is called \defemph{planar perfect} if
  \begin{equation}
    \rot^j\down^i(T)
  \end{equation}
  is proportional to an isometry for all $i, j\in\naturals$ with $1\le i\le k/2$ and $0\le j\le k-1$, or with $i=0$ and $j=0$.
\end{defn}
It is easy to see that whenever $k$ is even, that is, $k=2k'$, one only needs to check the cases where $i=k'$.

\begin{defn}
  A morphism $T\colon Y^{\otimes m}\to Y^{\otimes k}$ with $m,k\in\naturals$ and $k\ge 1$ is called \defemph{rotation-invariant} if
  \begin{equation}
    T=\rot^n\down(T)
  \end{equation}
  for all $n$.
\end{defn}

Of course, application of ${\down}$ is unnecessary whenever $m\ge 1$.

\begin{exmp}\label{exmp:V-spin-system}
  The map $V\colon\complexes^3\to\complexes^3\otimes\complexes^3$ defined by
  \begin{equation}
  	\langle jk|V|l\rangle = \begin{cases}
  		0 & \text{if $j=k$, $k=l$, or $l=j$,}\\
  		\frac{1}{\sqrt{2}} & \text{otherwise,}
  	\end{cases}
  \end{equation}
  is planar perfect and rotation-invariant. This is an example where $Y$ is the Hilbert space $\complexes^3$. (We have used braket notation to denote basis vectors.)
\end{exmp}

\begin{exmp}\label{exmp:V-trivalent-vertex}
  In any trivalent category, we can take $Y=X$. Then the trivalent vertex is planar perfect and rotation-invariant in the above sense. However, this is not a very interesting example since $\dim\ccal_3=1$.
\end{exmp}

\begin{exmp}\label{exmp:trivalent}
  In any trivalent category, we can take $Y=X\otimes X$. The corresponding subcategory generated by $Y$ inherits all relevant properties from the trivalent category. Then
  \begin{equation}
  	V=\begin{tikzpicture}[scale=0.7,baseline=-13mm,yscale=-1]
      \draw[rounded corners] (2.5, 3) -- (2.5, 2) -- (1.875, 1.375);
      \draw (2.25, 1) -- (2.5, 0.75);
      \draw[rounded corners] (2.5, 0.75) -- (1.25, 2) -- (0, 0.75);
      \draw[draw=white,double=black,double distance=0.4pt,line width=3pt] (2, 3) -- (2, 2) -- (0.75, 0.75);
  		\draw (2, 2) -- (2.2, 1.8);
      \draw[draw=white,double=black,double distance=0.4pt,line width=3pt] (2.1, 1.9) -- (3.25, 0.75);
    \end{tikzpicture}
  \end{equation}
  is planar perfect and rotation-invariant.
\end{exmp}

\cref{exmp:V-spin-system,exmp:trivalent} will be our guiding examples in the rest of this work. More examples of planar perfect tensors can be found in \cite{HarrisMcMahonBrennenStace2018}, where they are called ``block perfect''. The notion of a perfect tensor is older, and there are currently more examples, see e.g.\ \cite{PastawskiYoshidaHarlowPreskill2015,EnriquezWintrowiczZyczkowski2016,RaissiGogolinRieraAcin18}.

\section[Unitary Representations]{Unitary Representations of Thompson's Groups\\on the Semicontinuous Limit}

In this section, we explain our main example for the construction introduced in \cref{sec:fractrep}. We will first introduce it without any reference to tessellations of the Poincaré disk or cutoffs and then later explain the connection.

Throughout this section, we fix a finite-dimensional Hilbert space $\shfrak=\complexes^d$ and a planar perfect and rotation-invariant tensor $V\colon\shfrak\to\shfrak\otimes\shfrak$.

\subsection{Cutoffs and Holographic States}

We begin by associating with every cutoff $\Gamma\in\mathit{Co}_d$ a Hilbert space
\begin{equation}
  \hfrak_\Gamma=\shfrak^{\otimes\abs{\Gamma}}.
\end{equation}
Recall that $\abs{\Gamma}$ denotes the number of edges of $\Gamma$. Every vector $\psi\in\hfrak_\Gamma$ can be interpreted as a tensor with $\abs{\Gamma}$ uncontracted legs.
\begin{figure}
  \begin{subfigure}[b]{0.4\textwidth}\centering
    \includegraphics[width=\textwidth]{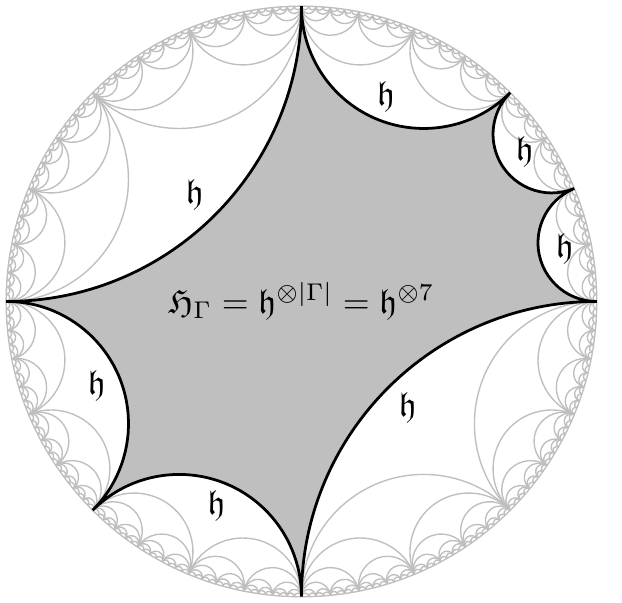}
    \caption{}
    \label{subfig:cutoff-hilbert-space}
  \end{subfigure}
  \hspace*{0.05\textwidth}
  \begin{subfigure}[b]{0.4\textwidth}\centering
    \includegraphics[width=\textwidth]{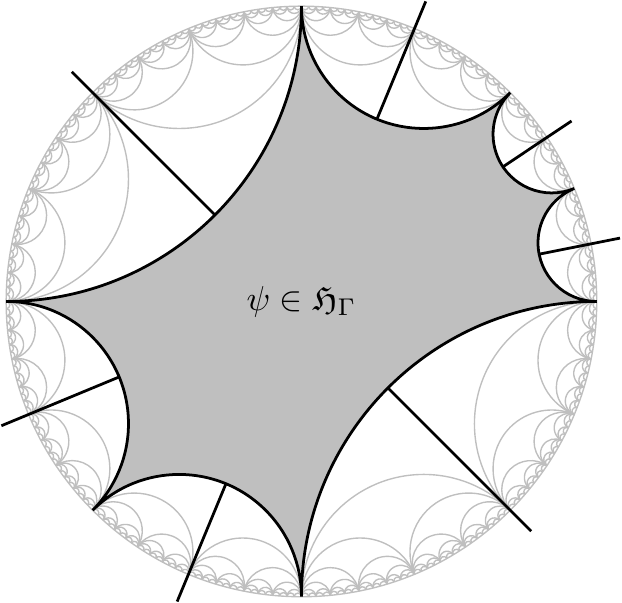}
    \caption{}
    \label{subfig:cutoff-state}
  \end{subfigure}
  \caption{The Hilbert space $\hfrak_\Gamma$ and a state vector associated with a cutoff $\Gamma$. In both pictures, the cutoff is depicted as the gray area. \labelcref{sub@subfig:cutoff-hilbert-space} The Hilbert space $\hfrak_\Gamma$ is $\mathfrak{h}$ raised to the tensor power of $\abs{\Gamma}$, where $\abs{\Gamma}$ is the number of geodesics that bound the cutoff region. In this example, we have $\abs{\Gamma}=7$. \labelcref{sub@subfig:cutoff-state}~A state vector $\psi\in\hfrak_\Gamma$ may be regarded as a tensor network with $\abs{\Gamma}$ open legs.}
  \label{fig:cutoff-hilbert-state}
\end{figure}

These Hilbert spaces contain special states called \defemph{holographic states}. To this end, assume that $\Gamma$ has an underlying tessellation $\tau$. We place one copy of $V$ in every ideal triangle that is contained within the area of $\Gamma$. In every triangle, we associate each one of its edges with one open leg of $V$. At every edge of $\tau$ inside the area of $\Gamma$, two triangles meet, and every such edge is therefore associated with two legs, one coming from each copy of $V$ in the two triangles. We contract these legs, that is, sum over the respective indices. The resulting tree tensor network is shown in \cref{fig:holographic-state}.
\begin{figure}
  \includegraphics[width=0.5\textwidth]{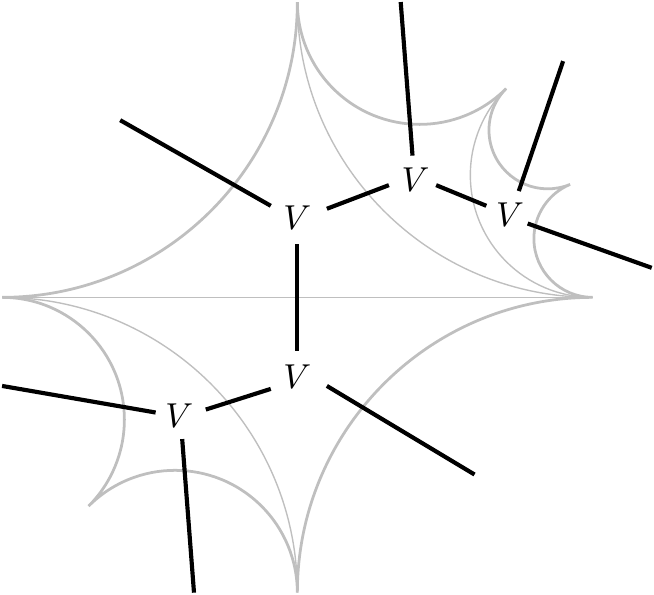}
  \caption{A holographic state inside a cutoff region.}
  \label{fig:holographic-state}
\end{figure}

\subsection{The Semicontinuous Limit}

Define a functor $\Phi\colon \annfor\to\hsans$ by setting
\begin{equation}
  \Phi\bigl(\,\tikz[scale=0.2,baseline=1mm]{
  \draw (0, 0) -- (0, 1);
  \draw (0, 1) -- (-1, 2);
  \draw (0, 1) -- (1, 2);
  }\,\bigr)=V
\end{equation}
and extending in the obvious way. Here we write $\hsans$ for the category of Hilbert spaces with bounded operators as morphisms.

There is a special subset of cutoffs in $\mathit{Co}_d$, the cutoffs whose vertices on the boundary form a standard dyadic partition of $\sircle^1$.
The set of such partitions is denoted $D(\sircle^1)$.
As mentioned before, we have a directed partial order on $D(\sircle^1)$, so we can define another functor $\Psi\colon D(\sircle^1)\to\hsans$ given by
\begin{equation}
  \Psi(f\to g)=\Phi(p)\quad \text{whenever $p\circ f=g$.}
\end{equation}
Of course, $D(\sircle^1)$ is here identified with the subset of $\chom(\annfor)$ consisting of morphisms with domain $1$.

The procedure from \cref{sec:fractrep} gives us a direct limit
\begin{equation}
  \hfrak = \varinjlim_{f} \hfrak_f
\end{equation}
for $\Psi$. Here $f$ denotes a standard dyadic partition, which is the same as the corresponding cutoff, or a binary tree. The direct limit Hilbert space and its tensor powers are effectively the images of the Kan extension $\pi$ evaluated at different objects of $\annfor$.

\begin{rem}
  Note that the direct limit Hilbert space can be constructed directly. We set
  \begin{equation}
    \hat\hfrak=\bigsqcup_f\hfrak_f.
  \end{equation}
  We can identify each $\psi\in\hfrak_f$ with all its images under $\Psi^f_g$ for some $g\ge f$. (\cref{fig:state-and-v} shows two equivalent states.) This gives an equivalence relation $\sim$, and since the $\Psi^f_g$ are isometries and hence preserve the inner product, $\hat\hfrak/{\sim}$ is naturally an inner-product space: If $\psi_f\in\hfrak_f$ and $\phi_g\in\hfrak_g$ are two vectors which are understood to be embedded in $\hat\hfrak$, their inner product is
  \begin{equation}
    \langle \Psi^f_h\psi_f, \Psi^g_h\phi_g \rangle,
  \end{equation}
  where $h\ge f,g$ exists because $\binfor$ and $\annfor$ satisfy \cref{it:lf3} from \cref{defn:calculus-fractions}. Completing $\hat\hfrak$ it gives the direct limit $\hfrak$.
\end{rem}
\begin{figure}
  \includegraphics[width=0.5\textwidth]{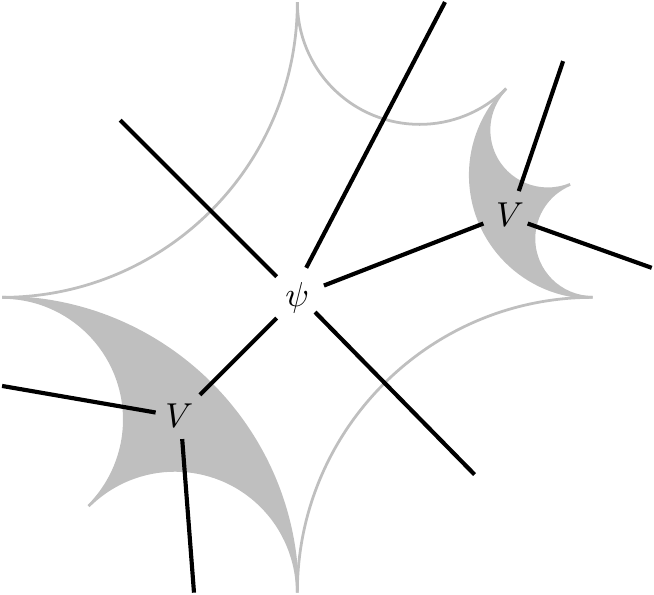}
  \caption{In the semicontinuous limit, a state $\psi$ localized in a cutoff region (white) is equivalent to a state in a larger cutoff region. The other state is formed by placing a $V$ in every triangle that must be added (gray).}
  \label{fig:state-and-v}
\end{figure}

\begin{rem}
  Note that although a general cutoff $\Gamma\in\mathit{Co}_d$ does not necessarily correspond to a standard dyadic partition, there always exists a cutoff $\Gamma'\ge\Gamma$ whose vertices form a partition in $D(\sircle^1)$. Since everything is embedded in $\tau_d$, there is a unique way of adding ideal triangles to $\Gamma$ in order to obtain $\Gamma'$, which corresponds to an isometric embedding $\hfrak_\Gamma\hookrightarrow\hfrak_{\Gamma'}$. Therefore, $\varinjlim_{\mathit{Co}_d} \hfrak_\Gamma=\varinjlim_{D(\sircle^1)} \hfrak_P$.
\end{rem}

We come now to the most important part: the action of $\pi$ on morphisms. Since $V$ is an isometry, $\pi$ maps all elements of Thompson's groupoid $\tcal$ to unitary operators. Let $f/g$ be a fraction representing an element of Thompson's group $T$, and let $\psi$ be a vector in $\hfrak_h$, which we denote as a pair $(h, \psi)$. Then
\begin{equation}\label{eq:t-action}
  \pi(f/g) (h, \psi) = (p\circ f, \Phi(q)\psi),
\end{equation}
where $p, q$ have been chosen such that $p\circ g=q\circ h$. We can rewrite \cref{eq:t-action} in an informal way as
\begin{equation}
  \frac{f}{g}\cdot\frac{h}{\psi} = \frac{pf}{\cancel{pg}}\cdot\frac{\cancel{qh}}{\Phi(q)\psi} = \frac{pf}{\Phi(q)\psi}.
\end{equation}
This illustrates the fact that the action is similar to the composition of two fractions.

Of course, the representation can also be composed with linear maps between Hilbert spaces. This automatically gives us dynamics for holographic \emph{codes}.

There is a special state $\Omega\in\hfrak$ that is invariant under the modular group $\mathit{PSL}_2(\integers)$. It is the equivalence class of a single ideal triangle containing a $V$ in $\tau_d$. The state in \cref{fig:holographic-state} is in the same equivalence class. The invariance follows from the fact that $V$ is perfect and rotation-invariant. For this reason, we will call $\Omega$ the \defemph{vacuum state}.

\subsection{Discontinuity}

It has been shown by \citeauthor{Jones18} \cite{Jones18} and by \citeauthor{KlieschKoenig2018} \cite{KlieschKoenig2018} that the above representation of $T$ is highly discontinuous. More precisely, for all but a zero-measure set of isometries $V$, there exists a sequence $(f_n)$ in $T$ such that $\lim_k\norm{f_k-\id}_\infty=0$ but
\begin{equation}
  \lim_k \langle\phi, \pi(f_k)\psi\rangle\neq\langle\phi, \psi\rangle
\end{equation}
for some $\phi, \psi\in\hfrak$; see \cite{KlieschKoenig2018} for a proof.

There are two ways to deal with it. First, we can look for those few isometries than \emph{do} admit a continuous representation of $T$. While \cite{KlieschKoenig2018} only contains a necessary condition that such isometries need to fulfill, one can use variants of this condition to try to single out those isometries which allow continuous representations.

The other viewpoint is that discontinuity is just a property that our toy models possess, precisely because they are \emph{toy} models. After all, our models are highly discrete, and the semicontinuous limit is not a full continuum limit.

\section{The Bulk-Boundary Correspondence}

In this section we build a Hilbert space $\hfrak_{\mathrm{bulk}}\subset\hfrak$ that we will argue is the space of bulk states on which an analogue of the group of large diffeomorphisms acts---this will turn out to be the Ptolemy group $G$. To this end, we take again the state $\Omega$ consisting only of contractions of $V$. We build up $\hfrak_{\mathrm{bulk}}$ from $\Omega$ by adding all states that arise from $\Omega$ by the action of $T$, so that
\begin{equation}
  \hfrak_{\mathrm{bulk}}=\overline{\mathrm{span}\raisebox{2.5mm}{}}\{\pi(f)\Omega : f\in T\}.
\end{equation}
Each state of the form $\pi(f)\Omega$ represents a different bulk geometry.
By construction, we have a unitary representation of $T$ on $\hfrak_{\mathrm{bulk}}$.
Because the Ptolemy group $G$ is isomorphic to $T$ we can directly take it to act on $\hfrak_{\mathrm{bulk}}$; we now have the strongest possible manifestation of the bulk-boundary correspondence. The `discrete conformal group' $T$ of the boundary is precisely the group of `discrete diffeomorphisms' of the bulk. In analogy to the AdS-CFT correspondence, we call this the Ptolemy-Thompson correspondence.


\addtocontents{toc}{\protect\newpage}
\chapter{Quantum Fields for Thompson's Groups}
\label{cha:quantum-fields}

We now explore further the aspects that should turn our model into a proper toy model of a conformal field theory. We propose a definition for quantum field operators on the semicontinuous limit Hilbert space, define the corresponding correlation functions, and extract information resembling conformal data. Very recent related work in this direction may be found in \cite{BrothierStottmeister2019,Osborne2019}. At the end of the chapter, we collect some observations on trees that are used in the other sections.
This chapter is based on \cite{OsborneStiegemann2019}.

\section[Correlation Functions]{Correlation Functions for Tree Tensor Networks\\ in Quantum Spin Systems}\label{sec:corrfunstrees}

In this section we discuss how to compute correlation functions for tree states. To make things easier, we present this for the \emph{symmetric} monoidal category of Hilbert spaces.

Throughout, we fix a $d$-dimensional Hilbert space $\shfrak=\complexes^d$.
Our focus in this section is on the case where
\begin{equation}
	\hfrak_N = \shfrak^{\otimes N}= \complexes^d\otimes\dotsm\otimes\complexes^d.
\end{equation}
To define a tree tensor network in such a setting, we need an isometry $V\colon\complexes^d\to\complexes^d\otimes\complexes^d$, which can be represented graphically as a trivalent vertex
\begin{equation}
    V = \tikz[baseline=-13pt, scale=0.75,yscale=-1]{
      \draw (0.0, 0.0) -- (0.5, 0.5) -- (1.0, 0.0);
      \draw (0.5, 0.5) -- (0.5, 1.0);
    }\,.
\end{equation}
It considerably simplifies our discussion to assume that
\begin{equation}
	V\mathrm{swap} = V,
\end{equation}
where
\begin{equation}
	\operatorname{swap}(\phi\otimes\psi) = \psi\otimes\phi, \quad \phi,\psi\in\complexes^d.
\end{equation}
Graphically, this can be expressed as
\begin{equation}
	\begin{tikzpicture}[scale=0.75,baseline=20pt]
		\draw (0, 0) -- (0, 0.5);
		\draw[rounded corners] (0, 0.5) -- (-0.5, 1) -- (0.5, 2);
		\draw[rounded corners] (0, 0.5) -- (0.5, 1) -- (-0.5, 2);
	\end{tikzpicture} =
	\tikz[baseline=-13pt, scale=0.75,yscale=-1]{
		\draw (0.0, 0.0) -- (0.5, 0.5) -- (1.0, 0.0);
		\draw (0.5, 0.5) -- (0.5, 1.0);
	}\,,
\end{equation}
where the cross stands for the symmetric braiding.

Using $V$, we set up the completely positive linear map
\begin{equation}
	E\colon M_d(\complexes)\to M_d(\complexes),\quad A\mapsto V^\dagger(A\otimes\idmtx)V.
\end{equation}
Therefore,
\begin{equation}
	E(A)=
	\begin{tikzpicture}[baseline=28pt,scale=0.75]
		\draw (0, 0) -- (0, 0.5);
		\draw[rounded corners] (0, 0.5) -- (-0.5, 1) -- (-0.5, 2) -- (0, 2.5);
		\draw[rounded corners] (0, 0.5) -- (0.5, 1) -- (0.5, 2) -- (0, 2.5);
		\draw (0, 2.5) -- (0, 3);
		\filldraw[fill=white] (-0.5, 1.5) circle [radius=0.3] node {$A$};
	\end{tikzpicture}
\end{equation}
in our graphical language.
We are going to further simplify things and assume that $E$ is diagonalizable. Then we can obtain right eigenvalues $\lambda_\alpha$ and eigenvectors $\mu^\alpha\in M_d(\complexes)$, so that
\begin{equation}
	E(\mu^\alpha)=
	\begin{tikzpicture}[baseline=30pt,scale=0.75]
		\draw (0, 0) -- (0, 0.5);
		\draw[rounded corners] (0, 0.5) -- (-0.5, 1) -- (-0.5, 2) -- (0, 2.5);
		\draw[rounded corners] (0, 0.5) -- (0.5, 1) -- (0.5, 2) -- (0, 2.5);
		\draw (0, 2.5) -- (0, 3);
		\filldraw[fill=white] (-0.5, 1.5) circle [radius=0.4] node {$\mu^\alpha$};
	\end{tikzpicture}
	=\lambda_\alpha
	\tikz[baseline=30pt, scale=0.75]{
		\draw (0.0, 0.0) -- (0.0, 1.1);
		\draw (0.0, 1.5) circle [radius=0.4] node {$\mu^\alpha$};
		\draw (0.0, 1.9) -- (0.0, 3.0);
	} = \lambda_\alpha \mu^\alpha, \quad \alpha = 0,1,\dotsc, d^2-1.
\end{equation}
Without loss of generality we assume that $\lambda_0 = 1$ and $\mu^0 = I$. It is worth emphasizing that the left eigenvectors $\nu^\alpha$, $\alpha = 0, 1, \dotsc, d^2-1$, furnish us with a way to expand an operator $x \in M_d(\complexes)$ with respect to $\mu^\alpha$, so that
\begin{equation}
	A = \sum_{\alpha=0}^{d^2-1} \langle\nu^\alpha, A\rangle_2 \, \mu^\alpha,
\end{equation}
where $\langle\blank, \blank\rangle_2$ is the Hilbert-Schmidt inner product $\langle a, b\rangle_2 = \frac1d\tr(a^\dag b)$.  (See the Appendix of \cite{OsborneStiegemann2019} for more on this.) The operators $\mu^\alpha$ are called the \defemph{ascending operators}.

Now that we have the eigenvalues and eigenvectors of $E$, we introduce the \defemph{fusion map}
\begin{equation}
	F\colon M_d(\complexes)\times M_d(\complexes) \to M_d(\complexes), \quad (A, B) \mapsto V^\dag(A\otimes B)V,
\end{equation}
which is diagrammatically given as
\begin{equation}
	F(A, B) =\begin{tikzpicture}[baseline=28pt,scale=0.75]
		\draw (0, 0) -- (0, 0.5);
		\draw[rounded corners] (0, 0.5) -- (-0.5, 1) -- (-0.5, 2) -- (0, 2.5);
		\draw[rounded corners] (0, 0.5) -- (0.5, 1) -- (0.5, 2) -- (0, 2.5);
		\draw (0, 2.5) -- (0, 3);
		\filldraw[fill=white] (-0.5, 1.5) circle [radius=0.3] node {$A$};
		\filldraw[fill=white] (0.5, 1.5) circle [radius=0.3] node {$B$};
	\end{tikzpicture}\, .
\end{equation}
The \defemph{fusion coefficients} for this map are given by
\begin{equation}
	{f^{\alpha\beta}}_\gamma = \frac{1}{d}\tr\left((\nu^\gamma)^\dag F(\mu^\alpha, \mu^\beta)\right)
\end{equation}
so that
\begin{equation}
	F(\mu^\alpha, \mu^\beta) = \sum_{\gamma} {f^{\alpha\beta}}_\gamma \mu^\gamma.
\end{equation}
The fusion coefficients may be regarded as the structure constants for an, in general, non-associative and non-commutative algebra built on the indices $\alpha$, and
\begin{equation}
	\alpha \star \beta = \sum_{\gamma} {f^{\alpha\beta}}_\gamma \gamma.
\end{equation}

Let $\Omega_N$, $N=2^m$, be the tree state defined by the following diagram:
\begin{equation}
	\begin{tikzpicture}[scale=0.2, yscale=-1]\small
		\draw (15, 15) -- (15, 16);
    \draw (0.0, 0.0) -- (1.0, 1.0) -- (2.0, 0.0);
    \draw (4.0, 0.0) -- (5.0, 1.0) -- (6.0, 0.0);
    \draw (1.0, 1.0) -- (3.0, 3.0) -- (5.0, 1.0);
    \draw (8.0, 0.0) -- (9.0, 1.0) -- (10.0, 0.0);
    \draw (12.0, 0.0) -- (13.0, 1.0) -- (14.0, 0.0);
    \draw (9.0, 1.0) -- (11.0, 3.0) -- (13.0, 1.0);
    \draw (3.0, 3.0) -- (7.0, 7.0) -- (11.0, 3.0);
    \draw (16.0, 0.0) -- (17.0, 1.0) -- (18.0, 0.0);
    \draw (20.0, 0.0) -- (21.0, 1.0) -- (22.0, 0.0);
    \draw (17.0, 1.0) -- (19.0, 3.0) -- (21.0, 1.0);
    \draw (24.0, 0.0) -- (25.0, 1.0) -- (26.0, 0.0);
    \draw (28.0, 0.0) -- (29.0, 1.0) -- (30.0, 0.0);
    \draw (25.0, 1.0) -- (27.0, 3.0) -- (29.0, 1.0);
    \draw (19.0, 3.0) -- (23.0, 7.0) -- (27.0, 3.0);
    \draw (7.0, 7.0) -- (15.0, 15.0) -- (23.0, 7.0);

    \draw[decoration={brace, mirror, amplitude=10}, decorate] (-1.5, -0.5) -- node[left=12pt] {$m$ levels} (-1.5, 16.5);
    \draw[decoration={brace, amplitude=10}, decorate] (-0.5, -1) -- node[above=12pt] {$N=2^m$ leaves} (30.5, -1);
  \end{tikzpicture}
\end{equation}
Here $V$ is the trivalent vertex. (Actually, $\Omega_N$ is not a state \emph{vector}, but $A\mapsto\tr(\Omega_N A)$ is a state.) This state corresponds to the state $\Omega$ of \cref{cha:dynamics}. We first show how to compute one-point correlation functions for the ascending operators, that is, the expectation values
\begin{equation}
	\langle \mu_j^{\alpha}\rangle_N = \tr\left(\Omega_N (\idmtx_0\otimes \cdots \otimes \idmtx_{j-1}\otimes \mu_j^\alpha \otimes \idmtx_{j+1}\otimes \cdots \otimes \idmtx_{N-1})\right).
\end{equation}
Here, $A_j$ indicates that $A$ is the $j$-th tensor factor.
The calculation is expedited upon noting that
\begin{equation}\label{eq:expval}
	\langle \mu_j^{\alpha}\rangle_N = \frac1d (\lambda_{\alpha})^{m-1} \tr(\mu^\alpha) = (\lambda_{\alpha})^{m-1} \langle \idmtx,\mu^\alpha\rangle_2.
\end{equation}
The following figure shows an example:
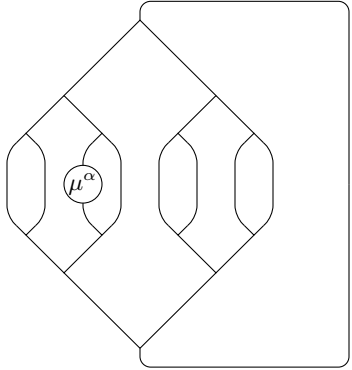
\begin{figure}[h]
	\begin{tikzpicture}[scale=0.25]\footnotesize
		\draw[rounded corners] (7, 7) -- (7, 8) -- (18, 8) -- (18, -1.7);
		\draw (0.5, 0.5) -- (1.0, 1.0) -- (1.5, 0.5);
		\draw (4.5, 0.5) -- (5.0, 1.0) -- (5.5, 0.5);
		\draw (1.0, 1.0) -- (3.0, 3.0) -- (5.0, 1.0);
		\draw (8.5, 0.5) -- (9.0, 1.0) -- (9.5, 0.5);
		\draw (12.5, 0.5) -- (13.0, 1.0) -- (13.5, 0.5);
		\draw (9.0, 1.0) -- (11.0, 3.0) -- (13.0, 1.0);
		\draw (3.0, 3.0) -- (7.0, 7.0) -- (11.0, 3.0);

		\foreach \i in {0,2,4,6}{
			\draw[rounded corners] (2*\i+0.5, 0.5) -- (2*\i, 0) -- ++(0, -3.4) -- ++(0.5, -0.5);
		}

		\foreach \i in {1,3,5,7}{
			\draw[rounded corners] (2*\i-0.5, 0.5) -- (2*\i, 0) -- ++(0, -3.4) -- ++(-0.5, -0.5);
		}

		\filldraw[fill=white] (4, -1.7) circle[radius=1] node {$\mu^\alpha$};

		\begin{scope}[yscale=-1,yshift=3.4cm]
			\draw[rounded corners] (7, 7) -- (7, 8) -- (18, 8) -- (18, -1.7);
			\draw (0.5, 0.5) -- (1.0, 1.0) -- (1.5, 0.5);
			\draw (4.5, 0.5) -- (5.0, 1.0) -- (5.5, 0.5);
			\draw (1.0, 1.0) -- (3.0, 3.0) -- (5.0, 1.0);
			\draw (8.5, 0.5) -- (9.0, 1.0) -- (9.5, 0.5);
			\draw (12.5, 0.5) -- (13.0, 1.0) -- (13.5, 0.5);
			\draw (9.0, 1.0) -- (11.0, 3.0) -- (13.0, 1.0);
			\draw (3.0, 3.0) -- (7.0, 7.0) -- (11.0, 3.0);
		\end{scope}
	\end{tikzpicture}
	\caption{$\langle \mu_j^{\alpha}\rangle_N$ for $m=3$, $N=2^3=8$, and $j=2$.}
\end{figure}

Building on this we next focus on the goal of computing the two-point correlation function
\begin{equation}
	\begin{split}
		\langle \mu^\alpha_j \mu^\beta_k\rangle_N &= \tr\bigl(\Omega_N (\idmtx_0\otimes \dotsm \otimes \idmtx_{j-1}\otimes \mu_j^\alpha \otimes \idmtx_{j+1}\otimes \dotsm\\
		&\mspace{180mu}\dotsm \otimes \idmtx_{k-1}\otimes \mu^{\beta}_k \otimes \idmtx_{k+1}\otimes \dotsm \otimes \idmtx_{N-1})\bigr),
	\end{split}
\end{equation}
for $0\le j<k < 2^m-1$. To this end we label the leaves of the regular binary tree $t_m$ having $2^m$ leaves with binary expansions as per \cref{sub:trees}. In this way we associate to the $j$-th vertex, $j\in \{0,1,\ldots, 2^m-1\}$, the number $x= j/2^m\in [0,1)$. We write
\begin{equation}
	C^{\alpha\beta}_m(x,y) = \langle \mu^\alpha_j \mu^\beta_k\rangle_N,
\end{equation}
where $x = j/2^m$ and $y = k/2^m$.

The key to computing two-point correlation functions is to note that we can relate $C^{\alpha\beta}(x,y)$ to $C^{\alpha\beta}(x^{(1)},y^{(1)})$, where we have employed the notation $x^{(1)}$ in \cref{sub:trees} for the number whose binary expansion has one fewer digit than that for $x$. From \cref{sub:trees} we also take the tree metric $d_T$.

\begin{lem}
	Let $x, y\in [0,1)$ be two points with $m$-digit binary expansions
	\begin{equation}
		x = 0.x_{-1}\cdots x_{-m}, \qquad y = 0.y_{-1}\cdots y_{-m}.
	\end{equation}
	 Then, writing $d_T(x,y) = k+1$, we have that
	\begin{equation}
		C^{\alpha\beta}_m(x,y) = (\lambda_{\alpha}\lambda_{\beta})^k C^{\alpha\beta}_{m-k}(x^{(k)},y^{(k)}).
	\end{equation}
\end{lem}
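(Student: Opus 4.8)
The plan is to proceed by induction on $k = d_T(x,y) - 1$, reducing the two-point function on a tree with $m$ levels to one on a tree with $m-k$ levels by repeatedly contracting the top caret of the tree network. The key observation is the geometric one: if $d_T(x,y) = k+1$, then the leaves carrying $\mu^\alpha$ and $\mu^\beta$ lie in different subtrees only at depth $k$ from the top, so the first $k$ levels of $t_m$ (counted from the root downward) each have the property that exactly one of the two insertions lies to the left of the apex and one to the right --- no, more precisely: the paths from the root to the two leaves diverge only after $m-k$ common initial edges, equivalently they agree for the topmost shared segment and split at the node at tree-distance $k$ from both leaves. One should first make this precise using the binary-expansion conventions of \cref{sub:trees}: $d_T(x,y) = k+1$ means $x^{(k)} \ne y^{(k)}$ but $x^{(k-1)}$ and $y^{(k-1)}$ lie in the same length-$(m-k+1)$ dyadic interval, i.e.\ collapsing $k$ binary digits brings the two leaves into sibling leaves of the same caret.

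The single-step reduction ($k=1$) is the heart of the matter, and everything else follows by iteration. Here I would argue diagrammatically: when $d_T(x,y) = 2$, the leaves $j$ and $k$ are the two leaves of a common caret $V$ at the bottom; the state $\Omega_N$ restricted to that caret contributes a factor which, after tracing against the complementary network, is exactly $V^\dagger(\mu^\alpha \otimes \mu^\beta) V = F(\mu^\alpha, \mu^\beta)$ sitting at the single leaf of $t_{m-1}$ corresponding to $x^{(1)} = y^{(1)}$. Expanding $F(\mu^\alpha,\mu^\beta) = \sum_\gamma {f^{\alpha\beta}}_\gamma \mu^\gamma$ would give $C^{\alpha\beta}_m(x,y) = \sum_\gamma {f^{\alpha\beta}}_\gamma \langle\mu^\gamma_{j'}\rangle_{N/2}$ --- but this is \emph{not} what the lemma claims, so in fact the right move for $k\ge 1$ is different: when $j$ and $k$ are \emph{not} siblings, the top caret of the tree (at the root) has $\mu^\alpha$ somewhere in its left subtree and $\mu^\beta$ somewhere in its right subtree (or the symmetric situation). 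Tracing out everything above the first branching where the two insertions separate, each of the two subtrees between that branching and the leaf is a ``legs pointing down'' chain that acts on $\mu^\alpha$ (resp.\ $\mu^\beta$) by the map $E$ exactly $k$ times, since $E$ is precisely the single-caret ascending map and the number of carets on the path from the separating node to each leaf is $k$. Each application of $E$ to an eigenvector $\mu^\alpha$ pulls out a factor $\lambda_\alpha$, giving $(\lambda_\alpha)^k$ and $(\lambda_\beta)^k$, and leaves the two eigenvectors $\mu^\alpha$, $\mu^\beta$ sitting at the two sibling leaves of a caret in the collapsed tree $t_{m-k}$, which by definition is $C^{\alpha\beta}_{m-k}(x^{(k)}, y^{(k)})$.

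So concretely the proof would: (i) fix the separating node, call the tree-distance from it to leaf $j$ (and to leaf $k$) exactly $k$, which is the content of $d_T(x,y) = k+1$; (ii) observe that the sub-network on the path from that node to leaf $j$, once everything outside is traced against, implements $E^{\otimes}$ applied $k$ times in succession to the operator inserted at $j$ --- formally, induction on $k$ using that the ``down''/ascending contraction of one caret is the map $E$, together with the semigroup property $E^{k}(\mu^\alpha) = \lambda_\alpha^{\,k}\mu^\alpha$ since $\mu^\alpha$ is an eigenvector; (iii) do the same for leaf $k$ with $\mu^\beta$; (iv) note the rest of $\Omega_N$ above the separating node is, after these contractions, literally the network $\Omega_{N/2^k}$ on $t_{m-k}$ with $\mu^\alpha$ and $\mu^\beta$ at the leaves indexed by $x^{(k)}$ and $y^{(k)}$, which (using $\text{swap}$-invariance of $V$ to disregard left/right ordering) is $C^{\alpha\beta}_{m-k}(x^{(k)}, y^{(k)})$; (v) collect the scalars $(\lambda_\alpha\lambda_\beta)^k$. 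I expect the main obstacle to be purely bookkeeping: carefully matching the combinatorial claim ``$d_T(x,y) = k+1$ iff collapsing $k$ digits makes $x$ and $y$ siblings and no fewer does'' against the tree picture, and making rigorous the diagrammatic identity that contracting a downward chain of $k$ carets against the trivial network on the other legs equals $E^k$ --- both are routine given \cref{sub:trees} and \cref{eq:expval}, but they are where all the actual content sits.
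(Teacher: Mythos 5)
Your proposal is correct and takes essentially the same route as the paper's (very brief) proof: because the two insertions sit on different carets, one applies the ascending map $E$ independently to each operator, each application pulling out its eigenvalue, and repeats this $k$ times—i.e.\ level by level, $C^{\alpha\beta}_m(x,y) = (\lambda_\alpha\lambda_\beta)\,C^{\alpha\beta}_{m-1}(x^{(1)},y^{(1)})$—until the operators meet at a common caret, yielding $(\lambda_\alpha\lambda_\beta)^k C^{\alpha\beta}_{m-k}(x^{(k)},y^{(k)})$. The only slip is bookkeeping: the separating node is $k+1$ edges (not $k$) above each leaf, since after $k$ coarse-grainings the points are still distinct siblings; but your operative count of $k$ carets along each path, hence $E^k$ per operator, is the correct one.
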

\begin{proof}
	The details are omitted because the basic argument is a relatively straightforward induction. Since $\mu^\alpha_j$ and $\mu^\beta_k$ are on different carets of the tree we can independently apply $E$ to these operators to relate the expectation values at different levels:
	\begin{equation}
		C^{\alpha\beta}_m(x,y) = \langle \mu^\alpha_j \mu^\beta_k\rangle_N = \langle E(\mu^\alpha_j) E(\mu^\beta_k)\rangle_{N/2} = (\lambda_{\alpha}\lambda_{\beta})C^{\alpha\beta}_{m-1}(x^{(1)},y^{(1)}).
	\end{equation}
	The two operators only meet at a caret at level $m-k$, so we can repeat the above process $k$ times.
\end{proof}

Now we study what happens when $d_T(x,y) = 1$. In this case we can relate the two-point correlation function to an expectation value.
\begin{lem}
	Suppose that we have two leaves labelled $x$ and $y$, and that $d_T(x,y) = 1$. Then
	\begin{equation}
		C^{\alpha\beta}_m(x,y) = \tr(\Omega_{N/2}F(\mu^\alpha, \mu^\beta)).
	\end{equation}
	Exploiting the formula \cref{eq:expval} for the expectation value we hence obtain
	\begin{equation}
		C^{\alpha\beta}_m(x,y)  = \frac1d  \sum_{\gamma = 0}^{d^2-1} {f^{\alpha\beta}}_\gamma (\lambda_{\gamma})^{m-2} \tr(\mu^\gamma).
	\end{equation}
\end{lem}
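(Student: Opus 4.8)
The plan is to unfold the graphical definition of $C^{\alpha\beta}_m(x,y)$ and exploit the hypothesis $d_T(x,y)=1$, which forces the leaves labelled $x$ and $y$ to be the two children of a single trivalent vertex (a bottom caret) of the tree $t_m$. Recall that $C^{\alpha\beta}_m(x,y)=\langle\mu^\alpha_j\mu^\beta_k\rangle_N=\tr(\Omega_N(\cdots\mu^\alpha_j\cdots\mu^\beta_k\cdots))$ is computed by the doubled tree network: the ket layer $\Omega_N$ assembled from copies of $V$, the operators inserted on legs $j$ and $k$, and the mirrored bra layer assembled from copies of $V^\dag$. Since $x$ and $y$ are siblings, the bottom caret $V$ together with its mirror $V^\dag$ and the two inserted operators form exactly $V^\dag(\mu^\alpha\otimes\mu^\beta)V=F(\mu^\alpha,\mu^\beta)$, by the definition of the fusion map $F$. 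Contracting this sub-network collapses the two bottom legs into one, so what remains is the doubled tree network on $N/2=2^{m-1}$ leaves with the single operator $F(\mu^\alpha,\mu^\beta)$ inserted on the leaf that was the parent of $x$ and $y$; that is, $C^{\alpha\beta}_m(x,y)=\tr(\Omega_{N/2}F(\mu^\alpha,\mu^\beta))$, the first claimed identity. This is precisely the base case of the induction used in the previous lemma, now carried out explicitly.

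Next I would expand $F(\mu^\alpha,\mu^\beta)=\sum_\gamma {f^{\alpha\beta}}_\gamma\,\mu^\gamma$ using the definition of the fusion coefficients, and use linearity of the trace to get $C^{\alpha\beta}_m(x,y)=\sum_\gamma {f^{\alpha\beta}}_\gamma\,\langle\mu^\gamma\rangle_{N/2}$, a sum of one-point functions on a tree with $2^{m-1}$ leaves, i.e.\ with $m-1$ levels. Applying the one-point formula \cref{eq:expval} with $m$ replaced by $m-1$ gives $\langle\mu^\gamma\rangle_{N/2}=\frac1d(\lambda_\gamma)^{(m-1)-1}\tr(\mu^\gamma)=\frac1d(\lambda_\gamma)^{m-2}\tr(\mu^\gamma)$, and summing over $\gamma$ yields the second identity $C^{\alpha\beta}_m(x,y)=\frac1d\sum_{\gamma=0}^{d^2-1}{f^{\alpha\beta}}_\gamma(\lambda_\gamma)^{m-2}\tr(\mu^\gamma)$.

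The only real point requiring care — the main obstacle — is the bookkeeping of the doubled bra–ket network: one must check that inserting $\mu^\alpha$ and $\mu^\beta$ on the two sibling legs and contracting with the bottom caret genuinely produces $V^\dag(\mu^\alpha\otimes\mu^\beta)V$ rather than some reordered or misplaced variant. This is where the standing assumption $V\operatorname{swap}=V$ is used: it makes the ordering of the two children of the caret irrelevant, so no ambiguity in the placement of $\mu^\alpha$ versus $\mu^\beta$ arises and the reduction to $F(\mu^\alpha,\mu^\beta)$ is unambiguous. Everything beyond this is linearity of the trace and the already-established formula \cref{eq:expval}, so the argument is short.
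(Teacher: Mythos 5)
Your argument is correct and is exactly the reasoning the paper intends (it gives no separate proof beyond the statement): when $d_T(x,y)=1$ the two operators sit on the two legs of a single bottom caret, so contracting that caret with its mirror yields $V^\dagger(\mu^\alpha\otimes\mu^\beta)V=F(\mu^\alpha,\mu^\beta)$ on the parent leaf of the tree with $N/2$ leaves, and then expanding $F(\mu^\alpha,\mu^\beta)=\sum_\gamma {f^{\alpha\beta}}_\gamma\mu^\gamma$ and applying \cref{eq:expval} at level $m-1$ gives the stated formula. Your appeal to $V\operatorname{swap}=V$ is harmless but not even needed, since $j<k$ already places $\mu^\alpha$ on the left leg in agreement with the definition of $F$.
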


Putting these lemmas all together, along with \cref{lem:treemetric}, we obtain the following formula for the two-point correlation function
\begin{prop}
	Let $x, y\in [0,1)$ be two points with $m$-digit binary expansions $x = 0.x_{-1}\cdots x_{-m}$ and $y = 0.y_{-1}\cdots y_{-m}$. Then
	\begin{equation}
		C^{\alpha\beta}_m(x,y) = \frac1d (\lambda_\alpha\lambda_\beta)^{m+\lfloor \log_2(y\ominus x) \rfloor} \sum_{\gamma = 0}^{d^2-1} {f^{\alpha\beta}}_\gamma (\lambda_{\gamma})^{-\lfloor \log_2(y\ominus x) \rfloor-2} \tr(\mu^\gamma).
	\end{equation}
\end{prop}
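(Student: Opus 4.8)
The plan is to obtain the formula by simply chaining the two lemmas just proved with the tree-metric identity of \cref{lem:treemetric}. Recall that \cref{lem:treemetric} translates the tree distance into the size of the dyadic gap: for $x,y\in[0,1)$ with $m$-digit binary expansions one has $d_T(x,y) = m + \lfloor \log_2(y\ominus x)\rfloor + 1$. Writing, as in the first lemma, $d_T(x,y) = k+1$, this is the same as $k = m + \lfloor \log_2(y\ominus x)\rfloor$, and since $x,y$ are distinct leaves of the tree with $2^m$ leaves we have $1\le d_T(x,y)\le m$, i.e.\ $0\le k\le m-1$.

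First I would apply the first lemma to peel off the $k$ levels at which $x$ and $y$ descend from a common vertex, giving
\begin{equation}
  C^{\alpha\beta}_m(x,y) = (\lambda_\alpha\lambda_\beta)^k\, C^{\alpha\beta}_{m-k}\bigl(x^{(k)}, y^{(k)}\bigr),
\end{equation}
where by construction (and the structural properties of $d_T$ recorded in \cref{sub:trees}) the truncated points satisfy $d_T(x^{(k)},y^{(k)}) = 1$, so they lie on a common caret of the tree with $2^{m-k}$ leaves. The second lemma then applies at level $m-k$, yielding
\begin{equation}
  C^{\alpha\beta}_{m-k}\bigl(x^{(k)}, y^{(k)}\bigr) = \frac1d \sum_{\gamma=0}^{d^2-1} {f^{\alpha\beta}}_\gamma\, (\lambda_\gamma)^{(m-k)-2}\, \tr(\mu^\gamma).
\end{equation}
Substituting $k = m + \lfloor \log_2(y\ominus x)\rfloor$ turns $(\lambda_\alpha\lambda_\beta)^k$ into $(\lambda_\alpha\lambda_\beta)^{m+\lfloor \log_2(y\ominus x)\rfloor}$ and the exponent $(m-k)-2$ into $-\lfloor \log_2(y\ominus x)\rfloor-2$, which is precisely the claimed expression; the boundary case $d_T(x,y)=1$ is included ($k=0$, first lemma vacuous), as is the restriction $0\le j<k<2^m-1$, which is exactly the range in which $\ominus$ and the two lemmas are set up.

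At this level the argument is pure bookkeeping, so I do not expect any serious obstacle in the proposition itself; the real content sits in the three inputs — the first lemma (an induction applying the eigenvalue relation $E(\mu^\alpha)=\lambda_\alpha\mu^\alpha$ independently on the two subtrees), the second lemma (reducing a distance-$1$ correlator to the fusion map $F$ and then to the one-point formula \cref{eq:expval}), and \cref{lem:treemetric}. The one point I would double-check carefully is the compatibility of $d_T$ with truncation, i.e.\ that $d_T(x^{(1)},y^{(1)}) = d_T(x,y)-1$ whenever $d_T(x,y)\ge 2$, so that peeling off $k$ digits really does land the two leaves on the same caret; this is the structural fact already underlying the first lemma and recorded in \cref{sub:trees}, and it is what makes the exponent accounting above legitimate.
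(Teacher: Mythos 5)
Your proposal is correct and is exactly the argument the paper intends: the proposition is stated as the immediate combination of the two preceding lemmas with \cref{lem:treemetric}, which is precisely your chaining of $k = d_T(x,y)-1 = m+\lfloor\log_2(y\ominus x)\rfloor$ through the level-reduction lemma and then the distance-one fusion lemma. The exponent bookkeeping $(\lambda_\alpha\lambda_\beta)^k$ and $(m-k)-2 = -\lfloor\log_2(y\ominus x)\rfloor-2$ matches the stated formula, so there is nothing further to add.
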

A particularly important special case is the end-to-end correlation function between the leaf labelled $0.00\cdots0$ and the leaf labelled $0.11\cdots 1$. This may be computed in the following way.
\begin{cor}
	Let $x, y\in [0,1)$ be two points with $m$-digit binary expansions
	\begin{equation}
		x = 0.0_{-1} \cdots 0_{-m}, \qquad y = 0.1_{-1}\cdots 1_{-m}.
	\end{equation}
	Then
	\begin{equation}
		C^{\alpha\beta}_m(x,y) = \frac1d (\lambda_\alpha\lambda_\beta)^{m-1} \sum_{\gamma = 0}^{d^2-1} {f^{\alpha\beta}}_\gamma (\lambda_{\gamma})^{-1} \tr(\mu^\gamma).
	\end{equation}
\end{cor}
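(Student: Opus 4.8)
The plan is to obtain this corollary as an immediate specialization of the preceding Proposition, with a second, slightly more hands-on derivation available as a cross-check.

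For the first route I would set $x = 0.0_{-1}\cdots 0_{-m} = 0$ and $y = 0.1_{-1}\cdots 1_{-m} = 1 - 2^{-m}$ and evaluate the quantity $\lfloor \log_2(y\ominus x)\rfloor$ that occurs in the Proposition. Since $x = 0$ we have $y\ominus x = y = 1 - 2^{-m}$, which lies in $[1/2, 1)$ for every $m \ge 1$, so $\lfloor \log_2(y\ominus x)\rfloor = -1$. Substituting $-1$ into the Proposition's formula turns the exponent of $\lambda_\alpha\lambda_\beta$ into $m + (-1) = m-1$ and the exponent of $\lambda_\gamma$ into $-(-1) - 2 = -1$, which is precisely the asserted identity. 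This is a one-line computation once the value of the floor term is pinned down.

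As an independent check — and the route I would actually write up if the Proposition's statement felt too opaque — I would argue geometrically on the tree $t_m$. The leaves labelled $0.0\cdots 0$ and $0.1\cdots 1$ are the leftmost and rightmost leaves; they already differ in their first binary digit, hence lie in opposite halves of $t_m$ and only become siblings at the root caret, giving $d_T(x,y) = m$. Applying the first of the two preceding Lemmas with $k = m-1$ reduces $C^{\alpha\beta}_m(x,y)$ to $(\lambda_\alpha\lambda_\beta)^{m-1}\, C^{\alpha\beta}_1(x^{(m-1)}, y^{(m-1)})$ with $x^{(m-1)} = 0$ and $y^{(m-1)} = 1/2$, a pair with $d_T = 1$; the second Lemma (with $m$ replaced by $1$, using \cref{eq:expval}) then evaluates $C^{\alpha\beta}_1(0,1/2)$ as $\frac1d \sum_{\gamma} {f^{\alpha\beta}}_\gamma (\lambda_\gamma)^{-1} \tr(\mu^\gamma)$, and multiplying back by $(\lambda_\alpha\lambda_\beta)^{m-1}$ finishes the argument.

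There is no genuine obstacle here; this is a corollary in the strict sense. The only points that warrant a moment's care are the verification that $\lfloor \log_2(1 - 2^{-m})\rfloor = -1$ for all $m \ge 1$ (so that the two routes agree), and, in the second route, the $m = 1$ base case, where $N/2 = 1$ and one must read $\Omega_1$ as the trivial single-leaf tree so that \cref{eq:expval} still applies with exponent $m - 2 = -1$.
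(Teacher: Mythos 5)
Your first route is exactly the intended argument: the corollary is a direct specialization of the preceding proposition, and your computation $\lfloor\log_2(y\ominus x)\rfloor = \lfloor\log_2(1-2^{-m})\rfloor = -1$ correctly turns the exponents $m+\lfloor\cdot\rfloor$ and $-\lfloor\cdot\rfloor-2$ into $m-1$ and $-1$. The second, tree-based route is a sound cross-check (it just retraces how the proposition itself is assembled from the two lemmas), but the one-line substitution already settles the corollary.
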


\section[Primary Fields]{Primary Fields for Thompson's Groups\\in the Semicontinuous Limit}\label{sec:primaryfields}
In this section we build observables on the semicontinuous limit Hilbert space $\hfrak$ intended to represent smeared equal-time field operators. We generalize everything to the case of non-regular partitions of $\sircle^1$ as it costs no extra effort to do so (and is arguably more elegant). The discussions here are framed in terms of quantum spin systems, however, they generalize in a straightforward way to anyonic systems via trivalent categories.

Recall that $D([0, 1])$ denotes the directed set of standard dyadic partitions of $[0,1]$.
\begin{defn}
	Let $V\colon\complexes^d\to \complexes^d\otimes \complexes^d$, $E(A) = V^\dag (A\otimes\idmtx) V$, $E(\mu^\alpha)= \lambda_\alpha\mu^\alpha$, and let $P\in D([0, 1])$ be a standard dyadic partition. Let $f\in L^2([0,1],M_d(\complexes))$. Assume that $\lambda_\alpha\neq0$ for all $\alpha$ and define the following operator in $\mathcal{B}(\hfrak_{P})$, where $\hfrak_P = \shfrak^{\otimes\abs{P}}$,
	\begin{equation}\label{eq:smearedfieldoperator}
		\phi_{{P}}(f) = \sum_{\substack{\alpha = 0 \\ \lambda_\alpha \neq 0}}^{d^2-1}\sum_{I\in {P}} \bar{f}_\alpha(I) (\lambda_\alpha)^{\log_2(\abs{I})}\mu_I^\alpha,
	\end{equation}
	where
	\begin{equation}
		\bar{f}_\alpha(I) = \frac{1}{d}\int_{I} \tr\bigl((\nu^\alpha)^\dag f(x)\bigr) \dif x,
	\end{equation}
	$\abs{I}=\sup(I)-\inf(I)$ is the length of an interval $I$, and $\lambda_\alpha$ and $\nu^\alpha$ are the corresponding ascending weights and dual operators for $E$.
\end{defn}

\begin{rem}
	The discretized field operator $\phi_{{P}}(f)$ is meant to represent a continuum field operator first smeared out by $f$ and then \emph{discretized}, \emph{averaged}, or \emph{coarse grained}, over the intervals making up the partition $P$. Intuitively, as the partition $P$ is taken finer and finer we should recover a dyadic version of the standard smeared field operator $\phi(f)$ of quantum field theory.
\end{rem}

We want to use the discretized field operator $\phi_P(f)$ to build $n$-point correlation functions. In the language of physics, this is achieved by replacing the smearing function $f$ with a delta function $f(x)=\delta(x-z)A$, where $A\in M_d(\complexes)$. Mathematically this can be achieved by using operator-valued distributions. However, in order to avoid a long digression on distributions, we make the substitution $f(x)=\delta(x-z)M$ in \cref{eq:smearedfieldoperator} and note that
\begin{equation}
	\overline{f}_\alpha(I) = \tr\left((\nu^\alpha)^\dag M\right)\int \delta(x-z)\chi_I(x)\, dx = \tr\left((\nu^\alpha)^\dag M\right)\chi_I(z),
\end{equation}
where $\chi_I$ denotes the indicator function of $I$.
From this, we can distil the following \emph{ad hoc} definition.

\begin{defn}
	The \defemph{discretized field operator} of type $\alpha$ at $z\in \sircle^1$ with respect to the partition $P$ is defined to be
	\begin{equation}\label{eq:discretefieldoperator}
		\phi_P^\alpha(z) = \sum_{I\in {P}} \chi_I(z) (\lambda_\alpha)^{\log_2\abs{I}}\mu_I^\alpha.
	\end{equation}
\end{defn}

Using products of discretized field operators at various positions should allow us to define $n$-point functions. The general idea is that, for a given state $\psi\in\hfrak$, the $n$-point function is
\begin{equation}
	C_\psi^{\alpha_1\alpha_2\dotsb\alpha_n}(x_1, x_2, \dotsc, x_n) = \langle \psi, \phi^{\alpha_1}(x_1)\phi^{\alpha_2}(x_2)\dotsm \phi^{\alpha_n}(x_n)\psi\rangle,
\end{equation}
where the $\phi^\alpha(x)$ are putative quantum field operators that do not depend on a partition $P$. We therefore need to perform some kind of limit to remove the partition $P$ in the definition \cref{eq:discretefieldoperator} of the discretized field operator. In order to perform this limit, we introduce some auxiliary definitions.

\begin{defn}
	Let $\mathbf{x} = (x_1, x_2, \ldots, x_n)$ be an ordered tuple of numbers lying in $[0,1)$, that is, $0\le x_1 <x_2<\cdots < x_n < 1$. We say that $P\in D([0, 1])$ is a \defemph{supporting partition} for $\mathbf{x}$ if in every interval $I\in P$ there is \emph{at most} one of the elements of the tuple $\mathbf{x}$ in $I$. We say that $P$ is a \defemph{minimal supporting partition} for $\mathbf{x}$ if there is no coarser supporting partition.
\end{defn}

\begin{exmp}
	Let $\mathbf{x} = (\tfrac{1}{7}, \tfrac{2}{3}, \tfrac{5}{6})$. Then the partition
	\begin{equation}
		\{ [0,\tfrac14), [\tfrac14,\tfrac12), [\tfrac12,\tfrac34), [\tfrac34,1]\}
	\end{equation}
	is supporting for $\mathbf{x}$:
	\begin{center}
		\begin{tikzpicture}[scale=5]
			\draw (0, 0) -- (1,0);
			\draw (0, -0.025) -- (0, 0.025);
			\draw (0.25, -0.025) -- (0.25, 0.025);
			\draw (0.5, -0.025) -- (0.5, 0.025);
			\draw (0.75, -0.025) -- (0.75, 0.025);
			\draw (1, -0.025) -- (1, 0.025);
			\draw[fill=black] (1/7, 0) circle (0.01);
			\draw[fill=black] (2/3, 0) circle (0.01);
			\draw[fill=black] (5/6, 0) circle (0.01);
		\end{tikzpicture}
	\end{center}
	The partition $\{ [0,\tfrac12), [\tfrac12,\tfrac34), [\tfrac34,1]\}$ is a minimal supporting partition:
	\begin{center}
		\begin{tikzpicture}[scale=5]
			\draw (0, 0) -- (1,0);
			\draw (0, -0.025) -- (0, 0.025);
			\draw (0.5, -0.025) -- (0.5, 0.025);
			\draw (0.75, -0.025) -- (0.75, 0.025);
			\draw (1, -0.025) -- (1, 0.025);
			\draw[fill=black] (1/7, 0) circle (0.01);
			\draw[fill=black] (2/3, 0) circle (0.01);
			\draw[fill=black] (5/6, 0) circle (0.01);
		\end{tikzpicture}
	\end{center}
\end{exmp}

One can prove the following lemmas.

\begin{lem}
	Let $\mathbf{x}$ be an ordered tuple in $[0,1)$ and let $P$ support $\mathbf{x}$. Suppose that $Q$ refines $P$, that is, $P\le Q$. Then $Q$ supports $\mathbf{x}$.
\end{lem}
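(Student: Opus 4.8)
The plan is to prove that refining a supporting partition preserves the supporting property, which is essentially a bookkeeping exercise about standard dyadic partitions and their refinements. First I would recall the key combinatorial fact about standard dyadic partitions: if $P \le Q$ in $D([0,1])$, then every interval $J \in Q$ is contained in a unique interval $I \in P$. This follows because both partitions are built from the infinite binary tree of standard dyadic intervals, and $Q$ refines $P$ exactly when the finite subtree defining $Q$ contains the finite subtree defining $P$; the leaves of the $Q$-subtree descend from leaves of the $P$-subtree.

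Given this, the main argument is short. Let $\mathbf{x} = (x_1, \dots, x_n)$ with $0 \le x_1 < x_2 < \cdots < x_n < 1$, and suppose $P$ supports $\mathbf{x}$. Take any $J \in Q$ and let $I \in P$ be the unique interval with $J \subset I$. Since $P$ supports $\mathbf{x}$, the interval $I$ contains at most one of the $x_i$. As $J \subset I$, the interval $J$ contains at most as many of the $x_i$ as $I$ does, hence at most one. Since $J \in Q$ was arbitrary, $Q$ supports $\mathbf{x}$.

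The only step requiring a little care is the containment claim for refinements, i.e.\ that each $J \in Q$ sits inside a unique $I \in P$. I would justify it by the correspondence established in \cref{cha:thompson} between standard dyadic partitions of $[0,1]$ and finite rooted subtrees of the infinite binary tree of standard dyadic intervals: refinement of partitions corresponds to inclusion of subtrees, and the intervals of a partition correspond to the leaves of its subtree. A leaf of the larger subtree lies below exactly one leaf of the smaller subtree (follow the unique path toward the root until reaching a leaf of the smaller subtree), and ``lies below'' translates to set-theoretic inclusion of the corresponding dyadic intervals. One could alternatively argue directly: two standard dyadic intervals are either disjoint or nested, so for $J \in Q$ the set of $I \in P$ meeting $J$ consists of intervals nested with $J$; since the $I$ partition $[0,1]$ and $J \subset [0,1]$, and distinct elements of $P$ are disjoint, there is exactly one such $I$ and it must contain $J$.

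I do not anticipate a genuine obstacle here; the statement is elementary once the refinement-equals-nesting principle for standard dyadic intervals is in hand. The only thing to be slightly careful about is the half-open convention for the intervals and the endpoint $1$ (and, in the circle case, the identification $0 \sim 1$), but since the $x_i$ lie in $[0,1)$ and the partition intervals are half-open except possibly the last, no point is double-counted and the ``at most one'' count is unaffected.
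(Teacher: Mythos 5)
Your proof is correct, and it is exactly the elementary argument the paper has in mind — the paper itself omits a proof of this lemma (it only says ``One can prove the following lemmas''), and your key observation that refinement of standard dyadic partitions means each interval of $Q$ is nested inside a unique interval of $P$, so the count of points per interval can only decrease, is the whole content. Your tree-based justification of the nesting claim is also consistent with how the paper handles the neighbouring uniqueness lemma via the infinite binary tree of standard dyadic intervals, so nothing further is needed.
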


\begin{lem}
	Let $\mathbf{x}$ be an ordered tuple lying in $[0,1)$. Then the minimal supporting partition for $\mathbf{x}$ is unique.
\end{lem}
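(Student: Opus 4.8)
The plan is to prove the stronger claim that the set $S$ of all supporting partitions for $\mathbf{x}$ has a \emph{least} element in the refinement order on $D([0,1])$; a least element is in particular the unique minimal one. I would work throughout with the standard bijection between standard dyadic partitions $P$ and finite \emph{full} rooted binary subtrees $T_P$ of the infinite binary tree (``full'' meaning every internal node has both children), under which $P \le Q$ becomes $T_P \subseteq T_Q$, a node $w$ becomes the dyadic interval $I_w$, and, writing each $x \in [0,1)$ by its eventually-zero binary expansion $\mathrm{exp}(x)$, one has $x \in I_w$ precisely when $w$ is a prefix of $\mathrm{exp}(x)$. Fixing the eventually-zero convention at the outset is what makes this prefix characterization exact even at dyadic rationals, and since $x \mapsto \mathrm{exp}(x)$ is then order-preserving for the lexicographic order, for consecutive points $x_i < x_{i+1}$ the longest common prefix of $\mathrm{exp}(x_i)$ and $\mathrm{exp}(x_{i+1})$, written $v(i)$, is a well-defined finite string, with $\mathrm{exp}(x_i)$ extending $v(i)$ by a $0$ and $\mathrm{exp}(x_{i+1})$ extending $v(i)$ by a $1$.

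I would first record two routine structural facts. First, $D([0,1])$ is a meet-semilattice: $T_P \cap T_Q$ is again a finite full rooted binary subtree, so $P \wedge Q$ exists with $T_{P \wedge Q} = T_P \cap T_Q$. Second, $S$ has a minimal element: $S$ is nonempty (any sufficiently fine uniform partition supports $\mathbf{x}$), and for a fixed $P_0 \in S$ the down-set $\{P : P \le P_0\}$ is finite, because $T_{P_0}$ has only finitely many subtrees; so the finite nonempty set $S \cap \{P : P \le P_0\}$ has a minimal element, which by the previous lemma (up-closedness of $S$) is minimal in all of $S$.

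The main step, and the one needing the most care, is to show that $S$ is closed under binary meets; I would argue the contrapositive. Suppose $R := P \wedge Q \notin S$. Then some leaf interval $I_w$ of $R$ contains at least two of the $x_i$, hence two consecutive ones $x_i < x_{i+1}$; since $I_w$ contains both, $w$ is a common prefix of $\mathrm{exp}(x_i)$ and $\mathrm{exp}(x_{i+1})$, hence a prefix of $v := v(i)$. Assume toward a contradiction that $P \in S$. The leaf $\ell$ of $T_P$ containing $x_i$ is a prefix of $\mathrm{exp}(x_i)$ and is different from the leaf of $T_P$ containing $x_{i+1}$; if $|\ell| \le |v|$ then $\ell$, being a prefix of $\mathrm{exp}(x_i)$ no longer than $v$, is a prefix of $v$ and hence of $\mathrm{exp}(x_{i+1})$, so $x_{i+1} \in I_\ell$, contradicting that the leaf $I_\ell$ of a supporting partition holds at most one $x_j$. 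Therefore $|\ell| > |v|$, so $v$ is a proper ancestor of the node $\ell \in T_P$ and is thus an internal node of $T_P$; the same reasoning gives $v$ internal in $T_Q$. Fullness then forces both children of $v$ into $T_P \cap T_Q = T_R$, so $v$ is internal in $T_R$ — but $w$ is a leaf of $T_R$ and an ancestor of $v$, which is impossible. Hence $P \notin S$, and by symmetry this establishes that $S$ is closed under meet.

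Finally I would combine the pieces: taking the minimal element $m$ of $S$ from the second structural fact, for any $s \in S$ we get $m \wedge s \in S$ and $m \wedge s \le m$, so $m \wedge s = m$ by minimality, i.e. $m \le s$. Thus $m$ is the least element of $S$, which is exactly the assertion that the minimal supporting partition for $\mathbf{x}$ is unique. I expect the bookkeeping in the meet-closure step — keeping straight which node is a prefix/ancestor of which, and the edge behaviour at dyadic $x_i$ — to be the only delicate part, and the eventually-zero expansion convention is precisely what tames it.
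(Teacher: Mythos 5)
Your proof is correct, but it takes a genuinely different route from the paper's. The paper argues constructively: it introduces the counting function $n(v)$, the number of entries of $\mathbf{x}$ lying in the standard dyadic interval $I_v$, notes that the values at the two children of $v$ sum to $n(v)$, observes that the nodes with $n(v)>1$ therefore form a connected subtree of the infinite binary tree $\mathcal{T}$, and reads the minimal supporting partition off from the roots of the components obtained by deleting that subtree; uniqueness follows because any minimal supporting partition must exclude exactly that subtree. You instead prove uniqueness structurally, never exhibiting the partition: standard dyadic partitions form a meet-semilattice (intersection of full rooted binary subtrees), the set $S$ of supporting partitions has a minimal element by finiteness of down-sets, and---your key lemma---$S$ is closed under meets, via the longest-common-prefix argument; a minimal element of a meet-closed family is then least. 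Your route yields the slightly stronger structural fact that the coarsest common coarsening of two supporting partitions is again supporting (so $S$ has a least element), while the paper's explicit construction is what gets reused later, e.g.\ in the recursive subdivision recipe for computing minimal supporting partitions in the short-distance analysis. One wording slip in your meet-closure step: having supposed $P\wedge Q\notin S$, you assume only $P\in S$ toward a contradiction but then invoke ``the same reasoning'' to make $v$ internal in $T_Q$, which needs $Q\in S$ as well; correspondingly, the conclusion ``hence $P\notin S$'' is too strong (it fails if, say, $Q$ is the trivial partition). The hypothesis for contradiction should be that $P$ and $Q$ both support $\mathbf{x}$, whose negation is precisely what meet-closure requires, so the argument stands after that rewording.
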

\begin{proof}
	Recall that $\mathcal{T}$ is the infinite binary tree of standard dyadic intervals whose nodes are labelled by $[\tfrac{a}{2^m}, \tfrac{a+1}{2^m})$, $a\in\{0,1,\ldots, 2^m-1\}$. Denote by $V$ the vertices of the tree (which are in bijection with the standard dyadic intervals) and by $I_v$ the standard dyadic interval associated with a vertex $v$ of the tree $\mathcal{T}$.

	Define the function $n:V\rightarrow \mathbb{Z}^+$ via
	\begin{equation}
		n(v) = \abs{\{x_j\in\mathbf{x}: x_j\in I_v\}}.
	\end{equation}
	The function $n$ has the property that
	\begin{equation}\label{eq:nleafadds}
		n(v) = n(\text{left leaf of $v$}) + n(\text{right leaf of $v$}).
	\end{equation}
	Find the subtree $T_P = (V_P, E_P)$ of $\mathcal{T}$ defined by the property that $n(v)>1$ for all $v\in V_P$. This induces a connected subtree by virtue of \cref{eq:nleafadds}. Deleting $T_P$ (and all its associated edges in $E_P$) from $\mathcal{T}$ gives $m$ disconnected infinite binary trees whose root nodes induce a minimal supporting partition. Any minimal supporting partition would have to exclude $T_P$, and hence $P$ thus constructed is unique.
\end{proof}

The utility of minimal supporting partitions for tuples $\mathbf{x}$ is that they directly allow us to reduce the computation of an $n$-point correlation function in the limit of fine partitions to a \emph{finite} computation. To see this we specialize henceforth to the $n$-point functions of the vacuum vector $\Omega\in\hfrak$.

Consider an $n$-tuple $\mathbf{x}$ in $[0,1)$ and let $P$ be its minimal supporting partition. Define for any tuple $\boldsymbol{\alpha} = (\alpha_1, \alpha_2, \ldots, \alpha_n)$ and any $Q\in D([0, 1])$ refining $P$, that is, $P\le Q$, the operator
\begin{equation}
	M^{\boldsymbol{\alpha}}_Q(\mathbf{x}) = \prod_{j=1}^n \phi_Q^{\alpha_j}(x_j).
\end{equation}
Since $Q$ is a supporting partition (it refines the minimal supporting partition) each of the factors in the product commutes with the others. Therefore the expression is well-defined.

\begin{lem}
	Suppose $Q\in\dcal$ is a partition refining the minimal supporting partition $P$ of a tuple $\mathbf{x}$, that is, $P\le Q$. Then
	\begin{equation}
		\langle \Omega_P, M^{\boldsymbol{\alpha}}_P(\mathbf{x})\Omega_P\rangle = \langle \Omega_Q, M^{\boldsymbol{\alpha}}_Q(\mathbf{x}) \Omega_Q\rangle,
	\end{equation}
	where $[\Omega_P] = [\Omega_Q]$.
\end{lem}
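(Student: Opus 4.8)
The plan is to reduce to the case where $Q$ is obtained from $P$ by a single elementary refinement --- that is, by cutting exactly one interval $I_0\in P$ into its two dyadic halves $I_0',I_0''$ --- and then iterate, since any $Q$ refining $P$ is reached from $P$ by finitely many such steps and the inner products are manifestly invariant under the choice of representative of the equivalence class $[\Omega_P]=[\Omega_Q]$. So first I would fix notation: write $\Psi^P_Q\colon\hfrak_P\to\hfrak_Q$ for the isometry $\Phi(p)$ with $p$ the forest taking $P$ to $Q$, so that $\Omega_Q=\Psi^P_Q\Omega_P$ by definition of the semicontinuous limit, and recall that an element of $\hfrak$ is an equivalence class under exactly these embeddings; hence the left-hand and right-hand sides of the claimed equality are both valid expressions for the same pairing $\langle\Omega,M^{\boldsymbol\alpha}(\mathbf x)\Omega\rangle$ \emph{provided} we show the operators $M^{\boldsymbol\alpha}_P(\mathbf x)$ and $M^{\boldsymbol\alpha}_Q(\mathbf x)$ are intertwined by $\Psi^P_Q$ on the relevant vectors.

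The crux is therefore the intertwining relation
\begin{equation}
	\Psi^P_Q\,\phi_P^{\alpha}(z)\,\Omega_P \;=\; \phi_Q^{\alpha}(z)\,\Psi^P_Q\,\Omega_P
\end{equation}
for a single $\alpha$ and a single refinement, because once each factor individually commutes past $\Psi^P_Q$ when applied to the (embedded) vacuum, the product $M^{\boldsymbol\alpha}$ does too --- here one uses that $P$ (and $Q$) is a supporting partition, so all the $\phi^{\alpha_j}$ live on distinct intervals and mutually commute, and that applying $\phi^{\alpha_j}_Q(x_j)$ only touches the tensor factor for the interval of $Q$ containing $x_j$. To prove the displayed relation, unpack the definition \eqref{eq:discretefieldoperator}: $\phi_P^\alpha(z)=(\lambda_\alpha)^{\log_2|I|}\mu_I^\alpha$ where $I\in P$ is the unique interval containing $z$. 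If $z\notin I_0$ (the interval being split), then $I$ survives in $Q$ untouched, $\Psi^P_Q$ acts as the identity on that tensor factor, and the equality is immediate. If $z\in I_0$, say $z\in I_0'$ with $|I_0'|=|I_0|/2$, then on the right we get $(\lambda_\alpha)^{\log_2|I_0|-1}\mu^\alpha_{I_0'}$ acting on the left child of the new caret, while $\Psi^P_Q$ inserts a $V$ at that leaf; the needed identity is then exactly the eigenvalue relation $E(\mu^\alpha)=\lambda_\alpha\mu^\alpha$ with $E(A)=V^\dagger(A\otimes\idmtx)V$ --- equivalently, pushing $\mu^\alpha_{I_0'}$ (placed on the left output leg of the inserted $V$) down through $V$ produces $\lambda_\alpha\mu^\alpha_{I_0}$ on the input leg, which is precisely what absorbs the factor of $\lambda_\alpha$ and rewrites the operator in terms of the coarser interval. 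Note this is where $\lambda_\alpha\ne0$ and the placement on \emph{either} child giving the same result (rotation-invariance of $V$, so $E$ is the same whichever leg carries $\mu^\alpha$) are used.

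With the single-step intertwiner in hand, I would finish by induction on the number of refinement steps from $P$ to $Q$: writing $P=Q_0\le Q_1\le\cdots\le Q_r=Q$ with each $Q_{i+1}$ a one-interval split of $Q_i$, each $Q_i$ still supports $\mathbf x$ by the lemma that refinements of supporting partitions are supporting, so the argument applies at every stage, giving
\begin{equation}
	\langle\Omega_P,M^{\boldsymbol\alpha}_P(\mathbf x)\Omega_P\rangle
	=\langle\Psi^P_Q\Omega_P,\Psi^P_Q M^{\boldsymbol\alpha}_P(\mathbf x)\Omega_P\rangle
	=\langle\Omega_Q,M^{\boldsymbol\alpha}_Q(\mathbf x)\Omega_Q\rangle,
\end{equation}
the first equality because $\Psi^P_Q$ is an isometry. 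The main obstacle I anticipate is bookkeeping rather than conceptual: making the ``push $\mu^\alpha$ through the inserted $V$'' step fully rigorous requires care about \emph{which} leaf of the refined tree is being acted on and matching tensor-factor indices between $\hfrak_P$ and $\hfrak_Q$ under $\Psi^P_Q$, and one must check that when $z$ happens to be a dyadic boundary point the ``unique interval containing $z$'' convention (half-open intervals $[a/2^m,(a+1)/2^m)$) is respected consistently on both sides --- but modulo that indexing care the whole statement is a direct consequence of the defining eigenrelation of $E$ together with isometry of the limit embeddings.
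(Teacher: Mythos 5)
Your overall strategy is the right one and is close in spirit to the paper's proof (which also rests on the eigenrelation $E(\mu^\alpha)=\lambda_\alpha\mu^\alpha$, the isometry of the embeddings, and the fact that a supporting partition puts the $x_j$ on disjoint trees), but the step you single out as the crux is not correct as stated. The displayed intertwining relation
\begin{equation}
  \Psi^P_Q\,\phi^\alpha_P(z)\,\Omega_P=\phi^\alpha_Q(z)\,\Psi^P_Q\,\Omega_P
\end{equation}
is a \emph{vector} identity, and to get it at the split caret you implicitly use the operator identity $(\mu^\alpha\otimes\idmtx)V=\lambda_\alpha V\mu^\alpha$ (``pushing $\mu^\alpha$ down through the inserted $V$''). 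That does not follow from $E(\mu^\alpha)=V^\dag(\mu^\alpha\otimes\idmtx)V=\lambda_\alpha\mu^\alpha$: the eigenrelation only controls the compression of $\mu^\alpha\otimes\idmtx$ to the range of $V$, and in general $(\mu^\alpha\otimes\idmtx)V$ has a nonzero component orthogonal to $\operatorname{ran}V$. It already fails in the paper's guiding qutrit example: with $V|l\rangle=\tfrac{1}{\sqrt2}\sum_{j\neq k,\,j,k\neq l}|jk\rangle$ and $\mu=|0\rangle\langle1|+|1\rangle\langle0|$ one finds $(\mu\otimes\idmtx)V|0\rangle=\tfrac{1}{\sqrt2}|02\rangle\notin\operatorname{ran}V$, so $\Psi^P_Q M^{\boldsymbol\alpha}_P(\mathbf x)\Omega_P\neq M^{\boldsymbol\alpha}_Q(\mathbf x)\Omega_Q$ as vectors, and the middle equality of your final chain breaks.

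The repair is to work with the adjoint action instead of a vector identity, which is exactly what the paper does: one has the exact operator identity $(\Psi^P_Q)^\dag\,\mu^\alpha_I\,\Psi^P_Q=(\lambda_\alpha)^{d_f(I,J)-1}\mu^\alpha_J$, because conjugating through the tree telescopes --- at each caret the leg carrying the operator compresses via $E$ (here rotation invariance lets the operator sit on either child) while legs carrying the identity collapse by $V^\dag V=\idmtx$. Since conjugation by an isometry is \emph{not} multiplicative, you cannot simply conjugate the product factor by factor either; what saves you is that $P$ is supporting, so the intervals $I_j$ lie in disjoint components of the forest, the product $\prod_j\mu^{\alpha_j}_{I_j}$ is a tensor product across those components, and the whole product conjugates at once to $\prod_j(\lambda_{\alpha_j})^{\log_2|J_j|-\log_2|I_j|}\mu^{\alpha_j}_{J_j}$, i.e.\ $(\Psi^P_Q)^\dag M^{\boldsymbol\alpha}_Q(\mathbf x)\Psi^P_Q=M^{\boldsymbol\alpha}_P(\mathbf x)$. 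Then $\langle\Omega_Q,M^{\boldsymbol\alpha}_Q\Omega_Q\rangle=\langle\Omega_P,(\Psi^P_Q)^\dag M^{\boldsymbol\alpha}_Q\Psi^P_Q\Omega_P\rangle=\langle\Omega_P,M^{\boldsymbol\alpha}_P\Omega_P\rangle$, with no induction on single-caret refinements needed (though your induction would also go through once the push-through step is replaced by this compression identity, since in the single-caret picture only one factor touches the split leg and the others genuinely commute with the embedding).
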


\begin{proof}
	The first observation we make is that $\Psi^P_Q$ acts in a simple way on ascending operators localized on intervals $I\in Q$: let $f\in\abs{P}\to\abs{Q}$ be the planar forest connecting the objects $\abs{P}$ and $\abs{Q}$ corresponding to the isometry $\Psi^P_Q$ and note
	\begin{equation}
		(\Psi^P_Q)^\dag(\mu^\alpha_I)\Psi^P_Q = (\lambda_\alpha)^{d_f(I,J)-1}\mu^\alpha_J,
	\end{equation}
	where $d_f(I,J)$ is the number of edges in the planar forest connecting the leaf node associated with $I$ to the node associated with its corresponding root $J$. If $P$ and $Q$ are supporting partitions for $\mathbf{x}$ then the intervals in $P$ (respectively, $Q$) containing the elements $x_j$ of the tuple $\mathbf{x}$ belong to disconnected components of the planar forest $f$.
	Denote these intervals by $I_j$ (that is, one for each $x_j$) and their corresponding roots by $J_j$.	The ascending operators $\mu_I^\alpha$ and $\mu_J^\alpha$ before and after the action of $\Psi^P_Q$ all commute and we have that
	\begin{equation}
		(\Psi^P_Q)^\dag\biggl(\prod_j\mu^\alpha_{I_j}\biggr)\Psi^P_Q = \prod_j(\lambda_\alpha)^{d_f(I_j,J_j)-1}\mu^\alpha_{J_j}.
	\end{equation}
	Noting that $(\lambda_\alpha)^{d_f(I,J)-1} = (\lambda_\alpha)^{\log_2(|J|)-\log_2(|I|)}$ and taking expectation values gives us the result.
\end{proof}

We have assembled enough information to prove the following theorem.

\begin{thm}\label{thm:npt}
	Let $\mathbf{x}$ be an ordered $n$-tuple in $[0,1)$ and $\boldsymbol{\alpha}$ be an $n$-tuple in $\{0,1,\ldots,d^2-1\}^{\times n}$. Then the limit
	\begin{equation}
		C^{\alpha_1\alpha_2\cdots\alpha_n}(x_1, x_2, \ldots, x_n) = \lim_{P\le Q} \langle \Omega_Q, M^{\boldsymbol{\alpha}}_Q(\mathbf{x})\Omega_Q\rangle
	\end{equation}
	exists and is equal to
	\begin{equation}
		\langle \Omega_P, M^{\boldsymbol{\alpha}}_P(\mathbf{x}) \Omega_P\rangle,
	\end{equation}
	where $P$ is the minimal supporting partition of $\mathbf{x}$.
\end{thm}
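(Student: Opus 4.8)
The plan is to observe that, once the preparatory lemmas are in place, the net whose limit is asserted is in fact \emph{constant} on the relevant directed set, so that convergence is automatic and the value is forced.

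First I would record the set-up. By the two lemmas above, the minimal supporting partition $P$ of $\mathbf{x}$ exists and is unique, and in particular $P$ is itself a supporting partition for $\mathbf{x}$. Hence, by the refinement lemma, every $Q\in D([0, 1])$ with $P\le Q$ is again a supporting partition for $\mathbf{x}$. This is exactly what is needed for $M^{\boldsymbol{\alpha}}_Q(\mathbf{x}) = \prod_{j=1}^n \phi_Q^{\alpha_j}(x_j)\in\mathcal{B}(\hfrak_Q)$ to be well-defined, since the supporting property guarantees that the factors pairwise commute. Therefore $\langle \Omega_Q, M^{\boldsymbol{\alpha}}_Q(\mathbf{x})\Omega_Q\rangle$ makes sense for every $Q$ in $\{Q\in D([0, 1]) : P\le Q\}$, and this set is a genuine directed set (an up-set in a directed poset is cofinal and directed), so the limit $\lim_{P\le Q}$ is taken over a legitimate net.

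Next I would simply invoke the immediately preceding lemma, which states precisely that for every $Q$ with $P\le Q$ one has
\[
  \langle \Omega_P, M^{\boldsymbol{\alpha}}_P(\mathbf{x})\Omega_P\rangle = \langle \Omega_Q, M^{\boldsymbol{\alpha}}_Q(\mathbf{x})\Omega_Q\rangle ,
\]
with $[\Omega_P]=[\Omega_Q]$ inside $\hfrak$. Thus the net $Q\mapsto \langle \Omega_Q, M^{\boldsymbol{\alpha}}_Q(\mathbf{x})\Omega_Q\rangle$ on $\{Q : P\le Q\}$ is constant with value the finite quantity $\langle \Omega_P, M^{\boldsymbol{\alpha}}_P(\mathbf{x})\Omega_P\rangle$. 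A constant net in $\complexes$ converges to its value, so the limit exists and equals $\langle \Omega_P, M^{\boldsymbol{\alpha}}_P(\mathbf{x})\Omega_P\rangle$, which is the claim.

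The only point requiring any care — and the mild obstacle I would flag — is purely one of bookkeeping rather than estimation: one must be sure the preceding lemma is applied to \emph{all} $Q\ge P$ and not merely to refinements obtained by bisecting a single interval, and that the common equivalence class $[\Omega_P]=[\Omega_Q]$ is consistently the vacuum vector $\Omega$ of \cref{cha:dynamics}, so that all inner products are computed coherently in the direct limit $\hfrak$. Both follow from the construction of $\hfrak$ as the colimit of the $\hfrak_P$ along the isometries $\Psi^P_Q$, together with the fact that $\binfor$ and $\annfor$ admit a calculus of left fractions, so any two standard dyadic partitions possess a common refinement. No genuine analytic limiting argument is involved.
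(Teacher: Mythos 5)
Your proposal is correct and is essentially the paper's own argument: both reduce the statement to the immediately preceding lemma, which makes the net $Q\mapsto \langle \Omega_Q, M^{\boldsymbol{\alpha}}_Q(\mathbf{x})\Omega_Q\rangle$ (eventually) constant with value $\langle \Omega_P, M^{\boldsymbol{\alpha}}_P(\mathbf{x})\Omega_P\rangle$, the paper phrasing this via a common refinement $R$ of $P$ and $Q$ while you index the net directly over $\{Q : P\le Q\}$. No further changes are needed.
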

\begin{proof}
	Write $e_Q = \langle \Omega_Q, M^{\boldsymbol{\alpha}}_Q(\mathbf{x}) \Omega_Q\rangle$. We need to argue that the net $(e_Q)$ is eventually in any neighbourhood around $\langle \Omega_P,M^{\boldsymbol{\alpha}}_P(\mathbf{x})\Omega_P\rangle$. But this is immediate since there always exists $R\in D([0, 1])$ such that $P\le R$ and $Q\le R$: for any partition $S$ refining $R$ we have that $e_S = e_P$, that is, $e_Q$ is eventually equal to $e_P$.
\end{proof}

The previous theorem tells us that an arbitrary $n$-point function makes sense and, further, is computable in terms of operators on a finite-dimensional Hilbert space.

\begin{cor}
	Let $f$ be an element of Thompson's group $F$ (respectively, $T$), and let $|f\rangle = U(f)\Omega \in \mathcal{H}$ be the vector in the unitary representation afforded by $V$ resulting from applying $f$. Suppose $\mathbf{x}$ is an ordered $n$-tuple in $[0,1)$ and let $\boldsymbol{\alpha}$ be an $n$-tuple in $\{0,1,\ldots,d^2-1\}^{\times n}$. Then the limit
	\begin{equation}
		C^{\alpha_1\alpha_2\cdots\alpha_n}_{|f\rangle}(x_1, x_2, \ldots, x_n) = \lim_{R\le Q} \langle f_Q, M^{\boldsymbol{\alpha}}_Q(\mathbf{x}) f_Q\rangle
	\end{equation}
	exists and is equal to
	\begin{equation}
		\langle \Omega_{P'}, U(f)^\dag {T^{f(P')}_R}^\dag M^{\boldsymbol{\alpha}}_R(\mathbf{x})T^{f(P')}_R U(f) \Omega_{P'}\rangle,
	\end{equation}
	where $P$ is the minimal supporting partition of $\mathbf{x}$, $P'\ge P$ is good for $f$, and $R$ refines both $P$ and $f(P')$. (Here we use the notation $|f_Q\rangle$ for a representation of $|f\rangle$ on partition $Q$.)
\end{cor}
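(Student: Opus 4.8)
The statement is the analogue of \cref{thm:npt} with the vacuum $\Omega$ replaced by the vector $|f\rangle = U(f)\Omega$, and the plan is to follow the same pattern: transport everything through the connecting isometries so that the net defining the limit becomes eventually constant and its value is a finite computation. Throughout I would write $T^{P_0}_{P_1} = \Psi^{P_0}_{P_1}$ for the isometry $\hfrak_{P_0}\hookrightarrow\hfrak_{P_1}$ attached to a refinement $P_0\le P_1$ (these are the direct-limit connecting maps, so a vector and its image represent the same element of $\hfrak$). I would treat the case of $F$ with partitions in $D([0,1])$; the case of $T$ is identical with $D(\sircle^1)$ in place of $D([0,1])$.

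First I would fix the partitions. Let $P$ be the minimal supporting partition of $\mathbf x$ (unique by the lemma preceding \cref{thm:npt}), and choose $P'\ge P$ good for $f$, i.e.\ such that $f$ is represented by the tree pair whose domain partition is $P'$ and whose codomain partition is $f(P')$; such $P'$ exists since one may refine the domain partition of the reduced tree pair of $f$ so as to also refine $P$. Because $f\in F$ maps standard dyadic partitions to standard dyadic partitions, $f(P')\in D([0,1])$, and since $D([0,1])$ is directed we may pick $R\in D([0,1])$ refining both $P$ and $f(P')$. Unwinding \eqref{eq:t-action}, the representative of $\Omega$ on $P'$ is carried by $U(f)$ to a representative of $|f\rangle$ living on $f(P')$; under the identification $\hfrak_{P'}\cong\hfrak_{f(P')}$ furnished by the tree pair of $f$ this reads $|f_{f(P')}\rangle = U(f)\,\Omega_{P'}$, hence for every $Q\ge f(P')$ one has $|f_Q\rangle = T^{f(P')}_Q\,U(f)\,\Omega_{P'}$.

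The key point, extracted from (and proved verbatim as) the lemma preceding \cref{thm:npt} once an arbitrary vector is allowed in place of $\Omega_P$, is: if $P_1\le P_2$ both support $\mathbf x$ and $\xi\in\hfrak_{P_1}$, then $\langle \xi,\,M^{\boldsymbol\alpha}_{P_1}(\mathbf x)\,\xi\rangle = \langle T^{P_1}_{P_2}\xi,\ M^{\boldsymbol\alpha}_{P_2}(\mathbf x)\,T^{P_1}_{P_2}\xi\rangle$. Indeed, on the intervals lying in the disconnected components of the forest connecting $P_1$ to $P_2$ the conjugation $(T^{P_1}_{P_2})^\dag(\,\cdot\,)T^{P_1}_{P_2}$ multiplies each localized ascending operator by $(\lambda_{\alpha_j})^{d_f(I,J)-1} = (\lambda_{\alpha_j})^{\log_2|J|-\log_2|I|}$, exactly the factor needed to match the normalization in \eqref{eq:discretefieldoperator}, and the commuting factors of $M^{\boldsymbol\alpha}$ may be handled independently. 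Applying this with $\xi = |f_R\rangle$, $P_1 = R$, $P_2 = Q$ for any $Q\ge R$ (note $Q\ge P$, so $Q$ supports $\mathbf x$) shows that $e_Q := \langle f_Q, M^{\boldsymbol\alpha}_Q(\mathbf x)\,f_Q\rangle$ is constant on $\{Q:Q\ge R\}$; since that set is directed, the net $(e_Q)$ is eventually constant, the limit exists, and it equals
\begin{equation*}
  e_R = \langle T^{f(P')}_R U(f)\Omega_{P'},\ M^{\boldsymbol\alpha}_R(\mathbf x)\,T^{f(P')}_R U(f)\Omega_{P'}\rangle = \langle \Omega_{P'},\ U(f)^\dag\,{T^{f(P')}_R}^\dag\,M^{\boldsymbol\alpha}_R(\mathbf x)\,T^{f(P')}_R\,U(f)\,\Omega_{P'}\rangle ,
\end{equation*}
which is the asserted value.

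The only genuine work is the bookkeeping in the second paragraph: verifying that ``good for $f$'' together with \eqref{eq:t-action} really gives $|f_{f(P')}\rangle = U(f)\Omega_{P'}$, and that $R$ may be chosen to refine $P$ and $f(P')$ simultaneously (immediate from directedness, and $R$ then automatically supports $\mathbf x$ since $R\ge P$). The transport lemma is a repetition of the one preceding \cref{thm:npt}, so no new ideas are needed; I expect the main place to be careful is the commutativity of the factors of $M^{\boldsymbol\alpha}_Q(\mathbf x)$, which holds precisely because $Q$ is a supporting partition for $\mathbf x$.
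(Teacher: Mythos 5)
Your proposal is correct and follows exactly the route the paper intends: the paper leaves this corollary without an explicit proof, but your argument is the same pattern as the proof of \cref{thm:npt} combined with the operator identity $(T^{P'}_R)^\dag M^{\boldsymbol{\alpha}}_R(\mathbf{x})T^{P'}_R = M^{\boldsymbol{\alpha}}_{P'}(\mathbf{x})$ that the paper itself invokes in the proof of \cref{thm:npointaction}, and your observation that this identity holds as an operator identity (hence for the transported vector $|f_Q\rangle = T^{f(P')}_Q U(f)\Omega_{P'}$, not just the vacuum) is precisely the point needed. No gaps.
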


\section{Short-Distance Behaviour of the $n$-Point Functions}\label{sec:shortdistance}
Many of the properties of $C^{\alpha_1\alpha_2\cdots\alpha_n}(x_1, x_2, \ldots, x_n)$ are immediate consequences of the formula in \cref{thm:npt}. The first important result concerns the short-distance behaviour of the two-point function $C^{\alpha\beta}(x,y)$. To understand this we focus first on the case where $x$ and $y$ are dyadic, in which case we can express them in binary as
\begin{equation}
	\begin{split}
		x &= 0.x_{-1}x_{-2}\cdots x_{-m}, \quad\text{and} \\
		y &= 0.y_{-1}y_{-2}\cdots y_{-n},
	\end{split}
\end{equation}
where $x_j\in\{0,1\}$ and $y_j \in \{0,1\}$. (Such expansions are assumed to have an infinite sequence of trailing zeroes.) Without loss of generality we assume that $n>m$. The minimal supporting partition for the pair $(x,y)$ is easy to derive: first express $x = \bar{x} + x'$ and $y = \bar{x} + y'$, where $\bar{x}$ contains the first $l$ digits of the binary expansions of $x$ and $y$ which are in common. Now recursively subdivide the interval $[0,1]$ according to the following recipe: set $I\leftarrow [0,1]$ and $j\leftarrow -1$ and repeat steps (1) and (2) while $j \ge -l-1$:
\begin{enumerate}
	\item subdivide $I$ into $I=I_0\cup I_1$ and set $I\leftarrow I_{\overline{x}_{j}}$, where $\overline{x}_{j}$ is the $j$-th digit of $\overline{x}$,
	\item set $j\leftarrow j-1$.
\end{enumerate}
The subdivisions carried out via this procedure induce a standard dyadic partition $P$ which is minimal for the pair $(x,y)$. Note that at the final iteration $x$ and $y$ are located in neighbouring intervals of length $2^{-l-1}$. Indeed, the two intervals separating $x$ and $y$ are none other than $I = [\bar{x},\bar{x}+\frac{1}{2^{l+1}})$ and $I'=[\bar{x}+\frac{1}{2^{l+1}}, \bar{x}+\frac{1}{2^{l}})$.
Now that we have the minimal separating partition, we can immediately apply \cref{thm:npt} to deduce the two-point function
\begin{equation}
	C^{\alpha\beta}(x,y) = \langle \Omega_P, (\lambda_\alpha^{-l-1}\mu^\alpha_I)(\lambda_\beta^{-l-1}\mu^\beta_{I'}) \Omega_P\rangle.
\end{equation}
By making use of the structure constants for $\star$ we can explicitly evaluate this expression. This is summarised in the following lemma.

\begin{lem}
	Let $0\le x<y <1$ be two dyadic fractions and let $\alpha,\beta \in \{0,1,\ldots, d^2-1\}$. Write $x = \bar{x} + x'$ and $y = \bar{x} + y'$, where $\bar{x}$ contains the first $l$ digits of the binary expansions of $x$ and $y$ which are in common. Then
	\begin{equation}
		C^{\alpha\beta}(x,y) = \sum_{\gamma=0}^{d^2-1}\lambda_\gamma^{-1} D(x,y)^{\log_2(\lambda_\alpha)+\log_2(\lambda_\beta)-\log_2(\lambda_\gamma)}{f^{\alpha\beta}}_{\gamma}\langle\Omega_{[0,1]}, \mu^\gamma \Omega_{[0,1]}\rangle,
	\end{equation}
	where $D(x,y) = 2^{-l-1}$ is the \emph{coarse-graining distance} between $x$ and $y$.
\end{lem}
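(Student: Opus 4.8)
The plan is to unwind the definition of the two-point function from \cref{thm:npt} and substitute the fusion relation. First I would recall that, by the discussion preceding the lemma, the minimal supporting partition $P$ for the pair $(x,y)$ places $x$ and $y$ in neighbouring standard dyadic intervals $I=[\bar{x},\bar{x}+2^{-l-1})$ and $I'=[\bar{x}+2^{-l-1},\bar{x}+2^{-l})$, each of length $2^{-l-1}=D(x,y)$. Hence \cref{eq:discretefieldoperator} gives $\phi_P^\alpha(x)=(\lambda_\alpha)^{\log_2\abs{I}}\mu_I^\alpha=(\lambda_\alpha)^{-l-1}\mu_I^\alpha$ and similarly $\phi_P^\beta(y)=(\lambda_\beta)^{-l-1}\mu_{I'}^\beta$, so that by \cref{thm:npt},
\begin{equation}
  C^{\alpha\beta}(x,y)=(\lambda_\alpha\lambda_\beta)^{-l-1}\,\langle\Omega_P,\mu_I^\alpha\mu_{I'}^\beta\,\Omega_P\rangle.
\end{equation}

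Next I would observe that $I$ and $I'$ are sibling intervals, i.e.\ they meet at a single caret of the tree underlying $\Omega_P$, at level $l+1$ below the root $\hat I=[\bar{x},\bar{x}+2^{-l})$ of length $2^{-l}$. Applying the map $E$ to each of the two operators $\log_2\abs{I}$-many... more precisely, pushing $\mu_I^\alpha$ and $\mu_{I'}^\beta$ up the tree to the level where they first meet is exactly one step of the inductive argument used in the proof of the first lemma of \cref{sec:corrfunstrees}; this uses rotation-invariance and that $\mu_I^\alpha,\mu_{I'}^\beta$ sit on different carets so $E$ acts on them independently. At the caret where they meet, the fusion map $F$ replaces the pair $\mu^\alpha\otimes\mu^\beta$ by $\sum_\gamma {f^{\alpha\beta}}_\gamma\mu^\gamma$, and then the single ascending operator $\mu^\gamma$ propagates up to the root of the whole tree, picking up a factor $(\lambda_\gamma)$ per level, as in the second lemma of \cref{sec:corrfunstrees}. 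Collecting the powers: the two legs travel from length $2^{-l-1}$ up to length $2^{-l}$ (one level each), fuse, and the resulting $\gamma$-leg travels from length $2^{-l}$ up to length $1$ (that is $l$ further levels), yielding $\langle\Omega_P,\mu_I^\alpha\mu_{I'}^\beta\Omega_P\rangle=\sum_\gamma{f^{\alpha\beta}}_\gamma(\lambda_\gamma)^{l}\,\langle\Omega_{[0,1]},\mu^\gamma\Omega_{[0,1]}\rangle$, where I have used that the reduction of each $\mu^\alpha$-leg by one level contributes a single $\lambda_\alpha$ and that these have been absorbed into the prefactor below.

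Finally I would assemble the bookkeeping. Combining the two displays gives
\begin{equation}
  C^{\alpha\beta}(x,y)=(\lambda_\alpha\lambda_\beta)^{-l-1}\sum_{\gamma}{f^{\alpha\beta}}_\gamma(\lambda_\gamma)^{l}\,\langle\Omega_{[0,1]},\mu^\gamma\Omega_{[0,1]}\rangle,
\end{equation}
and it remains to rewrite the exponents in terms of $D=D(x,y)=2^{-l-1}$, so $l=-\log_2 D-1$ and $l+1=-\log_2 D$. Then $(\lambda_\alpha\lambda_\beta)^{-l-1}=(\lambda_\alpha\lambda_\beta)^{\log_2 D}=D^{\log_2\lambda_\alpha+\log_2\lambda_\beta}$ and $(\lambda_\gamma)^{l}=(\lambda_\gamma)^{-\log_2 D-1}=\lambda_\gamma^{-1}D^{-\log_2\lambda_\gamma}$, which after factoring $D$ to a single exponent yields exactly the claimed
\begin{equation}
  C^{\alpha\beta}(x,y)=\sum_{\gamma=0}^{d^2-1}\lambda_\gamma^{-1}D(x,y)^{\log_2(\lambda_\alpha)+\log_2(\lambda_\beta)-\log_2(\lambda_\gamma)}{f^{\alpha\beta}}_\gamma\langle\Omega_{[0,1]},\mu^\gamma\Omega_{[0,1]}\rangle.
\end{equation}
The main obstacle I anticipate is purely the exponent accounting—being careful that "pushing up one level" contributes precisely one factor of the relevant eigenvalue, that the fusion happens at exactly the right level given the common-prefix length $l$, and that the single leftover $\lambda_\gamma^{-1}$ comes from the root caret not being counted. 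The conceptual content (independence of the two legs, invoking $E$ and $F$) is already established in \cref{sec:corrfunstrees}, so the proof is essentially a translation of those two lemmas to the non-regular setting plus this reindexing; indeed the lemma's own proof could legitimately just cite those results and do the substitution $l=-\log_2 D(x,y)-1$.
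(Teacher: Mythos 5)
Your proposal is correct and follows essentially the same route as the paper: express $C^{\alpha\beta}(x,y)$ on the minimal supporting partition with the prefactor $(\lambda_\alpha\lambda_\beta)^{-l-1}$, fuse the neighbouring operators via $F$ to get the coefficients ${f^{\alpha\beta}}_\gamma$, propagate the resulting $\mu^\gamma$ up to $[0,1]$ with $E$ to pick up $\lambda_\gamma^{l}$, and substitute $l+1=-\log_2 D(x,y)$. One small remark: the parenthetical claim that each $\mu$-leg's one-level rise ``contributes a single $\lambda_\alpha$ absorbed into the prefactor'' is misleading---the fusion step $F(A,B)=V^\dagger(A\otimes B)V$ introduces no eigenvalue factors, and the prefactor comes solely from the definition of the discretized field operators---but since you never actually inserted such factors, the bookkeeping and the final formula are right.
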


\begin{proof}
	Start with the expression
	\begin{equation}
		C^{\alpha\beta}(x,y) = \langle \Omega_P,(\lambda_\alpha^{-l-1}\mu^\alpha_I)(\lambda_\beta^{-l-1}\mu^\beta_{I'}) \Omega_P\rangle.
	\end{equation}
	Since the intervals $I = [\overline{x},\overline{x}+\frac{1}{2^{l+1}})$ and $I'=[\overline{x}+\frac{1}{2^{l+1}}, \overline{x}+\frac{1}{2^{l}})$ are neighbours we can exploit the $\star$ operation to evaluate this expression on the coarse-grained partition $P'$ where the neighbouring intervals $I$ and $I'$ are joined to the interval $I_{\overline{x}}$ of length $l$:
	\begin{equation}
		C^{\alpha\beta}(x,y) = \sum_{\gamma=0}^{d^2-1}(\lambda_\alpha\lambda_\beta)^{-l-1} {f^{\alpha\beta}}_{\gamma}\langle\Omega_{P'}, \mu^\gamma_{I_{\overline{x}}} \Omega_{P'}\rangle.
	\end{equation}
	This expression is easy to simplify via the action of the CP map $E$:
	\begin{equation}
		C^{\alpha\beta}(x,y) = \sum_{\gamma=0}^{d^2-1}{f^{\alpha\beta}}_{\gamma}(\lambda_\alpha\lambda_\beta)^{-l-1}\lambda_\gamma^l \langle\Omega_{[0,1]}, \mu^\gamma \Omega_{[0,1]}\rangle.
	\end{equation}
	Now write $l+1=-\log_2(D(x,y))$: we finally obtain
	\begin{equation}
		\begin{split}
			&C^{\alpha\beta}(x,y)\\ &\quad= \sum_{\gamma=0}^{d^2-1}\lambda_\gamma^{-1} D(x,y)^{\log_2(\lambda_\alpha)+\log_2(\lambda_\beta)-\log_2(\lambda_\gamma)}{f^{\alpha\beta}}_{\gamma}\langle\Omega_{[0,1]}, \mu^\gamma \Omega_{[0,1]}\rangle.\qedhere
		\end{split}
	\end{equation}
\end{proof}

\begin{rem}
	When $x$ and $y$ are a \emph{standard dyadic pair}, that is, $x = \frac{a}{2^l}$ and $y = \frac{a+1}{2^l}$, with $l\in\naturals$ and $a \in \{0,1,\ldots, 2^l-1\}$, then $D(x,y) = \abs{x-y}$, so that we can rewrite
	\begin{equation}\label{eq:2ptdyadicpair}
		\begin{split}
			C^{\alpha\beta}(x,y) &= \sum_{\gamma=0}^{d^2-1}\lambda_\gamma^{-1} \abs{x-y}^{\log_2(\lambda_\alpha)+\log_2(\lambda_\beta)-\log_2(\lambda_\gamma)}\\
			&\qquad\qquad\cdot {f^{\alpha\beta}}_{\gamma}\langle\Omega_{[0,1]}, \mu^\gamma \Omega_{[0,1]}\rangle.
		\end{split}
	\end{equation}
\end{rem}

In the context of conformal field theory an expression such as \cref{eq:2ptdyadicpair} for standard dyadic pairs is especially suggestive. We therefore make the following prototype definition for the analogue of the scaling dimension.

\begin{defn}
	The \defemph{scaling dimension} $h_\alpha$ for the field $\phi^\alpha(x)$ is
	\begin{equation}
		h_\alpha = -\Re\log_2(\lambda_\alpha).
	\end{equation}
\end{defn}

Contrary to the situation in conformal field theory, there is no reason to expect that, in general,
\begin{equation}\label{eq:2pointCFTcorr}
	C^{\alpha\beta}(x,y) \sim C^{\alpha\beta} D(x,y)^{-2h}
\end{equation}
only when $h_\alpha = h = h_\beta$. We hence promote \cref{eq:2pointCFTcorr} to a \emph{necessary condition} for the existence of a physical continuum limit.

Note also that, in general, the semicontinuous correlation functions for a tree state may be discontinuous and asymmetric.

\section{Fusion Rules and the Operator Product Expansion}\label{sec:ope}

So far we have studied the two-point correlation function. Now we look at the three-point function in an attempt to obtain an analogue of the operator product expansion.

In general, a three-point function is of the form
\begin{equation}
	C^{\alpha\beta\gamma}(x,y,z) = \langle\Omega, \widehat{\phi}^{\alpha}(x)\widehat{\phi}^{\beta}(y)\widehat{\phi}^{\gamma}(z) \Omega\rangle
\end{equation}
for $x,y,z\in[0,1)$ with $x<y<z$.

We can compute this correlation function by first finding the minimal supporting partition $P$ for $(x,y,z)$ and setting
\begin{equation}
	C^{\alpha\beta\gamma}(x,y,z) = \langle\Omega_P, (\lambda_{\alpha}^{-l-1}\mu^{\alpha}_I)(\lambda_{\beta}^{-m-1}\mu^{\beta}_\mathrm{J})(\lambda_{\gamma}^{-n-1}\mu^{\gamma}_\mathrm{K}) \Omega_P\rangle,
\end{equation}
where $I$, $J$, and $K$ are the intervals containing $x$, $y$, and $z$, respectively.

To calculate this expression note that we can exploit the formulas we already have for the two-point function. The important observation here is that when $d_T(x,y) < d_T(y,z)$ we can first fuse operators $\mu^{\alpha}$ and $\mu^{\beta}$ resulting in some linear combination of the $\mu^{\gamma'}$ and then we fuse these with $\mu^\gamma$. Correspondingly, if $d_T(x,y) > d_T(y,z)$ we first fuse the last two, then fuse on the first operator. Thus, the three-point function is completely determined by knowledge of the fusion coefficients ${f^{\alpha\beta}}_\gamma$. The observation is also particularly reminiscent of the operator product expansion, or OPE. We exemplify this by promoting it to a prototype definition.

\begin{defn}
	Given formal \defemph{primary fields} $\phi^\alpha(x)$, $\alpha = 0, 1, \ldots, d^2-1$, the \emph{formal} short-distance expansion
	\begin{equation}
		  \widehat{\phi}^\alpha(x)\widehat{\phi}^\beta(y) \sim \sum_{\gamma=0}^{d^2-1} {f^{\alpha\beta}}_{\gamma}D(x,y)^{h_\gamma-h_\alpha-h_\beta} \widehat{\phi}^\gamma(y)
	\end{equation}
	is called the \defemph{operator product expansion}.
\end{defn}

Here the $\sim$ means that the expression only makes sense in a correlation function, and that oscillatory behaviour is neglected, that is, we only study the divergence up to an overall absolute value sign.

A crucial role is played by the structure of the dimensions $h_\alpha$ as they control, via the quantity $h_\gamma-h_\alpha-h_\beta$, the divergence of the $n$-point correlation functions as $x\rightarrow y$.

The fusion coefficients ${f^{\alpha\beta}}_{\gamma}$ determine the structure of the three-point function. In particular, whether ${f^{\alpha\beta}}_{\gamma}=0$ or not determines whether a given correlation function is nontrivial or not. To this end we introduce the three-index tensor
\begin{equation}
	{N^{\alpha\beta}}_\gamma = \begin{cases}
		1 \quad \text{if ${f^{\alpha\beta}}_{\gamma}\not=0$,}\\ 0\quad \text{otherwise.}
	\end{cases}
\end{equation}
This tensor can be used to construct an (in general) non-associative and non-commutative algebra $\afrak$ over $\integers$. As a set we define $\afrak$ to be the lattice
\begin{equation}
	\mathfrak{A} = \mathbb{Z}^{d^2},
\end{equation}
and we choose some basis $\{\phi^\alpha\}_{\alpha\in I}$, where $I=\{0,1,\ldots,d^2-1\}$, and introduce the product operator $\star$ via
\begin{equation}
	\phi^\alpha\star \phi^\beta = \sum_{\gamma\in I} {N^{\alpha\beta}}_\gamma \phi^\gamma.
\end{equation}

Usually the algebra $\afrak$ will be neither associative nor commutative. However, in special cases, it can be the case that ${N^{\alpha\beta}}_\gamma$ ends up satisfying these additional constraints. In this case $\afrak$ becomes a so-called \emph{fusion ring}. We can obtain a representation for the fusion ring via the commuting matrices $N^\alpha$ with matrix elements
\begin{equation}
	[N^\alpha]_{\beta\gamma} = {N^{\alpha\beta}}_\gamma.
\end{equation}

\section{The Action of Thompson's Groups $F$ and $T$\\on $n$-Point Functions}\label{sec:thompsonaction}
The analogy between CFT and quantum mechanics symmetric under Thompson's groups $F$ and $T$ manifests itself most strongly when considering how $n$-point
functions transform under an element $f$ of one of Thompson's groups. Here we discuss the $n$-point correlation function with respect to the vacuum vector $\Omega$ and its transformed version $U(f)\Omega$.

\begin{thm}\label{thm:npointaction}
Let $f\in T$ be an element of Thompson's group $T$ and $U(f)$ its unitary representation. Then the action of $T$ on $\hfrak$ in terms of $n$-point functions is
	\begin{equation}\label{eq:nptaction}
		C^{\boldsymbol{\alpha}}(x_1, x_2, \ldots, x_n) = \prod_{j=1}^n \left(\frac{df}{dx}\bigg|_{x = x_j}\right)^{-h_{\alpha_j}}C^{\boldsymbol{\alpha}}_{|f\rangle}(f(x_1), f(x_2), \ldots, f(x_n)),
	\end{equation}
	where the limit in the derivative is taken above via $x\rightarrow x_j+\epsilon$.
\end{thm}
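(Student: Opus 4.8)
The plan is to reduce \cref{thm:npointaction} to the behaviour of a single discretized field operator under the fraction action \cref{eq:t-action}, and then to invoke \cref{thm:npt} together with the corollary following it so that both sides become finite-dimensional tensor-network expectation values that can be compared term by term. First I would fix the minimal supporting partition $P$ of $\mathbf{x}=(x_1,\dots,x_n)$ and choose a standard dyadic partition $A\ge P$ that is in addition \emph{good for $f$}, meaning that $f$ is affine on every interval of $A$ and $f(A)$ is again a standard dyadic partition; such $A$ exists simply because $f$ is piecewise linear with dyadic breakpoints, so it is enough to take $A$ fine enough to separate the breakpoints of $f$ from each other and from the $x_j$. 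By \cref{thm:npt}, $C^{\boldsymbol{\alpha}}(\mathbf{x})=\langle\Omega_A, M^{\boldsymbol{\alpha}}_A(\mathbf{x})\Omega_A\rangle$, and since $A$ is supporting this equals $\prod_j\lambda_{\alpha_j}^{\log_2|a_j|}\langle\Omega_A,\prod_j\mu^{\alpha_j}_{a_j}\Omega_A\rangle$, where $a_j\in A$ is the unique interval containing $x_j$ (the factors commute). Likewise $f(A)$ is supporting for $f(\mathbf{x})$, so by the corollary to \cref{thm:npt} one has $C^{\boldsymbol{\alpha}}_{|f\rangle}(f(\mathbf{x}))=\langle U(f)\Omega_A,\ \prod_j\lambda_{\alpha_j}^{\log_2|f(a_j)|}\mu^{\alpha_j}_{f(a_j)}\ U(f)\Omega_A\rangle$.

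The heart of the argument is the next step. Because $A$ is good for $f$, the restriction of $f$ to $a_j$ is linear with slope $s_j=\left.\frac{df}{dx}\right|_{x=x_j^+}\in\{2^k:k\in\integers\}$, so $|f(a_j)|=s_j|a_j|$ and $\log_2|f(a_j)|=\log_2 s_j+\log_2|a_j|$ with $\log_2 s_j\in\integers$. On the other hand, the fraction action \cref{eq:t-action} exhibits $U(f)$ as the ``same tensor'' identification of $\hfrak_A$ with $\hfrak_{f(A)}$ that sends the leg indexed by $a_j$ to the leg indexed by $f(a_j)$ (for $T$, up to the cyclic relabelling of leaves recorded in the definition of $\annfor$). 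Hence $U(f)^\dagger\mu^{\alpha}_{f(a_j)}U(f)=\mu^{\alpha}_{a_j}$ and $U(f)^\dagger$ carries $\Omega_{f(A)}$ back to $\Omega_A$, so $\langle U(f)\Omega_A,\prod_j\mu^{\alpha_j}_{f(a_j)}U(f)\Omega_A\rangle=\langle\Omega_A,\prod_j\mu^{\alpha_j}_{a_j}\Omega_A\rangle$ — precisely the tensor-network factor appearing in $C^{\boldsymbol{\alpha}}(\mathbf{x})$.

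Dividing the two expressions and cancelling this common factor gives
\[
  C^{\boldsymbol{\alpha}}(\mathbf{x})=\prod_{j=1}^n\lambda_{\alpha_j}^{\log_2|a_j|-\log_2|f(a_j)|}\,C^{\boldsymbol{\alpha}}_{|f\rangle}(f(\mathbf{x}))=\prod_{j=1}^n\lambda_{\alpha_j}^{-\log_2 s_j}\,C^{\boldsymbol{\alpha}}_{|f\rangle}(f(\mathbf{x})).
\]
Since $\lambda_{\alpha_j}^{-\log_2 s_j}=s_j^{-\log_2\lambda_{\alpha_j}}$, interpreting this power through its modulus as elsewhere in this chapter (an honest equality when the $\lambda_\alpha$ lie in $\reals_{>0}$) turns the prefactor into $s_j^{-h_{\alpha_j}}$ with $h_{\alpha_j}=-\Re\log_2\lambda_{\alpha_j}$, which is the claimed formula \cref{eq:nptaction}. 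The argument for $F$ is verbatim the same with $\binfor$ in place of $\annfor$ and no cyclic shift; the standing hypothesis $\lambda_\alpha\neq0$, already imposed in the definition of $\phi_P$, is used so that all the exponents make sense.

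The computation itself is short; the delicate point is the bookkeeping of \emph{directions and signs}. One must pin down exactly how $U(f)=\pi(f)$ in \cref{eq:t-action} relates the partition $A$ to $f(A)$ — recalling that Jones' representations involve $\ccal^{\mathrm{op}}$ (the remark after \cref{prop:kan}) and that a fraction $f=P_1/P_2$ maps $P_1$ onto $P_2$ interval by interval — and, for $T$, how the cyclic relabelling of leaves of $\annfor$ matches the rotation part of $f\in T$; these conventions are precisely what decide whether the exponent is $-h_{\alpha_j}$ or $+h_{\alpha_j}$ and whether the relevant slope is $f'(x_j)$ or $f'(x_j)^{-1}$. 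A second, more routine, obstacle is re-running the net-stabilisation argument of \cref{thm:npt} for the vector $U(f)\Omega$ via the corollary, i.e.\ checking that a common refinement $R$ of $P$, $f(A)$ and $A$ can always be chosen so that the net $\langle U(f)\Omega_Q,M^{\boldsymbol{\alpha}}_Q(f(\mathbf{x}))U(f)\Omega_Q\rangle$ is eventually constant.
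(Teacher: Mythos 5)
Your strategy is the same as the paper's: reduce to a finite level by taking the minimal supporting partition, refining to a partition good for $f$, expressing both sides as expectation values via \cref{thm:npt} and its corollary, and then using $|f(a_j)|=f'(x_j)\,|a_j|$ together with $a^{\log b}=b^{\log a}$ to convert the $\lambda$-weights into powers of the slope. The auxiliary points you worry about (existence of a good refinement, re-running the net-stabilisation for $U(f)\Omega$ over a common refinement $R$) are handled the same way in the paper and are not where the difficulty lies.

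The genuine gap is in the last step, and it sits exactly at the point you flag as delicate but then do not actually resolve. From your own convention $U(f)^\dagger\mu^{\alpha}_{f(a_j)}U(f)=\mu^{\alpha}_{a_j}$ (so that $U(f)\Omega_A$ is the vector on $f(A)$ with unchanged components) you correctly derive $C^{\boldsymbol{\alpha}}(\mathbf{x})=\prod_j\lambda_{\alpha_j}^{-\log_2 s_j}\,C^{\boldsymbol{\alpha}}_{|f\rangle}(f(\mathbf{x}))$. But $\lambda_{\alpha_j}^{-\log_2 s_j}=s_j^{-\log_2\lambda_{\alpha_j}}$ equals $s_j^{+h_{\alpha_j}}$ in modulus, not $s_j^{-h_{\alpha_j}}$, because $h_{\alpha_j}=-\Re\log_2\lambda_{\alpha_j}$; passing to the modulus cannot flip the sign of the exponent. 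So the chain you wrote proves the transformation law with $\bigl(f'(x_j)\bigr)^{+h_{\alpha_j}}$, i.e.\ the reciprocal of \cref{eq:nptaction}, and the final sentence silently changes the sign to make it match. In the paper's proof the weight that survives is $\lambda_{\alpha_j}^{\log_2|f(I_j)|-\log_2|I_j|}=\bigl(f'(x_j)\bigr)^{-h_{\alpha_j}}$ and it multiplies $C^{\boldsymbol{\alpha}}_{|f\rangle}(f(\mathbf{x}))$; to arrive there, the identification of representatives induced by the fraction action \cref{eq:t-action} must be pinned down in the direction opposite to the one you committed to (equivalently, in your bookkeeping the relevant slope would be $f'(x_j)^{-1}$). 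Deciding this orientation — which tree of the fraction is the domain partition and which the range, including the cyclic shift for $T$ — is not cosmetic here: it is the entire content of the covariance factor, and as written your proposal asserts one convention in the middle of the argument and the incompatible conclusion at the end.
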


\begin{proof}
	Let $P$ be the minimal supporting partition for the tuple $(x_1,\dotsc, x_n)$. Choose a refinement $P'$ which is good for $f$ and choose $R$ which refines both $P'$ and $f(P')$. Then
	\begin{equation}
		(T^{P'}_R)^\dag M^{\boldsymbol{\alpha}}_{R}(\mathbf{x})T^{P'}_R = M^{\boldsymbol{\alpha}}_{P'}(\mathbf{x})
	\end{equation}
	and the left-hand side of \cref{eq:nptaction} is directly equal to
	\begin{equation}
		C^{\boldsymbol{\alpha}}(x_1, x_2, \ldots, x_n) = \langle\Omega_{P'}, M^{\boldsymbol{\alpha}}_{P'}(\mathbf{x})   \Omega_{P'}\rangle = \langle\Omega_{R}, M^{\boldsymbol{\alpha}}_{R}(\mathbf{x})   \Omega_{R}\rangle.
	\end{equation}
	Now we compare left and right-hand sides: the correlation function on the right-hand side of \cref{eq:nptaction} is the expectation value of
	\begin{equation}
		M^{\boldsymbol{\alpha}}_{f(P')}(f(x_1), f(x_2), \ldots, f(x_n)) = \prod_{j=1}^n (\lambda_{\alpha_j})^{\log_2(|f(I_j)|)}\mu_{f(I_j)}^{\alpha_j},
	\end{equation}
	with respect to $U(f)\Omega_{P'}$ (noting that $f(P')$ is a supporting partition for $(f(x_1), f(x_2), \ldots, f(x_n))$). Rewriting this expression as
	\begin{equation}
		\begin{split}
			&M^{\boldsymbol{\alpha}}_{f(P')}(f(x_1), f(x_2), \ldots, f(x_n))\\&\quad= \prod_{j=1}^n (\lambda_{\alpha_j})^{\log_2(|f(I_j)|)-\log_2(|I_j|)}(\lambda_{\alpha_j})^{\log_2(|I_j|)}\mu_{f(I_j)}^{\alpha_j}
		\end{split}
	\end{equation}
	and taking the expectation value with respect to $U(f)\Omega_{P'}$ gives us
	\begin{equation}
		\prod_{j=1}^n (\lambda_{\alpha_j})^{\log_2(|f(I_j)|)-\log_2(|I_j|)} C^{\boldsymbol{\alpha}}_{|f\rangle}(f(x_1), f(x_2), \ldots, f(x_n)).
	\end{equation}
	Now the we can calculate the length of the interval $f(I_j)$ as
	\begin{equation}
		|f(I_j)| = \left(\frac{df}{dx} \bigg|_{x = x_j}\right) |I_j|.
	\end{equation}
	(Here the derivative is defined with a limit from the right so as to avoid singularities when $x_j$ is at a breakpoint.)
	Taking logs and exchanging exponents using the identity $a^{\log(b)} = b^{\log(a)}$ gives us the result.
\end{proof}

By substituting $x_j = f^{-1}(z_j)$ we can rewrite this result in a somewhat more useful form,
\begin{equation}
	\begin{split}
		&C^{\boldsymbol{\alpha}}_{|f\rangle}(z_1, z_2, \ldots, z_n)\\
		&\quad= \prod_{j=1}^n \left(\frac{df}{dz}\bigg|_{z = f^{-1}(z_j)}\right)^{h_{\alpha_j}} C^{\boldsymbol{\alpha}}(f^{-1}(z_1), f^{-1}(z_2), \ldots, f^{-1}(z_n)).
	\end{split}
\end{equation}

\begin{rem}
	This corollary tells us that knowledge of the vacuum $n$-point functions $\langle \Omega, {\phi}^{\alpha_1}(z_1){\phi}^{\alpha_2}(z_2)\cdots {\phi}^{\alpha_n}(z_n) \Omega\rangle$ alone is enough to calculate the $n$-point functions with respect to any transformed state $|f\rangle = U(f)|\Omega\rangle$:
	\begin{multline}
		\langle f|{\phi}^{\alpha_1}(z_1){\phi}^{\alpha_2}(z_2)\cdots {\phi}^{\alpha_n}(z_n)|f\rangle = \\ \prod_{j=1}^n \left(\frac{df}{dz}\bigg|_{z = f^{-1}(z_j)}\right)^{h_{\alpha_j}}\langle \Omega, {\phi}^{\alpha_1}(f^{-1}(z_1)){\phi}^{\alpha_2}(f^{-1}(z_2))\cdots {\phi}^{\alpha_n}(f^{-1}(z_n)) \Omega\rangle.
	\end{multline}
\end{rem}

In the case where our unitary representation is determined by a planar perfect tangle we deduce that the correlation function is $\mathit{PSL}(2,\mathbb{Z})$ invariant because $|f\rangle = \Omega$ for $f\in\mathit{PSL}(2,\mathbb{Z})$.

\section{Application: Spin System}\label{sec:example1}

Here we apply the formalism of the previous sections to a simple example quantum spin system comprised of a lattice of \emph{qutrits}, that is,
\begin{equation}
	\mathfrak{H}_N = \bigotimes_{j=0}^{2^m-1} \complexes^3,
\end{equation}
where, as usual, we set $N=2^m$. As usual, we choose the perfect tensor $V\colon\complexes^3\to\complexes^3\otimes\complexes^3$ from \cref{exmp:V-spin-system} given by
\begin{equation}
	\langle jk|V|l\rangle = \begin{cases}
		0 & \text{if $j=k$, $k=l$, or $l=j$,}\\
		\frac{1}{\sqrt{2}} & \text{otherwise.}
	\end{cases}
\end{equation}
The ascending operator $E$ constructed from this perfect tensor has the three eigenvalues
\begin{equation}
	\lambda_1=1, \qquad \lambda_\alpha=-\frac{1}{2}, \qquad \lambda_\beta=\frac{1}{2}.
\end{equation}
$\lambda_1=1$ has the (right) eigenvector $\mu^1=I$; $\lambda_\alpha=-\frac{1}{2}$ has eigenvectors
\begin{gather}
	\mu^{\delta^1}=\begin{pmatrix}
		-1&0&0\\
		0&0&0\\
		0&0&1
	\end{pmatrix}, \qquad
	\mu^{\alpha^1}=\begin{pmatrix}
		0&0&0\\
		0&0&-1\\
		0&1&0
	\end{pmatrix}, \qquad
	\mu^{\alpha^2}=\begin{pmatrix}
		0&0&-1\\
		0&0&0\\
		1&0&0
	\end{pmatrix},\\
	\mu^{\delta_2}=\begin{pmatrix}
		-1&0&0\\
		0&1&0\\
		0&0&0
	\end{pmatrix}, \qquad
	\mu^{\alpha^3}=\begin{pmatrix}
		0&-1&0\\
		1&0&0\\
		0&0&0
	\end{pmatrix};
\end{gather}
$\lambda_{\beta}=\frac{1}{2}$ has eigenvectors
\begin{equation}
	\mu^{\beta^1}=\begin{pmatrix}
		0&0&0\\
		0&0&1\\
		0&1&0
	\end{pmatrix}, \qquad
	\mu^{\beta^2}=\begin{pmatrix}
		0&0&1\\
		0&0&0\\
		1&0&0
	\end{pmatrix}, \qquad
	\mu^{\beta^3}=\begin{pmatrix}
		0&1&0\\
		1&0&0\\
		0&0&0
	\end{pmatrix}.
\end{equation}
They result in the fusion rules
\begin{center}
	\vspace*{0.3\baselineskip}
	\begin{tabular}{|Sc|Sc|Sc|Sc|Sc|Sc|Sc|Sc|Sc|Sc|} \hline
		\rowcolor{gray}${\times}$&$1$&$\delta^1$&$\delta^2$&$\beta^1$&$\beta^2$&$\beta^3$&$\alpha^1$&$\alpha^2$&$\alpha^3$\\ \hline
	 	\cellcolor{gray}$1$&$1$&$\delta^1$&$\delta^2$&$\beta^1$&$\beta^2$&$\beta^3$&$\alpha^1$&$\alpha^2$&$\alpha^3$\\ \hline
		\cellcolor{gray}$\delta^1$&$\delta^1$&$\Sigma$&$\Sigma$&$\beta^1$&$0$&$\beta^3$&$\alpha^1$&$0$&$\alpha^3$\\ \hline
		\cellcolor{gray}$\delta^2$&$\delta^2$&$\Sigma$&$\Sigma$&$\beta^1$&$\beta^2$&$0$&$\alpha^1$&$\alpha^2$&$0$\\ \hline
		\cellcolor{gray}$\beta^1$&$\beta^1$&$\beta^1$&$\beta^1$&$\Sigma$&$\beta^3$&$\beta^2$&$0$&$\alpha^3$&$\alpha^2$\\ \hline
		\cellcolor{gray}$\beta^2$&$\beta^2$&$0$&$\beta^2$&$\beta^3$&$\Sigma$&$\beta^1$&$\alpha^3$&$0$&$\alpha^1$\\ \hline
		\cellcolor{gray}$\beta^3$&$\beta^3$&$\beta^3$&$0$&$\beta^2$&$\beta^1$&$\Sigma$&$\alpha^2$&$\alpha^1$&$0$\\ \hline
		\cellcolor{gray}$\alpha^1$&$\alpha^1$&$\alpha^1$&$\alpha^1$&$0$&$\alpha^3$&$\alpha^2$&$\Sigma$&$\beta^3$&$\beta^2$\\ \hline
		\cellcolor{gray}$\alpha^2$&$\alpha^2$&$0$&$\alpha^2$&$\alpha^3$&$0$&$\alpha^1$&$\beta^3$&$\Sigma$&$\beta^1$\\ \hline
		\cellcolor{gray}$\alpha^3$&$\alpha^3$&$\alpha^3$&$0$&$\alpha^2$&$\alpha^1$&$0$&$\beta^2$&$\beta^1$&$\Sigma$\\ \hline
	\end{tabular}
	\vspace*{0.3\baselineskip}
\end{center}
with $\Sigma=1+\delta^1+\delta^2$.

From the eigenvalues we get $h_1=0$ and $h_\alpha=h_\beta=1$. For the OPE, we give the two examples
\begin{align}
	\widehat{\phi}^{\delta^1}(x)\widehat{\phi}^{\delta^2}(y) &\sim -\frac{1}{6} D(x, y)^{-2} \widehat{\phi}^1(y) -\frac{1}{3} D(x, y)^{-1} \widehat{\phi}^{\delta^1}(y) - \frac{1}{3} D(x, y)^{-1}\widehat{\phi}^{\delta^2}(y),\\
	\widehat{\phi}^{\beta^2}(x)\widehat{\phi}^{\alpha^3}(y) &\sim \frac{1}{3} D(x, y)^{-1} \widehat{\phi}^{\alpha^1}(y).
\end{align}

\section{Application: the Fibonacci Lattice}\label{sec:example2}

We now illustrate the formalism developed in the previous sections in terms of a tree state defined for the Fibonacci category $\mathcal{F}$. The computations in this section were performed using the \textit{TriCats} package \cite{Stiegemann2019,Stiegemann2018}.

The fusion ring of $\mathcal{F}$ is generated by the two elements $1$ and $\tau$ and fusion rules
\begin{align}
	1 \times 1&= 1\\
	1 \times \tau &= \tau\\
	\tau\times\tau&=1+\tau.
\end{align}
$\mathcal{F}$ is a trivalent category with $\dim\mathcal{C}_4=2$ and $d=\frac{1}{2}(1\pm\sqrt{5})$, and it is a special case of an $\mathit{SO}(3)_q$ category with $q=4$. $\mathcal{C}_4$ is spanned by the two vectors
\begin{equation}\label{eq:fibbasis}
	\begin{tikzpicture}[scale=0.5]
    \draw (0, 0) .. controls (1, 1) .. (0, 2);
    \draw (2, 0) .. controls (1, 1) .. (2, 2);
  \end{tikzpicture}\,,\quad%
	\begin{tikzpicture}[scale=0.5]
    \draw (0, 0) .. controls (1, 1) .. (2, 0);
    \draw (0, 2) .. controls (1, 1) .. (2, 2);
  \end{tikzpicture}\,.
\end{equation}

We use a modification of the trivalent vertex, effectively doubling lines and replacing the trivalent vertex with
\begin{equation}
	V=\begin{tikzpicture}[scale=0.6,baseline=10mm]
    \draw[rounded corners] (2.5, 3.5) -- (2.5, 2) -- (1.875, 1.375);
    \draw (2.25, 1) -- (2.75, 0.5);
    \draw[rounded corners] (2.5, 0.75) -- (1.25, 2) -- (-0.25, 0.5);
    \draw[draw=white,double=black,double distance=0.4pt,line width=3pt] (2, 3.5) -- (2, 2) -- (0.5, 0.5);
		\draw (2, 2) -- (2.2, 1.8);
    \draw[draw=white,double=black,double distance=0.4pt,line width=3pt] (2.1, 1.9) -- (3.5, 0.5);
  \end{tikzpicture}.
\end{equation}
The braiding appearing in $V$ is given by
\begin{equation}
	\begin{tikzpicture}[scale=0.5,baseline=3.5mm]
    \draw (0, 2) -- (2, 0);
    \draw[draw=white,double=black,double distance=0.4pt,line width=3pt] (0, 0) -- (2, 2);
  \end{tikzpicture}
	=
	\begin{tikzpicture}[scale=0.5,baseline=3.5mm]
    \draw (0, 0) .. controls (1, 1) .. (0, 2);
    \draw (2, 0) .. controls (1, 1) .. (2, 2);
  \end{tikzpicture}
	+e^{4i\pi/5}
  \begin{tikzpicture}[scale=0.5,baseline=3.5mm]
    \draw (0, 0) .. controls (1, 1) .. (2, 0);
    \draw (0, 2) .. controls (1, 1) .. (2, 2);
  \end{tikzpicture}.
\end{equation}
In the basis (\ref{eq:fibbasis}), the ascending operator has matrix elements
\begin{equation}
	\begin{pmatrix}
		1 & \frac{1}{2}(3-\sqrt{5})\\
		0 & \frac{1}{2}(3-\sqrt{5})
	\end{pmatrix}
\end{equation}
and eigenvalues
\begin{equation}
	\lambda_1=1, \qquad \lambda_\tau=\frac{1}{2}\bigl(3-\sqrt{5}\bigr).
\end{equation}
The fusion coefficients are given by
\begin{equation}
	f^1=\begin{pmatrix}
		1 & 0\\
		0 & \frac{1}{2}(3-\sqrt{5})
	\end{pmatrix},\quad
	f^\tau=\begin{pmatrix}
		0 & \frac{1}{2}(3-\sqrt{5})\\
		\sqrt{5}-2 & 5-2\sqrt{5}
	\end{pmatrix}.
\end{equation}
The OPE then gives us the short-distance behaviour
\begin{align}
	\widehat{\phi}^1(x)\widehat{\phi}^1(y) &\sim \widehat{\phi}^1(y), \\
	\widehat{\phi}^1(x)\widehat{\phi}^\tau(y) &\sim \frac{1}{2}(3-\sqrt{5}) \widehat{\phi}^\tau(y), \\
	\widehat{\phi}^\tau(x)\widehat{\phi}^\tau(y) &\sim (\sqrt{5}-2)D(x, y)^{-2h_\tau}	\widehat{\phi}^1(y) + (5-2\sqrt{5})D(x, y)^{-h_\tau}\widehat{\phi}^\tau(y)
\end{align}
with $h_\tau=-\log_2\bigl( \frac{1}{2}(3-\sqrt{5}) \bigr)\approx 1.388$.
We obtain a representation of the fusion ring via the matrices
\begin{equation}
	N^1=\begin{pmatrix}
		1 & 0\\
		0 & 1
	\end{pmatrix},\quad
	N^\tau=\begin{pmatrix}
		0 & 1\\
		1 & 1
	\end{pmatrix}.
\end{equation}
It is very interesting to note that these matrices again describe the fusion rules of the Fibonacci category!

\section{A Few Technical Observations Concerning Trees}
\label{sub:trees}

Here we collect together some basic observations concerning trees and the circle. Our systems are thought of as living on the circle $\sircle^1$ which is taken to be the interval $[0,1]$ with $0$ and $1$ identified. It is rather convenient to express points $x\in \sircle^1$ in terms of their binary expansions, that is, we write
\begin{equation}
	x = 0.x_{-1}x_{-2}\cdots x_{-l}, \quad x_{-j} \in \{0,1\}, \quad j = 1, 2, \ldots, l,
\end{equation}
to stand for the representation
\begin{equation}
	x = \sum_{j=1}^l \frac{x_{-j}}{2^{j}},
\end{equation}
for some $l\in\integers$.

We introduce the somewhat baffling operation $\ominus$ on $x$ and $y$ in $\sircle^1$ according to
\begin{equation}
	y\ominus x = \sum_{j=1}^l \frac{(y_{-j}-x_{-j})\, \text{mod $2$}}{2^{j}},
\end{equation}
where the arithmetic in the term $y_{-j}-x_{-j}$ is carried out in the finite field $\mathbb{F}_2$ and then embedded back in $\mathbb{R}$ in the natural way.
We pad out the expansions of $x$ or $y$ with zeros as necessary. $x\ominus y$ corresponds to bitwise XOR on the binary digits of $x$ and $y$.

We identify partitions of $\sircle^1$ with trees in the standard way:
\begin{align}
	\{[0,1)\} &\leftrightarrow \mathcal{T}_0 \\
	\{[0,\tfrac12), [\tfrac12,1)\} &\leftrightarrow \mathcal{T}_1 \\
	\{[0,\tfrac14), [\tfrac14,\tfrac12), [\tfrac12,\tfrac34), [\tfrac34,1)\} &\leftrightarrow \mathcal{T}_2 \\
	&\ \vdots
\end{align}
Here $\mathcal{T}_l$ is the regular binary tree with $2^l$ leaves. Each interval in the partition is identified with a leaf of $\mathcal{T}_l$. The nonnegative integer $l$ is called the \emph{level}.

We can alternatively specify a standard dyadic interval $[x,y)= [\tfrac{j}{2^l}, \tfrac{j+1}{2^l})$ by simply writing out the left end point in binary to $l$ significant digits:
\begin{equation}
	[\tfrac{j}{2^l}, \tfrac{j+1}{2^l}) \leftrightarrow 0.x_{-1}x_{-2}\cdots x_{-l}.
\end{equation}
Here the number $l$ of significant digits, the \emph{level}, tells us what kind of standard dyadic interval it is: once you know $x$ you can get $y$ by adding $1/2^l$. Here is a simple example:
\begin{equation}
	[\tfrac{13}{32}, \tfrac{14}{32}) \leftrightarrow 0.01101.
\end{equation}
In this way we can label the leaves of $\mathcal{T}_l$ with binary expansions with exactly $l$ significant digits.

We introduce the following \emph{tree metric} on the leaves of the regular binary tree $\mathcal{T}_l$ as follows. Let $x$ and $y$ be the binary labels corresponding to two leaves of $\mathcal{T}_l$ and recursively define
\begin{equation}
	d_T(x,y) = 1+ d_T(x^{(1)},y^{(1)})
\end{equation}
and
\begin{equation}
	d_T(x,x) = 0, \quad \forall x,
\end{equation}
where
\begin{equation}
	x^{(j)} =  0.x_{-1}x_{-2}\cdots x_{-l+j},
\end{equation}
that is, by dropping the last $j$ digits of the binary expansion for $x$.
For example, if $x = 13/32$ and $y = 15/32$ we have
\begin{equation}
	d_T(0.01101, 0.01111) = 1+ d_T(0.0110, 0.0111) = 2+d_T(0.011, 0.011) = 2.
\end{equation}

\begin{lem}\label{lem:treemetric}
The tree metric between $x$ and $y$ in $\sircle^1$ labelling the leaves of $\mathcal{T}_l$ may be computed according to
\begin{equation}
	d_T(x,y) = l+1+\lfloor \log_2(y\ominus x) \rfloor.
\end{equation}
\end{lem}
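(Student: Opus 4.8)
The plan is to establish the closed form $d_T(x,y) = l+1+\lfloor \log_2(y\ominus x)\rfloor$ by reducing both sides to the same elementary quantity: the position of the first disagreeing digit in the binary expansions of $x$ and $y$. First I would set up notation. Write $x = 0.x_{-1}\cdots x_{-l}$ and $y = 0.y_{-1}\cdots y_{-l}$ for the length-$l$ binary labels, and let $s$ be the largest index (closest to the binary point, i.e.\ smallest $j$ with $x_{-j}\neq y_{-j}$, counted so that $s$ is the number of leading digits in common). Concretely, if $x=y$ treat the statement separately (both sides handle $x=y$ via the trailing-zero padding convention, giving $y\ominus x=0$ and $d_T=0$ with the convention that $\log_2 0 = -\infty$; I would note this edge case and otherwise assume $x\neq y$). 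For $x\neq y$, let $p$ be the number of initial digits $x_{-1},\dots,x_{-p}$ on which $x$ and $y$ agree, so that $x_{-(p+1)}\neq y_{-(p+1)}$.

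The key computation is twofold. On one hand, $y\ominus x$ is by definition the real number whose binary expansion is the bitwise XOR of those of $x$ and $y$; since the first $p$ XOR-digits vanish and the $(p+1)$st is $1$, we get $2^{-(p+1)} \le y\ominus x < 2^{-p}$, hence $\lfloor \log_2(y\ominus x)\rfloor = -(p+1)$. On the other hand, I would prove by induction on $l$ that $d_T(x,y) = l - p$. The recursion $d_T(x,y) = 1 + d_T(x^{(1)}, y^{(1)})$ drops the last digit of each label, which decreases the level from $l$ to $l-1$ but does \emph{not} change the number $p$ of agreeing leading digits (as long as $p < l$, which holds precisely when $x \neq y$ at level $l$); the base case is reached when the two truncated labels coincide, i.e.\ after exactly $l-p$ steps, at which point $d_T$ of equal arguments is $0$. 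Combining, $d_T(x,y) = l-p = l + 1 + (-(p+1)) = l+1+\lfloor\log_2(y\ominus x)\rfloor$, which is the claim.

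The main obstacle I anticipate is not conceptual but bookkeeping: making the induction on the level interact cleanly with the digit-counting, in particular verifying that truncating via $x \mapsto x^{(1)}$ preserves $p$ and that the recursion indeed terminates after $l-p$ steps rather than colliding with the $d_T(x,x)=0$ clause prematurely or too late. One should also be careful that the XOR/$\ominus$ definition genuinely yields a real number in the dyadic range claimed — the padding-with-zeros convention and the embedding of $\mathbb{F}_2$-arithmetic back into $\mathbb{R}$ need to be invoked explicitly so that $y\ominus x = \sum_{j>p} ((y_{-j}-x_{-j})\bmod 2)\,2^{-j}$ has leading term exactly $2^{-(p+1)}$. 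Once these two lemmatic facts ($\lfloor\log_2(y\ominus x)\rfloor = -(p+1)$ and $d_T(x,y)=l-p$) are in hand, the conclusion is immediate arithmetic, so I would present the proof as those two short claims followed by one line combining them.
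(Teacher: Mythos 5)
Your proposal is correct and follows essentially the same route as the paper's proof: both reduce the claim to locating the first disagreeing binary digit, bound $y\ominus x$ between consecutive powers of $2$, and take logs. The only difference is that you prove the identity $d_T(x,y)=l-p$ by an explicit induction on the level, whereas the paper simply asserts that $d_T(x,y)=j$ means the first $l-j$ digits agree; your version just makes that bookkeeping explicit.
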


As can be seen from the previous example and made rigorous in the lemma, $d_T$ counts, from the right of the binary expansions of $x$ and $y$, the leftmost position at which the digits of $x$ and $y$ are different.

\begin{proof}
	Suppose that $d_T(x,y) = j$. Then we know that $x$ and $y$ share the same first $l-j$ digits, that is,
	\begin{equation}
		x = 0.x_{-1}x_{-2}\cdots x_{-l}, \quad \text{and}\quad y=0.x_{-1}x_{-2}\cdots x_{-l+j} y_{-l+j-1}\cdots y_{-l}.
	\end{equation}
	Now notice that
	\begin{equation}
		y\ominus x = 0.00\cdots 0 (y_{-l+j-1}\oplus x_{-l+j-1})\cdots (y_{-l}\oplus x_{-l}).
	\end{equation}
	In particular, note that the digit in the $(l-j+1)$ term is $1$. Hence
	\begin{equation}
		y\ominus x = 0.00\cdots 0 1 \cdots (y_{-l}\oplus x_{-l}) = \frac{1}{2^{l-j+1}}(1+\delta),
	\end{equation}
	where $\delta \in [0,\tfrac{1}{2})$. Take logs of both sides to find
	\begin{equation}
		\log_2(y\ominus x) = -(l-j+1) + \log_2(1+\delta).
	\end{equation}
	Adding $l$ to both sides and taking the floor gives the answer.
\end{proof}

For the special case where $x=0$ and $y = x$ we have the formula
\begin{equation}
	d_T(0,x) = l+1+\lfloor \log_2(x) \rfloor.
\end{equation}

We note the following
\begin{lem}
	Let $x$ and $y$ be two $l$-digit binary numbers in $[0, 1)$ with $y\ge x$. Then
	\begin{equation}
		y\ominus x \ge y-x
	\end{equation}
	and hence
	\begin{equation}
		d_T(x,y) \ge l+1+\lfloor \log_2(\abs{y-x}) \rfloor.
	\end{equation}
\end{lem}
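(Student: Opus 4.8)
The plan is to establish a digit-wise inequality first and then invoke \cref{lem:treemetric}. Write $x = 0.x_{-1}\cdots x_{-l}$ and $y = 0.y_{-1}\cdots y_{-l}$ with all $x_{-j}, y_{-j} \in \{0,1\}$; since both numbers already have exactly $l$ significant digits, no padding is needed. The key observation is that for bits $a, b \in \{0,1\}$ one has $(b-a) \bmod 2 = \abs{b-a}$, because $b-a$ lies in $\{-1,0,1\}$ and $-1$ reduces to $1 = \abs{-1}$ modulo $2$. Embedding the result back in $\reals$ in the natural way, the quantity $(y_{-j}-x_{-j})\bmod 2$ is therefore exactly $\abs{y_{-j}-x_{-j}}$, and in particular $(y_{-j}-x_{-j})\bmod 2 \ge y_{-j}-x_{-j}$ for every $j$.

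Next I would sum this inequality with weights $2^{-j}$. From the definition of $\ominus$,
\begin{equation}
  y\ominus x = \sum_{j=1}^l \frac{(y_{-j}-x_{-j})\bmod 2}{2^j} \ge \sum_{j=1}^l \frac{y_{-j}-x_{-j}}{2^j} = y - x,
\end{equation}
which is the first claimed inequality; since $y\ge x$, the right-hand side equals $\abs{y-x}$.

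For the metric bound I would then apply \cref{lem:treemetric}, which gives $d_T(x,y) = l+1+\lfloor \log_2(y\ominus x) \rfloor$. In the degenerate case $y=x$ both sides of the asserted inequality are read with the convention $\log_2 0 = -\infty$ and there is nothing to prove. If $y>x$, then $y\ominus x \ge \abs{y-x} > 0$, so by monotonicity of $\log_2$ followed by monotonicity of the floor function, $\lfloor \log_2(y\ominus x) \rfloor \ge \lfloor \log_2(\abs{y-x}) \rfloor$. Substituting into the formula for $d_T(x,y)$ yields $d_T(x,y) \ge l+1+\lfloor \log_2(\abs{y-x}) \rfloor$, as desired.

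I do not expect any real obstacle here: the only points requiring a moment's care are the elementary identity $(b-a)\bmod 2 = \abs{b-a}$ for bits and the trivial $y=x$ boundary case, after which everything reduces to a one-line monotonicity argument layered on top of \cref{lem:treemetric}.
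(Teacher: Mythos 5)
Your proof is correct and follows essentially the same route as the paper: both establish the termwise inequality $(y_{-j}-x_{-j})\bmod 2 \ge y_{-j}-x_{-j}$ for bits (the paper phrases it as $y\ominus x = y-x+\delta$ with an explicit nonnegative correction $\delta$, you phrase it as $(b-a)\bmod 2 = \abs{b-a}$), then sum with weights $2^{-j}$ to get $y\ominus x \ge y-x$. Your explicit final step via \cref{lem:treemetric} and monotonicity of $\log_2$ and the floor, together with the remark on the $y=x$ case, just spells out what the paper leaves implicit, so there is no substantive difference.
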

\begin{proof}
	First note that for $a,b\in \{0,1\}$,
	\begin{equation}
		a-b = \bigl(a-b \Mod 2\bigr) - 2\delta_{a,1}\delta_{b,0},
	\end{equation}
	so that
	\begin{equation}
		a-b \le a-b \Mod 2.
	\end{equation}
	Now
	\begin{equation}
		y\ominus x = \sum_{j=1}^l \frac{(y_{-j}-x_{-j})\Mod 2}{2^{j}} = y-x + \delta,
	\end{equation}
	where
	\begin{equation}
		\delta = 2\sum_{j=1}^l \frac{\delta_{x_{-j},1}\delta_{y_{-j},0}}{2^{j}}.
	\end{equation}
	Since $\delta$ is nonnegative we have that
	\begin{equation}
		y\ominus x \ge y-x.
	\end{equation}
	This concludes the proof.
\end{proof}


\chapter{Particles, Black Holes, and Discrete Cobordisms}
\label{cha:particles}

In this chapter we present a few new ideas built upon the concepts developed in the previous chapters.

\section{Particle Creation and Annihilation}
\label{sec:particles}

In the following we briefly describe how to include a matter source in the form of a pointlike particle in the continuous spacetime and in our discrete models.

Following Matschull \cite{Matschull1999}, we take a quick look at how point particles are described in an equal-time slice of $\mathrm{AdS}_3$. (We do not yet have a full understanding of the \emph{time}-evolution of particles in the discrete model, so we omit the same also in the case of $\mathrm{AdS}_3$.) We begin with the Poincaré disk and two geodesics intersecting at a point inside the disk, as in \cref{fig:particle}.
\begin{figure}[h]
  \includegraphics[width=0.3\textwidth]{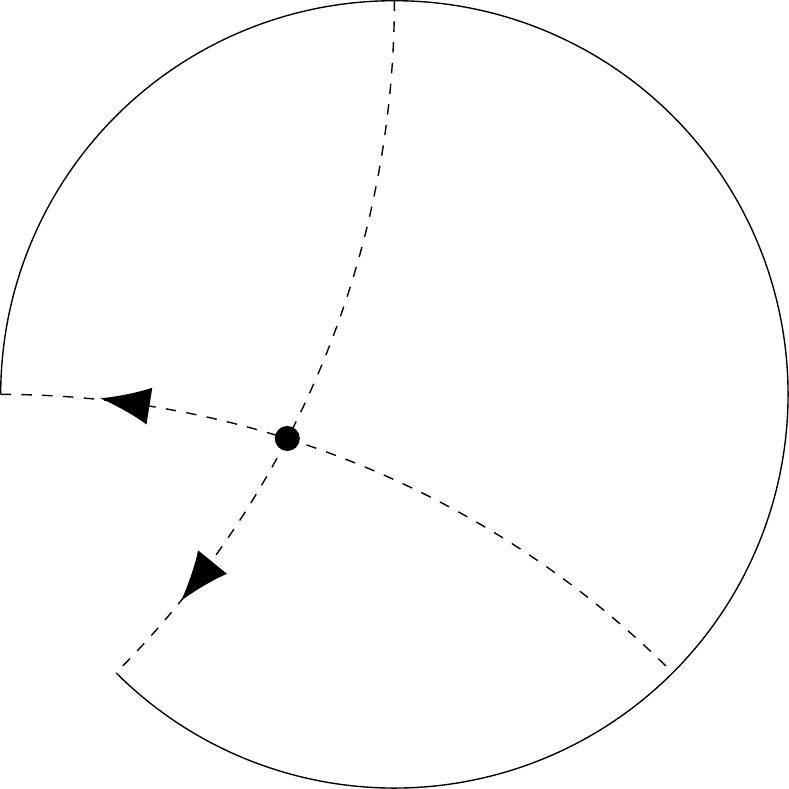}
  \caption{A Poincaré disk with one particle. The arrows indicate the orientation of the two line segments that are glued together.}
  \label{fig:particle}
\end{figure}

We cut a wedge out of the disk, bounded by the two geodesics and located between the boundary circle and the intersection. Then we identify the boundaries as indicated by the small arrows. The topology of the resulting spacetime is the same as before, except around the intersection point. We can interpret this defect of the topology as a point particle.

To transfer this to the discrete model, note that we can simply cut out wedges bounded by geodesics of the standard dyadic tessellation, but when a tensor network is laid upon the tessellation---for example, inside a cutoff region---we need to specify what happens to uncontracted tensor legs. In other words, we need to suppress anything from being attached to his leg, and it should become a dead end. This corresponds to contracting it with a 1-leg tensor, that is, a vector $\psi\in\shfrak$.

We take another step back and ask how disabling a leg is implemented on the level of trees and forests. We propose the following solution. Take the category $\annfor$ and add a single morphism $b\colon 1\to 0$. We denote this new category by $\annfor^\bullet$. We choose the pictorial representation
\begin{equation}
  b=\tikz[scale=0.2,baseline=-3mm,yscale=-1]{\draw (0, 2) -- (0, 0);
  \draw[fill] (0, 0) circle [radius=0.5];}
\end{equation}
and therefore call it the \defemph{blob}. When it is added to the leaf of a forest, it effectively disables that leaf. We further impose the relation
\begin{equation}\label{eq:blob-relation}
  (b\otimes b)\circ t=b,
\end{equation}
where $t\colon 1\to 2$ is the trivalent vertex. In pictures, this is
\begin{equation}\label{eq:blob-relation-picture}
  \begin{tikzpicture}[scale=0.3,baseline=-4mm,yscale=-1]
    \draw (0, 0) -- (1, 1) -- (2, 0);
    \draw (1, 2) -- (1, 1);

    \draw[fill] (0, 0) circle [radius=0.25];
    \draw[fill] (2, 0) circle [radius=0.25];
  \end{tikzpicture}
  \;=\;\;
  \begin{tikzpicture}[scale=0.3,baseline=-4mm,yscale=-1]
    \draw (0, 2) -- (0, 0);
    \draw[fill] (0, 0) circle [radius=0.25];
  \end{tikzpicture}\,.
\end{equation}
The relation \cref{eq:blob-relation} encodes the requirement that it should not matter whether we simply delete one part of the disk ($\,\begin{tikzpicture}[scale=0.15,baseline=-2.2mm,yscale=-1]
  \draw (0, 0) -- (1, 1) -- (2, 0);
  \draw (1, 2) -- (1, 1);

  \draw[fill] (0, 0) circle [radius=0.25];
  \draw[fill] (2, 0) circle [radius=0.25];
\end{tikzpicture}\,$), or first separate it in two halves ($\,\begin{tikzpicture}[scale=0.15,baseline=-2.4mm,yscale=-1]
  \draw (0, 0) -- (1, 1) -- (2, 0);
  \draw (1, 2) -- (1, 1);
\end{tikzpicture}\,$) and then delete both ($\,\tikz[scale=0.15,baseline=-2.3mm,yscale=-1]{\draw (0, 2) -- (0, 0);
\draw[fill] (0, 0) circle [radius=0.25];}\;\tikz[scale=0.15,baseline=-2.3mm,yscale=-1]{\draw (0, 2) -- (0, 0);
\draw[fill] (0, 0) circle [radius=0.25];}\,$).

We can now form a category $\annfor^\bullet_1$ which contains all morphisms of $\annfor^\bullet$ except the empty forest and the blob. The localization $\tcal^\bullet=\annfor^\bullet[(\annfor^\bullet_1)^{-1}]$ contains a group $T^\bullet=\tcal^\bullet(1, 1)$ which is like $T$ but with blobs. The reason that we localize with respect to $\annfor^\bullet_1$, and not all morphisms in $\annfor^\bullet$, is that once the empty diagram and the blob are included, localization makes the group $\tcal^\bullet(1, 1)$ trivial.

Under a linear functor, the condition \cref{eq:blob-relation} translates to
\begin{equation}\label{eq:blobcondition}
  \begin{tikzpicture}[scale=0.45,baseline=-7mm,yscale=-1]\footnotesize
    \def\a{1.5}

    \draw[rounded corners] (0, 1.5) -- (-\a, 0.5) -- (-\a, -0.5);
    \draw[rounded corners] (0, 1.5) -- (\a, 0.5) -- (\a, -0.5);
    \draw (0, 2) -- (0, 3);

    \draw[fill=white] (0, 1.5) circle[radius=0.55] node {$V$};
    \draw[fill=white] (-\a, -0.5) circle[radius=0.45] node {$b$};
    \draw[fill=white] (\a, -0.5) circle[radius=0.45] node {$b$};
  \end{tikzpicture}
  \;=\;\begin{tikzpicture}[scale=0.45,baseline=-7mm,yscale=-1]\footnotesize
    \draw (0, -0.5) -- (0, 3);
    \draw[fill=white] (0, -0.5) circle[radius=0.45] node {$b$};
  \end{tikzpicture}
\end{equation}
for a 3-leg tensor $V$. We close the description with our two guiding examples.

\begin{exmp}
  For $V\colon\mathbb{C}^3\to\mathbb{C}^3\otimes\mathbb{C}^3$ from \cref{exmp:V-spin-system}, the only possible values for (the image under the functor of) $b$ are
  \begin{align}
    |b_1\rangle&= \tfrac{1}{\sqrt{2}}\bigl( |0\rangle + |1\rangle + |2\rangle \bigr),\\
    |b_2\rangle&= \tfrac{1}{\sqrt{2}}\bigl( |0\rangle - |1\rangle - |2\rangle \bigr),\\
    |b_3\rangle&= \tfrac{1}{\sqrt{2}}\bigl( -|0\rangle + |1\rangle - |2\rangle \bigr),\\
    |b_4\rangle&= \tfrac{1}{\sqrt{2}}\bigl( -|0\rangle - |1\rangle + |2\rangle \bigr).
  \end{align}
  Note that these vectors are not normalized.
\end{exmp}

\begin{exmp}
  In a braided trivalent category, one possible choice of $V$ is the perfect tensor from \cref{exmp:trivalent}.
  Then \cref{eq:blobcondition} is only satisfied for $\beta=\tikz[scale=0.25,baseline=0mm]{\draw (0, 0) arc[start angle=0, end angle=180, radius=1]}\,$.
\end{exmp}

There are immediately many open questions. On the mathematical side, it would be interesting to investigate the group $T^\bullet$: Is it finitely presented? What are its generators and relations? Is it isomorphic to a group acting on the circle or another manifold? Is it related to any group acting on the bulk, that is, the Poincaré disk? On the physical side, we can ask: Can $T^\bullet$ sensibly model the time evolution of a point particle? What happens when multiple particles are created?

\section{Black Holes}
\label{sec:black-holes}

In $2+1$ dimensions there is a well-known family of black-hole solutions of Einstein's equations in the presence of a negative cosmological constant due to \citeauthor{BanadosTeitelboimZanelli1992}, and they are correspondingly called BTZ black holes \cite{BanadosTeitelboimZanelli1992,Matschull1999}. These geometries are multiply connected and may be produced from the Poincaré disk $\disk$ by dividing out a discrete subgroup of the group of isometries of $\disk$. We illustrate the simplest example here.

A tessellation $\tau_{\mathrm{BTZ}}$ for the BTZ spacetime can be built from the standard dyadic tessellation by choosing two opposite geodesics and identifying them, see \cref{fig:black-hole}.
\begin{figure}[h]
  \includegraphics[width=0.4\textwidth]{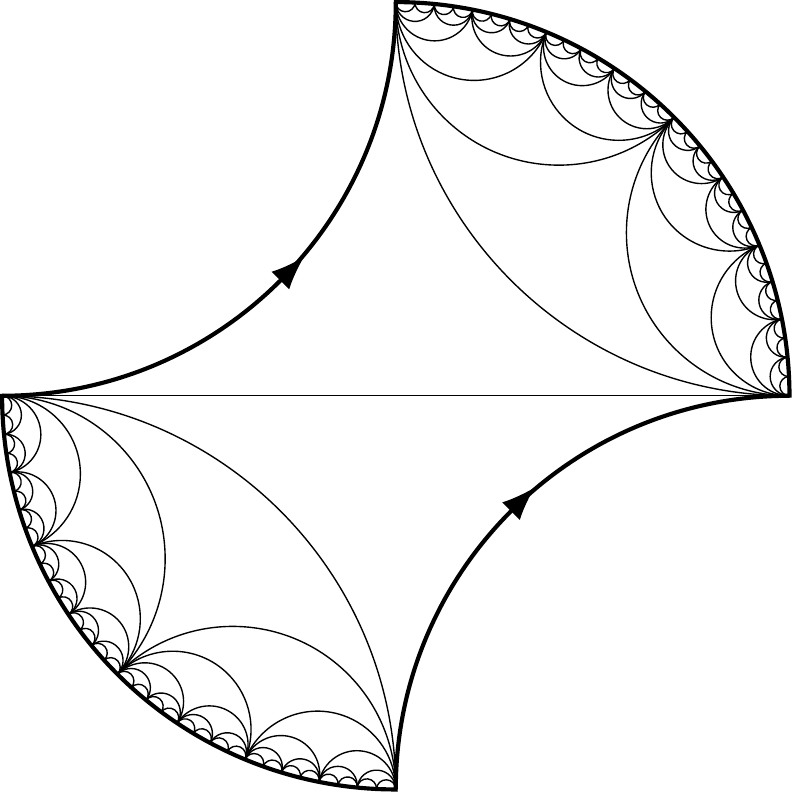}
  \caption{}
  \label{fig:black-hole}
\end{figure}

The two identified geodesics are indicated with arrows. The result of this procedure is a tessellation of the cylinder with two boundaries $A$ and $B$ at spatial infinity. The two boundaries may each be identified with $\sircle^1$.

By associating a perfect tensor $V$ with each triangle in the BTZ tessellation we obtain a corresponding tensor network.
This network can be thought of in two ways. Firstly, it may be understood as a state $\psi_{AB}$ with `geometry' given by $\tau_{\mathrm{BTZ}}$: here we are thinking of the Hilbert space of the entire system given by $\hfrak_{AB} = \hfrak_A\otimes \hfrak_B$, the semicontinuous limit built on the two boundaries $A$ and $B$ at spatial infinity. Note that $\psi_{AB}$ is \emph{not} a product state with respect to the tensor product over $A$ and $B$, it is an \emph{entangled state}. This gives rise to the second interpretation of $\psi_{AB}$, namely, as an entangled state of the two distinct subsystems $A$ and $B$. This equivalence between entanglement and geometry is the manifestation of the $\text{ER}=\text{EPR}$ proposal for our discrete version of the bulk-boundary correspondence \cite{MaldacenaSusskind2013}.

\section{Discrete Cobordisms}
\label{sec:cobordisms}

We close with a few words on cobordisms. Roughly, a cobordism $f\colon X\to Y$ is an $n$-dimensional manifold whose boundary consists of two disjoint $(n-1)$-dimensional manifolds $X$ and $Y$ \cite{BaezStay2010,Segal1988}. (The exact definition is longer.) Two typical examples for $n=2$ are the following:
\vspace*{0.2\baselineskip}
\begin{equation}
  \begin{tikzpicture}[every tqft/.style={draw,boundary lower style={draw}}]
    \node[tqft/cylinder, at={(0,0)}] {};
    \node[tqft/pair of pants, at={(4,0)}] {};
  \end{tikzpicture}
\end{equation}

\noindent In one dimension less, there are similar pictures:
\vspace*{0.2\baselineskip}
\begin{equation}
  \begin{tikzpicture}
    \draw (0, 0) -- (0, 2);
    \filldraw (0, 0) circle[radius=0.07];
    \filldraw (0, 2) circle[radius=0.07];

    \begin{scope}[xshift=4cm]
      \draw (-1, 0) to[bend left] (0, 1);
      \draw (1, 0) to[bend right] (0, 1);
      \draw (0, 1) -- (0, 2);
      \filldraw (-1, 0) circle[radius=0.07];
      \filldraw (1, 0) circle[radius=0.07];
      \filldraw (0, 2) circle[radius=0.07];
    \end{scope}
  \end{tikzpicture}
\end{equation}

\noindent The unitary representations of $\fcal$ and $\tcal$ are such that depending on the number of trees in the numerator and the denominator of a fraction, we can get several copies of the semicontinuous limit Hilbert space. For instance, consider the two fractions:
\vspace*{0.2\baselineskip}
\begin{equation}
  \begin{tikzpicture}[scale=0.25,baseline=5mm]
    \begin{scope}[yscale=-1, yshift=-4cm]
      \draw (2.0, 0.0) -- (3.0, 1.0) -- (4.0, 0.0);
      \draw (0.0, 0.0) -- (2.0, 2.0) -- (3.0, 1.0);
      \draw (2, 2) -- (2, 3);
    \end{scope}

    \draw (-0.5, 0) -- (4.5, 0);

    \begin{scope}[yscale=-1, yshift=1cm]
      \draw (0.0, 0.0) -- (1.0, 1.0) -- (2.0, 0.0);
      \draw (1.0, 1.0) -- (2.0, 2.0) -- (4.0, 0.0);
      \draw (2, 2) -- (2, 3);
    \end{scope}
  \end{tikzpicture}
  \mspace{100mu}
  \begin{tikzpicture}[scale=0.25,baseline=5mm]
    \begin{scope}[yscale=-1, yshift=-7cm]
      \draw (2.0, 0.0) -- (3.0, 1.0) -- (4.0, 0.0);
      \draw (0.0, 0.0) -- (2.0, 2.0) -- (3.0, 1.0);
      \draw (6.0, 0.0) -- (7.0, 1.0) -- (8.0, 0.0);
      \draw (7.0, 1.0) -- (8.0, 2.0) -- (10.0, 0.0);
      \draw (2.0, 2.0) -- (5.0, 5.0) -- (8.0, 2.0);
      \draw (5, 5) -- (5, 6);
    \end{scope}

    \draw (-0.5, 0) -- (11, 0);

    \begin{scope}[yscale=-1, yshift=1cm]
      \draw (0.0, 0.0) -- (1.0, 1.0) -- (2.0, 0.0);
      \draw (1.0, 1.0) -- (2.0, 2.0) -- (4.0, 0.0);
      \draw (2, 2) -- (2, 3);
      \begin{scope}[xshift=6cm]
        \draw (2.0, 0.0) -- (3.0, 1.0) -- (4.0, 0.0);
        \draw (0.0, 0.0) -- (2.0, 2.0) -- (3.0, 1.0);
        \draw (2, 2) -- (2, 3);
      \end{scope}
    \end{scope}
  \end{tikzpicture}
\end{equation}
Under the representation, the left-hand fraction becomes a unitary $\hfrak\to\hfrak$, while the one on the right becomes a unitary $\hfrak\to\hfrak\otimes\hfrak$. If, for the moment, we read the above pictures of cobordisms from top to bottom, then the resemblance is striking: The right-hand element of Thompson's groupoid is a map from one spacetime to two copies of that spacetime, and the fraction seems to have an `underlying' cobordism. It remains to be seen if we can find more similarities between these mathematical structures that are physically plausible.

\defbibnote{myprenote}{\small The bibliography entries are sorted alphabetically by the surname of the first author. References to arXiv versions or other preprints are always included if available, even when there exists a published version elsewhere. For each entry, referencing pages are given in the form $\hookrightarrow$ \textelp{}.\newline}

\begingroup
\setlength{\emergencystretch}{1em}
\printbibliography[prenote=myprenote]
\endgroup

\end{document}